\journalname{Communications in Mathematical Physics}
\numberwithin{theorem}{section}
\tikzstyle{block} = [rectangle,draw,text width=10em,text centered,rounded corners,minimum height=4em]
\tikzstyle{line} = [draw, -latex']
\numberwithin{equation}{section}
\renewcommand{\p@enumii}{} 
\DeclareMathOperator{\e}{e}
\DeclareMathOperator{\Ai}{Ai}
\def\dd{{\rm d}}
\def\ii{{\rm i}}
\def\i{{\infty}}
\def\be{\begin{equation}}
\def\ee{\end{equation}}
\def\bea{\begin{eqnarray}}
\def\eea{\end{eqnarray}}
\def\ket#1{|#1\rangle}
\def\Z{\mathbb{Z}}
\DeclareMathOperator{\sign}{sign}
\DeclareMathOperator{\res}{Res}
\DeclareMathOperator{\R}{Re}
\newtheorem{defn}[theorem]{Definition}
\newtheorem{prop}[theorem]{Proposition}
\newtheorem{lemman}[theorem]{Lemma}
\newtheorem{corollaryn}[theorem]{Corollary}
\definecolor{green2}{rgb}{0.0,0.5,0.0}
\definecolor{classicrose}{rgb}{0.98,0.8,0.91}
\definecolor{blizzardblue}{rgb}{0.67,0.9,0.93}
\begin{document}
\title{Limiting current distribution for a two species asymmetric exclusion process}
\date{\small\today}

\author{Zeying Chen$^{1,3}$, Jan de Gier$^1$, Iori Hiki$^2$, Tomohiro Sasamoto$^2$ and Masato Usui$^2$}

\institute{$^1$ARC Centre of Excellence for Mathematical and Statistical Frontiers (ACEMS), School of Mathematics and Statistics, The University of Melbourne, VIC 3010, Australia\\
$^2$Department of Physics, Tokyo Institute of Technology, Ookayama 2-12-1, Tokyo 152-8551, Japan \\
$^3$School of Mathematical Sciences, University of Science and Technology of China, Hefei, Anhui, 230026, PR China\\
\email{zeyingc@ustc.edu.cn, jdgier@unimelb.edu.au, hiki.i.aa@icloud.com, sasamoto@phys.titech.ac.jp, musui@stat.phys.titech.ac.jp }
}

\maketitle
\abstract{We study current fluctuations of a two-species asymmetric exclusion process, known as the Arndt-Heinzel-Rittenberg model.
For a step-Bernoulli initial condition with finite number of particles, we provide an explicit multiple integral expression for a certain joint current probability distribution. By performing an asymptotic analysis we prove that the joint current distribution is given by a product of a Gaussian and a GUE Tracy-Widom distribution in the long time limit, as predicted by non-linear fluctuating hydrodynamics.
}

\section{Introduction and main results}
\label{se:introduction}
Asymmetric exclusion processes on $\Z$, in which many particles perform asymmetric random walks with only
one particle allowed on each lattice site, are considered to be among the most fundamental processes in the
theory of stochastic interacting particle systems\cite{Liggett1985,Liggett1999,Spohn1991}. 
Originally introduced as biophysical models for protein synthesis on RNA \cite{Macdonald68,Macdonald69} they also have many other applications in biology and in other disciplines such as physics and engineering \cite{Golinelli_2006}. These processes have attracted much attention  over the
years and various large scale behaviours have been studied such as the hydrodynamic limit, large
deviations  
and other properties \cite{kipnis1998scaling}.

More recently many studies have focused on the non-Gaussian fluctuation properties of asymmetric exclusion processes, which are related to the Kardar-Parisi-Zhang (KPZ) universality class\cite{KPZ1986,BS1995}. In 2000, Johansson showed that the current distribution
of the totally asymmetric simple exclusion process (TASEP) with the step initial condition is given by the
GUE Tracy-Widom distribution in the long time limit \cite{J2000}. As a corollary  he proved the fluctuation exponent 1/3, which is characteristic of the KPZ class in one dimension. \footnote{In two dimensions the ASEP has a $\log(t)^{2/3}$ law \cite{Y2004}.} The KPZ dynamical exponent was also observed in Bethe ansatz studies of the spectral gap of the asymmetric simple exclusion process (ASEP) with periodic and open boundary conditions \cite{PhysRevA.46.844, PhysRevLett.68.725, PhysRevE.52.3512, Gier_2005, Gier_2006, Gier_2008}. Since Johansson's result there have been a
number of generalisations of his results for various types of asymmetric exclusion processes, including the
ASEP and $q$-TASEP, under  several different initial conditions, see e.g. \cite{BaikRains00, PS2002, Sasamoto_2005, TW2009a, Sasamoto_2010a,  Sasamoto_2010b, Sasamoto_2010c, Amir_2010, CalabreseDoussal, BC2014}. 

Most of these results on limiting distributions have been established based on explicit exact formulas for appropriate
quantities before taking the long time limit. For example, for the case of TASEP with step initial condition, the current
distribution is written as a multiple integral related to random matrix theory \cite{J2000,NS2004}. For ASEP and $q$-TASEP,
certain $q$-deformed moments admit a multiple integral representation, which leads to a Fredholm determinant
expression for the $q$-deformed Laplace transform \cite{BC2014,bcs}. The reason for the existence of such explicit formulas for certain classes of models is related to the underlying integrability of such models. In fact, almost all models for which limiting distributions have been studied have turned out to be a special or a limiting case of a stochastic higher spin six vertex model (HS6VM)  \cite{MANGAZEEV201470,povolotsky2013,Corwin_2015, BP2016}, whose similarity transformed non-stochastic version has long been known to be (Yang-Baxter) integrable \cite{Kulish1981YangBaxterEA,Kirillov_1987}. Very recently a method to study asymptotic distributions without resorting to explicit formulas or integrability has been 
proposed \cite{QS2020p} (cf. also \cite{Virag2020p}). 

Despite these remarkable developments, most of the asymptotic results have so far only been obtained for models in which there is only a single species of particles. It is a quite natural and important problem to try to generalise the analysis to multi-species models. Given the above situation for single species
models, it would be natural to start our studies on integrable multi-species models. In fact several multi-species models
have been already known. Multi-species asymmetric exclusion processes were
introduced a long time ago, see e.g. \cite{PhysRevE.59.205,Mallick_1999,Derrida_exactsolution}, and several other multi-species stochastic processes have been proposed recently in \cite{kuniba2016multispecies}. Among them those with $U_q(sl_n)$ symmetry ($n$-ASEP) have been most studied. Their stationary measures were given in \cite{ferrari2007,PEM2009} and in \cite{Cantini_2015} these were put in the context of Macdonald polynomial theory, the Knizhnik-Zamolodchikov equation and bosonic solutions of the Yang-Baxter equation. The $n$-TASEP was shown to be related to the combinatorial $R$-matrix and solutions of the tetrahedron equation in \cite{kuniba2016multispecies,Kuniba_2016}. In \cite{ChenGW,Kuan_2018} methods have been developed to construct multi-species duality functionals, and many other algebraic properties and connections to non-symmetric Macdonald polynomials and partition functions are discussed in \cite{BW2018p}. The transition probability has also been discussed \cite{tracy2013asymmetric,kuan2019probability,lee2018exact}.

Much less is known rigorously about dynamic properties for multi-species models. The fluctuation exponent of $n$-ASEP was addressed in finite size scaling of the gap of its generator \cite{arita2009spectrum}, and the Bethe ansatz for the 2-ASEP with open boundaries was considered in \cite{Zhang_2019}. Limit distributions for a single second class particle have been studied in \cite{FNG2019, Nejjar2019}. To our knowledge, full limiting distributions for multi-species models have not been derived other than where there is a relation to a single species model such as shift colour-position symmetry \cite{borodin2019colorposition,borodin2020shiftinvariance}. 

In this paper we consider the two-species Arndt-Heinzel-Rittenberg model on $\mathbb{Z}$. This is a solvable lattice model and we first derive its transition probability (or Green's function) in the form of multiple contour integrals over two families of integration variables by diagonalising the time evolution generator using the nested Bethe ansatz method. Summing over initial and final positions leads to a joint current distribution in the same form. Then we use a proposition which allows us to rewrite one set of multiple integral into a Fredholm determinant.    
Finally we rigorously derive a limiting joint current distribution for late times from an exact formula for a total crossing probability. This limiting distribution factorises in a Tracy-Widom GUE distribution function and a Gaussian.   

\subsection{The Arndt-Heinzel-Rittenberg model}
In this paper we study a two-species asymmetric exclusion process which is different from $n$-ASEP, and establish a result on the long time limit for a certain joint distribution of currents. The model we consider was first introduced in 1997 by Arndt-Heinzel-Rittenberg in \cite{AHR-ptv}
and we refer to it as the AHR model below. We study (a special case of) the AHR exclusion process on the one dimensional lattice $\mathbb{Z}$, with two species of particles. Namely, each site can be either empty or occupied by one of the two kinds of particles, which are called ``$+$'' (plus) and ``$-$'' (minus) particles. The plus particle can hop forward while the minus particle can hop backward. In addition, an adjacent $+-$ pair of particles can swap to $-+$, but the swap can only occur in one direction (see the explicit jumping rates below). Obviously, since the AHR model is defined to be a Markov process, the hopping and swapping occurs according to exponential clocks. Additionally, the hopping and swapping is suppressed if the neighbouring site is occupied, i.e. a site cannot be occupied with more than one particle so that the model is a  proper exclusion process.
The explicit jumping rates are listed below.
\begin{equation}
\begin{array}{cc}
(+,0)\rightarrow (0,+)& \textrm{with rate } \beta,
\\[2mm]
(0,-)\rightarrow (-,0)& \textrm{with rate } \alpha,
\\[2mm]
(+,-)\rightarrow (-,+)& \textrm{with rate } 1.
\end{array}
\end {equation}
\begin{figure}[t]
\begin{center}
\begin{tikzpicture}[scale=0.8]
\node[draw,circle,inner sep=0pt,fill, text=white](p1) at (1,0) {$+$};
\node[draw,circle,inner sep=0pt,fill, text=white](p2) at (2,0) {$+$};
\node[draw,circle,inner sep=0pt,fill, text=white](p3) at (3,0) {$+$};
\node[draw,circle,inner sep=0pt,fill, text=white](p4) at (5,0) {$+$};
\node[draw,circle,inner sep=0pt,fill, text=white](p5) at (7,0) {$+$};

\node[draw,circle,inner sep=0pt](m1) at (4,0) {$-$};
\node[draw,circle,inner sep=0pt](m2) at (9,0) {$-$};
\node[draw,circle,inner sep=0pt](m3) at (11,0) {$-$};
\node[draw,circle,inner sep=0pt](m4) at (12,0) {$-$};

\node[draw,circle,inner sep=1pt](o1) at (6,0) {};
\node[draw,circle,inner sep=1pt](o2) at (8,0) {};
\node[draw,circle,inner sep=1pt](o3) at (10,0) {};

\draw (p1) -- (p2) -- (p3) -- (m1) -- (p4) -- (o1) -- (p5) -- (o2) -- (m2) -- (o3) -- (m3) -- (m4);

\draw [dashed] (0,0) -- (1,0);
\draw [dashed] (12,0) -- (13,0);

\draw[thick,<->] (3,0.5) arc (160:20:0.5) ;
\node[scale=1] at (3.5,1.1){$1$};
\draw[thick,->] (5,0.5) arc (160:20:0.5) ;
\node[scale=1] at (5.5,1.1){$\beta$};
\draw[thick,<-] (8,0.5) arc (160:20:0.5) ;
\node[scale=1] at (8.5,1.1){$\alpha$};

\draw[thick,color=red,<->] (5,-0.5) arc (-20:-160:0.5) ;
\node[scale=1.5,color=red] at (4.52,-.83){${\times}$};
\draw[thick,color=red,->] (7,-0.5) arc (-20:-160:0.5) ;
\node[scale=1.5,color=red] at (6.52,-.83){${\times}$};
\draw[thick,color=red,->] (9,-0.5) arc (-160:-20:0.5) ;
\node[scale=1.5,color=red] at (9.52,-.83){${\times}$};
\end{tikzpicture}
\end{center}
\caption{Configuration of $+$ and $-$ particles and hopping rates in the AHR model on $\mathbb{Z}$}
\label{ahr}
\end{figure}
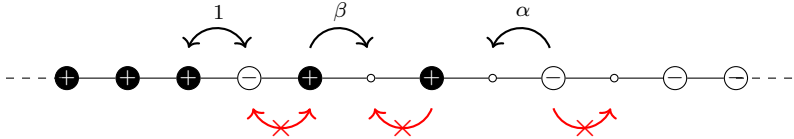
See Fig. \ref{ahr}.

The Yang-Baxter integrability of (this case of) the AHR model was proved in \cite{C2008}. In \cite{AHR-ptv}, the authors studied the stationary state of the model on a ring and observed an interesting condensation phenomena\cite{AHR1-ptv,AHR2-ptv}, which was further studied in  \cite{RSS2000} by using a connection to the Al-Salam-Chihara polynomials.
The hydrodynamics of the AHR was studied numerically in \cite{kim2007dynamic}, where it was observed that the model can be described by two coupled Burgers equations that decouple at large length scales.
Here, we specifically consider the case where the rates $\alpha$ and $\beta$ sum up to unity, $\alpha+\beta=1$.
It is known that in this case the stationary measure is factorised. In addition, this condition drastically simplifies technicalities in the
construction of the eigenfunctions of the time evolution generator in the form of a product of plane waves, and hence also in the derivation of the transition probability of the AHR model. It would be a quite interesting and challenging problem to generalise our results to the case with more general values of the parameters. 

\subsection{Transition probability}
In Section~\ref{se:greenfunction} we give a formula for the transition probability (or the Green's function) for the AHR model on $\mathbb{Z}$ in the form of a multiple integral.
This formula is a generalisation of the formula for the single species TASEP, first given by
Sch\"utz in \cite{S1997}, see also \cite{tracy2008integral}. Our proof will also be similar to the ones in \cite{S1997,tracy2008integral}, i.e. we
show explicitly that the multiple integral satisfies the correct time evolution equation and initial condition.

The origin of the integrand of the formula may be understood by considering a connection to the Bethe ansatz, as explained in Appendix~\ref{appx:bethe wave fn}.
A big difference from the single species case is that the form of the transition probability depends strongly on
the ordering of particles at initial and final times. In this paper we will focus on a special case
in which initially all $+$ particles are to the left of all $-$ particles, they swap their positions completely and at final
time all $-$ particles are to the left of all $+$ particles. For this particular case, the interaction effects between the $+$
and $-$ particles are contained in a nice product form in the integrand of the transition probability.
For the moment it seems difficult to do asymptotics for general orderings.

As for the initial condition, we consider a mix of random and step initial conditions, namely, we
consider the situation in which initially there are $n$ of $+$ particles with density $\rho$ to the left of the origin
while $m$ of $-$ particles occupy the first $m$ sites to the right of the origin, see Fig.~\ref{stepbernoulli}. Let $\mathbb{P}_{n,m}$ denote
the probability measure for this initial condition and $N_{\pm}(t)$ the number of $\pm$ particles which
passed the origin up to time $t$. For this setup, our first main result is the following theorem.
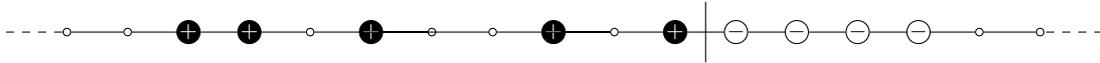
\begin{figure}[h]
\begin{center}
\begin{tikzpicture}[scale=0.8]
\node[draw,circle,inner sep=0pt,fill, text=white](p1) at (3,0) {$+$};
\node[draw,circle,inner sep=0pt,fill, text=white](p2) at (4,0) {$+$};
\node[draw,circle,inner sep=0pt,fill, text=white](p3) at (6,0) {$+$};
\node[draw,circle,inner sep=0pt,fill, text=white](p4) at (9,0) {$+$};
\node[draw,circle,inner sep=0pt,fill, text=white](p5) at (11,0) {$+$};

\node[draw,circle,inner sep=0pt](m1) at (12,0) {$-$};
\node[draw,circle,inner sep=0pt](m2) at (13,0) {$-$};
\node[draw,circle,inner sep=0pt](m3) at (14,0) {$-$};
\node[draw,circle,inner sep=0pt](m4) at (15,0) {$-$};

\node[draw,circle,inner sep=1pt](o1) at (1,0) {};
\node[draw,circle,inner sep=1pt](o2) at (2,0) {};
\node[draw,circle,inner sep=1pt](o3) at (5,0) {};
\node[draw,circle,inner sep=1pt](o4) at (7,0) {};
\node[draw,circle,inner sep=1pt](o5) at (8,0) {};
\node[draw,circle,inner sep=1pt](o6) at (10,0) {};
\node[draw,circle,inner sep=1pt](o7) at (16,0) {};
\node[draw,circle,inner sep=1pt](o8) at (17,0) {};

\draw (o1) -- (o2) -- (p1) -- (p2) -- (o3) -- (o4) -- (p3) -- (o5) -- (o6) -- (p4) -- (o6) -- (p5) -- (m1) -- (m2) -- (m3) -- (m4) -- (o7) -- (o8);

\draw [dashed] (0,0) -- (1,0);
\draw [dashed] (17.1,0) -- (18,0);
\draw (11.5,-0.5) -- (11.5,0.5);
\end{tikzpicture}
\end{center}
\caption{Step-Bernoulli initial configuration of the AHR model on $\mathbb{Z}$}
\label{stepbernoulli}
\end{figure}

\begin{theorem}
\label{prop:P1rho}
For the AHR model with $\alpha+\beta=1$ and the above step-Bernoulli initial condition, the probability that all particles passed the origin at time $t$ is given as the following multiple integral,

\begin{multline}
 \mathbb{P}_{n,m}[N_+(t)=n, N_-(t)=m] ={(-1)}^{n+m} 
 \oint \prod_{j=1}^n \frac{\dd z_j }{2\pi \ii} \prod_{k=1}^m
\frac{\dd w_k }{2\pi \ii}\ \e^{\Lambda_{n,m} t} \times \\
\frac{\displaystyle \rho^n \prod_{1\le i<j\le n} (z_i-z_j) \prod_{1\le k< \ell \le m} (w_\ell -w_k) \prod_{j=1}^n z_j^{n-j} \prod_{k=1}^m w_k^{k-1}}
{\displaystyle \prod_{j=1}^n (z_j-1)^{n+1-j} \big(1-(1-\rho) z_j\big)
\prod_{k=1}^m (w_k-1)^{k} \prod_{j=1}^n \prod_{k=1}^m \Big(\alpha z_j + \beta w_k\Big)},
\label{Prho1}
\end{multline}
where $\Lambda_{n,m}=\beta \sum_{j=1}^n (z_j^{-1}-1) + \alpha \sum_{k=1}^m (w_k^{-1}-1)$, and all contour integrals are around the origin.
\end{theorem}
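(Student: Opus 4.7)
The plan is to start from the multiple-integral formula for the Green's function of the AHR model derived in Section~\ref{se:greenfunction} for the special configuration in which all $+$ particles are initially to the left of all $-$ particles and at time $t$ the orderings have swapped completely. In this regime, the nested Bethe ansatz yields an integrand consisting of single-particle plane-wave factors in each $z_j$ and $w_k$, the time-evolution factor $\mathrm{e}^{\Lambda_{n,m}t}$, Vandermonde-type factors in the $z_j$'s and separately in the $w_k$'s, and a pairwise interaction kernel of the form $\prod_{j,k}(\alpha z_j+\beta w_k)^{-1}$ that carries all the effect of the $+\!-$ swap process.

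Given this, $\mathbb{P}_{n,m}[N_+(t)=n,N_-(t)=m]$ is obtained by three nested summations: (i) the Bernoulli average over the strictly ordered initial $+$ positions $x_1<\cdots<x_n\le 0$, each site independently occupied with probability $\rho$; (ii) the sum over the final $+$ positions $1\le x'_1<\cdots<x'_n$; and (iii) the sum over the final $-$ positions $y'_1<\cdots<y'_m\le 0$, the initial $-$ positions being fixed at $1,\ldots,m$. Since each contour is a small circle around the origin and every summation variable enters the integrand only through powers of $z_j$ or $w_k$, each summation reduces to an iterated geometric series with an explicit rational sum.

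Concretely, I would first perform the final $+$ sum: telescoping the geometric series from the largest index downwards builds the ladder $\prod_j(z_j-1)^{-(n+1-j)}$ together with the numerator $\prod_j z_j^{\,n-j}$. The analogous calculation on the $-$ side, with the initial positions pinned at $1,\ldots,m$ and an upper cutoff at $0$, produces $\prod_k(w_k-1)^{-k}$ with the numerator $\prod_k w_k^{\,k-1}$. Finally, a second iterated geometric series handles the Bernoulli average on the $+$ side and contributes $\rho^n\prod_j\bigl(1-(1-\rho)z_j\bigr)^{-1}$. The Vandermonde factors and the interaction kernel $\prod_{j,k}(\alpha z_j+\beta w_k)^{-1}$ are inert under all three summations and pass through unchanged, assembling into the integrand of~\eqref{Prho1}.

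The main obstacle will be the careful bookkeeping of the iterated geometric series under the strict-ordering constraints: the order of summation must be chosen so that each partial sum converges on the chosen contours and no pole crosses a contour as sums are performed. In particular, the Bernoulli average produces the denominator $1-(1-\rho)z_j$, so the $z_j$ contour must be taken small enough that the pole at $z_j=(1-\rho)^{-1}$ lies outside. A cleaner alternative is to exploit the antisymmetry of the integrand separately in the $z_j$'s and in the $w_k$'s, replace each ordered sum by an unordered one divided by $n!$ or $m!$, factorise the sums site by site, and then absorb the resulting antisymmetriser back into the Vandermonde factors. Either route yields~\eqref{Prho1}.
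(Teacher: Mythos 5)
Your overall strategy---start from the Green's function \eqref{Green function} specialised to $r_j=n$, then sum against the step--Bernoulli weight over initial $+$ positions and over all final positions with geometric series---is indeed the paper's approach. However, there is a genuine gap in the middle of your argument, precisely at the point you dismiss as ``bookkeeping of the iterated geometric series'' or hope to handle by ``factorising the sums site by site''.

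The iterated geometric sums over strictly ordered coordinates do \emph{not} factor site by site. Writing, say, $x_j = u_1+\dots+u_j+(j-1)$ with $u_i\ge 0$ to resolve the ordering constraint $0\le x_1<\dots<x_n$, the exponent of $u_i$ couples all $z_{\pi_j}$ with $j\ge i$, and the sum produces \emph{cumulative-product} denominators of the form $\prod_i\bigl(1-z_{\pi_i}z_{\pi_{i+1}}\cdots z_{\pi_n}\bigr)^{-1}$ rather than $\prod_i(1-z_{\pi_i})^{-1}$. The same phenomenon occurs for the initial $+$ sum weighted by the Bernoulli measure: one obtains $\rho^n\prod_{j}\bigl(1-(1-\rho)z_1\cdots z_j\bigr)^{-1}$, not the $\rho^n\prod_j(1-(1-\rho)z_j)^{-1}$ in \eqref{Prho1} that you assert appears directly. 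The step that converts these cumulative-product denominators, summed over the permutation $\pi$ with signs, into the Vandermonde/rational form of the integrand is exactly Lemma~\ref{lem:symm identity},
\[
\sum_{\pi\in S_n}\sign(\pi)\prod_{i=1}^n\Bigl(\frac{z_{\pi_i}-1}{z_{\pi_i}}\Bigr)^{\!i}\frac{1}{1-(1-\rho)\prod_{j\le i}z_{\pi_j}}
=\frac{\prod_{i<j}(z_j-z_i)\prod_i(z_i-1)}{\prod_i z_i^n\bigl(1-(1-\rho)z_i\bigr)},
\]
and the paper in fact uses it twice: once with $\rho=0$ to absorb the $\pi,\sigma$ sums coming from the final-position sums, and a second time with general $\rho$ (after a symmetrise/de-symmetrise step) to trade the cumulative factors $1-(1-\rho)z_1\cdots z_j$ in the intermediate expression for the simple factors $1-(1-\rho)z_j$ in \eqref{Prho1}. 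Without this identity---or an equivalent---your proposal does not close; the ``antisymmetrise and absorb into Vandermonde'' step you offer as a cleaner alternative is precisely what this lemma makes rigorous, and it is not a routine manipulation.

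To repair the argument, you should: (i) carry out the three iterated geometric sums explicitly, keeping track of the cumulative-product denominators; (ii) prove (or cite) an identity of the form of Lemma~\ref{lem:symm identity}, e.g. by induction on $n$ combined with a residue argument as in Appendix~\ref{se:symm identities}; and (iii) apply it first with $\rho=0$ to the final-position sums and then with general $\rho$ to the Bernoulli-averaged initial sum. Once this is in place, the $w$-side is handled by the same identity with $\rho=0$, the interaction kernel $\prod_{j,k}(\alpha z_j+\beta w_k)^{-1}$ passes through as you say, and \eqref{Prho1} follows.
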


\subsection{Transformation to Fredholm determinant}

Let us recall the definition of the continuous and discrete versions of the Fredholm determinant, which we will use in this paper. Let $K$ be an integral operator acting on functions $f \in L^2(s,\infty)$ with kernel $K(\zeta,\xi)$ given by 
$$
(Kf)(\zeta) := \int_{s}^\infty K(\zeta,\xi) f(\xi) \dd \xi.
$$
A Fredholm determinant of the operator $1+\lambda K$ is formally defined as the series 
\begin{align} 
\det(1+\lambda K)_{L^2(s,\infty)} &=  1 + \sum_{k=1}^{\infty}{ \frac{\lambda^k}{k!} } \int_{s}^{\infty}{ \int_{s}^{\infty}{ \cdots \int_{s}^{\infty}{ \det_{1 \leq i , j \leq k}\left[ K(\xi_i, \xi_j) \right] \dd \xi_1 \ldots \dd \xi_k } } }. \label{eq:FHCdef}
\end{align}
A discrete analogue is defined by
\begin{align}
\det(1+\lambda K)_{\ell^2(\mathbb{N})} &= 1 + \sum_{k=1}^{\infty}{\frac{\lambda^k}{k!} } \sum_{x_1 = 1}^{\infty}{ \sum_{x_2 = 1}^{\infty}{ \cdots \sum_{x_k = 1}^{\infty}{ \det_{1 \leq i , j \leq k}\left[ K(x_i, x_j) \right] } } }. \label{eq:FHDdef}
\end{align}
The series converge when $K$ satisfies certain conditions, e.g. when it is trace-class. For further details on Fredholm determinants in the context of Macdonald processes we refer to \cite{BC2014}, and to \cite{lax02,GohbergKrein69} for more general theory.

In the case of single species models, a useful approach to establish late time results is to rewrite multiple integral expressions into a Fredholm determinant for which asymptotics is easier to perform. For example, for the most standard case of the GUE random matrix eigenvalues, or in fact for general determinantal point processes, such a rewriting is well known and is explained in \cite{Mehta2004random,forrester2010random,tcASEPdet,S1997}.

In our multi-species case it has not been known whether the AHR model is associated with a determinantal process or not, but the following rewriting
in terms of auxiliary variables (or Fourier modes) has turned out to be useful in our analysis.  A very similar and related manipulation was stated in \cite{IS2019} for the case related to the $q$-Whittaker function (see also \cite{IMuS2020} for its application to a more general setting), but here we give a statement and its proof (in Section 4) for the case with several parameters but without $q$.   
In the following, we denote by $[N,M]$ the set $\{ N, \dots, M \} \subset \mathbb{N}$ for $N, M \in \mathbb{N}$ satisfying $N<M$. 

\smallskip
\begin{prop}
\label{prop:fredholm}
Set 
\begin{equation} 
g^c(\zeta,s ; \kappa) = \prod_{j=1}^\nu \frac{1}{1-u_j/\zeta}
\prod_{k=1}^\mu \frac{1}{1+v_k \zeta} \times \zeta^{\mu-\nu-s}  
{(\zeta + c)}^\kappa \e^{\gamma\zeta}, \label{gmndef} 
\end{equation}
where $\nu,\mu,\kappa\in\mathbb{N}$, $s\in\mathbb{Z}$ and $\zeta,c,u_j,v_k,\gamma\in\mathbb{C}$ for $j\in [1,\nu],k\in [1,\mu]$
and $g(\zeta,s)=g^c(\zeta,s;0)$. 
Let $I_{\nu}$ be a multiple integral of a form
\begin{equation}
\label{Inu}
I_\nu
=
\frac{1}{\nu!} \oint_C \prod_{i=1}^{\nu}
\frac{\dd \zeta_i}{2\pi \ii\,a_i^s \zeta_i}
\frac{\prod_{1\le i\neq j\le \nu} (1-\zeta_i/\zeta_j)}{\prod_{1\le i,j\le \nu}(1-a_i/\zeta_j)}
\prod_{i=1}^{\nu} \frac{g(\zeta_i,s)}{g(a_i,s)},
\end{equation}
where $a_i\in\mathbb{C}$, $i\in [1,\nu]$ and the contour $C$ includes $0,a_i,u_j,-1/v_k$ 
for $i\in [1,\nu], j\in [1,\nu],k\in [1,\mu]$. 

We assume that there exists $c\in\mathbb{C}$ such that 
an open disc of radius $r = \min_{1 \leq j \leq \nu}{( |a_j + c| )}$ centered at $-c$, denoted by $S(-c,r)$, includes the poles at $0, u_j$ and $-1/v_k$ for $j\in[1,\nu]$ and $k\in[1,\mu]$. 

Then this multiple integral can be written as a Fredholm determinant
\begin{equation}
\label{Inu_Fredform}
 I_{\nu} = \prod_{l=1}^{\mu}{ a_l^{-s }}  \det(1-K^{c})_{\ell^2(\mathbb{N})},
\end{equation}
where the kernel is written in the form
\begin{equation*}
 K^{c}(x,y) = \sum_{k=0}^{\nu-1} \phi^{c}_k(x) \psi^{c}_k(y)
\end{equation*}
with
\begin{align}
\phi^{c}_k(x)
&=
\oint_D \frac{\dd \xi}{2\pi\ii}
\frac{\xi^{k-\nu}}{ g^{c}(\xi,s ; x )
(\xi-a_1)\cdots (\xi-a_{k+1})} ,
\label{phi}
\\
\psi^{c}_k(x)
&=
a_{k+1} \oint_{C_r} \frac{\dd \zeta}{2\pi\ii\, (\zeta + c)  \zeta^{k-\nu+1}}
g^{c}(\zeta,s ; x)(\zeta-a_1)\cdots (\zeta-a_k)
\label{psi}.
\end{align}
The contour $D$ includes the poles at $a_i$ for $i\in[1,\nu]$ and 
the contour $C_r$ includes the poles at $0,u_j,-1/v_k$ for $j\in[1,\nu], k\in[1,\mu]$ 
(see Fig. \ref{fig:prop fred}). 

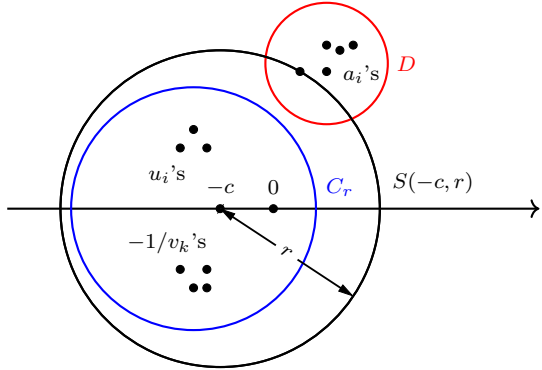
\begin{figure}[h!]
\begin{center}
\begin{tikzpicture}[scale=3.5]
			
\draw[->,thick] (-0.8,0) -- (1.2,0);
			
\draw[thick] (0,0) circle (0.6);
\node at (0.8,0.1){$S(-c,r)$};

\draw[fill=black] (0.2,0) circle (0.015);
\node at (0.2,0.08) {$0$};
\draw[fill=black] (0,0) circle (0.015);
\node at (0,0.08) {$-c$};

\draw[fill=black] (0.3,0.5196) circle (0.015);
\draw[fill=black] (0.4,0.5196) circle (0.015);
\draw[fill=black] (0.4,0.6196) circle (0.015);
\draw[fill=black] (0.5,0.6196) circle (0.015);
\draw[fill=black] (0.45,0.6) circle (0.015);
\node at (0.53,0.51){$a_i$'s};

\draw[thick,red] (0.4,0.55) circle (0.23);
\draw[thick,blue] (-0.1,0) circle (0.46);
\draw[thick] (0,0) circle (0.6);
\node at (0.45,0.08){$\color{blue}{C_r}$};
\node at (0.7,0.55){$\color{red}{D}$};

\draw[fill=black] (-0.1,0.3) circle (0.015);
\draw[fill=black] (-0.15,0.23) circle (0.015);
\draw[fill=black] (-0.05,0.23) circle (0.015);
\node at (-0.2,0.13){$u_i$'s};

\draw[fill=black] (-0.1,-0.3) circle (0.015);
\draw[fill=black] (-0.15,-0.23) circle (0.015);
\draw[fill=black] (-0.05,-0.23) circle (0.015);
\draw[fill=black] (-0.05,-0.3) circle (0.015);
\node at (-0.2,-0.13){$-1/v_k$'s};

\coordinate (O) at (0,0);
\coordinate (A) at (0.3,0.5196);
\coordinate (B) at (0.5,-0.3317);

\dimline[line style = {line width=0.7},extension start length=0,
extension end length=0] {(B)}{(O)}{$r$};			
		
\end{tikzpicture}
\end{center}
\caption{An illustration of the contours and poles in Proposition \ref{prop:fredholm}. The black circle is the boundary of the disk $S(-c,r)$, the blue circle is $C_r$ and the red circle is $D$.}
\label{fig:prop fred}
\end{figure}
\end{prop}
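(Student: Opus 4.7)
The plan is to reduce both sides of \eqref{Inu_Fredform} to finite $\nu\times\nu$ determinants and verify their equality.

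\textbf{Step 1: Reduction of $I_\nu$.} First, I would use the Cauchy determinant identity $\det[1/(\zeta_i-a_j)]_{i,j=1}^\nu = \prod_{i<j}(\zeta_i-\zeta_j)(a_j-a_i)/\prod_{i,j}(\zeta_i-a_j)$ together with the Vandermonde evaluation $\prod_{i<j}(\zeta_j-\zeta_i)=\det[\zeta_i^{j-1}]$ to factorise the symmetric cross-ratio in the integrand of \eqref{Inu} as a product of two $\nu\times\nu$ determinants (after the two factors of $(-1)^{\binom{\nu}{2}}$ cancel). Andreief's identity then collapses the $\nu$-fold integral into a single $\nu\times\nu$ determinant,
\begin{equation*}
I_\nu \;=\; \frac{1}{\prod_{i<j}(a_j-a_i)\prod_{i=1}^\nu a_i^s\, g(a_i,s)}\det\Bigl[\oint_C\frac{\dd\zeta}{2\pi\ii}\frac{\zeta^{j-1}g(\zeta,s)}{\zeta-a_k}\Bigr]_{j,k=1}^\nu .
\end{equation*}

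\textbf{Step 2: Reduction of the Fredholm side.} Since $K^{c}$ is a rank-$\nu$ kernel, I would write $K^{c}=\Phi\Psi^{T}$ and apply the Sylvester-type identity $\det(1-\Phi\Psi^{T})_{\ell^{2}(\mathbb{N})}=\det(I_\nu-\Psi^{T}\Phi)_{\mathbb{C}^\nu}$ to reduce the Fredholm determinant to $\det(\delta_{jk}-M_{jk})$ with $M_{jk}=\sum_{x\geq1}\psi_{j}^{c}(x)\phi_{k}^{c}(x)$. All of the $x$-dependence of the summand is carried by the factor $((\zeta+c)/(\xi+c))^{x}$ coming from the $(\zeta+c)^x$ in $g^c(\zeta,s;x)$ inside $\psi_k^c$ and the $(\xi+c)^{-x}$ inside $\phi_k^c$. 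The disk hypothesis gives $|\zeta+c|<r\leq|a_{i}+c|\leq|\xi+c|$ for $\zeta\in C_r$ and $\xi\in D$, so the geometric series converges to $(\zeta+c)/(\xi-\zeta)$; this precisely cancels the $1/(\zeta+c)$ appearing in $\psi_{k}^{c}$ and produces an explicit double contour integral representation of $M_{jk}$.

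\textbf{Step 3: Matching.} The decisive step is to deform the inner $\zeta$-contour $C_{r}$ outward past the outer $\xi$-contour $D$, thereby crossing the pole at $\zeta=\xi$. This splits $M_{jk}=M^{(1)}_{jk}+M^{(2)}_{jk}$, where $M^{(2)}_{jk}$ denotes the residue contribution at $\zeta=\xi$. A case analysis on the sign of $k-j$ shows that $M^{(2)}_{jk}=\delta_{jk}$: the polynomial factors $\prod_{i=1}^{j}(\xi-a_{i})$ and $\prod_{i=1}^{k+1}(\xi-a_{i})$ combine so that for $k>j$ the resulting single $\xi$-integral vanishes by degree counting (sum of finite residues equals minus the residue at infinity, which is zero), for $k<j$ the integrand has no poles inside $D$, and for $k=j$ a single residue at $\xi=a_{j+1}$ yields $1$ after cancellation with the $a_{j+1}$ prefactor from $\psi^c_j$. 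Hence $\det(\delta_{jk}-M_{jk})=(-1)^{\nu}\det M^{(1)}_{jk}$. The remaining determinant $\det M^{(1)}_{jk}$ is then matched with the Andreief determinant of Step 1 via a triangular change of basis from the polynomial family $\{\zeta^{\nu-j-1}\prod_{i=1}^{j}(\zeta-a_{i})\}_{j=0}^{\nu-1}$ (adapted to $\psi_{k}^{c}$) to the monomial basis $\{\zeta^{j-1}\}_{j=1}^{\nu}$. The prefactors $\prod_{l=1}^{\nu}a_{l}^{-s}$, $\prod_{i}g(a_{i},s)^{-1}$, and $\prod_{i<j}(a_{j}-a_{i})^{-1}$ are then recovered from the residues at $\xi=a_{1},\dots,a_{k+1}$ of the $\xi$-integral together with the determinant of the triangular change of basis.

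The main obstacle will be the contour bookkeeping in Step 3: one must verify the vanishing of the boundary contribution of the deformed $\zeta$-contour (controlled by the growth $g(\zeta,s)\sim\zeta^{\mu-\nu-s}\e^{\gamma\zeta}$, possibly replaced by a large-circle residue-at-infinity argument) and confirm that the residue combinatorics produce exactly the scalar prefactor appearing in \eqref{Inu_Fredform}. A secondary technical point is justifying the interchange of the infinite sum over $x$ with the double contour integration in Step 2, which is ensured by the uniform geometric bound $|(\zeta+c)/(\xi+c)|<1$ provided by the disk hypothesis.
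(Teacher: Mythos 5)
Your proposal is correct in its essential ideas, but you organise the argument differently from the paper: you meet in the middle, whereas the paper runs a single forward chain from $I_\nu$ to the Fredholm determinant. Concretely, the paper applies Cauchy plus Andreief to $I_\nu$, then splits the resulting $\nu\times\nu$ determinant into the residues at $\zeta=a_j$ (yielding the $\delta_{jk}$) plus the $C_r$-piece, introduces $c$ by writing $-1/(\zeta-a_j)=\sum_{x\ge1}(\zeta+c)^{x-1}/(a_j+c)^{x}$, recognises this as $\det(1-A^cB^c)_{\nu\times\nu}$, swaps to $\det(1-B^cA^c)_{\ell^2(\mathbb{N})}$, and only at the end decomposes the double-contour kernel into the rank-$\nu$ sum $\sum_k\phi^c_k\psi^c_k$ via the explicit algebraic identity \eqref{tosum}. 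Your Steps~2--3 are that chain run backwards from the Fredholm side: your Sylvester identity $\det(1-\Phi\Psi^T)_{\ell^2}=\det(I_\nu-\Psi^T\Phi)_{\mathbb{C}^\nu}$ is the paper's $AB\leftrightarrow BA$ swap, your geometric summation of the $((\zeta+c)/(\xi+c))^x$ factor and the resulting cancellation of $1/(\zeta+c)$ is the paper's $c$-introduction in reverse, and your residue of $1/(\xi-\zeta)$ at $\zeta=\xi$ producing $\delta_{jk}$ is exactly the paper's residues at $\zeta=a_j$, seen after the roles of the two contours have been exchanged. The upside of your arrangement is that the identity \eqref{tosum} never appears explicitly; the price is the final matching in Step~3, where you must show that $(-1)^\nu\det M^{(1)}_{jk}$ equals the Andreief determinant of Step~1 up to the prefactors. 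That step is the least developed part of your sketch: the triangular change of basis from $\{\zeta^{\nu-j-1}\prod_{i\le j}(\zeta-a_i)\}$ to $\{\zeta^{j-1}\}$ does not by itself produce the factors $\prod_{i<j}(a_j-a_i)^{-1}\prod_i g(a_i,s)^{-1}$ --- those only come out after you also carry out the $\xi$-residues at $a_1,\dots,a_{k+1}$ in $M^{(1)}_{jk}$, i.e.\ after you redo the very residue computation the paper performs to reach \eqref{prefredholm}. Also, one worry you raise is not actually an obstacle: the deformation of $C_r$ across $D$ needs only to reach a finite contour just outside $D$, so no estimate at infinity (and hence no issue with $\mathrm{e}^{\gamma\zeta}$) is required at that stage; the residue-at-infinity argument in your case analysis concerns only the single $\xi$-integral, where the exponential factors have cancelled. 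The one genuine bookkeeping item you should nail down is the overall sign in $M^{(2)}_{jk}=\delta_{jk}$, which depends on the orientation convention for the deformed $\zeta$-contour.
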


\bigskip
Existing applications of Proposition \ref{prop:fredholm} so far have used only the case $c=0$ where the conditions on the contours $D,C_r$ translate to those used in \cite{IS2019}. 
In this paper, $c$ is introduced to make it possible to perform the asymptotic analysis of $I_\nu$, i.e., $c$ prevents the right hand side of \eqref{Inu_Fredform} from diverging when $a_j$ goes to unity for all $j \in [1, \nu ]$ and $v_k$ approaches 1, as will be necessary in the following.

\subsection{Limiting distribution and nonlinear fluctuating hydrodynamics}
We can perform an asymptotic analysis of the multiple integral formula from Theorem~\ref{prop:P1rho} using the rewriting in terms of a Fredholm determinant as in Proposition~\ref{prop:fredholm}. A novel feature compared to the single species case is that one encounters dynamic poles arising from additional integration variables in the kernel of a Fredholm determinant. In our case we will see that, by taking certain poles at the beginning, one can evaluate the effects of the interaction and observe that in fact the two sets of variables decouple asymptotically as the parameter $t$ tends to infinity. As a consequence we can study the long time limit of the joint distribution, which is our second main result.

To state our result let us recall the definitions \eqref{eq:FHDdef} and \eqref{eq:FHCdef} of the Fredholm determinant, and set $\Ai(x)$ to denote the Airy function \cite{TW1994,bowick1991universal,forrester1993spectrum,moore1990matrix} defined by 
\begin{equation}\label{def:airy_function}
\Ai(x) = \frac{1}{2 \pi \ii} \int_C \exp(z^3/3-zx) \dd z,
\end{equation}
where the contour $C$ starts at $\i \e^{- \pi \ii /3}$ and goes to $\i \e^{\pi \ii /3}$.

\medskip

\begin{defn}[\cite{TW1994}]\label{def:GUETW_dist}
 The Fredholm determinant $F_2(s):=\det(1 - A)_{L^2(s,\infty)}$ is a distribution function called GUE Tracy-Widom distribution.
 Its kernel is the Airy kernel: 
 \begin{equation} \label{def:airy_kernel}
     A(x,y)= \int_0^{\i} \Ai(x+\lambda) \Ai(y+\lambda) \dd \lambda .
 \end{equation}
\end{defn}

\begin{theorem}
 \label{th:limitdistr}
For the AHR model with $\alpha=\beta=\frac12$ and the step-Bernoulli initial condition above, we have, in the long time limit, 
\begin{align}
	\lim_{t\to\infty}  \mathbb{P}_{n,m}[N_+(t)=n,N_-(t)=m]
	=F_{\text {2}}(s_{2})\cdot F_{\text G}(s_{\rm g}), 
	\label{PnmFF}
\end{align}
where on the left hand side we use the scaling (\ref{nmscale}) with (\ref{c2cg}),(\ref{jpm}).  

$F_2$ and $F_G$ on the right hand side are the distribution functions of the GUE Tracy-Widom and the standard Gaussian distributions respectively.
\end{theorem}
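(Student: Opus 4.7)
The strategy is to combine Theorem~\ref{prop:P1rho} and Proposition~\ref{prop:fredholm} with a coupled steepest-descent analysis. First I would cast the $n$-fold $z$-integral in \eqref{Prho1} into the form \eqref{Inu} with $\nu=n$ and $\mu=m$, treating $\mathbf{w}=(w_1,\ldots,w_m)$ as external parameters. The matching coalesces all $a_i$'s at $1$ in order to reproduce the cluster of poles $\prod_j(z_j-1)^{-(n+1-j)}$ coming from the step part of the initial condition, places a single $u$-pole at $1/(1-\rho)$ (the remaining $u_j$'s set to zero) to absorb the Bernoulli factor $(1-(1-\rho)z)^{-1}$, and identifies $v_k=\alpha/(\beta w_k)$ so that the cross-species factor $\prod_{j,k}(\alpha z_j+\beta w_k)^{-1}$ becomes $\prod_j\prod_k(\beta w_k)^{-1}(1+v_k\zeta_j)^{-1}$; the exponential $\e^{\beta\sum z_j^{-1}t}$ is absorbed into $\e^{\gamma\zeta}$ after an inversion $\zeta\leftrightarrow 1/\zeta$. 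The regularization parameter $c$ of Proposition~\ref{prop:fredholm} is essential here, since both the $a_j$'s and the $v_k$'s cluster at $1$ in the step-Bernoulli / long-time limit, precisely the obstruction pointed out after \eqref{Inu_Fredform}. The outcome is a representation
\begin{equation*}
\mathbb{P}_{n,m}[N_{+}(t)=n,N_{-}(t)=m]
=\oint\prod_{k=1}^{m}\frac{\dd w_k}{2\pi\ii}\,W_m(\mathbf{w},t)\,\det\bigl(1-K^{c}_{\mathbf{w}}\bigr)_{\ell^2(\mathbb{N})},
\end{equation*}
where $W_m$ collects the $w$-dependent factors from \eqref{Prho1} together with the $\prod_l a_l^{-s}$ prefactor, and the kernel $K^{c}_{\mathbf{w}}$ depends on $\mathbf{w}$ only through the factors $(1+v_k\zeta)^{-1}$ inside $g^{c}$.

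The next step is a coupled saddle-point asymptotic analysis. Under the KPZ-type scaling $n=c_{+}t+s_{2}\sigma_{+}t^{1/3}$ specified by \eqref{nmscale}--\eqref{jpm}, the substitution $\zeta=1+\tau\sigma_{+}t^{-1/3}$ in \eqref{phi}--\eqref{psi} produces a double critical point of Johansson type at $\zeta=1$. Deforming $D$ and $C_r$ through this saddle along appropriate steep-descent rays---a deformation enabled precisely by the freedom in $c$---and rescaling accordingly, the kernel converges pointwise and in trace norm to the Airy kernel $A$ of \eqref{def:airy_kernel}, so that $\det(1-K^{c}_{\mathbf{w}})\to F_{2}(s_{2})$ pointwise in $\mathbf{w}$. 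Concurrently, under the diffusive scaling $m=c_{-}t+s_{\rm g}\sigma_{-}t^{1/2}$ and the rescaling $w_k=1+\sigma_k t^{-1/2}$, the purely $w$-dependent piece $\prod_k(w_k-1)^{-k}w_k^{k-1}\,\e^{\alpha\sum(w_k^{-1}-1)t}\prod_{k<\ell}(w_\ell-w_k)$ of $W_m$ becomes a Gaussian weight times a Vandermonde in $\sigma_1,\ldots,\sigma_m$. A Cauchy--Binet-type identity then collapses the $m$-fold integral to a one-dimensional expression which, as $t,m\to\infty$ in the prescribed scaling, yields the standard normal cumulative distribution $F_{G}(s_{\rm g})$.

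The main obstacle is the rigorous asymptotic decoupling of the $z$- and $w$-variables. The only coupling in the integrand is the factor $(1+v_k\zeta)^{-1}=(1+\alpha\zeta/(\beta w_k))^{-1}$ inside $g^{c}$. At the joint saddle $\hat\zeta=\hat w_k=1$ this is a fixed constant which can be absorbed into an explicit gauge factor, while the fluctuations $\zeta-1=O(t^{-1/3})$ and $w_k-1=O(t^{-1/2})$ combine only at the subleading scale $O(t^{-5/6})$, so no nontrivial coupling between $s_{2}$ and $s_{\rm g}$ survives. To turn this heuristic into a proof I would (i) extract an explicit conjugation factor whose product over $(j,k)$ cancels between $W_m$ and the kernel prefactor, (ii) obtain Gaussian tail bounds on the $w$-contour together with Airy-type decay of the rescaled kernel, so that Fubini and dominated convergence justify exchanging the $w$-integration, the Fredholm series \eqref{eq:FHDdef}, and the limit $t\to\infty$, and (iii) verify trace-class convergence of $K^{c}_{\mathbf{w}}$ to $A$ uniform in $\mathbf{w}$ on the deformed contour. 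Combining these pieces yields the factorized limit \eqref{PnmFF}.
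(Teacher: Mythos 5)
Your proposal takes a genuinely different route from the paper, and it has two serious gaps.

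\textbf{The decoupling estimate does not close.} You argue that because the $\zeta$-fluctuation is $O(t^{-1/3})$ and each $w_k$-fluctuation is $O(t^{-1/2})$, the cross-term is only $O(t^{-5/6})$. But there are $m\sim j_-(\rho)\,t$ factors of $(1+v_k\zeta)^{-1}$ in $g^{c}$, so the total contribution of the cross-terms to the exponent of the kernel is a priori of order $t\cdot t^{-5/6}=t^{1/6}$, which diverges. Whether this sum actually vanishes depends on a delicate cancellation over the $w$-contour, not a term-by-term estimate. The paper does not attempt such an estimate; instead it proves a purely algebraic identity --- Proposition~\ref{rank1 perturbation det2}, built from Lemmas~\ref{rank1 perturbation term1}--\ref{kernel holomorphic} --- showing that the $\vec z$-dependence of the $w$-Fredholm kernel enters \emph{exactly} as a rank-one perturbation of $K^c_{\mathrm W}=K^c(\cdot,\cdot,\vec 1)$. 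That identity is what replaces your step (i) ``extract a gauge factor that cancels'' and step (iii) ``verify trace-class convergence uniform in $\vec w$''; without something of that kind the decoupling is not established.

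\textbf{The Gaussian mode is not obtained the way you describe.} After rescaling $w_k=1+\sigma_k t^{-1/2}$ the $m$-fold integral still has $m\sim t$ variables with a Vandermonde weight; a Gaussian weight times a Vandermonde over $m$ variables is a GUE-type ensemble whose edge statistics are Tracy--Widom, not Gaussian, and the promised ``Cauchy--Binet-type identity collapsing to one dimension'' is left entirely unexplained. In the paper the Gaussian factor has a completely different origin: it arises from the residue of the simple pole at $z_j=1/\rho'$ (the Bernoulli pole, which your proposal never isolates), which produces the split $P_{n,m,\rho}=\mathcal I_1-\mathcal I_2$ of Lemma~\ref{lem:prob2term}. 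Inside $\mathcal I_2$, after the rank-one decoupling, the Gaussian emerges from the scalar $I_z$ (equivalently $J^{(n)}$), which has a rank-one Fredholm kernel \eqref{eq:fredholm2} --- a single saddle at $\rho'$ at scale $t^{1/2}$ --- and hence genuinely reduces to a one-dimensional Gaussian integral. Your approach inverts the roles of the $z$- and $w$-integrations (Fredholm-izing in $z$, saddle in $w$), which is a legitimate alternative starting point, but it then hits exactly the two obstructions above, and these are precisely the places where the paper's proof introduces nontrivial machinery. As written, the proposal is a plausible heuristic but not a proof.
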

The specialisation to $\alpha=\beta=\frac12$ is just for a simplicity. One can generalise our whole calculations and arguments to the case with $\alpha+\beta=1$.

This theorem can be proved as written, but one may wonder what the meaning is of the variables $s_2,s_{\rm g}$ and the reason for the appearance of the limiting distribution in (\ref{PnmFF}). In fact, these can be understood from nonlinear fluctuating hydrodynamics (NLFHD) which is a heuristic physics theory for studying the long time behaviour of one dimensional multi-component systems.

NLFHD was first proposed in \cite{van2012exact,spohn2014nonlinear} to provide concrete predictions for the long time behaviour of one dimensional Hamiltonian dynamics with nonlinear interaction. One first writes down the hydrodynamic equations for
three conserved quantities of the original system, stretch, momentum and energy, and takes fluctuation effects into account by adding white noise. Diagonalising the advection term in the hydrodynamic part one switches to normal modes which have intrinsic propagating speeds.

It is natural to study fluctuations of the normal modes. A key idea of NLFHD is that if the speeds of the normal modes are different, then the interaction among them should be irrelevant in the long time limit and thus the fluctuations for each mode would be described by the single species noisy-Burgers equation (a.k.a. KPZ equation). Fluctuations in the long time limit would then generically be given by the Tracy-Widom type distributions.

NLFHD has also been formulated for stochastic models and gives concrete predictions for the distribution of currents for multispecies models \cite{ferrari2013coupled}. A prototypical model is in fact the AHR model, which has two obvious conserved
quantities, the number of $+$ and that of $-$ particles. The hydrodynamic equation is given by
\begin{align}
	\frac{\partial \boldsymbol \rho(x,t) }{\partial t}+\frac{\partial\boldsymbol{\mathsf j}(\boldsymbol \rho(x,t))}{\partial x}=0,
\end{align}
where $\boldsymbol \rho(x,t) = (\rho_+(x,t),\rho_-(x,t))$ is the density vector and
$\boldsymbol{\mathsf j}(\boldsymbol \rho) = (\mathsf j_{+}(\boldsymbol \rho), \mathsf j_{-}(\boldsymbol \rho))$ denotes the macroscopic current of $\pm$ particles given by
\begin{align}
	&\mathsf j_{+} (\boldsymbol \rho)=\rho_+(1-\rho_+-\rho_-) + 2\rho_+\rho_-,\label{currenth}\\[2mm]
	&\mathsf j_{-} (\boldsymbol \rho)=-(1-\rho_+-\rho_-)\rho_- -2\rho_+\rho_- \ . \label{currentc}
\end{align}
The normal modes for the AHR model follow from diagonalisation of the Jacobian matrix $\partial \boldsymbol{\mathsf j}/ \partial \boldsymbol \rho$.

In \cite{mendl2016searching}, a step initial condition was studied. As a particular case related to our study, let us consider a mix of Bernoulli and step initial conditions, namely, we consider the situation in which initially the left half of $\Z$ is filled randomly with $+$ particles with density $\rho$ \footnote{This $\rho$ should be distingished from $\boldsymbol \rho$, which appears only in this subsection.} while the right half of $\Z$ is filled with $-$ particles, as in Fig.~\ref{stepbernoulli}. This seems the simplest nontrivial initial condition
for the two species exclusion process. Note that the special case with $\rho=1$ is simpler but this case can be easily seen to be equivalent to the single species of TASEP with step initial conditions.

For the step-Bernoulli initial condition, one can consider fluctuations of the normal modes, and NLFHD predicts the following. Firstly, the variables $s_{\rm g}(n,m,t)$ and $s_2(n,m,t)$ defined below (\ref{jpm})
are just the scaling variables
for the fluctuations in the normal directions, and the resulting prediction of NLFHD relevant to our problem is the following:
\begin{multline}
    \mathbb{P}_{\infty,\infty}[N_+(t)=n,N_-(t)=m]\simeq \\
	\left|\det
	\begin{pmatrix}
		\displaystyle \frac{\partial s_{2}(n,m;t)}{\partial n}		&\displaystyle\frac{\partial s_{2}(n,m;t)}{\partial m}\\[3mm]
		\displaystyle \frac{\partial s_{{\rm g}}(n,m;t)}{\partial n}		&\displaystyle \frac{\partial s_{{\rm g}}(n,m;t)}{\partial m}
	\end{pmatrix}
	\right|
	\partial F_2(s_{2}(n,m;t))\cdot \partial F_{\text G}(s_{{\rm g}}(n,m;t)).
	\label{infdistri}
\end{multline}
For the $s_{\rm g}$ mode, one expects the Gaussian rather than Tracy-Widom due to the strong effects from the initial randomness.

In the case of single species TASEP, the study of the current distribution for the step initial condition is reduced to a problem of finite number of particles because each particle in TASEP can not affect the dynamics of particles in front. For the AHR model this kind of property does not hold any more and therefore the problem for the step type initial condition in the previous paragraph cannot be reduced to one with only finite number of particles. We can therefore not directly establish the above prediction of NLFHD using the formula \eqref{Prho1} for the transition probability because this holds only for a finite number of particles.

Instead we can generalise the prediction of the NLFHD to the case with large but finite number of particles. This generalisation is nontrivial and a full discussion will be given elsewhere.
 The main idea is that we may assume that after the two species of particles become separate the fluctuations are transported by the hydrodynamics. Then the probability of interest $\lim_{t\to\infty} \mathbb{P}_{n,m}[N_+(t)=n,N_-(t)=m]$ would be written as a sum of contributions of $\lim_{t\to\infty} \mathbb{P}_{\infty,\infty}[N_+(t)=i,N_-(t)=j]$ from a certain region only. Integrating \eqref{infdistri} over the independent modes $s_2,s_{\rm g}$ leads to a product of $F_2$ and $F_G$, which is exactly the limiting distribution appearing in Theorem~\ref{th:limitdistr}.

The main results in this paper were announced in \cite{PRL}. The purpose of this paper is to provide full details of the intricate and elaborate calculations as well as the mathematical proofs that establish our results. As already mentioned above, details of the physics oriented discussion of NLFHD for the case with finite number of particles falls outside the scope of this paper and will be given in a separate publication. 

The rest of the paper is organised as follows. In Section~\ref{se:greenfunction} we give a multiple integral formula for the transition probability, or Green's function. In Section~\ref{se:joint current}, we give a multiple integral formula for a certain joint distribution of the currents.
In Section~\ref{se:tofredholm} we explain a rewriting of a multiple integral to a Fredholm determinant. In Sections~\ref{se:scalinglimit}, \ref{sec: I1 limit} and \ref{sec:Asymptotics_second} we perform asymptotics and establish Theorem~\ref{th:limitdistr}. In the appendices we provide further technical details about asymptotic estimates as well as details related to the Bethe ansatz for the AHR model, and symmetrisation identities for multi-variable rational functions. In addition we provide a detailed derivation of a crucial decoupling identity. Details of some calculations and proofs are given in  Appendices from A through F.

\section{Transition probability}
\label{se:greenfunction}
We consider the AHR model with $n$ particles of one type, denoted by plus, and $m$ particles of a second type, denoted by minus. Define the set of coordinates by
\begin{equation}
\label{coordinates}
\mathbb{W}^k:= \{\vec{x}=(x_1,\dots,x_k)\in\mathbb{Z}^k: x_1<x_2<\dots<x_k\}.
\end{equation}
Suppose $P(\vec{x},\vec{y};t)$ is the probability that at time $t$, plus particles are sitting at position $\vec{x}=(x_1,\dots,x_n)\in \mathbb{W}^n$, while the minus particles are at $\vec{y}=(y_1,\dots,y_m)\in \mathbb{W}^m$, and let $\Omega^{n+m}$ be the physical domain of $(\vec{x},\vec{y})$, i.e. we write $(\vec{x},\vec{y}) \in \Omega^{n+m}$ if $(\vec{x},\vec{y})\in \mathbb{W}^n \times \mathbb{W}^m$ with the additional condition that $\vec{x} \cap \vec{y} = \emptyset$. From the dynamics of the system described above, we can write down the master equation for $P(\vec{x},\vec{y};t)$ in the AHR model for the cases where all particles are far apart from each other,
\begin{align}
\label{master eq AHR}
\frac{\dd}{\dd t}P(\vec{x},\vec{y};t)
=
\beta\sum_{i=1}^{n}P(\vec{x}_i^-,\vec{y};t)+
\alpha\sum_{j=1}^{m}P(\vec{x},\vec{y}_j^+;t)
-(\beta n+\alpha m)P(\vec{x},\vec{y};t),
\end{align}
where $\vec{x}_i^{\pm}:=(x_1,\dots,x_i {\pm} 1,\dots,x_n)$. We emphasise again that \eqref{master eq AHR} only describes the time evolution of hops but without swaps and exclusions.

On the right hand side of \eqref{master eq AHR}, the positive terms describe the arrival part of the change in $P(\vec{x},\vec{y};t)$. Namely, the state $(\vec{x}_i^-,\vec{y})$ turns into the state $(\vec{x},\vec{y})$ at rate $\beta$, while $(\vec{x},\vec{y}_j^+)$ turns into $(\vec{x},\vec{y})$ at rate $\alpha$. The negative term on the right hand side comes from the loss part of the change in $P(\vec{x},\vec{y};t)$. Specifically, the state $(\vec{x},\vec{y})$ evolves into the state $(\vec{x}_i^+,\vec{y})$ at rate $\beta$ and the state $(\vec{x},\vec{y}_j^-)$ at rate $\alpha$.

In order to describe the interactions between particles, i.e. the exclusions and swaps, one could include appropriate kronecker delta functions on the right hand side of \eqref{master eq AHR}. Alternatively, one can impose the following boundary conditions. For all appropriate $i,j$,
\begin{itemize}
  \item Interactions between plus particles:
  \begin{align}\label{AHR bdrycond1}
  P(x_1,\dots,x_i,x_{i+1}=x_{i},\dots,x_n;\vec{y})
  =
  P(x_1,\dots,x_i,x_{i+1}=x_{i}+1,\dots,x_n;\vec{y}).
  \end{align}
  \item Interactions between minus particles:
  \begin{align}\label{AHR bdrycond2}
  P(\vec{x};y_1,\dots,y_{i-1}=y_i,y_i,\dots,y_m)
  =
  P(\vec{x};y_1,\dots,y_{i-1}=y_i-1,y_i,\dots,y_m).
  \end{align}
  \item Interactions between plus and minus particles:
  \begin{equation}\label{AHR bdrycond3}
  \begin{split}
  P(\vec{x};
  y_1,\dots,y_j=x_i+1,\dots,y_m)
  =
  \beta P(\vec{x};
  y_1,\dots,y_j=x_i,\dots,y_m)+\\
  \alpha P(x_1,\dots,x_{i-1},x_i+1,x_{i+1},\dots,x_n;
  y_1,\dots,y_j=x_i+1,\dots,y_m).
  \end{split}
  \end{equation}
\end{itemize}

We observe that \eqref{master eq AHR}--\eqref{AHR bdrycond3} are well defined for $P(\vec{x};\vec{y})$ on $\mathbb{Z}^n\times\mathbb{Z}^m$. However, we know that the position states are defined only on the physical coordinates $(\vec{x},\vec{y})\in \Omega^{n+m}$. The solution of \eqref{master eq AHR}--\eqref{AHR bdrycond3} indeed gives the transition probability of the AHR model on $\Omega^{n+m}$ , but $P(\vec{x};\vec{y})$ is no longer a probability on $(\mathbb{Z}^n\times\mathbb{Z}^m) \setminus(\Omega^{n+m})$. We refer to Appendix \ref{appx:bethe wave fn} for further details on the boundary conditions.

Imposing the initial condition that the positions of plus and minus particles at $t=0$ are given by  $\vec{x}^{(0)}$ and $\vec{y}^{(0)}$, respectively, the probability $P(\vec{x},\vec{y};t)$ is called the \textit{transition probability} or \textit{Green's function} of the model. We emphasise that this probability is independent of the absolute values of positions, but depends only on the relative position of the initial and final states, and we denote it by $G(\vec{x}-\vec{x}^{(0)},\vec{y}-\vec{y}^{(0)},t)$.
The initial condition is given by
\begin{equation}
G(\vec{x}-\vec{x}^{(0)},\vec{y}-\vec{y}^{(0)},0)
=
\prod_{j=1}^n \delta(x_j-x_j^{(0)}) \prod_{k=1}^m \delta(y_j-y_j^{(0)}).
\label{AHR initialcond}
\end{equation}
A transition probability $G(\vec{x}-\vec{x}^{(0)},\vec{y}-\vec{y}^{(0)},t)$ is the function that satisfies the master equation \eqref{master eq AHR}, the boundary conditions \eqref{AHR bdrycond1}--\eqref{AHR bdrycond3}, as well as the initial condition \eqref{AHR initialcond}. Generally, the transition probability is constructed as an integral form
\begin{align*}
G(\vec{x}-\vec{x}^{(0)},\vec{y}-\vec{y}^{(0)},t)
=
\oint_{C} \prod_{j=1}^n \frac{\dd z_j }{2\pi \ii}
\prod_{k=1}^m \frac{\dd w_k }{2\pi \ii}
\e^{\Lambda_{n,m} t}  A(\vec{z},\vec{w})  \,
\psi(\vec{x};\vec{y}) \prod_{j=1}^{n} z_{j}^{-x^{(0)}_j-1} \prod_{k=1}^m w_{k}^{y^{(0)}_k-1},
\end{align*}
where $\psi(\vec{x};\vec{y})$ and $\Lambda_{n,m}$ are the Bethe wave function and eigenvalue of the Markov generator found by the Bethe ansatz. $A(\vec{z},\vec{w})$ and the contour $C$ are chosen such that the initial condition is satisfied where $\vec{z}$ and $\vec{w}$ denote the collection of variables $z_j$'s with $j\in [1,n]$ and $w_k$'s with $k\in [1,m]$, respectively. The transition probability for the single species TASEP was derived in \cite{S1997}, and for single species ASEP in \cite{tracy2008integral}. In Appendix \ref{appx:bethe wave fn} we provide details of the construction of the transition probability using the Bethe ansatz. Here we give the final explicit integral form for $G(\vec{x}-\vec{x}^{(0)},\vec{y}-\vec{y}^{(0)},t)$ and prove that it satisfies all desired properties \eqref{master eq AHR}--\eqref{AHR initialcond}.

\begin{theorem}
\label{th:Green1}
Consider the AHR model on $\mathbb{Z}$ with $n$ plus and $m$ minus particles. Define $r_j$ as the number of plus particles to the right of the $j$\textsuperscript{th} minus particles at time $t$, and $r_j^{(0)}$ the same quantity at $t=0$, i.e.,
\begin{align*}
r_j:=&\#\{x_i\in\vec{x}\mid x_i \geq y_j\},\qquad
r_j^{(0)}:=\#\{x_i^{(0)}\in\vec{x}^{(0)}\mid x_i^{(0)} \geq y_j^{(0)}\}.
\end{align*} Then the transition probability is given by
\begin{multline}
G(\vec{x}-\vec{x}^{(0)},\vec{y}-\vec{y}^{(0)},t)
=
\oint
\prod_{j=1}^n \frac{\dd z_j }{2\pi \ii}
\prod_{k=1}^m \frac{\dd w_k }{2\pi \ii}\,
\e^{\Lambda_{n,m} t}
\sum_{\pi\in S_n} \sign(\pi) \sum_{\sigma\in S_m} \sign(\sigma) \times\\
\prod_{k=1}^{m}
\prod_{j=1}^{r_k}
\frac{1}{\alpha z_{\pi_{n-j+1}}+\beta w_{\sigma_k}}
\prod_{k=1}^{m}(\beta w_k)^{r_k^{(0)}}
\prod_{j=1}^n \left( \frac{z_j-1}{z_{\pi_j}-1}\right)^{j} z_{\pi_j}^{x_j} z_{j}^{-x^{(0)}_j-1}
\prod_{k=1}^m  \left( \frac{w_k-1}{w_{\sigma_k}-1}\right)^{-k}  w_{\sigma_k}^{-y_k} w_{k}^{y^{(0)}_k-1},
\label{Green function}
\end{multline}
where
\begin{equation}
\label{eigval}
\Lambda_{n,m}=\beta \sum_{j=1}^n (z_j^{-1}-1) + \alpha \sum_{k=1}^m (w_k^{-1}-1),
\end{equation}
and all integral contours only contain poles at the origin.
\end{theorem}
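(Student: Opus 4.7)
The plan is to follow the Sch\"utz \cite{S1997} and Tracy--Widom \cite{tracy2008integral} strategy: show directly that the multiple integral on the right-hand side of \eqref{Green function}, viewed as a function of $(\vec{x},\vec{y})\in\mathbb{Z}^n\times\mathbb{Z}^m$, satisfies the master equation \eqref{master eq AHR}, the three boundary conditions \eqref{AHR bdrycond1}--\eqref{AHR bdrycond3}, and the initial condition \eqref{AHR initialcond}. Since these properties uniquely determine the Green's function on $\Omega^{n+m}$, equality follows.

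The time evolution equation is almost immediate: differentiating the integrand in $t$ brings down $\Lambda_{n,m}$ from the exponential, while on the right-hand side of \eqref{master eq AHR} the shifts $\vec{x}_i^-\mapsto\vec{x}_i$ and $\vec{y}_j^+\mapsto\vec{y}_j$ insert the factors $z_{\pi_i}^{-1}$ and $w_{\sigma_j}^{-1}$ into the integrand, respectively. Using the permutation invariance $\sum_i z_{\pi_i}^{-1}=\sum_i z_i^{-1}$ and $\sum_j w_{\sigma_j}^{-1}=\sum_j w_j^{-1}$ reproduces $\Lambda_{n,m}+\beta n+\alpha m$, matching the two sides inside the integrand. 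For the initial condition I set $t=0$ and shrink every contour to a small circle around the origin, so that the interaction denominators $\alpha z+\beta w$ remain regular on the contour; the residue in each $z_j$ then picks out $x_{\pi^{-1}(j)}=x_j^{(0)}$ from $z_j^{-x_j^{(0)}-1}z_{\pi_j}^{x_j}$, and similarly for $w_k$. The strict orderings in $\mathbb{W}^n,\mathbb{W}^m$ force $\pi=\sigma=\mathrm{id}$ and $\vec x=\vec x^{(0)},\vec y=\vec y^{(0)}$, reproducing the product of Kronecker deltas.

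For the plus--plus and minus--minus boundary conditions \eqref{AHR bdrycond1}--\eqref{AHR bdrycond2} I use the standard Bethe-ansatz antisymmetrisation. For \eqref{AHR bdrycond1} at $x_{i+1}=x_i$, pair each $\pi\in S_n$ with $\pi\circ(i,i+1)$: the only factors affected by this transposition are $((z_i-1)/(z_{\pi_i}-1))^{i}z_{\pi_i}^{x_i}$ and $((z_{i+1}-1)/(z_{\pi_{i+1}}-1))^{i+1}z_{\pi_{i+1}}^{x_{i+1}}$, and the difference between the two sides of \eqref{AHR bdrycond1} (i.e.\ between $x_{i+1}=x_i$ and $x_{i+1}=x_i+1$) is antisymmetric under this swap and vanishes upon summation. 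The interaction product $\prod_{k,j}1/(\alpha z_{\pi_{n-j+1}}+\beta w_{\sigma_k})$ and the counts $r_k$ are symmetric in $\pi_i,\pi_{i+1}$ when the two corresponding plus particles coincide, so they play no role. Condition \eqref{AHR bdrycond2} is handled identically using $\sigma\circ(i-1,i)$.

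The main obstacle is the plus--minus boundary condition \eqref{AHR bdrycond3}, which couples the two families of integration variables through the interaction denominator $\alpha z_{\pi_i}+\beta w_{\sigma_j}$ and the combinatorial counts $r_k$. The key observation is that when $y_j=x_i+1$, the shifts $y_j\mapsto x_i$ (middle term) and $x_i\mapsto x_i+1$ (right-most term) each raise $r_j$ by exactly one, so each introduces exactly one new factor $1/(\alpha z_{\pi_i}+\beta w_{\sigma_j})$ in the interaction product (arising from the inner index $l=r_j+1$, for which $\pi_{n-l+1}=\pi_i$). In addition, the first shift produces an extra $w_{\sigma_j}$ from $w_{\sigma_j}^{-y_j}$ and the second an extra $z_{\pi_i}$ from $z_{\pi_i}^{x_i}$. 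All remaining factors, including $(\beta w_k)^{r_k^{(0)}}$ (the initial data is unchanged throughout \eqref{AHR bdrycond3}) and the fractions $((z_j-1)/(z_{\pi_j}-1))^j$, $((w_k-1)/(w_{\sigma_k}-1))^{-k}$, are untouched by the shifts. Condition \eqref{AHR bdrycond3} therefore collapses, integrand by integrand, to the tautology
\begin{equation*}
1\;=\;\beta\cdot\frac{w_{\sigma_j}}{\alpha z_{\pi_i}+\beta w_{\sigma_j}}+\alpha\cdot\frac{z_{\pi_i}}{\alpha z_{\pi_i}+\beta w_{\sigma_j}}\;=\;\frac{\beta w_{\sigma_j}+\alpha z_{\pi_i}}{\alpha z_{\pi_i}+\beta w_{\sigma_j}},
\end{equation*}
which is automatic and does not even require $\alpha+\beta=1$. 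The delicate part of the argument is the bookkeeping that pins down the unique new interaction factor produced by each shift and verifies that the counts $r_k$ for $k\ne j$ are untouched; this is the essential technical content of the proof.
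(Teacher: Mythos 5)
Your proposal follows exactly the paper's verification strategy (master equation, three boundary conditions, initial condition, uniqueness), and you correctly identify the essential computation for the delicate plus--minus boundary condition: that both shifted configurations raise $r_j$ by one, thereby inserting the same extra factor $1/(\alpha z_{\pi_i}+\beta w_{\sigma_j})$, while additionally contributing monomial factors $w_{\sigma_j}$ and $z_{\pi_i}$ respectively, so that the condition reduces to $\alpha z_{\pi_i}+\beta w_{\sigma_j}=\alpha z_{\pi_i}+\beta w_{\sigma_j}$. Two minor points: for \eqref{AHR bdrycond1}--\eqref{AHR bdrycond2} the difference of integrands is in fact \emph{symmetric} in the swapped $z$-variables and it is the $\sign(\pi)$ prefactor that supplies the antisymmetry, so pairs cancel; and the initial-condition step, which you compress to one sentence, actually requires the inductive residue argument (sequentially in $z_1,z_2,\ldots$ and then in $w_1,w_2,\ldots$ using the orderings and the constraint that the interaction denominators evaluated at $\vec{z}=0$ become $\beta w_{\sigma_k}$) that the paper spells out to justify that the residue vanishes unless $\pi=\sigma=\mathrm{id}$ and $(\vec{x},\vec{y})=(\vec{x}^{(0)},\vec{y}^{(0)})$.
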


\begin{proof}
The proof consists of four parts. We show below that \eqref{Green function} satisfies the master equation \eqref{master eq AHR}, the boundary conditions (\ref{AHR bdrycond1})-(\ref{AHR bdrycond3}), the initial condition \eqref{AHR initialcond} and finally that the solution is unique. 

\begin{enumerate}[label=(\roman*)]
\item \textbf{Proof of master equation (\ref{master eq AHR})}

Clearly the integrand is $C^1$ continuous in time. By interchanging ${\rm d}/{\rm d} t$ with the integral and the explicit form of $\Lambda_{n,m}$, the evolution equation \eqref{master eq AHR} follows.\\

\item \textbf{Proof of boundary conditions (\ref{AHR bdrycond1})--(\ref{AHR bdrycond3})}
The proof of the boundary conditions (\ref{AHR bdrycond1})--(\ref{AHR bdrycond3}) is elementary but detailed, we give it in Appendix~\ref{se:bcs}. \\


\item\textbf{Proof of initial condition (\ref{AHR initialcond})}
Firstly, we show by mathematical induction that \eqref{Green function} at $t=0$ has vanishing residues unless $\pi_i=i$ and $x_i=x_i^{(0)}$ for all $i$. To see this, we first prove the case for $i=1$.

If $x_1 \geq x_1^{(0)}$, then since the components of $\vec{x}$ are well ordered, $x_j \geq x_1^{(0)}$ for all $j$. Consider now an arbitrary permutation $\pi$ such that $\pi_k=1$. The exponent of $z_1$ in the integrand of \eqref{Green function} is given by $z_1^{x_k-x_1^{(0)}-1}$, and the exponent in this expression is always non-negative unless $k=1$ and $x_1=x_1^{(0)}$. If these conditions are not met, then at time $t=0$ the integrand is analytic at $z_1=0$ and therefore has a vanishing residue.

For the induction step we assume first that \eqref{Green function} vanishes at $t=0$ unless $x_i=x_i^{(0)}$ and $\pi_i=i$ for all $i\leq \ell-1$. If $\pi_k=\ell$, the exponent of $z_{\ell}$ is $x_k-x_{\ell}^{(0)}-1$, and by the induction hypothesis we have that $k \geq \ell$, and hence the exponent is non-negative. This implies that \eqref{Green function} is zero unless $\pi_{\ell}=\ell$ and $x_{\ell} = x_{\ell}^{(0)}$. Therefore, we can conclude that $G(\vec{x}-\vec{x}^{(0)},\vec{y}-\vec{y}^{(0)},0)$ is nonzero only when $x_i=x_i^{(0)}$ for all $i$ and $\pi=\textsf{id}$.

When $x_i=x_i^{(0)}$ for all $i \in  [1,n]$ and $\pi=\textsf{id}$, the after integration over the $\vec{z}$-variables the transition probability \eqref{Green function} at $t=0$ becomes
\begin{equation*}
G(\vec{x}-\vec{x}^{(0)},\vec{y}-\vec{y}^{(0)},0)
=
\oint
\prod_{k=1}^m \frac{\dd w_k }{2\pi \ii}\,
\sum_{\sigma\in S_m} \sign(\sigma)
\prod_{k=1}^{m}\left(\frac{w_k}{w_{\sigma_k}}\right)^{r_k}
\prod_{k=1}^m  \left( \frac{w_k-1}{w_{\sigma_k}-1}\right)^{-k}  w_{\sigma_k}^{-y_k} w_{k}^{y^{(0)}_k-1}.
\end{equation*}

\medskip

We now prove that the above function is nonzero only when $y_j=y_j^{(0)}$ for all $j\in [1,m]$ and $\sigma=\textsf{id}$. While plus particles hop to the right, minus particles always hop to the left and so we have that $y_j^{(0)}\geq y_j$ for all $j \in [1, m]$.  Consider an arbitrary permutation $\sigma$ such that $\sigma_k=j$ with $j>k$. At time $t=0$, the exponent of $w_j$ in the integrand in \eqref{Green function} is given by $w_j^{y_j^{(0)}-y_k+r_j-r_k-1}$. By the definition of $r_j$, we must have $y_j-y_k+r_j-r_k>0$ for any $j>k$. Hence $y_j^{(0)}-y_k+r_j-r_k-1\geq y_j-y_k+r_j-r_k-1>0$. Therefore, the exponent of $w_j$ is always positive unless $\sigma_j=j$. When $\sigma=\textsf{id}$, the exponent of $w_j$ becomes $y_j^{(0)}-y_j-1$, which results in a vanishing residue unless $y_j^{(0)}= y_j$.

It remains to show that the transition probability is normalised as $G(0,0,0)=1$. This can be easily seen by the residue theorem.

\medskip

\item\textbf{Proof of uniqueness}

The transition probability is a solution of a master equation with bounded initial condition, and the number of particle jumps for a given time $t$ in the AHR model is bounded by a Poisson random variable with parameter given by a constant times $t$. The global existence and uniqueness is therefore guaranteed by general considerations as provided in Proposition~4.9 and Appendix C of \cite{bcs}.
\end{enumerate}
This concludes the proof for the Green's function \eqref{Green function}.
\qed
\end{proof}

\section{Joint current distribution}
\label{se:joint current}
In the following, we are interested in the probability that all plus and minus particles have crossed the origin at time $t$. Exact expressions for such joint current distributions under different initial conditions can be derived using the transition probability. We will focus here on the case in which initially $n$ plus particles are distributed by the Bernoulli measure with density $\rho$ at negative integers, and the first $m$ sites at the non-negative integers are occupied by minus particles. We called such initial condition the step-Bernoulli initial condition in the Section~\ref{se:introduction}. Note that setting $\rho= 1$ corresponds to an initial condition which we may call  a step-step initial condition because initial conditions are step type for both plus and minus particles. In the following we will denote the probability that all particles passed the origin at time $t$ by $P_{n,m,\rho}(t)$, i.e.

\begin{equation}
    P_{n,m,\rho}(t) := \mathbb{P}_{n,m}[N_{+}(t)=n, N_{-}(t)=m],
\end{equation}
and call it the joint current distribution. In terms of $P_{n,m,\rho}(t)$, we recall  Theorem~\ref{prop:P1rho}:

\medskip

\noindent\textbf{Theorem~\ref{prop:P1rho}.}
\textit{Under the step-Bernoulli initial condition with density $\rho$, the joint current distribution $P_{n,m,\rho}(t)$ is given by}

\begin{multline}
P_{n,m,\rho}(t)
={(-1)}^{n+m} \oint \prod_{j=1}^n \frac{\dd z_j }{2\pi \ii} \prod_{k=1}^m
\frac{\dd w_k }{2\pi \ii}\ \e^{\Lambda_{n,m} t} \times \\
\frac{\displaystyle \rho^n \prod_{1\le i<j\le n} (z_i-z_j) \prod_{1\le k< \ell \le m} (w_\ell -w_k) \prod_{j=1}^n z_j^{n-j} \prod_{k=1}^m w_k^{k-1}}
{\displaystyle \prod_{j=1}^n (z_j-1)^{n+1-j} \big(1-(1-\rho) z_j\big)
\prod_{k=1}^m (w_k-1)^{k} \prod_{j=1}^n \prod_{k=1}^m \Big(\alpha z_j + \beta w_k\Big)},
\label{Prho}
\end{multline}
\textit{where $\Lambda_{n,m}$ is given by \eqref{eigval} and all contours are taken around the origin.}

\medskip

\noindent
This theorem is proved by summing over the initial and final coordinates of the particles in the transition probability, and by making use of the following lemma. We provide the proof of this lemma in Appendix~\ref{se:symm identities}.

\medskip

\begin{lemman}
\label{lem:symm identity}
\begin{equation*}
\sum_{\pi\in S_n}\sign(\pi)\prod_{i=1}^n
\left(\frac{z_{\pi_i}-1}{z_{\pi_i}}\right)^i
\frac{1}{1-(1-\rho)\prod_{j=1}^iz_{\pi_j}}
=
\frac{\prod_{1\leq i<j \leq n}(z_j-z_i)
\prod_{i=1}^{n}(z_i-1)}
{\prod_{i=1}^{n}z_i^n(1-(1-\rho)z_i)}.
\end{equation*}
\end{lemman}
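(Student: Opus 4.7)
The plan is to induct on $n$, reducing the inductive step to an auxiliary single-variable identity that follows from a one-line residue calculation. The base case $n=1$ is immediate, since both sides equal $(z_1-1)/[z_1(1-(1-\rho)z_1)]$.

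For the inductive step, write $q=1-\rho$ and $P=\prod_{i=1}^n z_i$, and denote the left-hand side by $S_n(z_1,\ldots,z_n)$. The key observation is that the $i=n$ partial product $\prod_{j=1}^n z_{\pi_j}$ equals $P$ independently of $\pi$, so the factor $1/(1-qP)$ pulls out of the sum. Partitioning the remaining sum by $\pi_n=k$ and writing $\sign(\pi)=(-1)^{n-k}\sign(\tilde\pi)$, where $\tilde\pi\in S_{n-1}$ is the order-preserving restriction of $\pi$ to $\{1,\ldots,n-1\}$, the inner sum is precisely $S_{n-1}$ evaluated at the $n-1$ variables obtained by deleting $z_k$, since the partial products $P_i$ for $i<n$ do not involve $z_k$. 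The inductive hypothesis then rewrites this inner sum as $\Delta_{\hat k}(z)\prod_{i\neq k}(z_i-1)/\prod_{i\neq k}z_i^{n-1}(1-qz_i)$, where $\Delta_{\hat k}$ is the Vandermonde in the remaining variables.

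Using the elementary identity $(-1)^{n-k}\Delta_{\hat k}(z)=\Delta(z)/\prod_{i\neq k}(z_k-z_i)$, one can factor $\Delta(z)\prod_i(z_i-1)/\prod_i z_i^n(1-qz_i)$ out of both the resulting expression and the desired right-hand side, reducing the lemma to the single-variable identity
\begin{equation*}
P\sum_{k=1}^n \frac{(z_k-1)^{n-1}(1-qz_k)}{z_k\prod_{i\neq k}(z_k-z_i)}=1-qP. \tag{$\star$}
\end{equation*}
To prove $(\star)$, apply the residue theorem to the meromorphic function
\begin{equation*}
f(z)=\frac{(z-1)^{n-1}(1-qz)}{z\prod_{i=1}^n(z-z_i)},
\end{equation*}
which has simple poles at $z=0,z_1,\ldots,z_n$ and behaves as $-q/z$ at infinity. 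The residue at $z=z_k$ yields the $k$-th summand on the left of $(\star)$ divided by $P$; the residues at $z=0$ and at $\infty$ contribute $-1/P$ and $q$, respectively. Since the total sum of residues on the Riemann sphere vanishes, $\sum_k\operatorname{Res}_{z=z_k}f=1/P-q$, which is exactly $(\star)/P$.

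The main obstacle is bookkeeping: dissecting $\Delta(z)$ correctly into $\Delta_{\hat k}(z)$ together with $\prod_{i\neq k}(z_k-z_i)$ with the right signs, and aligning the restricted permutation $\tilde\pi$ with $S_{n-1}$ cleanly. All the analytic content is concentrated in $(\star)$, whose proof is a single residue line that works uniformly in the parameter $q$.
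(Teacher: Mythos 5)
Your proof is correct and uses essentially the same strategy as the paper's: induction on $n$, splitting the sum over $S_n$ according to $\pi_n=k$, and reducing the inductive step to a single-variable identity established by the residue theorem (your $f(z)$ is the paper's $F(z)$ up to a constant factor of $\prod_i(1-(1-\rho)z_i)$ in the denominator, which is irrelevant since it does not depend on $z$).
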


\noindent {\it Proof of Theorem~\ref{prop:P1rho}.}
For the given initial and final coordinates, all the plus particles initially are to the left of all the minus particles, and at time $t$ end up to the right of all minus particles. We therefore have that $r_j=n$ for all $j \in [1, m]$ and the final transition probability is given by
\begin{multline}
G(\vec{x}-\vec{x}^{(0)},\vec{y}-\vec{y}^{(0)},t)
=
\oint
\prod_{j=1}^n \frac{\dd z_j }{2\pi \ii}
\prod_{k=1}^m \frac{\dd w_k }{2\pi \ii}\,
\e^{\Lambda_{n,m} t}\!
\prod_{k=1}^m \prod _{j=1}^{n} \frac{1}{\alpha z_{j} + \beta w_{k}}\times
\\
\sum_{\pi\in S_n} \sign(\pi)
\prod_{j=1}^n \left( \frac{z_j-1}{z_{\pi_j}-1}\right)^{j} z_{\pi_j}^{x_j} z_{j}^{-x^{(0)}_j-1}
\sum_{\sigma\in S_m} \sign(\sigma)
\prod_{k=1}^m  \left( \frac{w_k-1}{w_{\rho_k}-1}\right)^{-k}  w_{\rho_k}^{-y_k} w_{k}^{y^{(0)}_k-1}.
\label{G-+}
\end{multline}

In the Bernoulli measure, the distances among the initial positions of the plus particles are independently distributed and each is distributed as a geometric random variable with parameter $1-\rho$, i.e., the probability that plus particles locating at $\vec{x}^{(0)}$ is given by
\begin{equation}
\Pi(\vec{x}^{(0)};0)
=
\rho^n \prod_{j=1}^{n-1} (1-\rho)^{x_{j+1}^{(0)}-x_j^{(0)}-1} \cdot (1-\rho)^{-x_n^{(0)}-1}
=
\left(\frac{\rho}{1-\rho}\right)^n (1-\rho)^{ -x_1^{(0)}}.
\end{equation}
The joint current distribution $P_{n,m,\rho}(t)$ thus is the sum of
$$G(\vec{x}-\vec{x}^{(0)},\vec{y}-\vec{y}^{(0)},t) \Pi(\vec{x}^{(0)};0),$$ over all final coordinates $0\leq x_1<x_2<\cdots<x_n$, $y_1<y_2\cdots<y_m\leq -1$ and all initial coordinates $x_1^{(0)}<\cdots<x_n^{(0)}\leq-1$, with $G(\vec{x}-\vec{x}^{(0)},\vec{y}-\vec{y}^{(0)},t)$ given by \eqref{G-+} and $y_j^{(0)}=j-1$. The sums of initial and final coordinates are calculated by taking geometric series, and the sums of permutations $\pi,\sigma$ in \eqref{G-+} are computed using Lemma~\ref{lem:symm identity} with $\rho=0$, resulting in, 

\begin{multline*}
P_{n,m,\rho}(t)
=
\oint \prod_{j=1}^n \frac{\dd z_j }{2\pi \ii} \prod_{k=1}^m
\frac{\dd w_k }{2\pi \ii}\ \e^{\Lambda_{n,m} t} \times \\ \frac{\rho^n \prod_{1\le i<j\le n} (z_i-z_j) \prod_{1\le k< \ell \le m} (w_\ell -w_k) \prod_{j=1}^n z_j^{n-j} \prod_{k=1}^m w_k^{k-1}}
{\prod_{j=1}^n (z_j-1)^{n+1-j} (1-(1-\rho) z_1 \cdots z_j)
\prod_{k=1}^m (w_k-1)^{k} \prod_{j=1}^n \prod_{k=1}^m \Big(\alpha z_j + \beta w_k\Big)}.
\end{multline*}
The theorem follows from a symmetrisation in the $\vec{z}$-variables using Lemma~\ref{lem:symm identity} and subsequent de-symmetrisation using 
\[
\frac{\prod_{1\leq i<j\leq n}(z_i-z_j) \prod_{1\leq i<j\leq n}(z_j-z_i) }{\prod_{i=1}^n(z_i-1)^{n-1}} = \sum_{\pi\in S_n} \prod_{1\leq i<j\leq n}(z_{\pi_j}-z_{\pi_i}) \prod_{i=1}^n \left(\frac{z_{\pi_i}}{z_{\pi_i}-1}\right)^{i-1}.
\]
\qed

In order to prove our main result,  Theorem~\ref{th:limitdistr}, we rewrite the joint current distribution $P_{n,m,\rho}(t)$ in the form \eqref{Prho} as \eqref{eq:prob2terma} in Section~\ref{se:scalinglimit}. It is suitable for asymptotic analyses when $n<m$, which is compatible with our scaling \eqref{nmscale}. 
In the remaining part of this section let us consider the opposite case $n\geq m$, 
because this case is simpler to analyse and is still useful to 
understand what kind of asymptotics will be considered for the $n<m$ case.
When $n\geq m$, at late times we would only observe plus particles crossing the origin while all minus particles have already crossed. Hence
$P_{n,m,\rho}(t)$ in this region is expected to be  asymptotically close to the current distribution of the single species TASEP \cite{J2000,PS2002,BFPS2007}.

Suppose now that $n\geq m$ holds (in fact the following arguments are valid when $n \geq m-2$ holds). In this case we can evaluate the contour integrals over the $\vec{w}$-variables in \eqref{Prho} for the following reasons. The eigenvalue $\Lambda_{n,m}$ given by \eqref{eigval} introduces an essential singularity at the origin. It is therefore convenient to replace the contours as enclosing all other possible poles except the origin, and including $\infty$. First let us consider the residue at $\infty$. The degree of $w_k$ near $\infty$ in the integrand of \eqref{Prho} is $(m-1+k-1)-(k+n)=m-2-n \leq 0$ so that the residue at $\infty$ is zero. Next consider the simple poles at $w_k=-\alpha z_j/\beta$. Such residues will cancel the variable $z_j$ in $\Lambda_{n,m}$, making the integrand analytic in $z_j$ and hence give rise to a zero residue at $z_j=0$.

The only poles with non-zero contribution are therefore at $w_k=1$. The residue of the simple pole at $w_1=1$ can be easily evaluated. The factor $\prod_{\ell =2}^m (w_\ell -w_1)$ in such residues will decrease the order of each pole at $w_k=1$ by one. Hence the second order pole at $w_2=1$ becomes a simple pole and its residue can therefore be simply evaluated subsequently. The integration of $w_2$ again reduces the order of the pole at $w_3=1$ by one, and hence it also becomes a simple pole. Evaluating all poles at $w_k=1$ sequentially, we arrive at the following result,

\begin{equation}
P_{n,m,\rho}(t)
=
\frac{\rho^n}{n!}\oint \prod_{j=1}^n \frac{\dd z_j }{2\pi \ii}\,
\e^{\Lambda_{n,0} t} \frac{\prod_{1\le i<j\le n} (z_i-z_j)^2 }
{\prod_{j=1}^n (1-(1-\rho) z_j)(z_j-1)^{n}
\prod_{j=1}^n \left(\alpha z_j + \beta \right)^m}, \ \mathrm{for} \ n \geq m-2 ,
\label{Prhoz}
\end{equation}
where the contours are all around the origin, and we have made use of the symmetrisation identity
\begin{equation*}
\sum_{\pi\in S_n} \sign(\pi) \prod_{j=1}^n z_{\pi_j}^{n-j} (z_{\pi_j}-1)^{j-1}
=
\prod_{1 \leq j<k \leq n} (z_j-z_k) ,
\end{equation*}
which can be proved using a Vandermonde determinant.

Standard asymptotic analysis \cite{J2000,BFPS2007} then shows that the long time limit of $P_{n,m,\rho}(t)$ is governed by the GUE Tracy-Widom distribution \cite{TW1994,Mehta2004random,forrester2010random}, which is the same as the asymptotic current distribution for the single species TASEP, as expected.

For a special case, we also 
have an exact correspondence to the single species TASEP. 

\medskip
\begin{corollaryn}
When $n=m$, $\alpha=\beta=1/2$ and $\rho=1$, we have, by applying the change of variable $z_j = x_j/(2-x_j)$ in \eqref{Prhoz},
\begin{align}
P_{n,n,\rho}(t)
=
\frac{1}{n!}\oint \prod_{j=1}^n \frac{\dd x_j }{2\pi \ii}\ \e^{\mathcal{E} t}
\frac{\prod_{1\le i<j\le n} (x_i-x_j)^2 } {\prod_{j=1}^n (x_j-1)^{n}} ,
\label{SymTASEP}
\end{align}
where
$$
\mathcal{E}=\sum_{j=1}^n (x_j^{-1}-1),
$$
and the contours are around the origin, or alternatively around $1,\infty$. 
\end{corollaryn}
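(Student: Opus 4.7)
The plan is to specialize \eqref{Prhoz} with $n=m$, $\alpha=\beta=\tfrac12$, $\rho=1$, and then apply the Möbius substitution $z_j = x_j/(2-x_j)$ variable by variable. Under these specializations the factor $1-(1-\rho)z_j$ collapses to $1$, the interaction factor reduces to $\alpha z_j+\beta=(z_j+1)/2$, and pulling the factors of $1/2$ out of the product produces an overall prefactor $2^{n^2}/n!$ in front of an integrand of the form $e^{\Lambda_{n,0}t}\prod_{i<j}(z_i-z_j)^2\big/\prod_j(z_j-1)^n(z_j+1)^n$. What remains is to verify that the change of variables converts this into the right-hand side of \eqref{SymTASEP}.

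The first step is to record the building blocks of the substitution: $\dd z_j = 2(2-x_j)^{-2}\dd x_j$, $z_j-1 = 2(x_j-1)/(2-x_j)$, $z_j+1 = 2/(2-x_j)$, $z_i-z_j = 2(x_i-x_j)/((2-x_i)(2-x_j))$, and, crucially, $z_j^{-1}-1 = 2(x_j^{-1}-1)$. The last identity is what makes the specialization $\alpha=\beta=\tfrac12$ the natural one: it converts $\Lambda_{n,0} = \tfrac12\sum_j (z_j^{-1}-1)$ directly into $\mathcal{E} = \sum_j(x_j^{-1}-1)$, so the exponential factor becomes $e^{\mathcal{E}t}$ with no stray prefactor.

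The rest is bookkeeping organised variable by variable. Tallying powers of $(2-x_j)$: the Jacobian contributes $(2-x_j)^{-2}$ and the Vandermonde squared contributes $(2-x_j)^{-2(n-1)}$, giving $(2-x_j)^{-2n}$ in the numerator; the denominator factors $(z_j\pm 1)^n$ contribute $(2-x_j)^{-n}$ each, giving $(2-x_j)^{-2n}$ in the denominator, so they cancel exactly. A parallel tally of powers of $2$ yields $2^{n^2+n+n(n-1)}=2^{2n^2}$ in the numerator (prefactor, Jacobian, and squared Vandermonde) against $2^{n\cdot n}$ from each of the two factors $(z_j\pm 1)^n$, i.e.\ $2^{2n^2}$ in the denominator, so these also cancel. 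What survives is precisely the right-hand side of \eqref{SymTASEP}.

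For the contour statement, the Möbius map sends $x=0$ biholomorphically to $z=0$, so the original small $z$-contour around the origin pulls back to a small $x$-contour around $x=0$. After the cancellations above, the transformed integrand is, as a function of each $x_j$, rational away from the essential singularity at $x_j=0$, with its only other singularity a pole of order $n$ at $x_j=1$. By the residue theorem on the Riemann sphere, the integral over a small contour around $0$ therefore equals (up to the orientation convention, the residue at $\infty$ being absorbed) the one over a contour enclosing $\{1,\infty\}$, which gives the alternative form. There is no real obstacle in the argument; the only point requiring attention is the variable-by-variable accounting of the $(2-x_j)$ and $2^\bullet$ factors, which is verification rather than a hard step.
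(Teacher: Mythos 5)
Your proposal is correct and follows exactly the route the paper indicates: specialize \eqref{Prhoz} to $n=m$, $\alpha=\beta=\tfrac12$, $\rho=1$, apply $z_j=x_j/(2-x_j)$, and track the cancellations in the powers of $2$ and $(2-x_j)$; the key observation that $z_j^{-1}-1=2(x_j^{-1}-1)$ turns $\Lambda_{n,0}$ into $\mathcal{E}$ with no leftover constant, and the remaining bookkeeping checks out. The paper states the corollary without written details, so your blind computation is essentially its intended proof, and your contour remark (residue theorem on the sphere, with the essential singularity at $0$ and the poles at $1$ and $\infty$) correctly justifies the alternative contour.
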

\medskip
This recovers the distribution for the single species TASEP for the step initial condition. 
This is understood easily because the AHR model for this particular setting is 
equivalent to the single species TASEP by regarding $+$ particle as a particle and and 
$-$ particle as a hole.

\section{Fredholm determinant}
\label{se:tofredholm}
In this section, we describe a general method of converting a multiple contour integral into a Fredholm determinant.  
In other words, we will give a proof of Proposition \ref{prop:fredholm}, which was stated in Section~\ref{se:introduction}.
The main idea is to transform the integrand into determinants using the Cauchy identity. The integral can then be converted into a determinant according to the Cauchy-Binet formula, and subsequently into a Fredholm determinant.

For $\nu\in\mathbb{N}$, let us now first 
write down again the $\nu$-fold integral (\ref{Inu}) we want to consider, which reads 
\begin{equation}
\label{Inu2}
I_\nu
=
\frac{1}{\nu!} \oint_C \prod_{i=1}^{\nu}
\frac{\dd \zeta_i}{2\pi \ii\,a_i^s \zeta_i}
\frac{\prod_{1\le i\neq j\le \nu} (1-\zeta_i/\zeta_j)}{\prod_{1\le i,j\le \nu}(1-a_i/\zeta_j)}
\prod_{i=1}^{\nu} \frac{g(\zeta_i,s)}{g(a_i,s)},
\end{equation}
where $g(\zeta,s)=g^c(\zeta,s;0)$ with
\begin{equation} 
g^c(\zeta,s ; \kappa) = \prod_{j=1}^\nu \frac{1}{1-u_j/\zeta}
\prod_{k=1}^\mu \frac{1}{1+v_k \zeta} \times \zeta^{\mu-\nu-s}  
{(\zeta + c)}^\kappa \e^{\gamma\zeta},
\label{gmndef2} 
\end{equation}
and $\mu,\kappa\in\mathbb{N}$, $s\in\mathbb{Z}$ and $\zeta,c,a_j, u_j,v_k,\gamma\in\mathbb{C}$ for $j\in [1,\nu]$ and $k\in [1,\mu]$. 
The contour $C$ includes $0,a_j,u_j,-1/v_k$ for $j\in [1,\nu]$ and $k\in [1,\mu]$.

Let us assume temporarily that all $a_j$'s are distinct.
The factor $\prod_{i\neq j} (1-\zeta_i/\zeta_j)/\prod_{i,j}(1-a_i/\zeta_j)$ in the integrand can be written as a product of two determinants via the Cauchy determinant identity, 
\begin{align*}
\frac{\prod_{i\neq j} (1-\zeta_i/\zeta_j)}{\prod_i \zeta_i  \prod_{i,j}(1-a_i/\zeta_j)}
 =
\frac{\prod_{i\neq j} (\zeta_i-\zeta_j)}{\prod_{i,j}(\zeta_i-a_j)}
=
\det\left(\frac{1}{\zeta_i-a_j}\right)_{1\le i,j\le \nu}
\det\left(\frac{1}{\zeta_k-a_\ell}\right)_{1\le k,\ell\le \nu}
\frac{\prod_{i,j}(\zeta_i-a_j)}{\prod_{i\neq j }(a_i-a_j)} .
\end{align*}
Recall the Cauchy-Binet identity (or Andreief identity):
\begin{equation*}
\frac{1}{\nu!}\int
\det\big(f_i(x_j)\big)_{1\le i,j\le \nu}
\det\big(h_i(x_j)\big)_{1\le i,j\le \nu}
\prod_{i=1}^\nu\dd u(x_i)
=
\det\left(\int f_i(x)h_j(x)\dd u(x)\right)_{1\le i,j\le \nu},
\end{equation*}
from which, the $\nu$-fold integral is written as a single determinant,
\begin{align*}
I_\nu
&=
\frac{1}{\nu!} \oint_C \prod_{j=1}^\nu \left[
\frac{\dd \zeta_j}{2\pi\ii\, a_j^s}
\frac{g(\zeta_j,s )}{g(a_j,s)}
\frac{\prod_{\ell}(\zeta_j-a_\ell)}{\prod_{ \ell\neq j}(a_j-a_\ell)} \right]
\det\left(\frac{1}{\zeta_i-a_j}\right)_{1\le i,j\le \nu}
\det\left(\frac{1}{\zeta_k-a_\ell}\right)_{1\le k, \ell \le \nu}
\\
&=
\det\left( \oint_C
\frac{\dd \zeta}{2\pi\ii}
\frac{a_j^{-s}}{(\zeta-a_j)(\zeta-a_k)} \frac{g(\zeta,s)}{g(a_j,s)}
\frac{\prod_{\ell} (\zeta-a_\ell)}{\prod_{\ell\neq j} (a_j-a_\ell)}
\right)_{1\le j,k\le \nu}.
 \end{align*}
From this explicit form it is clear that the poles at $\zeta=a_j$ and $\zeta=a_k$ are removable unless $j=k$, hence after evaluating residues at $\zeta=a_j$ for all $j$, we obtain
\begin{align}
I_\nu
&=\prod_{l=1}^{\nu}a_l^{-s}
\det\left(
\delta_{jk}+
\oint_{C_r}
\frac{\dd \zeta}{2\pi\ii}
\frac{1}{(\zeta-a_j)(\zeta-a_k)}
\frac{g(\zeta,s)}{g(a_j,s)}
\frac{\prod_{\ell} (\zeta-a_\ell)}{\prod_{\ell\neq j} (a_j-a_\ell)}
\right)_{1\le j,k\le \nu},
\label{prefredholm}
\end{align}
where the contour $C_r$ includes only the poles at $0, \alpha_j$ and $-1/\beta_k$, but not those at $a_j$. To transform the integral around $C_r$ into a product of two operators (or matrices)  we rewrite the factor $-1/(\zeta-a_j)$ in the form of a geometric series,
\begin{align}
-1/(\zeta-a_j) = \sum_{x=1}^{\infty} (\zeta + c  )^{x-1} /(a_j + c )^x.
\label{eq:cintro}
\end{align}
Therefore the integral $I_\nu$ becomes

\begin{align}
I_\nu
=\prod_{l=1}^{\nu}a_l^{-s}
\det\left(\delta_{jk}-
\sum_{x=1}^{\infty}
\oint_{C_r}
\frac{\dd \zeta}{2\pi\ii ( \zeta + c ) } \frac{g^c(\zeta,s ; x )}{g^c(a_j,s ; x )}
\frac{\prod_{\ell\neq k}(\zeta-a_\ell)}{\prod_{\ell\neq j} (a_j-a_\ell)}
\right)_{1\le j,k\le \nu}.
\label{eq:Inuwithx}
\end{align}
Note that, by the assumptions of Proposition \ref{prop:fredholm}, one can deform the contour $C_r$ and find $c$ such that $|(\zeta +c) /(a_j +c )|\le \epsilon<1$ holds for some positive constant $\epsilon$ and all $j \in [1, \nu]$,  so that the geometric sum converges uniformly.

The sum over $x\ge 1$ in \eqref{eq:Inuwithx} can be interpreted as representing a product of two matrices, $A^c$ and $B^c$, of dimensions $\nu\times \infty$ and $\infty\times \nu$. Namely 
one can write 
\begin{equation*}
I_\nu
=\prod_{l=1}^{\nu}a_l^{-s}
\det(1 - A^c B^c )_{1\le j,k\le \nu} ,
\end{equation*}
where the matrices $A^c$ and $B^c$ are given explicitly by
\begin{equation*}
A^c(k,x)
=
\oint_{C_r} \frac{\dd \zeta}{2\pi\ii ( \zeta +c )}
g^c(\zeta,s ; x ) \prod_{\ell\neq k} (\zeta-a_\ell) ,
\qquad\quad
B^c(x,j)
=
\left(a_j^{s}g^c(a_j, s ; x )\prod_{\ell\neq j}(a_j-a_\ell)\right)^{-1} .
\end{equation*}
Then by swapping the product order of these two matrices, the integral $I_\nu$ is converted into a Fredholm determinant,  
\begin{equation*}
I_\nu
=\prod_{l =1}^{\nu}a_l^{-s}
\det(1 - K^c )_{\ell^2(\mathbb{N})},
\end{equation*}
where the kernel of the operator $K^c:=B^c A^c$ is given by
\begin{equation}
K^c(x,y)=\sum_{j=1}^\nu B^c(x,j)A^c(j,y).
\end{equation}
Note that, although $A^c$, $B^c$ and the kernel $B^c A^c$ depend on $c$, $I_\nu$ is independent of $c$ because $c$  stemmed from the non-unique way that $1/(\zeta- a_j)$ can be represented as a geometric series in \eqref{eq:cintro}.

For the purpose of later discussion, we will show that the kernel of the Fredholm determinant can be written into a product of two contour integrals, so that the method like the steepest descent can be applied for asymptotic analyses. The kernel $K^c$ is given by
\begin{align}
\nonumber
K^c(x,y)
&=
\sum_{j=1}^\nu B^c(x,j)A^c(j,y)
\\
\nonumber
&=
\sum_{j=1}^\nu \frac{1}{g^c(a_j, s ; x )
\prod_{\ell \neq j} (a_j-a_\ell)}
\oint_{C_r} \frac{\dd \zeta}{2\pi \ii ( \zeta +c ) } g^c(\zeta, s ; y )
\prod_{\ell\neq j} (\zeta-a_{\ell})
\\
&=
\oint_{C_r} \frac{\dd \zeta}{2\pi \ii ( \zeta +c )} \oint_D \frac{\dd \xi}{2\pi\ii} \frac{1}{\zeta-\xi}
\prod_{\ell=1}^\nu \frac{\zeta-a_\ell}{\xi-a_\ell} \frac{g^c(\zeta, s; y )}{g^c(\xi,  s ; x )} 
\label{kernel form 2}
\\
\nonumber
&=
\oint_{C_r} \frac{\dd \zeta}{2\pi \ii ( \zeta +c )} \oint_D \frac{\dd \xi}{2\pi\ii} \frac{1}{\zeta-\xi}\left(\frac{\xi^{\nu}}{\zeta^{\nu}}\prod_{\ell=1}^\nu \frac{\zeta-a_\ell}{\xi-a_\ell}-1\right) 
\frac{\zeta^{\nu}g^c(\zeta, s ; y )}{\xi^{\nu}g^c(\xi, s ; x )} .
\end{align}
In the sequel the condition $|(\zeta +c) /(a_j +c )|\le \epsilon<1$ is not 
necessary for the contour $C_r$. 
The contour $D$ for the $\xi$ integration includes the poles at $a_j$ and should therefore be separated from $C_r$. The term $-1$ inside the parentheses is inserted for convenience of the next step and does not change the value of the integral.

Using a simple identity in \cite{IS2019},
\begin{equation}
\frac{1}{\zeta-\xi}
\left(\frac{\xi^\nu}{\zeta^\nu}\prod_{\ell=1}^\nu \frac{\zeta-a_\ell}{\xi-a_\ell}-1\right)
=
\sum_{k=0}^{\nu-1} a_{k+1}\frac{(\zeta-a_1)\cdots (\zeta-a_k)\xi^k}{(\xi-a_1)\cdots (\xi-a_{k+1})\zeta^{k+1}},
\label{tosum}
\end{equation}
the kernel can be written in the form,
\begin{equation}
\label{kernel form summation}
 K^c(x,y) = \sum_{k=0}^{\nu-1} \phi^c_k(x) \psi^c_k(y),
\end{equation}
with
\begin{align}
\phi^c_k(x)
&=
\oint_D \frac{\dd \xi}{2\pi\ii}
\frac{\xi^{k-\nu}}{ g^c(\xi, s ; x )
(\xi-a_1)\cdots (\xi-a_{k+1})} ,
\label{phi}
\\
\psi^c_k(x)
&=
a_{k+1} \oint_{C_r} \frac{\dd \zeta}{2\pi\ii\, (\zeta + c) \zeta^{k-\nu+1}}
g^c(\zeta, s ; x )(\zeta-a_1)\cdots (\zeta-a_k) 
\label{psi}.
\end{align}
Note that at this stage one can set some of $a_j$'s to be the same in these expressions. In the proof we have assumed that the $a_j$ are distinct but Proposition \ref{prop:fredholm} remains valid also when some or all of the $a_j$ are equal. Also after the use of (\ref{tosum}) the contours $C_r$ and $D$ do not need to be separate for the final result.

\section{Asymptotics: preparations}
\label{se:scalinglimit}
In this section, we will rewrite the joint current distribution in Proposition~\ref{prop:P1rho} to a form which is suitable for an asymptotic analysis. For technical simplicity, from now on, we set $\beta=\tfrac12$.
Using shorthand $\rho'=1-\rho$, recall the integral formula of the joint current distribution given in Proposition \ref{prop:P1rho} (with $\alpha=\beta=\frac12$):
\begin{align}
\label{eq:probmaster}
P_{n,m,\rho}(t) &= (-1)^{n+m}  \oint_0 \prod_{j=1}^n \frac{\dd z_j }{2\pi \ii} \prod_{k=1}^m
\frac{\dd w_k }{2\pi \ii}\ \e^{\Lambda_{n,m} t} \frac{\displaystyle \rho^n \prod_{1\le i<j\le n} (z_i-z_j) \prod_{1\le k< \ell \le m} (w_\ell -w_k) \prod_{j=1}^n z_j^{n-j} \prod_{k=1}^m w_k^{k-1}}
{\displaystyle \prod_{j=1}^n (z_j-1)^{n+1-j} (1- \rho' z_j)
\prod_{k=1}^m (w_k-1)^{k} \prod_{j=1}^n \prod_{k=1}^m \big(\tfrac12(z_j + w_k)\big)} \nonumber \\
&=  (-1)^{n+m} \frac{\rho^n }{n!m!} \oint_0 \dd^n z\, \dd^m w\, \frac{ \e^{\Lambda_{n,m} t}  \Delta_n(z) \Delta_n(-z) \Delta_m(w) \Delta_m(-w)}  
{\displaystyle \prod_{j=1}^n (1- \rho' z_j)  \prod_{j=1}^n (z_j-1)^{n} \prod_{k=1}^m (w_k-1)^{m} \prod_{j=1}^n \prod_{k=1}^m \big(\tfrac12(z_j + w_k)\big)},
 \end{align}
where the second line follows from symmetrisation in the $\vec{z}$-variables as well as in the $\vec{w}$-variables, and we defined the following abbreviations to make the formulas more compact,
\begin{equation*}
\dd^n z\, \dd^m w= \prod_{j=1}^n \frac{\dd z_j }{2\pi \ii} \prod_{k=1}^m
\frac{\dd w_k }{2\pi \ii},\qquad
\Delta_n(z) = \prod_{1 \le i < j\le n} (z_i-z_j).
\end{equation*}
We define
\[
\Lambda_{n,m} = \frac12 \sum_{j=1}^n (z_j^{-1}-1) + \frac12 \sum_{k=1}^m (w_k^{-1}-1),
\qquad
S_{n,m}(z,w)=\prod_{j=1}^n \prod_{k=1}^m \big(\tfrac12(z_j + w_k)\big).
\]
Before performing an asymptotic analysis, we first perform some rearrangements in \eqref{eq:probmaster} to disentangle the integral over the $\vec{z}$-variables from those over the $\vec{w}$-variables.

The contours in \eqref{eq:probmaster} are all around the origin and we can choose the $\vec{z}$-contours to be enclosing the $\vec{w}$-contours. By deforming the $\vec{z}$-contours we can place them around all the other poles, including infinity, in the complex plane. Since we will be interested in the asymptotics with the scaling 
(\ref{nmscale}), we can focus on the case 
when $n<m+4$, for which one can check that there is no pole at $z_j=\infty$ and the only other poles are located at $z_j=1$ and $z_j=1/\rho'$. Therefore in this case 
the target distribution becomes
\begin{align*}
P_{n,m,\rho}(t)  =  \frac{ {(-1)}^m \rho^n  }{n!m!}\oint_{ 0} \dd^m w\,  \oint_{1,1/\rho'} \dd^nz\,  \frac{  \e^{\Lambda_{n,m} t} \Delta_n(z)  \Delta_n(-z)  \Delta_m(w) \Delta_m(-w)}  
{\prod_{j=1}^n (1- \rho' z_j)   \prod_{j=1}^n (z_j-1)^{n} \prod_{k=1}^m (w_k-1)^{m} S_{n,m}(z,w) }.
\end{align*}
The poles at $z_j=1/\rho'$ are simple poles and hence can be easily evaluated. Due to the symmetry of the integrand, the contribution of all poles at $z_j=1/\rho'$ is just $n$ times that of the pole $z_n=1/\rho'$. It can be easily seen that after evaluating the pole at $z_n=1/\rho'$, the Vandermonde product produces a factor $\prod_{j\neq n}(z_j-1/\rho')^2$, which cancels all other poles at $z_j=1/\rho'$. Therefore we obtain, when $n<m+4$, 
\begin{multline}
\label{eq:prob2terma}
P_{n,m,\rho}(t)  = \frac{ (-1)^m \rho^n  }{n!m!}\oint_{ 0} \dd^m w\,  \oint_{1} \dd^n z\,  \frac{ \e^{\Lambda_{n,m} t} \Delta_n(z) \Delta_n(-z) \Delta_m(w) \Delta_m(-w)} 
{  \prod_{j=1}^n (1- \rho' z_j) \prod_{j=1}^n (z_j-1)^{n} \prod_{k=1}^m (w_k-1)^{m} S_{n,m}(z,w) } \\
- \frac{  (-1)^{n+m-1}\e^{-\rho t/2}}{(\rho')^{n-1} (n-1)!m!}\oint_{ 0} \dd^m w\,  \oint_{1} \dd ^{n-1}z\,  \frac{ \e^{\Lambda_{n-1,m} t} \Delta_{n-1}(z) \Delta_{n-1}(-z) \Delta_m(w) \Delta_m(-w)} 
{ \prod_{j=1}^{n-1} (z_j-1)^{n} \prod_{k=1}^m (w_k-1)^{m} }\times \\
\frac{  \prod_{j=1}^{n-1}(1-\rho' z_j)}
{ S_{n-1,m}(z,w) \prod_{k=1}^m \big(\tfrac12( 1/\rho' + w_k)\big)}.
\end{multline}
In the next two sections, we will determine the asymptotic behaviours of each term of \eqref{eq:prob2terma}. To that end we first rewrite them into a form which is more suitable for asymptotics. 
Let $\mathcal{I}_1$ and $\mathcal{I}_2$ denote the first term and the second term of \eqref{eq:prob2terma}, respectively. Then the 
result of rewritings are summarised as follows.

\medskip

\begin{lemman}
\label{lem:prob2term}
The probability $P_{n,m,\rho}(t)$ can be written as

\begin{equation}
\label{eq:prob2termb}
P_{n,m,\rho}(t)  = \mathcal{I}_1 - \mathcal{I}_2,
\end{equation}
where $\mathcal{I}_1$ and $\mathcal{I}_2$ are given by

\begin{equation}
\label{eq:prob2term1}
\mathcal{I}_1 =  \frac{ (-1)^m }{m!}\oint_{ 0} \dd^m w\, \frac{  \e^{\Lambda_{0,m} t} \Delta_m(w) \Delta_m(-w)  }
{ \prod_{k=1}^m (w_k-1)^{m} S_{n,m}(1,w)} ,
\end{equation}

\begin{align}
\label{eq:prob2term2}
\mathcal{I}_2
=&
\frac{\e^{-\rho t/2}}{(\rho')^{n-1}} \left(\frac{2(1-\rho)}{2-\rho}\right)^m \frac{(-1)^{n-1}}{(n-1)!}  \oint_{1} \dd^{n-1} z\,  \frac{ \e^{\Lambda_{n-1,0} t} \Delta_{n-1}(z) \Delta_{n-1}(-z)  \prod_{j=1}^{n-1}(1-\rho' z_j)}
{\prod_{j=1}^{n-1} (z_j-1)^n S_{n-1,m}(z,1) }
I_w(\vec{z})\\
=& \oint_{1} \dd^{n-1} z\,  L(\vec{z})
I_w (\vec{z}),
\label{I2 Iz antisym}
\end{align}
with
\begin{align}
I_w (\vec{z}) := & \frac{{(-1)}^m}{m!}
\oint_{0}
\dd^m w
\frac{\e^{\Lambda_{0,m}t} \Delta_{m}(w) \Delta_m(-w) S_{n-1,m}(z,1) }{\prod_{k=1}^{m}{ (w_k-1)^m} S_{n-1,m}(z,w) }
\prod_{k=1}^{m}\frac{1+1/\rho'}{w_k+1/\rho'},
\label{Iw} \\
L(\vec{z}) := &(-1)^{n-1} \frac{\e^{-\rho t/2}}{(\rho')^{n-1}  } \left(\frac{2(1-\rho)}{2-\rho}\right)^m  \frac{ \e^{\Lambda_{n-1,0} t} \Delta_{n-1}(-z)  \prod_{j=1}^{n-1}(1-\rho' z_j)}
{\prod_{j=1}^{n-1} (z_j-1)^{j+1} S_{n-1,m}(z,1) }. \label{def:L(z)} 
\end{align}
Here and in the following $\vec{z}$ denotes the collection of variables $z_j$'s with $j\in [1,n-1]$.
\end{lemman}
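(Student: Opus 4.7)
The lemma contains three sub-claims: that $P_{n,m,\rho}(t)=\mathcal{I}_1-\mathcal{I}_2$, that $\mathcal{I}_1$ has the closed form \eqref{eq:prob2term1}, and that $\mathcal{I}_2$ admits the two equivalent expressions \eqref{eq:prob2term2} and \eqref{I2 Iz antisym}. The plan is to attack the two terms of \eqref{eq:prob2terma} in turn: the first term, after a nontrivial collapse of the $\vec{z}$-integral, will become $\mathcal{I}_1$; the second term, after a purely algebraic regrouping, will become $-\mathcal{I}_2$ in the form \eqref{eq:prob2term2}; and a Vandermonde-style antisymmetrisation will convert that form into \eqref{I2 Iz antisym}.

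For the first term of \eqref{eq:prob2terma}, notice that the $\vec{z}$-integrand factorises as $\prod_j g(z_j)$ times $\Delta_n(z)\Delta_n(-z)/\prod_j(z_j-1)^n$, where
\[
g(z)=\frac{\exp\bigl(\tfrac{t}{2}(z^{-1}-1)\bigr)}{(1-\rho'z)\prod_{k=1}^m\tfrac12(z+w_k)}.
\]
Shifting $\epsilon_j=z_j-1$ turns the Vandermonde factor into $(-1)^{\binom{n}{2}}\Delta(\epsilon)^2/\prod_j\epsilon_j^n$, and $g(1+\epsilon_j)$ is analytic at the origin. The key step is the identity
\[
\frac{1}{n!}\oint_0\prod_j\frac{\dd\epsilon_j}{2\pi\ii}\frac{\Delta(\epsilon)^2}{\prod_j\epsilon_j^n}f(\vec{\epsilon})=(-1)^{\binom{n}{2}}f(\vec{0}),
\]
valid for any $f$ analytic at $\vec{0}$. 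A proof by degree count is immediate: $\Delta(\epsilon)^2$ has total degree $n(n-1)$ and $1/\prod_j\epsilon_j^n$ total degree $-n$, so a nonzero residue forces every variable to have degree exactly $n-1$ in $\Delta^2$, isolating the monomial $\prod_j\epsilon_j^{n-1}$ whose coefficient is $n!(-1)^{\binom{n}{2}}$ (equivalently, apply the Andr\'eief identity and exploit the anti-triangular structure of the resulting Taylor-coefficient matrix). Taking $f(\vec{\epsilon})=\prod_j g(1+\epsilon_j)$, the $\vec{z}$-integral collapses to $n!\,g(1)^n=n!\,\rho^{-n}S_{n,m}(1,w)^{-1}$; substituting back cancels the $\rho^n/n!$ prefactor and reproduces $\mathcal{I}_1$.

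For the second term of \eqref{eq:prob2terma}, the regrouping is direct. Pulling out $e^{-\rho t/2}/(\rho')^{n-1}$, splitting $\prod_k\tfrac12(1/\rho'+w_k)^{-1}=(2\rho'/(2-\rho))^m\prod_k(1+1/\rho')/(w_k+1/\rho')$ exposes the $m$-dependent constant $(2(1-\rho)/(2-\rho))^m$, and multiplying and dividing by $S_{n-1,m}(z,1)$ isolates the $\vec{w}$-piece as $I_w(\vec{z})$ of \eqref{Iw}; comparing signs against the overall $(-1)^{n+m-1}$ delivers \eqref{eq:prob2term2}. To pass to \eqref{I2 Iz antisym}, observe that $I_w(\vec{z})$ depends on $\vec{z}$ only through the ratio $S_{n-1,m}(z,1)/S_{n-1,m}(z,w)$ and is therefore symmetric in $\vec{z}$. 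Replacing $L(\vec{z})$ by its symmetrisation inside the integral, the antisymmetry of $\Delta_{n-1}(-z)$ converts the permutation sum $\sum_\pi \sign(\pi)\prod_j(z_{\pi(j)}-1)^{-(j+1)}$ into the determinant $\det(1/(z_i-1)^{j+1})_{1\le i,j\le n-1}$, which by the classical Vandermonde evaluation $\det(a_i^{j-1})=(-1)^{\binom{n-1}{2}}\Delta_{n-1}(a)$ applied with $a_i=1/(z_i-1)$ equals $\Delta_{n-1}(z)/\prod_k(z_k-1)^n$; combining then reproduces the integrand of \eqref{eq:prob2term2}, establishing the equivalence of the two forms.

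The main obstacle is the degree-count identity that forces the $n$-fold $\vec{z}$-integral to evaluate, in effect, to the single value $g(1)^n$ despite the apparently higher-degree Vandermonde factor in the numerator; every other step is a matter of algebraic rearrangement of prefactors or a Vandermonde-style computation.
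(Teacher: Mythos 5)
Your proposal is correct, and the algebraic regrouping that produces \eqref{eq:prob2term2} as well as the Vandermonde (anti)symmetrisation linking \eqref{eq:prob2term2} to \eqref{I2 Iz antisym} match the paper's argument essentially step for step. Your treatment of $\mathcal I_1$, however, is genuinely different. The paper applies the identity \eqref{Vandermonde_1} to rewrite $\Delta_n(z)/\prod_j(z_j-1)^{n-1}$ as a signed permutation sum, kills the $1/n!$ via the antisymmetry of $\Delta_n(-z)$, and then evaluates the $z_j=1$ poles one at a time: the pole at $z_1=1$ is simple, the factor $(z_2-1)$ that emerges from $\Delta_n(-z)$ makes the $z_2=1$ pole simple, and so on telescopically down to $z_n$. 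You instead shift to $\epsilon_j = z_j-1$, recognize $\Delta_n(z)\Delta_n(-z) = (-1)^{\binom{n}{2}}\Delta(\epsilon)^2$, and invoke the residue identity $\tfrac{1}{n!}\oint_0 \dd^n\epsilon\,\Delta(\epsilon)^2 f(\vec\epsilon)/\prod_j\epsilon_j^n = (-1)^{\binom{n}{2}} f(\vec 0)$, which collapses the $n$-fold integral in a single stroke. Both work; the paper's approach is cheap once \eqref{Vandermonde_1} is in hand and the induction on pole order is a standard device used repeatedly elsewhere in the paper, while your identity is more self-contained and makes the mechanism (only $f(\vec 0)$ survives) transparent without manipulating permutation sums.

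Two small imprecisions in your degree argument, neither fatal: $1/\prod_j\epsilon_j^n$ has total degree $-n^2$ (not $-n$), and the phrase ``forces every variable to have degree exactly $n-1$ in $\Delta^2$'' obscures the actual reason, which is that $\Delta(\epsilon)^2$ is \emph{homogeneous} of degree $n(n-1)$. Because it is homogeneous, pairing a positive-degree monomial of $f$ against a lower-degree part of $\Delta^2$ to reconstitute $\prod_j\epsilon_j^{n-1}$ is impossible — $\Delta^2$ simply has no lower-degree part. Stated that way the step is airtight; your parenthetical Andr\'eief route reaches the same conclusion and also fixes this.
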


\begin{proof}

First, we show that the first term $\mathcal{I}_1$ is written as \eqref{eq:prob2term1}.
Consider the symmetrisation identity
\begin{equation}
\label{Vandermonde_1}
\sum_{\pi \in S_{N}}{ \sign(\pi)\prod_{j=1}^{N}{ {\left( \frac{1}{z_{\pi_j}-1} \right)}^{j-1} } } = \frac{\Delta_{N}(z)}{ \prod_{j=1}^{N}{ {(z_j - 1)}^{N-1} } } ,
\end{equation}
which can be proved using a Vandermonde determinant.
Since the right hand side of \eqref{Vandermonde_1} for $N=n$ appears in the first term of \eqref{eq:prob2terma}, substituting the left hand side of \eqref{Vandermonde_1} into it and considering that a Vandermonde product $\Delta_{n}(-z)$ is anti-symmetric under the exchange $z_i \leftrightarrow z_j$ for any pairs $(i,j) \in {[1,n]}^2$, we obtain

\begin{multline}
\label{antisym_z-integral_of_I1}
 \frac{ (-1)^m  \rho^n  }{n!m!}\oint_{ 0} \dd^m w\,  \oint_{1} \dd^n z\,  \frac{ \e^{\Lambda_{n,m} t} \Delta_n(z) \Delta_n(-z) \Delta_m(w) \Delta_m(-w)}  
{ \prod_{j=1}^n (z_j-1)^{n} \prod_{k=1}^m (w_k-1)^{m} \prod_{j=1}^n (1- \rho' z_j)  S_{n,m}(z,w)} \\
= \frac{ (-1)^m  \rho^n  }{m!}\oint_{ 0} 
\dd^m w\,  \oint_{1} \dd^n z\,  \frac{  \e^{\Lambda_{n,m} t} \Delta_n(-z)  \Delta_m(w) \Delta_m(-w) }
{ \prod_{j=1}^n (z_j-1)^{j}\prod_{j=1}^n \prod_{k=1}^m (w_k-1)^{m} (1- \rho' z_j)  S_{n,m}(z,w)}.
\end{multline}
On the right hand side, the pole at $z_1=1$ is first order and its residue can be evaluated in an easy way. In doing so, the Vandermonde product produces a factor $z_{2}-1$, making the pole at $z_{2}=1$ first order. This pole can be subsequently evaluated in an easy way. Proceeding successively all poles at $z_j=1$ can be easily evaluated from $j=1$ to $j=n$, giving \eqref{eq:prob2term1}. 

Second, we show that the second term $\mathcal{I}_2$ can be written as either \eqref{eq:prob2term2} or \eqref{I2 Iz antisym}.
\eqref{eq:prob2term2} follows immediately from the second term of \eqref{eq:prob2terma}.
Since the right hand side of \eqref{Vandermonde_1} for $N=n-1$ appears in the right hand side of \eqref{eq:prob2term2}, in the same fashion as the derivation of \eqref{antisym_z-integral_of_I1}, we can obtain the anti-symmetric formula \eqref{I2 Iz antisym}.
Unlike the right hand side of \eqref{antisym_z-integral_of_I1}, the pole whose order is the lowest is at $z_1 = 1$ and its order is $2$ so that we can not evaluate the poles at $z_j=1$ explicitly in a compact manner.  \qed
\end{proof}

\subsection{Scaling limit}\label{sec:Scaling limit}
Nonlinear fluctuating hydrodynamics  \cite{landau2013course,MF1973,SH1977,DM1986} for this model, predicts that integrated currents for the scaled normal modes defined by 
\begin{equation}
\begin{split}
\eta_2(t) &= \frac{1}{c_2 t^{1/3}} \Big( (1+\rho) N_+(t)-(3-\rho) N_-(t)+\tfrac12(1-\rho) (1-(1-\rho)^2/4) t\Big),
\\
\eta_{\rm g}(t) &= \frac{1}{c_{\rm g} t^{1/2}} \Big( -2(2-\rho) N_+(t)+2\rho  N_-(t)+(2-\rho)(1-\rho)\rho t \Big),
\end{split}
\label{s+-}
\end{equation}
where the constants $c_2$ and $c_{\rm g}$ are given by
\begin{equation}
\begin{split}
c_2 &=(3/32)^{1/3}(1-\rho)(3-\rho)^{2/3}(1+\rho)^{2/3},
\\
c_{\rm g} &= 2^{-1/2} 3(1-\rho)^{3/2}\sqrt{\rho(2-\rho)},
\end{split}
\label{c2cg}
\end{equation}
would tend to finite limiting random variables in the large time limit. 
Put differently the joint distribution for the scaled normal modes, $\mathbb{P}[\eta_2 \leq s_2, \eta_{\rm g} \leq s_{\rm g}]$
is equivalent to that 
for the original variable $\mathbb{P}[N_+(t)\geq n, N_-(t)\geq m]$ 
if we take the scaling for $n$ and $m$ as  
\begin{equation}
\begin{split}
n &= j_+(\rho)t -\frac{1}{12(1-\rho)}(2\rho c_2 s_2 t^{1/3}+(3-\rho)c_{\rm g} s_{\rm g} t^{1/2}),
\\
m &= j_-(\rho)t -\frac{1}{12(1-\rho)}(2(2-\rho)c_2 s_2 t^{1/3}+(1+\rho) c_{\rm g} s_{\rm g} t^{1/2}),
\end{split}
\label{nmscale}
\end{equation}
where the macroscopic currents $j_\pm$ are defined by
\begin{equation}
     j_+(\rho) = \frac{\rho(3-\rho)^2}{16}, \qquad j_-(\rho)=\frac{(1+\rho)^2(2-\rho)}{16}.
     \label{jpm}
\end{equation}
Denote the result of solving (\ref{nmscale}) for $s_{\rm g}$ and $s_2$ as functions of $n,m,t$  
by $s_{\rm g}(n,m,t)$ and $s_2(n,m,t)$ respectively. Then they are nothing but the right hand 
sides of (\ref{s+-}) with $N_+(t),N_-(t)$ replaced with $n,m$. 

We will show that in this scaling limit, the probability \eqref{eq:prob2termb} tends to a product of the Gaussian distribution and the GUE Tracy-Widom distribution in the long time limit. In other words, we will prove the following theorem.

\medskip

\begin{theorem}
\label{thm_I1I2 limit}
With the scaling \eqref{nmscale}, the terms $\mathcal{I}_1$ and $\mathcal{I}_2$ given by \eqref{eq:prob2term1} and \eqref{eq:prob2term2}, respectively, tend to the following limits,
\begin{subequations}
\begin{align}
\lim_{t \to \infty} \mathcal{I}_1 & = F_2(s_{2}), \label{claim_I1 limit} \\
\lim_{t \to \infty} \mathcal{I}_2 & = \left[ 1 - F_G(s_{{\rm g}}) \right] F_2(s_{2}). \label{claim_I2 limit} 
\end{align}
\end{subequations}
Here we recall that $F_2$ and $F_G$ denote the cumulative distribution functions of the GUE Tracy-Widom distribution in Definition \ref{def:GUETW_dist} and the Gaussian distribution, respectively.
\end{theorem}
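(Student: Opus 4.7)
My plan is to convert each of $\mathcal{I}_1$ and the inner $\vec w$-integral $I_w(\vec z)$ appearing in $\mathcal{I}_2$ into a Fredholm determinant via Proposition~\ref{prop:fredholm}, and then perform a steepest-descent analysis of the resulting kernel under the scaling \eqref{nmscale}. In each case the factors $\e^{\Lambda_{\bullet,\bullet}t}$, the powers $(w_k-1)^{-m}$, and the crossing factor $S_{\bullet,\bullet}(z,w)$ combine into a single exponentially large action $t f(w)$ whose critical points determine where the integral concentrates; tail bounds along deformed contours will supply the uniform control needed to exchange the limit with the Fredholm series.

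\textbf{Treatment of $\mathcal{I}_1$.} In \eqref{eq:prob2term1} the $\vec z$-variables have already been eliminated, and the integrand matches the template \eqref{Inu} with $\nu=m$, all $a_i$ set to $1$, no $u_j$, and $v_k$-type poles supplied by $S_{n,m}(1,w)$. The freedom in the auxiliary parameter $c$ is crucial here: it prevents the $0/0$ degeneracy that would otherwise occur when all $a_i\to 1$ and $v_k\to 1$. The resulting kernel is the double contour integral \eqref{kernel form 2}; using $n/t\to j_+(\rho)$ and $m/t\to j_-(\rho)$ from \eqref{nmscale}, I would locate the common double saddle point of the $\zeta$- and $\xi$-actions, deform $C_r$ and $D$ through it, and rescale by $t^{1/3}$. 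The kernel then converges pointwise to the Airy kernel \eqref{def:airy_kernel}, and the discrete Fredholm determinant on $\ell^2(\mathbb{N})$ converges to the continuous one on $L^2(s_2,\infty)$, producing $F_2(s_2)$.

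\textbf{Treatment of $\mathcal{I}_2$.} The factorised form \eqref{I2 Iz antisym}--\eqref{def:L(z)} separates $\mathcal{I}_2$ into an outer $\vec z$-integral of $L(\vec z)I_w(\vec z)$. I would apply Proposition~\ref{prop:fredholm} to $I_w(\vec z)$ with $\vec z$ regarded as a frozen parameter; the factors $S_{n-1,m}(z,w)^{-1}$ produce \emph{dynamic} poles $w_k=-z_j$ that take the role of $-1/v_k$ in \eqref{gmndef}. The decisive step is then a decoupling identity showing that, after these dynamic poles have been absorbed, the $\vec z$-dependence of the Fredholm kernel becomes subleading under the scaling, so that in the limit it reduces to the \emph{same} Airy kernel as for $\mathcal{I}_1$ and $I_w(\vec z)=F_2(s_2)+o(1)$ uniformly on the relevant part of the $\vec z$-contour. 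This reduces $\mathcal{I}_2$ asymptotically to $F_2(s_2)\oint L(\vec z)\,\dd^{n-1}z$. The remaining $\vec z$-integral, involving the prefactor $\e^{-\rho t/2}$ together with $\e^{\Lambda_{n-1,0}t}$ and the asymmetric Vandermonde $\Delta_{n-1}(-z)\prod_j(z_j-1)^{-(j+1)}$, has a quadratic rather than cubic critical structure in the $s_{\rm g}$-direction of \eqref{nmscale}: this reflects the Bernoulli randomness of the initial condition and, by a Laplace-type analysis, evaluates to $1-F_G(s_{\rm g})$, thereby yielding \eqref{claim_I2 limit}.

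\textbf{Main obstacle.} The hardest step will be the asymptotic decoupling in $\mathcal{I}_2$: one must choose $C_r$, $D$ and the outer $\vec z$-contour simultaneously compatible with three different saddle-point structures---the Airy saddle for $\vec w$, the Gaussian saddle for $\vec z$, and the dynamic poles $w_k=-z_j$ which \emph{move with} $\vec z$---and this is precisely where the flexibility in the parameter $c$ from Proposition~\ref{prop:fredholm} must be exploited to prevent divergences as $a_j\to 1$. A second delicate point is justifying the interchange of $t\to\infty$ with the infinite sum defining the discrete Fredholm determinant and with the $\vec z$-integration, which requires uniform trace-norm-type tail bounds for the kernel along the deformed contours. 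Once the decoupling identity is in hand and these analytic controls are established, combining the Airy- and Gaussian-saddle asymptotics closes the argument for both \eqref{claim_I1 limit} and \eqref{claim_I2 limit}.
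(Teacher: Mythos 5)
Your treatment of $\mathcal{I}_1$ is sound and matches the paper: convert \eqref{eq:prob2term1} to a Fredholm determinant with the $c$-shifted kernel, deform to a steepest-descent contour through $w_c=(1-\rho)/2$, and pass to the Airy kernel with tail bounds justifying the limit of the Fredholm series. Your identification of the role of $c$ in removing the $a_j\to 1$, $v_k\to 1$ degeneracy is also correct.

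For $\mathcal{I}_2$, however, there is a genuine gap. You propose: first show $I_w(\vec z)=F_2(s_2)+o(1)$ \emph{uniformly} on the $\vec z$-contour, then conclude $\mathcal{I}_2\approx F_2(s_2)\oint L(\vec z)\,\dd^{n-1}z$. The second step does not follow from the first. The $\vec z$-integral against $L(\vec z)$ in \eqref{I2 Iz antisym} is not a bounded linear functional in the sup norm: $L(\vec z)$ has poles of order $j+1$ at $z_j=1$ for each $j\in[1,n-1]$, so the contour integral extracts high-order residues whose total order scales like $n\sim t$. A uniform $o(1)$ bound on the error $\epsilon(\vec z):=I_w(\vec z)-F_2(s_2)$ gives no control over $\oint L(\vec z)\,\epsilon(\vec z)\,\dd^{n-1}z$, because the residues involve derivatives of $\epsilon$ up to order $\Theta(n)$ and the total variation $\oint|L(\vec z)|\,|\dd^{n-1}z|$ is not uniformly comparable to $|I_z|$. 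In other words, $I_z=\oint L(\vec z)\,\dd^{n-1}z$ itself only has a finite limit because of massive cancellations, and those cancellations are destroyed once you insert a generic uniformly small $\epsilon(\vec z)$. You name "a decoupling identity" but then describe it as establishing subleading $\vec z$-dependence of the kernel, which is an asymptotic statement and hence subject to exactly this objection.

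The paper circumvents this by proving an \emph{exact algebraic identity at finite $t$} (Proposition~\ref{rank1 perturbation det2}): after integrating against $L(\vec z)$, each determinant $\det[K^c(x_i,x_j,\vec z)]$ may be replaced by $\det\bigl[K^c_{\mathrm{W}}(x_i,x_j)-\bigl(\sum_{p}\prod_q(z_q-1)A^c_p(x_i)\bigr)B^c(x_j)\bigr]$, i.e.\ the $\vec z$-independent kernel $K^c_{\mathrm W}=K^c(\cdot,\cdot,\vec 1)$ plus a rank-one perturbation whose $\vec z$-dependence is explicit and factorised. This identity is proved by residue manipulations exploiting the antisymmetry of $\Delta_{n-1}(-z)$ and the precise orders of the poles at $z_j=1$; it is not an asymptotic statement, and it is this exactness that makes the $\vec z$-integration tractable. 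Expanding the determinant of identity-minus-rank-one then gives $\mathcal{I}_2=I_z\,\det(1-K^c_{\mathrm W})-\mathcal{I}_2^{(2)}$ with $\mathcal{I}_2^{(2)}$ an explicit series, and only at this stage does one perform asymptotics, showing $\mathcal{I}_2^{(2)}\to 0$ and $\det(1-K^c_{\mathrm W})\to F_2(s_2)$ separately from $I_z=1-J^{(n)}\to 1-F_G(s_{\rm g})$. Your intuition about the Gaussian saddle for $I_z$ and the need for $c>0$ is correct, but without the exact rank-one decoupling the reduction of $\mathcal{I}_2$ to a product does not go through.
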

As a corollary of this theorem, we obtain the final result.

\medskip

\begin{corollaryn}
\label{cor_Pnm limit}
With the scaling \eqref{nmscale}, the probability $P_{n,m,\rho}(t)$ converges to the product of the Gaussian distribution and the GUE Tracy-Widom distribution, i.e.,
\begin{equation}
\lim_{t \to \infty} P_{n,m,\rho}(t) =  \lim_{t \to \infty} \left( \mathcal{I}_1 - \mathcal{I}_2 \right) = F_G(s_{{\rm g}}) F_2(s_{2}).
\end{equation}
\end{corollaryn}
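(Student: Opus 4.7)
The corollary is immediate from Theorem~\ref{thm_I1I2 limit}: substituting \eqref{claim_I1 limit} and \eqref{claim_I2 limit} into \eqref{eq:prob2termb} gives
\begin{equation*}
\lim_{t\to\infty} P_{n,m,\rho}(t) = F_2(s_2) - \bigl[ 1 - F_G(s_{\rm g}) \bigr] F_2(s_2) = F_G(s_{\rm g}) F_2(s_2),
\end{equation*}
so the real content lies in proving Theorem~\ref{thm_I1I2 limit}, which is where I would focus.

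For \eqref{claim_I1 limit}, the integrand of $\mathcal{I}_1$ in \eqref{eq:prob2term1} involves only the $\vec w$-variables, with the $\vec z$-dependence confined to the factor $S_{n,m}(1,w) = \prod_k\bigl((1+w_k)/2\bigr)^n$. After writing $\Delta_m(w)\Delta_m(-w) = \prod_{i\ne j}(w_i-w_j)$ this expression is exactly of the form \eqref{Inu} in Proposition~\ref{prop:fredholm}, with suitable identifications of $\mu,\nu,s$ and of the parameters $a_j,v_k,\gamma$. The proposition converts $\mathcal{I}_1$ into $\det(1-K^c)_{\ell^2(\mathbb{N})}$ with kernel given by the double contour integral \eqref{kernel form 2}. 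I would then carry out a standard steepest descent on the $\xi$- and $\zeta$-contours: with the scaling \eqref{nmscale}--\eqref{c2cg} the exponent develops a double critical point, and choosing the free parameter $c$ of Proposition~\ref{prop:fredholm} so that the contours can be deformed through this saddle in the correct descent directions yields pointwise convergence of the rescaled kernel to the Airy kernel of Definition~\ref{def:GUETW_dist}, together with the trace-class bounds needed to conclude $\mathcal{I}_1 \to F_2(s_2)$.

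For \eqref{claim_I2 limit}, the task is to extract both limiting factors from the coupled integral \eqref{I2 Iz antisym}. My plan is to apply Proposition~\ref{prop:fredholm} to the inner integral $I_w(\vec z)$ in \eqref{Iw}, rewriting it as a Fredholm determinant whose kernel carries ``dynamic poles'' that depend on the $\vec z$-variables through the ratio $S_{n-1,m}(z,1)/S_{n-1,m}(z,w)$. The key lemma to establish is that on the relevant $\vec z$-contour these dynamic poles contribute only a multiplicative factor separable from the Airy-type piece, so that $I_w(\vec z) = F_2(s_2)\,\Phi(\vec z) + o(1)$ uniformly in $\vec z$ for an explicit $\Phi$. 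Substituting this decoupling into \eqref{I2 Iz antisym} reduces the problem to evaluating $\oint L(\vec z)\Phi(\vec z)\,\dd^{n-1}z$; the prefactor $\e^{-\rho t/2}(2(1-\rho)/(2-\rho))^m/(\rho')^{n-1}$ in \eqref{def:L(z)}, combined with how the shifts of $n$ and $m$ in \eqref{nmscale} enter $\Lambda_{n-1,0}$, should turn this into a saddle-point integral in a single $t^{1/2}$-scaled variable, producing the Gaussian tail $1-F_G(s_{\rm g})$.

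The hard part is precisely this uniform control of the $\vec w$-asymptotics in the presence of the $\vec z$-dependent dynamic poles, singled out in the introduction as the novel feature absent from the single species case. Technically, one must deform contours so that steepest descent survives for every $\vec z$ on its contour, choose $c$ in Proposition~\ref{prop:fredholm} to prevent the blow-up that would otherwise occur as $a_j\to 1$ and $v_k\to 1$, and establish uniform tail bounds on the Fredholm series sufficient to justify interchanging the $t\to\infty$ limit with the $\vec z$-integration. Proving that what survives after the decoupling genuinely organises into the Gaussian $1-F_G(s_{\rm g})$, rather than into a more complicated convolution, is the heart of the argument; this is where the precise form of the AHR hydrodynamic currents \eqref{currenth}--\eqref{currentc} and the compensations built into $L(\vec z)$ must do their work.
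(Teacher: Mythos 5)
Your first paragraph is exactly the paper's own proof of the corollary: given Theorem~\ref{thm_I1I2 limit}, the corollary is the one-line subtraction $F_2(s_2) - \bigl[1 - F_G(s_{\rm g})\bigr]F_2(s_2) = F_G(s_{\rm g})F_2(s_2)$. On the statement itself you and the paper agree.

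Since you go on to sketch Theorem~\ref{thm_I1I2 limit}, a comment on that sketch. Your plan for $\mathcal{I}_1$ matches the paper (Proposition~\ref{prop:fredholm} with the extra parameter $c$, then steepest descent). Your plan for $\mathcal{I}_2$, however, takes a different route that has a genuine gap. You propose to prove an \emph{asymptotic} decoupling $I_w(\vec z) = F_2(s_2)\Phi(\vec z) + o(1)$ uniformly on the $\vec z$-contour and then push this through $\oint L(\vec z)\,\dd^{n-1}z$. But the $\vec z$-integral in \eqref{I2 Iz antisym} is a residue extraction around $\vec z = \vec 1$: $L(\vec z)$ of \eqref{def:L(z)} has a pole of order $j+1$ at $z_j=1$ with $j$ running up to $n-1 \sim O(t)$, and on any fixed circle around $1$ the factor $\e^{\Lambda_{n-1,0}t}$ with $n\sim O(t)$ makes $|L(\vec z)|$ of size $\e^{O(t^2)}$ on part of the contour. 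A pointwise-uniform $o(1)$ error therefore does \emph{not} give $\oint L(\vec z)\,o(1)\,\dd^{n-1}z \to 0$; one would need uniform control of $\vec z$-derivatives of the error up to order $\sim t$, which a steepest-descent estimate in the $w$-variables alone does not supply.

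The paper avoids this by making the decoupling an \emph{exact algebraic identity at finite $t$}. Proposition~\ref{rank1 perturbation det2}, built from Lemmas~\ref{rank1 perturbation term1}--\ref{kernel holomorphic}, shows that under the $\vec z$-residue extraction the determinant of $K^c(\cdot,\cdot,\vec z)$ can be replaced by that of $K^c_{\mathrm W}$ minus an explicit rank-one perturbation, with no asymptotics involved. The $\vec z$-integration is then performed exactly, yielding $\mathcal{I}_2 = I_z\cdot\det(1-K^c_{\mathrm W}) - \mathcal{I}_2^{(2)}$, and only afterwards are limits taken on each factor separately: $I_z\to 1-F_G(s_{\rm g})$ via the auxiliary Fredholm determinant $J^{(n)}$ of \eqref{J}, $\det(1-K^c_{\mathrm W})\to F_2(s_2)$ by the same analysis as $\mathcal{I}_1$, and $\mathcal{I}_2^{(2)}\to 0$ by explicit estimates. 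Your target decomposition and limiting factors are the right ones; the missing idea is that the decoupling has to be exact at finite $t$, not an asymptotic statement under the singular $\vec z$-integral.
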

The proofs of \eqref{claim_I1 limit} and \eqref{claim_I2 limit} will constitute the contents of 
sections \ref{sec: I1 limit} and \ref{sec:Asymptotics_second} respectively.

\section{Limit of $\mathcal{I}_1$: proof of \eqref{claim_I1 limit} in Theorem~\ref{thm_I1I2 limit} }\label{sec: I1 limit}

Let us start from the simplified formula of $\mathcal{I}_1$ \eqref{eq:prob2term1}.
The asymptotic behaviour of the right hand side of \eqref{eq:prob2term1} can be obtained in a standard way with an extra parameter $c$ where we first transform it to a Fredholm determinant according to the procedure outlined in Section~\ref{se:tofredholm}, and then perform a steepest descent analysis of the Fredholm kernel. To fit \eqref{Inu} into \eqref{eq:prob2term1}, we first change variables $w_k \rightarrow 1/w_k$ so that the contours lie around the poles at $w_k=\infty$, and then deform these contours to surround the poles other than the ones at $w_k=\infty$, resulting in
\begin{equation*}
\mathcal{I}_1=  \frac{{(-1)}^m  }{m!}\oint_{ 0} \dd^m w\,  \frac{  \e^{\Lambda_{0,m} t} \Delta_m(w) \Delta_m(-w)  }
{ \prod_{k=1}^m (w_k-1)^{m} S_{n,m}(1,w)}  =
\frac{1}{m!}
\oint_{0,1,-1} \dd^m w\,
\frac{  \e^{\tilde{\Lambda}_{0,m} t} \prod_{1\leq k \neq \ell\leq m}(w_\ell-w_k)}
       {\prod_{k=1}^m w_k^m (w_k-1)^m \prod_{k=1}^m (\frac12(1+1/w_k))^n},
\end{equation*}
where $\tilde{\Lambda}_{n,m}=\frac12\sum_{j=1}^n(z_j-1)+\frac12\sum_{k=1}^m(w_k-1)$. This expression can be written as the standard $m$-fold integral \eqref{Inu} in Proposition \ref{prop:fredholm},
\begin{equation*}
	\mathcal{I}_1 = \frac{1}{m!}\oint_{0,1,-1} \frac{\dd^m w}{ \prod_{k=1}^m w_k} \frac{\prod_{1\leq k\neq \ell\leq m}(1-w_k/w_\ell)}{\prod_{k,\ell=1}^m (1-a_\ell/w_k)} \prod_{k=1}^m \frac{g(w_k,0 )}{g(a_k,0)},
\end{equation*}
with
\[
\begin{array}{rll}
	\nu &=m, \quad &\mu=n, \quad \gamma = t/2,\quad s=0,\\
	u_i &=0, & 1\le i \le m, \\
	a_i &=1, & 1\le i\le m,\\
	v_k &=1, & 1\le k\le n,
\end{array}
\]
where $g(w_k,x) = g^c(w_k,x;0)$ and $g^c(w_k,x;y)$ is given in \eqref{gmndef},
\begin{equation*}
	\label{g_first_term}
	g^c(w,x;y)= \left(\frac{w}{w+1}\right)^n w^{-m -x} {(w + c)}^y \e^{wt/2}.
\end{equation*}

We choose $c>0$ and let $S(-c,|1+c|)$ be the open disc of radius $|1+c|$ centred at $-c$. Then $S(-c,|1+c|)$ includes poles at $0,-1$, since $|-c|<|1+c|$ and $|-1+c|<|1|+|c|=|1+c|$. Thus applying Proposition \ref{prop:fredholm}, the integral $\mathcal{I}_1$ defined in \eqref{eq:prob2term1} can thus be written as a Fredholm determinant:

\begin{equation}
\mathcal{I}_1 = \det(1-K^c)_{\ell^2(\mathbb{N})},\qquad K^c(x,y)=\sum_{k=0}^{m-1} \phi^c_k(x) \psi^c_k(y),
\label{eq:fredholm1}
\end{equation}
with
\begin{subequations}
\label{phipsiform1}
\begin{align}
\phi^c_k(x) &= \oint_1 \frac{\dd z}{2\pi\ii} \frac{1}{ {(z+c)}^{x}  (z-1)} \left(\frac{1+z}{z}\right)^n \left(\frac{z}{1-z}\right)^k \e^{-zt/2} ,
\label{phiform1}
\\
\psi^c_k(x) &= \oint_{0,-1}\frac{\dd w}{2\pi\ii} \frac{  {(w+c)}^{x-1}  }{w} \left(\frac{w}{1+w}\right)^n \left(\frac{1-w}{w}\right)^{k} \e^{ wt/2},
\label{psiform1}
\end{align}
\end{subequations}
in which the pole at $w=1$ is separated from the poles at $w=-1,0$.
As will be mentioned, we choose the contours of the integrals with respect to $z$ and $w$ as $\Gamma$ and $\Sigma$ that will be introduced in Lemma~\ref{descent contour 1}.
In this case we require $c$ to satisfy $c > (3+ \rho )/2$ such that the condition $| (w+c)/(z+c) | < 1$, which is imposed to describe $I_\nu$ as a Fredholm determinant, holds for any $(z, w) \in \Gamma \times \Sigma$.

A straightforward approach to study the long time behaviour of a Fredholm determinant, is to evaluate the asymptotics of $\phi_k^c(x)$ and $\psi_k^c(x)$ separately, then translate the sum over an integer $k$ into an integral over a real number $\kappa$, via $k=m-\lambda_1t^{1/3}\kappa$, where $\lambda_1$ is some positive constants. The uniform convergence of the Fredholm determinant requires that $\phi_k^c(x)\psi_k^c(y)$ decays exponentially with respect to $x,y$ and $\kappa$. The term involving $\kappa$ is $\left(\frac{w(1-w_c)}{w_c(1-w)}\right)^\kappa$, where $w_c=\rho'/2$ is added manually due to saddle point analysis\footnote{This will be shown in Lemma~\ref{descent contour 1}}. However, due to the singularity of $w=-1$ in $\psi_k^c(x)$, the term $\frac{w(1-w_c)}{w_c(1-w)}$ can not be bounded below 1. Because if we parameterise the contour around $w=-1$ by $w=-1+r\e^{\ii \theta }$ for some constant $0<r<1$, then the upper bound of  $\left|\frac{w(1-w_c)}{w_c(1-w)}\right|$ is given by $\frac{(1+r)(1+\rho)}{(2-r)(1-\rho)}$, which is always greater than 1 for any  $\rho \in ((1-2r)/3,1)$. Hence we can not find a bound that is exponentially decaying with respect to $\kappa$. To avoid the divergence due to $w=-1$ in $\kappa$, we perform the sum in $k$ before asymptotic analysis. Note that, for the case of $I_z$ defined in \eqref{def:Iz}, we can use a straightforward way to show that it converges to the Gaussian distribution. In other words, as will be seen in sub-Subsection~\ref{se:Iz}, we will evaluate the asymptotic behaviours of $\phi_k(x)$ and $\psi_k(x)$ given by \eqref{phiandpsi2} separately, and then perform the sum in $k$.

By performing the sum $\sum_{k=0}^{m-1}$ in (\ref{eq:fredholm1}), we obtain a double contour integral expression of $K^c(x,y)$ as
\begin{multline}\label{KernelI1withc}
	K^c(x, y)  = \oint_1 \frac{\mathrm{d} z}{2 \pi \ii} \frac{1}{{\left( z + c \right)}^x} {\left( \frac{1 + z}{z} \right)}^n {\left( \frac{z}{1 - z} \right)}^m  \mathrm{e}^{ - z t/2 } \times \\\oint_{0 , -1} \frac{\mathrm{d} w}{2 \pi \ii} {\left( w + c \right)}^{y-1}  {\left( \frac{w}{1 + w} \right)}^n {\left( \frac{1 - w}{w} \right)}^m \mathrm{e}^{ w t /2 } \frac{1}{w - z}.
\end{multline}

In the following, we will show a rigorous asymptotic analysis on the first term $\mathcal{I}_1$ \eqref{eq:fredholm1}, by computing the scaling limit of the kernel $K^c(x, y)$ \eqref{KernelI1withc}.  Recall that the scaling of $n$ and $m$ we consider has been set in \eqref{nmscale}. Accordingly, we rescale $(x,y)$ as
\begin{equation}\label{xyscale}
	x=\lambda_ct^{1/3}\xi, \qquad y=\lambda_ct^{1/3}\zeta,
\end{equation}
where 
\begin{equation}\label{DefLambdac}
	\lambda_c = (1 - \rho + 2c ) {\left( \frac{3}{4 (1 + \rho) (3 - \rho)} \right)}^{1/3}.
\end{equation}

The scaling power $t^{1/3}$ in \eqref{xyscale} is chosen according to the scaling \eqref{nmscale} arising from the prediction of nonlinear fluctuating hydrodynamics (NLFHD), and the constant $\lambda_c$ in \eqref{DefLambdac} is chosen for convenience to ease notation in Proposition \ref{uniform convergence kernelc}. 
The basic strategies can be taken from the previous works \cite{borodin2008transition,BFPS2007} and are given as follows:

\begin{itemize}
	\item First in Section \ref{Sec:SteepestDesecnt}, we define a steepest descent contour of $K^c$ with the scaling \eqref{nmscale} and \eqref{xyscale}.
	\item Using the steepest descent, we prove the uniform convergence of $K^c(x, y)$ to the Airy kernel \eqref{def:airy_kernel} for bounded $\xi,\zeta$ in Section \ref{Sec:KernelUniform}.
	\item Then we evaluate the bounds of $K^c(x, y)$ for large $\xi, \zeta$ in Section \ref{Sec:KernelEstimate} .
	\item Finally by the convergence and bounds of the kernel $K^c(x, y)$, we are able to obtain the long time behaviour of $\mathcal{I}_1$ with \eqref{nmscale} given in \eqref{I1 limit}.
\end{itemize}
However, we can not simply follow the arguments in the previous works. A novelty in our treatment of Fredholm determinants is that  
we introduce an additional parameter $c$ to avoid divergence of the Fredholm determinant stemming from the singularity at $-1$ and we evaluate the whole kernel $K^c$ instead of $\phi^c_k$ and $\psi^c_k$ to show that it is bounded by an exponential function.

Before continuing to a rigorous analysis, let us first introduce the following notations for convenience. We rescale kernel by 
\begin{equation}\label{DefRescKc}
	\bar{K}_t^{c} (\xi , \zeta) :=  {(w_c + c)}^{\lambda_c t^{1/3} (\xi - \zeta)} \lambda_c t^{1/3} K^c(\lambda_c t^{1/3} \xi , \lambda_c t^{1/3} \zeta) ,
\end{equation}
where $K^c(x,y)$ is given in \eqref{KernelI1withc}, and $w_c:=(1-\rho)/2$ is the saddle point (which we will see in Lemma \ref{descent contour 1}). Then with the scaling  \eqref{nmscale}, the rescaled kernel \eqref{DefRescKc} can be rewritten by collecting the terms according to the order of $t$:
\begin{equation}\label{RescKernelc}
	\bar{K}_t^{c}(\xi , \zeta)  = \lambda_c t^{1/3} \oint_{1} \frac{\mathrm{d} z}{2 \pi \ii} \mathrm{e}^{ f(z , t , \xi) - f(w_c , t , \xi) + g_{\phi}(z) } \oint_{0 , -1} \frac{\mathrm{d} w}{2 \pi \ii} \mathrm{e}^{ - f(w , t , \zeta) + f(w_c , t , \zeta) + g_{\psi}(w) }   \frac{1}{w-z},
\end{equation}
where $f(z , t , \xi)=g_1(z)t+g_2(z)t^{1/2}+g_3(z,\xi)t^{1/3}$ with the functions $g_1(z)$, $g_2(z)$, $g_3(z,\xi)$, $g_{\phi}(z)$ and $g_{\psi}(z)$ defined by
\begin{subequations}\label{gDefine}
	\begin{align}
		g_1(z)=&
		\frac{\rho(3-\rho)^2}{16}
		\ln\left(\frac{1+z}{z}\right)+
		\frac{(1+\rho)^2(2-\rho)}{16}
		\ln\left(\frac{z}{1-z}\right)-\frac{z}{2},
		\label{g1 def1}
		\\
		g_2(z)=&
		-\frac{(3-\rho)c_{\rm g} s_{\rm g}}{12(1-\rho)}
		\ln\left(\frac{1+z}{z}\right)
		-\frac{(1+\rho)c_{\rm g} s_{\rm g}}{12(1-\rho)}
		\ln\left(\frac{z}{1-z}\right),
		\label{g2 def1}
		\\
		g_3(z,\xi)=&
		\frac{-\rho c_2 s_2}{6(1-\rho)}
		\ln\left(\frac{1+z}{z}\right)
		-
		\frac{(2-\rho)c_2 s_2}{6(1-\rho)}
		\ln\left(\frac{z}{1-z}\right)
		-\xi\lambda_c\ln(z+c),
		\label{g3 def1} \\
		g_{\psi}(z)=&-\ln(z+c),\,\,\,\,
		g_{\phi}(z)=0.
	\end{align}
	\label{g def1}
\end{subequations}

In vicinity of a saddle point $w_c$, the functions $g_1(z)$, $g_2(z)$, $g_3(z, \xi)$ and $g(z)$ can be represented as the following Taylor expansions which are useful in the later analysis: 
\begin{subequations}\label{gExpands}
	\begin{align}
		g_1(z) &= g_1(w_c) + 2 a_1 {(z - w_c)}^3 + {(z - w_c)}^4 h_1(z), \label{g1Expand} \\
		g_2(z) & = g_2(w_c) + b_2 {(z - w_c)}^2 + {(z - w_c)}^3 h_2(z), \\
		g_3(z , \xi) & = g_3(w_c , \xi) + b_{3,\xi} {(z - w_c)} + {(z - w_c)}^2 h_3(z, \xi), \\
		g_{\psi}(z) & = g_{\psi}(w_c) + b_{\psi} (z - w_c) + {(z - w_c)}^2 h_{\psi}(z),\\
		g_{\phi}(z) & = g_{\phi}(w_c) + b_{\phi} (z - w_c) + {(z - w_c)}^2 h_{\phi}(z),
	\end{align}
\end{subequations}
where $h_1,h_2,h_3,h_{\psi}$ and $h_{\phi}$ are some functions, and constants $a_1$, $b_2$, $b_{3,\xi}$, $b_{\psi}$ and $b_{\phi}$ are given by

\begin{subequations}\label{gCoefs}
	\begin{align}
		a_1 &= \frac{1}{(3 - \rho)(1 + \rho)}, \label{Defa1} \\
		b_2 &= - \frac{4 c_{\rm g} s_{\rm g}}{3 {(1 - \rho)}^2 (1 + \rho) (3 - \rho) }, \\
		b_{3,\xi} &= - \frac{4 c_2 s_2}{(1 - \rho)(1 + \rho)(3 - \rho)} - \frac{\xi\lambda_c}{w_c+c}, \\
		b_{\psi} &= - \frac{1 }{w_c + c},
		\,\,\,\,
		b_{\phi}=0.
	\end{align}
\end{subequations}
Note that $a_1$ is positive.

\subsection{Steepest descent}\label{Sec:SteepestDesecnt}

To obtain the scaling limit of $\bar{K}^c_t$ \eqref{RescKernelc}, we first need to find the critical point as well as the corresponding steepest descent contour. Since $t\rightarrow\infty$, we only consider the term in the integrand of \eqref{RescKernelc} that is of the largest order of $t$:
\[\bar{K}_t^c(\xi,\zeta)=\lambda_c t^{1/3}
\oint_1 \frac{\mathrm{d} z}{2 \pi \ii}
\oint_{0,-1} \frac{\mathrm{d} w}{2 \pi \ii}
\exp(g_1(z)t - g_1(w)t+\order{t^{1/2}}),\]
where $g_1(z)$ is given in \eqref{g1 def1}. Regarding to $g_1(w)$, we define the following descent contour, along which $\R(g_1(w))$ has a global maximum.

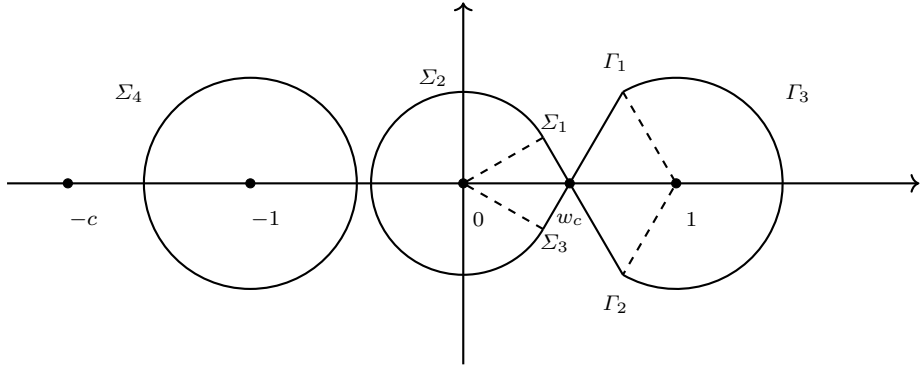
\begin{figure}[h!]
\begin{center}
\begin{tikzpicture}[scale=4]

\draw[->,thick] (-1.5,0) -- (1.5,0);
\draw[->,thick] (0,-0.6) -- (0,0.6);

\draw[thick] (-0.303109,0) arc (180:30:0.303109);
\draw[thick] (-0.303109,0) arc (180:330:0.303109);
\draw[thick] (-0.7,0) circle (0.35);
\draw[fill=black] (-0.7,0) circle (0.015);
\node at (-0.65,-0.12) {$-1$};
\node at (-1.25,-0.12) {$-c$};

\draw[fill=black] (0,0) circle (0.015);
\draw[fill=black] (0.7,0) circle (0.015);
\draw[fill=black] (0.35,0) circle (0.015);
\draw[fill=black] (-1.3,0) circle (0.015);
\node at (0.05,-0.12) {$0$};
\node at (0.75,-0.12) {$1$};
\node at (0.35,-0.12) {$w_c$};
\node at (0.5,0.4) {$\Gamma_1$};
\node at (0.5,-0.4) {$\Gamma_2$};
\node at (1.1,0.3) {$\Gamma_3$};
\node at (0.3,0.2) {$\Sigma_1$};
\node at (-0.1,0.35) {$\Sigma_2$};
\node at (0.3,-0.2) {$\Sigma_3$};
\node at (-1.1,0.3) {$\Sigma_4$};

\draw[thick] (0.525,-0.303109) arc (-120:120:0.35);			

\coordinate (A) at (0.525,0.303109);
\coordinate (B) at (0.7,0);
\coordinate (C) at (0.525,-0.30310);
\coordinate (D) at (0.35,0);
\coordinate (E) at (0.4,0.0866025);
\coordinate (F) at (0.4,-0.0866025);
\draw[thick,dashed] (A) -- (B) -- (C);

\coordinate (G) at (0.2625,0.151554);
\coordinate (H) at (0.2625,-0.151554);
\coordinate (O) at (0,0);

\draw[thick,dashed] (G) -- (O) -- (H);

\draw[thick] (A) -- (H) ;
\draw[thick] (G) -- (C) ;

\end{tikzpicture}
\end{center}
\caption{Steepest descent contour of $\bar{K}^c_t$}
\label{fig:contour tw 1}
\end{figure}

\medskip

\begin{lemman}
	\label{descent contour 1}
	{\rm \textbf{(Steepest  descent contour of $\boldsymbol{\bar{K}_t^c}$)}}
	For $0<\rho<1$ set 
	\[g_1(w)=\frac{\rho(3-\rho)^2}{16}
	\ln\left(\frac{1+w}{w}\right)+
	\frac{(1+\rho)^2(2-\rho)}{16}
	\ln\left(\frac{w}{1-w}\right)-\frac{w}{2}.\]
	Then $g_1'(w)=0$ has a double root at $w_c=(1-\rho)/2=\rho'/2$ and a single root $w_2=\rho-1=-\rho'$. The path $\Gamma=\bigcup_{i=1}^3 \Gamma_i$ (see Fig. \ref{fig:contour tw 1}) given by \eqref{DefGamma} below is a steepest descent path of $g_1(w)$ passing through $w_c$. Namely, $w=w_c$ is the strict global maximum point of ${\rm Re}(g_1)$ along $\Gamma$, i.e., ${\rm Re}(g_1(w))<{\rm Re}(g_1(w_c))$ except when $w=w_c$. Moreover, ${\rm Re}(g_1)$ is monotone along $\Gamma$ except two points where it reaches its maximum and minimum. Meanwhile, the path $\Sigma=\bigcup_{i=1}^4 \Sigma_i$ (see Fig. \ref{fig:contour tw 1}) given by \eqref{DefSigma} below is a steepest descent path of $-g_1(w)$ passing through $w_c$.

	\begin{subequations}\label{DefGamma}
		\begin{align}
			\Gamma_1 & = \left\{ w \in \mathbb{C} \middle| w = \frac{1 - \rho}{2} - s \mathrm{e}^{\ii \pi / 3} , s \in \left[ - \frac{1 + \rho}{2} , 0 \right] \right\} \label{DefGamma1}, \\
			\Gamma_2 & = \left\{ w \in \mathbb{C} \middle| w = \frac{1 - \rho}{2} + s \mathrm{e}^{-\ii \pi / 3} , s \in \left[ 0, \frac{1 + \rho}{2} \right] \right\} \label{DefGamma2}, \\
			\Gamma_3 & = \left\{ w \in \mathbb{C} \middle| w = 1 + \frac{1 + \rho}{2} \mathrm{e}^{\ii \theta} , \theta \in \left[ - \frac{2 \pi }{3} , \frac{2 \pi }{3} \right] \right\} \label{DefGamma3}, 
		\end{align}
	\end{subequations}
	
	\begin{subequations}\label{DefSigma}
		\begin{align}
			\Sigma_1 & = \left\{ w \in \mathbb{C} \middle| w = \frac{1-\rho}{2} + s \mathrm{e}^{2 \pi \ii /3} , s \in \left[ 0, \frac{(1 - \rho)}{4} \right] \right\} \label{DefSigma1}, \\
			\Sigma_2 & = \left\{ w \in \mathbb{C} \middle| w = \frac{\sqrt{3}}{4}(1 - \rho) \mathrm{e}^{\ii \theta} , \theta \in \left[ \frac{\pi}{6} , \frac{11 \pi}{6} \right] \right\} \label{DefSigma2}, \\
			\Sigma_3 & = \left\{ w \in \mathbb{C} \middle| w = \frac{1-\rho}{2} - s \mathrm{e}^{- 2 \pi \ii /3} , s \in \left[ -\frac{(1 - \rho)}{4} , 0 \right]  \right\} \label{DefSigma3}, \\
			\Sigma_4 & = \left\{ w \in \mathbb{C} \middle| w = -1 + \frac{1 + \rho}{2} \mathrm{e}^{\ii \theta} , \theta \in \left[ 0 , 2 \pi \right] \right\} \label{DefSigma4}.
		\end{align}
	\end{subequations}
	
\end{lemman}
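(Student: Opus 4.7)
My plan is to verify the three assertions of the lemma in the order they are made: (a) locate the critical points of $g_1$; (b) check that $\Gamma,\Sigma$ emanate from $w_c$ along the correct steepest-descent/ascent rays; (c) prove the global inequality $\R\,g_1(w) < \R\,g_1(w_c)$ on $\Gamma\setminus\{w_c\}$ together with monotonicity, and the analogous reverse inequality for $g_1$ along $\Sigma$. For (a) I would combine the three terms of $g_1'(w)$ over the common denominator $16\,w(1-w^2)$, obtaining a cubic numerator $N(w)$. A direct substitution will give $N(w_c)=N'(w_c)=0$, showing that $w_c=(1-\rho)/2$ is a double root, and polynomial division then produces the remaining simple root $w_2=\rho-1$. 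A computation of $g_1'''(w_c)/6$ yields the coefficient $a_1=1/[(3-\rho)(1+\rho)]>0$ in the Taylor expansion \eqref{g1Expand}. From the local model $g_1(w)-g_1(w_c)\simeq 2a_1(w-w_c)^3$ the three steepest-descent rays of $g_1$ from $w_c$ have arguments $\pi/3,\pi,5\pi/3$ and the three steepest-ascent rays have arguments $0,2\pi/3,4\pi/3$; inspecting the tangent directions in \eqref{DefGamma1}--\eqref{DefSigma4} confirms that $\Gamma_1,\Gamma_2$ leave $w_c$ along the descent rays $e^{\pm\ii\pi/3}$ and $\Sigma_1,\Sigma_3$ leave $w_c$ along the ascent rays $e^{\pm 2\pi\ii/3}$, which settles (b).

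For (c), on each straight segment $\Gamma_1,\Gamma_2,\Sigma_1,\Sigma_3$ I would substitute the linear parameterization into $g_1$ and study $\frac{d}{ds}\R\,g_1$ as a real function of the arc parameter $s$; it should factor as $s^2$ times a smooth expression of definite sign, and since by (a) the only critical points of $g_1$ in $\mathbb{C}\setminus\{0,\pm 1\}$ are $w_c$ and $w_2=\rho-1$ (and $w_2$ does not lie on any of these segments), no further interior extrema can appear. On the circular arcs $\Gamma_3$ (around $1$) and $\Sigma_2,\Sigma_4$ (around $0$ and $-1$) I would parametrize by the angle $\theta$; the critical-point count from (a) implies at most one interior extremum of $\R\,g_1$ per arc, and comparing boundary values with those of the adjoining straight segments identifies that extremum as a minimum on $\Gamma_3$ and a maximum on $\Sigma_2$. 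The loop $\Sigma_4$, which does not touch $w_c$, is handled separately: one verifies directly that $\R\,g_1(w) > \R\,g_1(w_c)$ along it, so that it contributes only a subdominant correction to the kernel. Stitching these pieces together yields both the strict global bound and the monotonicity statement on $\Gamma$.

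The main technical obstacle will be the sign analysis of $\frac{d}{d\theta}\R\,g_1$ on the circular arcs, where the derivative is a combination of real and imaginary parts of logarithmic derivatives of $g_1$ that does not factor algebraically. My intended strategy is to clear the logarithmic derivatives so as to reduce the sign of $\frac{d}{d\theta}\R\,g_1$ to that of a real trigonometric polynomial in $\theta$ with coefficients polynomial in $\rho$, and then to use the global count of critical points from (a) to exclude spurious interior extrema. This is standard bookkeeping in the style of the steepest-descent analyses for TASEP-type kernels in \cite{BFPS2007,borodin2008transition}, which I expect to carry over here with at most a case split on the range $\rho\in(0,1)$.
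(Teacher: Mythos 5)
Your plan is, in broad strokes, the same one the paper follows in Appendix~D.1: locate the critical points of $g_1$ explicitly, read off the local steepest-descent directions from the cubic Taylor coefficient $2a_1(w-w_c)^3$, and then verify monotonicity of $\mathrm{Re}\,g_1$ along each segment and arc by computing $\tfrac{d}{ds}\mathrm{Re}\,g_1$ or $\tfrac{d}{d\theta}\mathrm{Re}\,g_1$ and checking the sign of the resulting rational (or trigonometric-rational) expression in the arc parameter and $\rho$.  Your identification of the local directions in (b) is correct, and the factoring of the derivative along the straight segments (a prefactor of $s^2$ times a $\rho$- and $s$-dependent bracket of definite sign) is exactly the paper's computation.

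However, the step in which you invoke the critical-point count of $g_1$ to ``exclude spurious interior extrema'' of $\mathrm{Re}\,g_1$ along the arcs is a genuine gap.  Zeros of $\tfrac{d}{d\theta}\mathrm{Re}\,g_1(w(\theta)) = \mathrm{Re}\bigl[g_1'(w(\theta))\,w'(\theta)\bigr]$ occur whenever $g_1'(w)\,w'(\theta)$ is purely imaginary, not only when $g_1'(w)=0$; since $\Gamma_3$, $\Sigma_2$, $\Sigma_4$ are convenient circles rather than genuine level curves of $\mathrm{Im}\,g_1$, the finite set $\{w_c,w_2\}$ places no a priori bound on the number of extrema of $\mathrm{Re}\,g_1$ along them.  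The paper does not appeal to any such count: it computes $\tfrac{d}{d\theta}\mathrm{Re}\,g_1$ on $\Gamma_3$ explicitly, factors out a $\sin\theta\cos^2(\theta/2)$ piece, and checks by elementary estimates in $\rho\in(0,1)$ that the remaining factor is nonnegative, so the derivative vanishes only at $\theta=0,\pi$.  You should drop the critical-point heuristic and rely solely on the explicit sign analysis of the trigonometric expression (which you anticipate anyway in your final paragraph); and you must carry out the analogous computation on $\Sigma_2$ and $\Sigma_4$, which the paper dispatches with ``can be proved similarly'' but which your argument currently handles only through the invalid count.
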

The proof is given in the Appendix \ref{appxs:SteepestContour1}.

\subsection{$\bar{K}^c_t$ on a bounded set}\label{Sec:KernelUniform}

With the steepest descent contour given above, we arrive at the uniform convergence of $\bar{K}^c_t$ for bounded $\xi,\zeta$. We first show the contribution from the contour away from the saddle point $w_c=\rho'/2$ vanishes as $t\rightarrow\infty$, then the convergence is obtained by a Taylor expansion near $w_c=\rho'/2$ and a change a variable.

\medskip

\begin{prop}
\label{uniform convergence kernelc}
{\rm \textbf{(Uniform convergence of $\boldsymbol{\bar{K}_t^c}$ on a bounded set)}}
Let $m,n$ be scaled as \eqref{nmscale}. Then for any fixed $L>0$, the rescaled kernel $\bar{K}_t^{c} (\xi , \zeta)$ defined in \eqref{RescKernelc} converges uniformly on $\xi,\zeta \in [-L,L]$ to
\begin{equation}
\label{phi1 limit}
\lim_{t \rightarrow \infty}
\bar{K}_t^{c} (\xi , \zeta)
= A(\xi + s_2 , \zeta + s_2) , 
\end{equation}
where $A(x,y)$ is the Airy kernel defined in \eqref{def:airy_kernel} and $s_2$ is given in \eqref{s+-}. 
\end{prop}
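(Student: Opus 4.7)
The plan is to apply standard saddle-point analysis to the double contour integral \eqref{RescKernelc}, using the descent contours $\Gamma$ and $\Sigma$ provided by Lemma~\ref{descent contour 1}. Since $w_c = (1-\rho)/2$ is a double root of $g_1'$, the Taylor expansion \eqref{g1Expand} gives $g_1(z) - g_1(w_c) = 2 a_1 (z - w_c)^3 + O((z-w_c)^4)$ with $a_1 > 0$, so on the natural scale $z - w_c = O(t^{-1/3})$ the cubic contribution from $g_1 t$ is $O(1)$, the quadratic contribution from $g_2 t^{1/2}$ is $O(t^{-1/6}) \to 0$, and the linear contribution from $g_3 t^{1/3}$ is $O(1)$ --- precisely the balance needed to produce an Airy integral.

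I would split each contour into a local piece inside $|w - w_c| \le \delta_t := t^{-1/3} \log t$ and a global tail. Lemma~\ref{descent contour 1} together with the cubic expansion \eqref{g1Expand} yields $\R(g_1(z) - g_1(w_c)) \le -c \min\{|z - w_c|^3, 1\}$ on $\Gamma$ and the symmetric bound on $\Sigma$, so on the tail the $g_1 t$ term dominates the subleading $O(t^{1/2})$ contribution from $g_2 t^{1/2}$ and the result decays faster than any polynomial in $t$, uniformly in $\xi, \zeta \in [-L, L]$; the factor $1/(w-z)$ stays bounded since $\Gamma$ and $\Sigma$ are disjoint, and the prefactor $(w_c + c)^{\lambda_c t^{1/3}(\xi - \zeta)}$ only contributes $\e^{O(t^{1/3})}$. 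On the local piece I substitute $z = w_c + Z/(6 a_1 t)^{1/3}$ and $w = w_c + W/(6 a_1 t)^{1/3}$; using \eqref{gExpands}, the exponent reduces to $Z^3/3 - (\xi + s_2) Z + o(1)$ in the $z$-integrand and $-W^3/3 + (\zeta + s_2) W + o(1)$ in the $w$-integrand, where the identification of the linear coefficient as $-(\xi + s_2)$ follows from the elementary identities
\begin{equation*}
\frac{\lambda_c}{(w_c + c)(6 a_1)^{1/3}} = 1, \qquad \frac{4 c_2}{(1-\rho)(1+\rho)(3-\rho)(6 a_1)^{1/3}} = 1,
\end{equation*}
which are direct consequences of the definitions \eqref{DefLambdac}, \eqref{c2cg} and \eqref{Defa1}. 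Combining the Jacobian $t^{-2/3}$, the prefactor $\lambda_c t^{1/3}$, and the rescaling $1/(w-z) \to (6 a_1 t)^{1/3}/(W - Z)$ produces a $t$-independent limit in which the images of $\Gamma^{\rm loc}$ and $\Sigma^{\rm loc}$ become contours asymptotic to the rays of arguments $\pm \pi/3$ and $\pm 2\pi/3$ respectively; after deformation to the standard Airy contours one obtains
\begin{equation*}
\frac{1}{(2 \pi \ii)^2} \int \int \frac{\e^{Z^3/3 - (\xi + s_2) Z}}{\e^{W^3/3 - (\zeta + s_2) W}} \, \frac{\dd Z \, \dd W}{W - Z} = A(\xi + s_2, \zeta + s_2),
\end{equation*}
which is the desired identity \eqref{phi1 limit}.

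The main obstacle is controlling the loop $\Sigma_4$ around $w = -1$, a component of the $w$-contour lying far from the saddle $w_c$: naive pointwise estimates on the individual summands $\phi^c_k \psi^c_k$ in \eqref{phipsiform1} would blow up there (as discussed in the paragraph preceding \eqref{KernelI1withc}), which is exactly why one performs the sum over $k$ first and introduces the regularising parameter $c > (3+\rho)/2$ via Proposition~\ref{prop:fredholm}. Once the kernel is written as the double integral \eqref{KernelI1withc}, the factor $|\e^{(w-z) t/2}| \le \e^{(\rho - 1) t/2}$ on $\Sigma_4 \times \Gamma$ (using $\R(w) \le (\rho - 1)/2$ on $\Sigma_4$ and $\R(z) \ge w_c = (1 - \rho)/2$ on $\Gamma$) supplies uniform exponential suppression, absorbing the polynomial prefactors $(w + c)^{y - 1} (z + c)^{-x}$ for $\xi, \zeta \in [-L, L]$ and reducing the $\Sigma_4$ contribution to a negligible remainder in the limit.
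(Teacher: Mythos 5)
Your overall strategy---steepest descent on the contours $\Gamma$ and $\Sigma$ of Lemma~\ref{descent contour 1}, a local/global split near $w_c$, and the rescaling $z=w_c+Z/(6a_1 t)^{1/3}$, $w=w_c+W/(6a_1 t)^{1/3}$---is the same as the paper's. But there is a genuine gap at the heart of your argument. You claim that ``the factor $1/(w-z)$ stays bounded since $\Gamma$ and $\Sigma$ are disjoint,'' and this is false: both $\Gamma$ (via $\Gamma_1\cup\Gamma_2$) and $\Sigma$ (via $\Sigma_1\cup\Sigma_3$) pass through the saddle point $w_c$, so the contours intersect at $w_c$ and the double integral $\oint_\Gamma\oint_\Sigma \frac{\dd z\,\dd w}{w-z}\,(\cdots)$ is not well-defined as written. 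This is not a pedantic point: after rescaling, the images of $\Gamma$ and $\Sigma$ both pass through the origin in the $(Z,W)$ plane, so $1/(W-Z)$ blows up and the geometric-series identity $\frac{1}{W-Z}=-\int_0^\infty e^{\kappa(W-Z)}\,\dd\kappa$ that you need to recover the Airy kernel requires $\R(W)<\R(Z)$ on the whole product of contours. The paper fixes this before taking any limit by deforming $\Gamma$ to a contour $\Gamma'$ in which the segment near $w_c$ is replaced by a short vertical line through $w_c(1+\delta)$ with $\delta=t^{-1/3}$, giving $|w-z|\ge w_c\delta$ on $\Gamma'\times\Sigma$ (see \eqref{DefGammavert}, \eqref{Minimum|w-z|}); your proposal never performs this separation, and the phrase ``after deformation to the standard Airy contours'' does not repair the fact that the finite-$t$ integrals themselves are ill-defined.

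There is also a smaller imprecision in your handling of $\Sigma_4$. You argue that the factor $|\e^{(w-z)t/2}|\le \e^{(\rho-1)t/2}$ supplies the decay there and that only the ``polynomial prefactors'' $(w+c)^{y-1}(z+c)^{-x}$ remain to be absorbed. But the integrand on $\Sigma_4$ also carries the terms $\e^{\,n\ln[(1+w)/w]}$ and $\e^{\,m\ln[w/(1-w)]}$, which grow or decay as $\e^{O(t)}$ since $n,m\sim t$; the correct reason that $\Sigma_4$ is negligible is that, by Lemma~\ref{descent contour 1}, $\R(g_1(w))$ attains its global minimum over $\Sigma$ at $w_c$, so the full exponent $-\R[g_1(w)-g_1(w_c)]\,t$ is uniformly negative and of order $t$ on $\Sigma_4$. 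Your preceding paragraph already invokes exactly this via Lemma~\ref{descent contour 1}, so the $\Sigma_4$ discussion is both redundant and, on its own, not sufficient. Fix the contour-separation gap (and, optionally, delete or correct the $\Sigma_4$ paragraph) and the proof is essentially the one in Appendix~\ref{Unif_conv_Kc}.
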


\begin{proof}
We give a sketch of the proof idea and provide a rigorous proof in Appendix~\ref{Unif_conv_Kc}.
Owing to \eqref{gExpands} and \eqref{gCoefs}, the function $f(z,t,\xi)$ can be expanded with respect to $z$ in vicinity of a saddle point $w_c$ as

\begin{equation*}
f(z ,t , \xi ) = f(w_c,t, \xi) -
b_{3,\xi}(z - w_c)t^{1/3} + (z-w_c)^2\mathcal{O}(t^{1/2}) +
2 a_1 {(z - w_c)}^3 (t+\mathcal{O}(t^{1/2}))  + \ldots ,
\end{equation*}
where $a_1$ and $b_{3,\xi}$ is given in \eqref{gCoefs}. The same is true for $f(w ,t , \zeta )$ .

Let us define the scaled variables $v$ and $u$ along the contours $\Gamma_{1,2}$ and $\Sigma_{1,3}$, respectively, by
\begin{align*}
z - w_c &= \frac{v}{ \lambda t^{1/3}}, & w - w_c & = \frac{u}{\lambda t^{1/3}}.
\end{align*}
We choose $\lambda$ such that $6a_1=\lambda^3$, namely $\lambda=(6a_1)^{1/3}$. Recall the coeffcients given in \eqref{gCoefs}, we obtain 
\begin{equation}\label{DefNormalLambda}
	\lambda = {\left( \frac{6}{(1 + \rho)(3 - \rho)} \right)}^{1/3}.
\end{equation}
Therefore by simple calculation, we have $b_{3, \xi} \lambda^{-1} = - s_2 - \xi \lambda_c (w_c + c)^{-1} \lambda^{-1}$. We now choose $\lambda_c$ such that $\lambda_c (w_c + c)^{-1} \lambda^{-1}=1$, i.e., 
\begin{equation*}
	\lambda_c = (w_c + c) \lambda = (1 - \rho + 2c ) {\left( \frac{3}{4 (1 + \rho) (3 - \rho)} \right)}^{1/3} ,
\end{equation*}
which agrees with \eqref{DefLambdac}.
From the integral form of the Airy function \eqref{def:airy_function} and the identity $1/ a = \int_{0}^{\infty} \dd x \e^{-a x}$ for $a > 0$, we arrive at the final result,

\begin{equation*}
\begin{split}
& \lim_{t \rightarrow \infty} \lambda_c t^{1/3} \oint_{1} \frac{\mathrm{d} z}{2 \pi \ii} \mathrm{e}^{ f(z , t , \xi) - f(w_c , t , \xi) } \times \lambda t^{1/3} \oint_{0 , -1} \frac{\mathrm{d} w}{2 \pi \ii} \mathrm{e}^{ - f(w , t , \zeta) + f(w_c , t , \zeta) + g_{\psi}(w)} \frac{1}{\lambda t^{1/3} (w-z)} \\
= & \lim_{t \rightarrow \infty} \int \frac{\dd v}{2 \pi \ii} \e^{ \frac{1}{3} v^3 - (s_2 + \xi) v + \mathcal{O}(t^{-1/6}) } \int \frac{\dd u}{2 \pi \ii} \e^{ - \frac{1}{3} u^3 + (s_2 + \zeta) u + \mathcal{O}(t^{-1/6}) } \int_{0}^{\infty} \dd \kappa \e^{ - \kappa(v - u) } \\
= & \int_0^{\infty} \Ai(s_2+\xi+\kappa)\Ai(s_2+\zeta+\kappa) \dd \kappa.
\end{split} 
\end{equation*}
Above, we showed only pointwise convergence on a bounded set ${[-L, L]}^2$.
In Appendix~\ref{Unif_conv_Kc}, we prove uniform convergence rigorously. \qed

\end{proof}

\medskip

\subsection{Estimate of kernel}\label{Sec:KernelEstimate}
In the following, we will give an estimate of the rescaled function $\bar{K}^c_t(\xi, \zeta)$ for unbounded $\xi,\zeta$.

\medskip

\begin{prop}
	\label{exp_bound}
	{\rm \textbf{(Estimate of  $\boldsymbol{\bar{K}_t^{c}(\xi , \zeta)}$ for unbounded $\boldsymbol{\xi , \zeta}$ )} }
	Let $m,n$ be scaled as \eqref{nmscale}. Then for any large enough $L$ and $t$, the rescaled kernel $\bar{K}_t^{c}(\xi , \zeta)$ defined in \eqref{RescKernelc} is bounded by
	\begin{equation}
		\left| \bar{K}_t^{c}(\xi , \zeta) \right| \leq \e^{-(\zeta + \xi)},
	\end{equation}
	when $(\xi , \zeta) \in [0, \i )^2 \backslash [0 , L]^2$.	
\end{prop}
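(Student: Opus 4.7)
The plan is to deform the $z$-contour in \eqref{RescKernelc} to the steepest-descent path $\Gamma$ and the $w$-contour to $\Sigma$ furnished by Lemma~\ref{descent contour 1}. The $\xi$- and $\zeta$-dependence of the integrand is carried only by the factors
\[
\Bigl(\frac{w_c+c}{z+c}\Bigr)^{\xi\lambda_c t^{1/3}} \quad\text{and}\quad \Bigl(\frac{w+c}{w_c+c}\Bigr)^{\zeta\lambda_c t^{1/3}},
\]
arising from $g_3(z,\xi)=g_3^{0}(z)-\xi\lambda_c\ln(z+c)$ and the analogous decomposition of $g_\psi$. A direct check on each arc of $\Gamma$ and $\Sigma$, using $c>(3+\rho)/2$, shows that $|z+c|\ge|w_c+c|$ on $\Gamma$ and $|w+c|\le|w_c+c|$ on $\Sigma$, with equality only at $z=w=w_c$; both factors are therefore bounded by $1$ in modulus and supply extra decay away from the saddle.

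First I would fix a small $\delta>0$ and split $\Gamma=\Gamma^{\mathrm{loc}}\cup\Gamma^{\mathrm{glob}}$ according to whether $|z-w_c|\le\delta$, and similarly $\Sigma=\Sigma^{\mathrm{loc}}\cup\Sigma^{\mathrm{glob}}$. The double integral then splits into four pieces. For each piece involving a global portion, Lemma~\ref{descent contour 1} gives $\R(g_1(z)-g_1(w_c))\le-\eta$ on $\Gamma^{\mathrm{glob}}$ and $\R(g_1(w)-g_1(w_c))\ge\eta$ on $\Sigma^{\mathrm{glob}}$ for some $\eta=\eta(\delta)>0$. Since $g_2,g_3^0,g_\psi$ are uniformly bounded on the (compact) contours and the $\xi,\zeta$-factors above are $\le1$, the global contribution is bounded by $\e^{-\eta t/2}$ for $t$ large, which in turn is $\le\e^{-(\xi+\zeta)}$ whenever $\xi+\zeta\lesssim t$; for $\xi$ or $\zeta$ of order $t^{1/3}$ or larger one instead invokes the stronger bound $|(w_c+c)/(z+c)|^{\xi\lambda_c t^{1/3}}\le(1-\delta')^{\xi\lambda_c t^{1/3}}$ (and the analogous one in $\zeta$) available on the global piece.

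For the remaining local--local piece I would apply the rescaling $v=\lambda(z-w_c)t^{1/3}$, $u=\lambda(w-w_c)t^{1/3}$ used in the proof of Proposition~\ref{uniform convergence kernelc}. The Taylor expansions \eqref{gExpands}--\eqref{gCoefs} yield, on $\Gamma_{1,2}^{\mathrm{loc}}$ where $v$ runs along $\e^{\pm\ii\pi/3}$, the pointwise bound
\[
\bigl|\e^{f(z,t,\xi)-f(w_c,t,\xi)}\bigr|\le \exp\!\bigl(-\tfrac13|v|^3-\tfrac12(\xi+s_2-\varepsilon)|v|\bigr)
\]
for any small $\varepsilon>0$ once $t$ is large, and symmetrically $\exp(-\tfrac13|u|^3-\tfrac12(\zeta+s_2-\varepsilon)|u|)$ on $\Sigma_{1,3}^{\mathrm{loc}}$. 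Handling the factor $1/(w-z)$ via $1/(w-z)=\int_0^\infty \e^{-\kappa(v-u)/(\lambda t^{1/3})}\dd\kappa/(\lambda t^{1/3})$, which is legitimate since $\R(v-u)>0$ on these pieces, the local--local contribution factorises into two Airy-tail integrals of the form $\int_0^\infty \e^{-r^3/3-cr}\dd r$, giving a bound of order $\e^{-\frac{2}{3}((\xi+s_2)^{3/2}+(\zeta+s_2)^{3/2})}$ as $\xi$ or $\zeta\to\infty$. Choosing $L$ large enough then guarantees that this is $\le\e^{-(\xi+\zeta)}$, which completes the proof.

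The main obstacle lies in tracking all constants uniformly in $t$ so that the final estimate reproduces the explicit exponential rate $e^{-(\xi+\zeta)}$, rather than merely some $e^{-c(\xi+\zeta)}$: the subleading corrections $b_2(z-w_c)^2 t^{1/2}=O(|v|^2 t^{-1/6})$ and $(g_3^0(z)-g_3^0(w_c))t^{1/3}=O(|v|t^{-1/3})$ must be absorbable into the $\varepsilon|v|/2$ perturbation of the leading quadratic decay uniformly for $z\in\Gamma^{\mathrm{loc}}$. A secondary technical point is the arc $\Sigma_4$ around $-1$, which is global yet where $|w+c|$ must remain $\le|w_c+c|$ for the $\zeta$-factor to stay bounded by $1$ throughout $\Sigma$; this is precisely the role of the assumption $c>(3+\rho)/2$.
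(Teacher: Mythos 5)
There is a genuine gap in the core estimate. Your plan bounds the local--local contribution pointwise by $\exp\bigl(-\tfrac13|v|^3-\tfrac12(\xi+s_2-\varepsilon)|v|\bigr)$ along $\arg v=\pm\pi/3$ and then asserts that integrating over $|v|$ yields Airy-tail decay of order $\e^{-\frac{2}{3}(\xi+s_2)^{3/2}}$. That is not correct: the exponent $-\tfrac13 r^3-cr$ is decreasing on $r\geq 0$, so
\[
\int_0^\infty \e^{-r^3/3-cr}\,\dd r \ \sim\ \frac{1}{c}\qquad (c\to\infty),
\]
which is only polynomial decay in $c$. The genuine super-exponential Airy tail $\e^{-\frac23 c^{3/2}}$ arises from a contour that actually passes through the saddle of $v^3/3-cv$ at $v=\sqrt c$ on the positive real axis, not from a pointwise bound along the $\e^{\pm\ii\pi/3}$ rays. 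Consequently, even granting the rest of your estimates, your local--local piece only yields a bound of order $1/[(\xi+s_2)(\zeta+s_2)]$, which cannot dominate $\e^{-(\xi+\zeta)}$ for large $\xi+\zeta$. A secondary issue is that you deform $z$ onto $\Gamma$ and $w$ onto $\Sigma$ without separating them near $w_c$, so the contours meet at $w_c$; the representation $1/(w-z)=\int_0^\infty\e^{\kappa(w-z)}\dd\kappa$ then fails at $v=u=0$ (the $\kappa$-integral diverges there), and $|1/(w-z)|$ cannot be uniformly controlled.

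What the paper actually does is different and supplies exactly the missing exponential mechanism. Both contours are deformed away from $w_c$ by a distance $2w_c\delta$ (the vertical segments $\Gamma_{\rm vert},\Sigma_{\rm vert}$), so that (i) $|1/(w-z)|\le 1/(2w_c\delta)$, and (ii) the $\xi,\zeta$-factors are bounded by $\e^{-\frac12\frac{w_c}{w_c+c}\delta}$ per unit of $\xi\lambda_c t^{1/3}$ and $\zeta\lambda_c t^{1/3}$, giving $\e^{-s(\xi+\zeta)\delta t^{1/3}}$ with $s>0$. The deformation introduces an error $\e^{r\delta^3 t}$ coming from evaluating $\e^{\bar f}$ on $\Gamma_{\rm vert},\Sigma_{\rm vert}$. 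The key step you are missing is the $(\xi+\zeta)$-dependent choice of $\delta$ that balances these two effects: for $L<\xi+\zeta\le (2r/s)t^{1/3}$ one takes $\delta=\sqrt{s(\xi+\zeta)/(2r)}\,t^{-1/3}$, yielding the net exponent $-\bigl(s(\xi+\zeta)/2\bigr)^{3/2}/\sqrt r\le -(\xi+\zeta)$ once $L$ is large, and for $\xi+\zeta>(2r/s)t^{1/3}$ one takes $\delta=t^{-1/6}$. This adaptive choice of $\delta$ is what converts the factor $\e^{-s(\xi+\zeta)\delta t^{1/3}}$ into the required $\e^{-(\xi+\zeta)}$, and it has no analogue in your proposal.
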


\begin{proof}
We first need to deform the contours. In the proof of Proposition \ref{uniform convergence kernelc}, 
which is provided in Appendix~\ref{Unif_conv_Kc}, we deformed only $\Gamma$ to $\Gamma'$, since we only 
needed to bound $|w-z|$ there. Here we need to deform both of $\Gamma, \Sigma$ to $\Gamma', \Sigma'$ to be away from $w_c$ (see Fig. \ref{Fig.CotourWhole1_text}), because now we want to estimate terms which involve both $\zeta$ and $\xi$, i.e. $\left| \frac{w_c+c}{z+c}\right|$ and $\left| \frac{w+c}{w_c+c}\right|$.
The deformed contour $\Gamma'$ is obtained by replacing, the segments which are within $2 w_c \delta$ from the saddle point $w_c$, with a vertical line through $w_c ( 1+ \delta)$. Similarly we have the deformed contour $\Sigma'$. Explicitly they are given by 
\begin{subequations}
\begin{align}
	\Gamma' =& \left\{ z \in \Gamma \middle| | z - w_c | > 2 w_c \delta  \right\} \cup \left\{ z \in \mathbb{C} \middle| z = w_c + w_c \delta (1 - s \ii) , s \in \left[  -\sqrt{3} , \sqrt{3} \right]  \right\} , \label{Gamma_prime}  
	\\
	\Sigma' =& \left\{ z \in \Sigma \middle| | z - w_c | > 2 w_c \delta  \right\} \cup \left\{ z \in \mathbb{C} \middle| z = w_c - w_c \delta (1 - s \ii) , s \in \left[  -\sqrt{3} , \sqrt{3} \right]  \right\} . \label{Sigma_prime} 
\end{align}
\end{subequations}

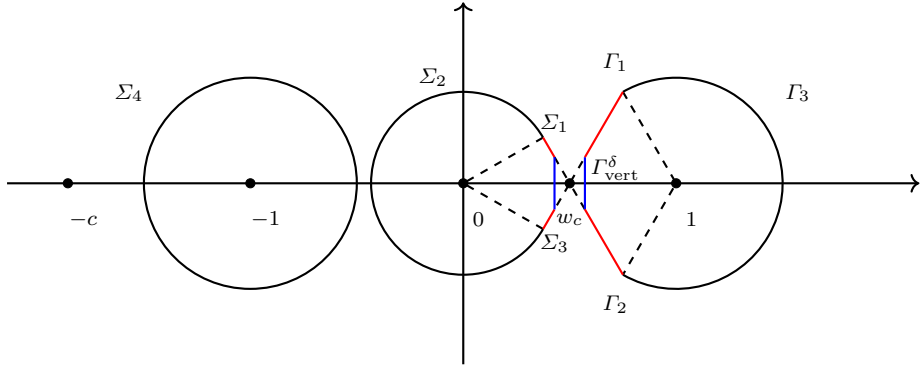
\begin{figure}[h]
\begin{center}
\begin{tikzpicture}[scale=4.0]
\draw[->,thick] (-1.5,0) -- (1.5,0);
\draw[->,thick] (0,-0.6) -- (0,0.6);

\draw[thick] (-0.303109,0) arc (180:30:0.303109);
\draw[thick] (-0.303109,0) arc (180:330:0.303109);
\draw[thick] (-0.7,0) circle (0.35);
\draw[fill=black] (-0.7,0) circle (0.015);
\node at (-0.65,-0.12) {$-1$};
\node at (-1.25,-0.12) {$-c$};

\draw[fill=black] (0,0) circle (0.015);
\draw[fill=black] (0.7,0) circle (0.015);
\draw[fill=black] (0.35,0) circle (0.015);
\draw[fill=black] (-1.3,0) circle (0.015);
\node at (0.05,-0.12) {$0$};
\node at (0.75,-0.12) {$1$};
\node at (0.35,-0.12) {$w_c$};
\node at (0.5,0.4) {$\Gamma_1$};
\node at (0.5,-0.4) {$\Gamma_2$};
\node at (1.1,0.3) {$\Gamma_3$};
\node at (0.3,0.2) {$\Sigma_1$};
\node at (-0.1,0.35) {$\Sigma_2$};
\node at (0.3,-0.2) {$\Sigma_3$};
\node at (-1.1,0.3) {$\Sigma_4$};
\node at (0.5,0.05) {$\Gamma^\delta_\mathrm{vert}$};
			
\draw[thick] (0.525,-0.303109) arc (-120:120:0.35);			

\coordinate (A) at (0.525,0.303109);
\coordinate (B) at (0.7,0);
\coordinate (C) at (0.525,-0.30310);
\coordinate (D) at (0.35,0);
\coordinate (E) at (0.4,0.0866025);
\coordinate (F) at (0.4,-0.0866025);
\draw[thick,dashed] (A) -- (B) -- (C);
\draw[thick,dashed] (F) -- (D) -- (E);
\draw[thick,blue] (F) -- (E);
\draw[thick,red] (C) -- (F);
\draw[thick,red] (E) -- (A);
			
\coordinate (G) at (0.2625,0.151554);
\coordinate (H) at (0.2625,-0.151554);
\coordinate (I) at (0.3,0.0866025);
\coordinate (J) at (0.3,-0.0866025);
\coordinate (O) at (0,0);

\draw[thick,dashed] (G) -- (O) -- (H);
\draw[thick,dashed] (I) -- (D) -- (J);
\draw[thick,red] (G) -- (I);
\draw[thick,red] (J) -- (H);
\draw[thick,blue] (I) -- (J);		
\end{tikzpicture}
\end{center}
\caption{Deformed contours $\Gamma'$ and $\Sigma'$. $\Gamma'$ is obtained by replacing the edge of $\Gamma (=\bigcup_{i=1}^3 \Gamma_i)$ near the point $w_c$ by a vertical line $\Gamma_{\rm vert}$ (the blue line), and $\Sigma'$ is replacing the edge of $\Sigma (=\bigcup_{i=1}^4 \Sigma_i)$ near the point $w_c$ by $\Sigma_{\rm vert}$ (the blue line).}
\label{Fig.CotourWhole1_text}
\end{figure}

Then we will separate the integrand into two parts: those contain $\zeta$ or $\xi$, and those independent of $\zeta$ and $\xi$. We first estimate the second one (independent of $\zeta$ and $\xi$) along the deformed contour in three parts: $\Gamma'=(\Gamma' \backslash \Gamma'^{\Delta})  \cup (\Gamma'^{\Delta} \backslash \Gamma_{\rm vert}) \cup \Gamma_{\rm vert}$, where 
\[
\Gamma_{\rm vert} = \left\{ z \in \mathbb{C} \middle| z = w_c + w_c \delta (1 - s \ii) , s \in \left[  -\sqrt{3} , \sqrt{3} \right]  \right\} ,
\,\,\,\,
\Gamma^{\Delta} = \left\{ z \in \Gamma' \middle| |z-w_c| \leq \Delta  \right\}.
\]
Respectively, we have $\Sigma'=(\Sigma' \backslash \Sigma'^{\Delta})  \cup (\Sigma'^{\Delta} \backslash \Sigma_{\rm vert}) \cup \Sigma_{\rm vert}$. We require that $2w_c\delta < \Delta$, so that $\Gamma_{\rm vert}$ is inside $\Gamma'^{\Delta}$ (and $\Sigma_{\rm vert}$ is inside $\Sigma'^{\Delta}$). The reason we separate the contour $\Gamma'$ into 3 parts is that $g_1(z)$ decays exponentially far away from the saddle point $z=w_c$ (the black arcs in Fig. \ref{Fig.CotourWhole1_text}), while for those near the saddle point, the contour along the vertical line (the blue line in Fig. \ref{Fig.CotourWhole1_text}) and the one along the direction $\e^{\pm \ii \pi / 3}$ (the red part in Fig. \ref{Fig.CotourWhole1_text}) need to be considered individually. Here we choose $\Delta = t^{-1/9}$ for $t$ is large enough. 
Then we bound the factors which contain $\zeta$ or $\xi$ by its maximal value along $\Gamma'$ (and $\Sigma'$). Combining these two results, one can arrive with the bound $\e^{-\zeta-\xi}$ by choosing appropriate value of $\delta$ according to $\zeta$ and $\xi$. 

We let $
\bar{g}_3(z) := g_3(z,\xi) + \xi \lambda_c \ln(z+c)$, i.e., $\bar{g}_3$ is the terms in $g_3$ that do not involve $\zeta$ or $\xi$. Correspondingly, $\bar{f}(z,t) = g_1(z) t +g_2(z) t^{1/2} + \bar{g}_3(z) t^{1/3}$. Then the Taylor expansion in \eqref{gExpands} now becomes $\bar{g}_3(z) = \bar{g}_3(w_c) + \bar{b}_3(z - w_c) + (z - w_c)^3 \bar{h}_3(z)$, where $\bar{b}_3 = b_{3,\xi} +\xi \lambda_c / (w_c +c) = - 4 c_2 s_2 / [(1 - \rho)(1 + \rho)(3 - \rho)]$. Now the rescaled kernel is written as
\begin{equation}\label{resc K suitable for asymptotics}
	\begin{split}
		\bar{K}_t^{c}(\xi , \zeta)  = \lambda_c t^{1/3} \oint_{1} \frac{\mathrm{d} z}{2 \pi \ii} \mathrm{e}^{ \bar{f}(z , t) - \bar{f}(w_c , t) + g_{\phi}(z) } \oint_{0 , -1} \frac{\mathrm{d} w}{2 \pi \ii} \mathrm{e}^{ - \bar{f}(w , t) + \bar{f}(w_c , t) + g_{\psi}(w) } \frac{1}{w-z} \times \\  
		\left(\frac{w_c+c}{z+c}\right)^{\xi \lambda_c t^{1/3}}
		\left(\frac{w+c}{w_c+c}\right)^{\zeta \lambda_c t^{1/3}}.
	\end{split}
\end{equation}
Clearly from Fig. \ref{Fig.CotourWhole1_text},  
\begin{equation}\label{wzBound}
	\left|\frac{1}{w-z}\right| \leq \frac{1}{2 w_c \delta}
\end{equation}
holds along $\Gamma' \times \Sigma'$. In the following, we will give estimate of $\bar{f}(z,t)$ along $\Gamma'$ and $\Sigma'$ first, and then the estimate of $\left| \frac{w_c+c}{z+c}\right|$ and $\left| \frac{w+c}{w_c+c}\right|$.

\paragraph{(i) Estimate of terms independent of  $\xi,\zeta$ }

Recall the Taylor expansion of the $g$'s functions given in \eqref{gExpands}. Assuming that $z$ is not close to the critical point $w_c$, i.e., given a restriction on $\delta$, one can bound the term independent of $\xi,\zeta$ by $c'\e^{c|z-w_c|^3 t}$, where $c$ and $c'$ are some constants to be fixed. We will give the results directly and a detailed calculation can be found in Appendix \ref{Upper_Bound_Kc}. When $\delta > \sqrt{\frac{12|\bar{b_3}|}{a_1 w_c^2}} t^{-1/3} := c_1' t^{-1/3}$ where $a_1$ is given in \eqref{gCoefs}, 

\begin{subequations}\label{fbar bound}
\begin{align}
\label{fbar bound1}
\left| \e^{\bar{f}(z,t) - \bar{f}(w_c, t) + g_{\phi}(z) } \right| &\leq \e^{28 a_1 (w_c \delta)^3t} \e^{g_{\phi}(w_c)}, 
&\textrm{for }& z\in \Gamma_{\rm vert},
\\
\label{fbar bound2}
\left|	\e^{\bar{f}(z,t) - \bar{f}(w_c, t) + g_{\phi}(z) }  \right| &\leq  \e^{-a_1 v^3 t /2} \e^{g_{\phi}(w_c)},
&\textrm{for }& z\in \Gamma'^{\Delta} \backslash \Gamma_{\rm vert},
\\
\label{fbar bound3}
\left| \e^{f(z,t) - f(w_c,t) + g_{\phi}(z)}  \right| &\leq \e^{- a_1 \Delta^3 t /2},
&\textrm{for }& z\in \Gamma' \backslash \Gamma'^{\Delta},
\end{align}
\end{subequations}
where in the second inequality, $z = w_c +v \e^{ \pm \ii \pi /3}$. It follows that 
\begin{equation*}
	\begin{split}
		t^{1/3}\int_{\Gamma'} \left|\frac{\dd z }{2 \pi \ii}\right|	\left| \e^{\bar{f}(z,t)-\bar{f}(w_c,t) + g_{\phi}(z)} \right| \leq t^{1/3} |\Gamma' \backslash \Gamma'^{\Delta}| \e^{- a_1 \Delta^3 t /2} + 2 t^{1/3} e^{g_{\phi}(w_c)} \int_0^{\Delta} \e^{-a_1 v^3 t /2} \dd v + \\ t^{1/3} e^{g_{\phi}(w_c)} \e^{28 a_1 (w_c \delta)^3} \left| \Gamma_{\rm vert}\right|.
	\end{split}
\end{equation*}
We consider the first term on the right hand side of the equation. Choosing $t$ large enough and substituting $\Delta=t^{-1/9}$, we can see  $t^{1/3}\e^{-a_1\Delta^3t/2} $ is bounded by some constant. Then for the second term, we use the change of variable $vt^{1/3} = u$. Thus we have $t^{1/3} \int^{t^{-1/9}}_0 \e^{-a_1 v^3 t/2} \dd v = \int^{t^{2/9}}_0 \e^{-a_1u^3/2} \dd u < \i$. The second term is also bounded by some constant. We assume the sum of the first two terms is bounded by a positive constant $r_3$. Let us consider the last term $t^{1/3} e^{g_{\phi}(w_c)} \e^{28 a_1 (w_c \delta)^3} \left| \Gamma_{\rm vert}\right| = t^{1/3} e^{g_{\phi}(w_c)} \e^{28 a_1 (w_c \delta)^3 t} 2 \sqrt{3} w_c \delta := r_1 t^{1/3}\delta \e^{ r_2 \delta^3 t}$, where $r_1,r_2$ are some positive constants. Specifically,  $r_1=2\sqrt{3} w_ce^{g_{\phi}(w_c)} $, $r_2=28 a_1 w_c^3$. Collecting all the results, we have
\begin{equation*}
	\begin{split}
		t^{1/3} \oint_{\Gamma'} \left|\frac{\dd z }{2 \pi \ii}\right|	\left| \e^{\bar{f}(z,t)-\bar{f}(w_c,t) + g_{\phi}(z)} \right| \leq r_3+r_1 \delta t^{1/3} \e^{r_2 \delta^3 t} \leq
		(r_3+r_1 \delta t^{1/3})\e^{r_2 \delta^3 t}  \leq R(1 + \delta t^{1/3})\e^{r_2 \delta^3 t},
	\end{split}
\end{equation*}
where $R = \max\{r_1,r_3\}$. In order to put the $\delta$ and $t$ into the exponential, we use the inequality $1 + x <\e^{x} $ and $\e^x<\e^{x^3}$ when $x>1$. Namely, if $\delta > t^{-1/3}$, 
\begin{equation*}
	\begin{split}
		t^{1/3} \oint_{\Gamma'} \left|\frac{\dd z }{2 \pi \ii}\right|	\left| \e^{\bar{f}(z,t)-\bar{f}(w_c,t) + g_{\phi}(z)} \right| \leq R \e^{\delta t^{1/3}} \e^{r_2 \delta^3 t} \leq R\e^{r \delta^3 t},
	\end{split}
\end{equation*}
where $r=r_2+1$ is a positive constant.
Please bare in mind that the above inequality holds only when $\delta$ satisfies the restriction: $\delta > c_1 t^{-1/3}$ where $c_1 = \max\{c_1',1\}$. 

Such estimate can be repeated for the deformed contour $\Sigma'$. Combining these two results together, one obtain
\begin{equation}\label{fbarBoundTotal}
	\begin{split}
		t^{2/3} \oint_{\Sigma'} \left|\frac{\dd w }{2 \pi \ii}\right|	\left| \e^{-\bar{f}(w,t)+\bar{f}(w_c,t) + g_{\psi}(w)} \right| \oint_{\Gamma'} \left|\frac{\dd z }{2 \pi \ii}\right|	\left| \e^{-\bar{f}(z,t)+\bar{f}(w_c,t) + g_{\phi}(z)} \right|  \leq R\e^{r \delta^3 t},
	\end{split}
\end{equation} 
where $R$ and $r$ absorb the constants from the deformed contour $\Sigma'$.

\paragraph{(ii) Estimate of terms dependent on $\xi,\zeta$} The estimate of these two terms is straightforward. We simply take the minimum value of $|z+c|$ along $\Gamma'$, and the maximum value of $|w+c|$ along $\Sigma'$. The reason why we introduce the parameter $c$ is that $\max_{w\in \Sigma'}|w| > \max_{w\in \Sigma'}|w_c|$, i.e., $|\frac{w}{w_c}|$ can not be bounded by some number that is smaller than 1. But introducing an appropriate extra parameter $c$, we have $|\frac{w+c}{w_c+c}|<1$ for $w \in \Sigma'$ (see Fig. \ref{Fig.CotourVicinityw_c1_text}).

From Fig. \ref{Fig.CotourVicinityw_c1_text}, one can see that (we refer to Appendix \ref{appx:zw+cBound})
\begin{align}\label{zw+cBound}
\left| \frac{w_c + c}{z + c}\right|  \leq \e^{-\frac{1}{2}\frac{w_c}{w_c + c} \delta}, 
\,\,\,
\left|\frac{w + c}{w_c + c}\right| \leq  \e^{-\frac{1}{2}\frac{w_c}{w_c + c}\delta},
&&
\textrm{for } (z,w) \in \Gamma' \times \Sigma'.
\end{align}

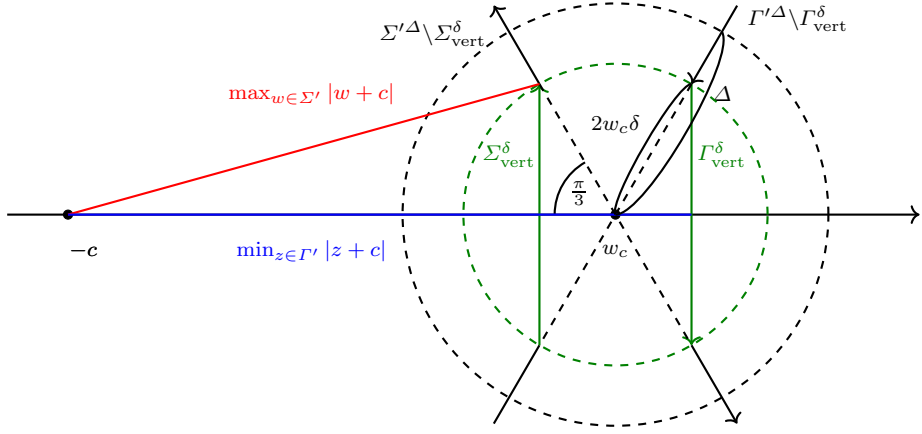
\begin{figure}[h]
\begin{center}
\begin{tikzpicture}[scale=4]
\draw[->,thick] (-1,0) -- (2,0);
			
\node at (-0.75,-0.12) {$-c$};
			
\draw[fill=black] (1,0) circle (0.015);
\draw[fill=black] (-0.8,0) circle (0.015);
			
\node at (1.0,-0.12) {$w_c$};
\node at (-0.75,-0.12) {$-c$};
\node at (0.88,0.07) {$\frac{\pi}{3}$};
\node at (1.35,0.4) {$\Delta$};
\node at (1.,0.3) {$2 w_c \delta$};
\node at (0,-0.12) {$\color{blue} \min_{z \in \Gamma'}{|z + c|} \color{black}$};
\node at (0,0.4) {$\color{red} \max_{w \in \Sigma'}{|w + c|}  \color{black}$};
\node at (0.4,0.6) {$\Sigma'^\Delta \backslash \Sigma_{\rm vert}^\delta$};
\node at (1.6,0.65) {$\Gamma'^\Delta \backslash \Gamma^\delta_{\mathrm{vert}}$};
\node at (0.65,0.2) {$\color{green2} \Sigma_{\rm vert}^\delta \color{black}$};
\node at (1.35,0.2) {$\color{green2} \Gamma^\delta_{\mathrm{vert}} \color{black}$};

\draw[thick,dashed] (1.7,0) arc (0:360:0.7);
\draw[thick,dashed,green2] (1.5,0) arc (0:360:0.5);
\draw[thick] (0.8,0) arc (-180:-240:0.2);
			
\coordinate (A) at (1,0);
\coordinate (B) at (0.75,0.433013);
\coordinate (C) at (0.75,-0.433013);
\coordinate (D) at (1.25,0.433013);
\coordinate (E) at (1.25,-0.433013);
\coordinate (F) at (0.60,0.692820);
\coordinate (G) at (0.60,-0.692820);
\coordinate (H) at (1.40,0.692820);
\coordinate (I) at (1.40,-0.692820);
\coordinate (J) at (-0.80,0);
\coordinate (K) at (1.35,0.606218);
\coordinate (P) at (0.75,0.005);
\coordinate (Q) at (1.25,0);
			
\draw[->,thick] (B) -- (F);
\draw[thick] (G) -- (C);
\draw[->,thick,green2] (C) -- (B);
\draw[thick,dashed] (B) -- (A) -- (C);
\draw[->,thick] (H) -- (D);
\draw[->,thick,green2] (D) -- (E);
\draw[->,thick] (E) -- (I);
\draw[thick,dashed] (D) -- (A) -- (E);
\draw[thick,red] (J) -- (B);
\draw[thick,blue] (J) -- (Q);

\draw[thick] (D) .. controls (1.2,0.433013) and (0.975,0.0433013) .. (A);
\draw[thick] (K) .. controls (1.4,0.519615) and (1.1,0) .. (A);
			
\end{tikzpicture}
\end{center}
\caption{The dotted green and black curves are circles of radius $2 w_c \delta$ and $\Delta$, respectively, centred at $w_c$. The green lines represent the paths $\Gamma^{\delta}_{\mathrm{vert}} $ and $ \Sigma^{\delta}_{\mathrm{vert}}$, while the black lines in the black dotted circle are the paths ${\Gamma'}^\Delta \backslash \Gamma^\delta_{\mathrm{vert}}$ and $\Sigma^\Delta \backslash \Sigma^\delta$. The length of blue line is the minimum values of $|z + c|$ on $\Gamma'$ and that of red line is the maximum value of $|w + c|$ on $\Sigma$.}
\label{Fig.CotourVicinityw_c1_text}
\end{figure}

In conclusion, the integrand depending on $\xi,\zeta$ has the following bound along $\Gamma'\times \Sigma'$. Given $0<\delta<1/4$, we have
\begin{equation}\label{XiZetaTermBound}
	\left| 
	\left(\frac{w_c+c}{z+c}\right)^{\xi \lambda_c t^{1/3}}
	\left(\frac{w+c}{w_c+c}\right)^{\zeta \lambda_c t^{1/3}}
	\right| \leq \e^{-\frac{1}{2}\lambda_c \frac{w_c}{w_c + c}(\xi+\zeta)\delta t^{1/3}}.
\end{equation}

\paragraph{(iii) Total estimate}
In part (i), we have the bound of $\e^{\bar{f}(z,t)-\bar{f}(w_c,t)+g_{\phi}(z)}$ (and  $\e^{-\bar{f}(w,t)+\bar{f}(w_c,t)+g_{\psi}(z)}$) along $\Gamma'$ (and $\Sigma'$), given in \eqref{fbarBoundTotal}. Then the bound of $|(w_c + c)/(z+c)|$ and $|(w + c)/(w_c + c)|$ is analysed in part (ii) and given in \eqref{XiZetaTermBound}. The integrand of the rescaled kernel is left with $1/(w-z)$, which is bounded by $(2 w_c \delta)^{-1}$ given in \eqref{wzBound}.
It follows that when $c_1 t^{-1/3} < \delta <1/4 $, 
\begin{equation}\label{KerscBoundTotal1}
	\begin{split}
		\left| \bar{K}_t^{c}(\xi , \zeta) \right|
		\leq \lambda_c t^{-1/3} R \e^{r \delta^3 t} \frac{1}{2w_c\delta} \e^{-\frac{1}{2}\lambda_c \frac{w_c}{w_c + c}(\xi+\zeta)\delta t^{1/3}} \leq   \e^{r \delta^3 t}  \e^{-s(\xi+\zeta)\delta t^{1/3}},
	\end{split}
\end{equation}
where $s:= \frac{\lambda_c w_c}{2(w_c + c)}$ and we impose another restriction that $\delta > \frac{\lambda_c R}{2 w_c} t^{-1/3}$. The condition on $\delta$ now becomes $c't^{-1/3} < \delta < 1/4$ where $c'$ is given by $c' =\max\{\frac{\lambda_c R}{2 w_c}, c_1\} $. We are only left with showing that the left hand side of \eqref{KerscBoundTotal1} is smaller than $\e^{-\xi-\zeta}$. To achieve this result, we choose a different value of $\delta$ depending on the value of  $\xi+\zeta$ (see Fig.\ref{Fig.RegionModerate} and Fig.\ref{Fig.RegionLarge}).  

\begin{figure}[h]
\begin{tabular}{c}
\begin{minipage}{0.47\hsize}
\begin{center}
\begin{tikzpicture}[scale=2.0]

\draw[->,thick] (-0.2,0) -- (2.5,0);
\draw[->,thick] (0,-0.2) -- (0,2.5);

\draw[fill=black] (1.8,0) circle (0.02);
\draw[fill=black] (0.6,0) circle (0.02);
\draw[fill=black] (0,1.8) circle (0.02);
\draw[fill=black] (0,0.6) circle (0.02);

\node at (-0.1,-0.1) {$O$};
\node at (0.6,-0.1) {$L$};
\node at (-0.1,0.6) {$L$};
\node at (1.5,-0.15) {$(2r/s) t^{1/3}$};
\node at (-0.5,1.8) {$(2r/s) t^{1/3}$};
\node at (0.1,2.5) {$\zeta$};
\node at (2.5,-0.1) {$\xi$};

\coordinate (A) at (-0.2,0.8);
\coordinate (B) at (0.8,-0.2);
\coordinate (C) at (-0.2,2.0);
\coordinate (D) at (2.0,-0.2);
\coordinate (E) at (0,1.8);
\coordinate (F) at (1.8,0);
\coordinate (H) at (0,0.6);
\coordinate (I) at (0.6,0);
\coordinate (J) at (0.6,0.6);

\draw[thick] (A) -- (B);
\draw[thick] (C) -- (D);
\draw[thick,dashed] (H) -- (J) -- (I);

\fill[classicrose] (E) -- (F) -- (I) -- (J) -- (H);

\end{tikzpicture}
\end{center}
\caption{ ${[0 , \infty)}^2 \backslash {[0,  L]}^2$, $L < \xi + \zeta \leq (2r/s) t^{1/3}$ }
\label{Fig.RegionModerate} 

\end{minipage}

\begin{minipage}{0.47\hsize}

\begin{center}
\begin{tikzpicture}[scale=2.0]

\draw[->,thick] (-0.2,0) -- (2.5,0);
\draw[->,thick] (0,-0.2) -- (0,2.5);

\draw[fill=black] (1.8,0) circle (0.02);
\draw[fill=black] (0.6,0) circle (0.02);
\draw[fill=black] (0,1.8) circle (0.02);
\draw[fill=black] (0,0.6) circle (0.02);

\node at (-0.1,-0.1) {$O$};
\node at (0.6,-0.1) {$L$};
\node at (-0.1,0.6) {$L$};
\node at (1.5,-0.15) {$(2r/s) t^{1/3}$};
\node at (-0.5,1.8) {$(2r/s) t^{1/3}$};
\node at (0.1,2.5) {$\zeta$};
\node at (2.5,-0.1) {$\xi$};

\coordinate (A') at (-0.2,2.0);
\coordinate (B') at (2.0,-0.2);
\coordinate (E) at (0,1.8);
\coordinate (F) at (1.8,0);
\coordinate (H) at (0,0.6);
\coordinate (I) at (0.6,0);
\coordinate (J) at (0.6,0.6);
\coordinate (K) at (0,2.4);
\coordinate (L) at (2.4,0);
\coordinate (M) at (2.4,2.4);

\draw[thick] (A') -- (B');
\draw[thick,dashed] (H) -- (J) -- (I);

\fill[blizzardblue] (K) -- (M) -- (L) -- (F) -- (E);

\end{tikzpicture}
\end{center}
\caption{ $ {[0 , \infty)}^2 \backslash {[0 , L]}^2  $, $\xi + \zeta > (2r/s) t^{1/3}$ }
\label{Fig.RegionLarge}
\end{minipage}
\end{tabular}
\end{figure}

$\bullet$ When $\xi + \zeta \leq \frac{2r}{s} t^{1/3}$,  we choose $\delta = \sqrt{\frac{(\xi +\zeta ) s}{2 r}} t^{-1/3}$. Then $ \sqrt{L s/(2r)} t^{-1/3} < \delta < t^{-1/6}$. Then restriction $c' t^{-1/3} < \delta <1/4 $ is satisfied if we choose $L$ large enough such that $L>(2c'r)^2$ and choose $t$ large enough such that $t^{-1/6} < 1/4$. Therefore from \eqref{KerscBoundTotal1}, 
\begin{align*}
	\left| \bar{K}_t^{c}(\xi , \zeta) \right| \leq & 
	\e ^{(\xi +\zeta)^{3/2}(s/2)^{3/2}r^{-1/2}}
	\e^{-(\xi +\zeta)^{3/2}(s)^{3/2}(2r)^{-1/2}}=
	\e^{-(s(\xi + \zeta)/2)^{3/2}r^{-1/2}}\\
	\leq & \e^{-(\xi+\zeta)(s/2)^{3/2}(L/r)^{1/2}}
	\leq \e^{-\xi - \zeta},
\end{align*}
where in the second line, $L < \xi+\zeta$ since $(\xi , \zeta) \in [0, \i )^2 \backslash [0 , L]^2$. The final inequality follows by choosing $L$ large enough such that $L>r(2/s)^3$. 

$\bullet$ When $\xi + \zeta > \frac{2r}{s} t^{1/3}$, i.e., $r<s(\xi +\zeta)/(2t^{1/3})$.  We choose $\delta = t^{-1/6}$. If $t$ is large enough, then the restriction $c' t^{-1/3} < t^{-1/6} < 1/4$ is automatically satisfied. From \eqref{KerscBoundTotal1}, we have
\begin{align*}
	\left| \bar{K}_t^{c}(\xi , \zeta) \right| \leq & 
	\e^{rt^{1/2}}\e^{-s(\xi +\zeta)t^{1/6}}
	\leq 
	\e^{s(\xi +\zeta)t^{1/2}/(2t^{1/3})}\e^{-s(\xi +\zeta)t^{1/6}}=\e^{-(\xi+\zeta)t^{1/6}s/2}\leq \e^{-\xi-\zeta},
\end{align*}
where the last inequality holds when $t$ is large enough such that $t>(2s)^6$. \qed
	
\end{proof}

\subsection{Long time limit of the first term in \eqref{eq:prob2termb}}

We now are ready to conclude the limiting kernel is the \textit{Airy kernel} \cite{TW1994} given in \eqref{def:airy_kernel}.
From Subsections~\ref{Sec:KernelUniform} and \ref{Sec:KernelEstimate}, we obtain the following theorem.

\begin{theorem}
\label{kernel1 limit}
Consider the rescaled kernel defined in \eqref{DefRescKc}
\begin{equation*}
 \bar{K}^c_t( \xi, \zeta)=(w_c + c)^{ \lambda_c (\xi-\zeta) t^{1/3}}
  \lambda_c t^{1/3}K^c(\lambda_c t^{1/3} \xi,\lambda_c t^{1/3} \zeta),
\end{equation*}
with $\lambda_c$ given in \eqref{DefLambdac}. Then we have
\begin{enumerate}[(i)]
\item \label{RescaledFredholm} $\displaystyle \lim_{t \to \infty} \det(1-  \bar{K}^c_t)_{L^2(0, \infty)}= \lim_{t \to \infty} \det(1-\bar{K}^c_t)_{\ell^2(\mathbb{N}/(\lambda_c t^{1/3}))}$.

Here note that the operator $\bar{K}^c_t$ on the right hand side is regarded as acting on $\ell^2(\mathbb{N}/(\lambda_c t^{1/3}))$ while the one on the left hand side on  $L^2(0,\infty)$.
\item \label{UnifConv}
  For any fixed $L>0$
  \begin{equation*}
    \lim_{t\rightarrow \infty}\bar{K}^c_t( \xi, \zeta)= A(\xi + s_2 , \zeta + s_2),
  \end{equation*}
  uniformly on $(\xi,\zeta)\in[-L,L]^2$.
\item \label{UnifBound} For $t$ large enough,
\begin{equation*}
  |\bar{K}^c_t( \xi, \zeta)| \leq C \e^{
  - (\xi + \zeta)},
\end{equation*}
for some constant $C>0$ and $(\xi,\zeta) \in {[0,\infty )}^2$.
\end{enumerate}

\end{theorem}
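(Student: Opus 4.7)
The plan is as follows. Parts (ii) and (iii) are immediate restatements of results already established in this section: (ii) is Proposition~\ref{uniform convergence kernelc} (whose details are deferred to Appendix~\ref{Unif_conv_Kc}) and (iii) is Proposition~\ref{exp_bound}. The only new content is part (i), which compares the continuous Fredholm determinant of $\bar K^c_t$ on $L^2(0,\infty)$ with its discrete counterpart on $\ell^2(\mathbb{N}/(\lambda_c t^{1/3}))$.

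For (i), I would first note that the diagonal conjugation factor ${(w_c+c)}^{\lambda_c t^{1/3}(\xi-\zeta)}$ in \eqref{DefRescKc} cancels inside any $k\times k$ determinant, so together with the $\lambda_c t^{1/3}$ prefactor,
\[
  \det[\bar K^c_t(\xi_i,\xi_j)]_{1\le i,j\le k} = (\lambda_c t^{1/3})^k \det[K^c(\lambda_c t^{1/3}\xi_i, \lambda_c t^{1/3}\xi_j)]_{1\le i,j\le k}.
\]
Interpreting the discrete determinant on $\ell^2(\mathbb{N}/(\lambda_c t^{1/3}))$ as the Fredholm series on the lattice with spacing $h=1/(\lambda_c t^{1/3})$ and weight $h^k$, the $h^k$ cancels exactly against $(\lambda_c t^{1/3})^k$, and the right-hand side of (i) equals $\det(1-K^c)_{\ell^2(\mathbb{N})}=\mathcal{I}_1$ as an identity valid for every $t$. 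Statement (i) therefore reduces to showing that this discrete Riemann-sum determinant and the continuous Fredholm determinant $\det(1-\bar K^c_t)_{L^2(0,\infty)}$ share the same $t\to\infty$ limit; combined with (ii) and (iii) this common limit will be $F_2(s_2)$.

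I would evaluate both limits by the same domination scheme. Hadamard's inequality combined with (iii) provides, outside a fixed compact box, an integrable/summable bound on $\det[\bar K^c_t(\xi_i,\xi_j)]_{1\le i,j\le k}$ that is uniform in $t$; dominated convergence then interchanges $t\to\infty$ with the Fredholm $k$-sum and, within each level, with either the $\vec\xi$-integration or the $\vec\xi$-summation. On each compact box $[0,L]^k$, the uniform convergence in (ii) yields pointwise convergence of the integrands to $\det[A(\xi_i+s_2,\xi_j+s_2)]$, and the same uniform convergence (with equicontinuity of the integrands coming from the steepest descent estimates underlying (ii)) turns the Riemann sum of spacing $h\to0$ into the Lebesgue integral. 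The tail $[0,\infty)^k\setminus [0,L]^k$ contributes at most $\mathcal{O}(e^{-L})$ uniformly in $t$ by (iii). Both limits therefore coincide with $\det(1-A(\cdot+s_2,\cdot+s_2))_{L^2(0,\infty)}=F_2(s_2)$.

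The main obstacle is exactly this uniform-in-$t$ tail control for the Riemann sum, which is what motivated the introduction of the parameter $c$ in \eqref{DefRescKc} in the first place: without $c$, the quotient $|(w+c)/(w_c+c)|$ would reduce to $|w/w_c|$, which cannot be kept strictly below one along the contour $\Sigma'$ near the singularity at $w=-1$, so no integrable or summable domination for the $k$-fold determinants would be available. With the exponential bound (iii) in hand, the bounded-box step is standard Riemann-sum analysis and the tails are controlled uniformly in $t$, so both sides of (i) agree.
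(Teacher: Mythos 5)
Your proof is correct and follows essentially the same route as the paper: the paper's own argument for (i) is precisely the rescaling $(x,y)=(\lambda_c t^{1/3}\xi,\lambda_c t^{1/3}\zeta)$ followed by the Riemann-sum-to-integral passage, with (ii) and (iii) identified as Propositions~\ref{uniform convergence kernelc} and~\ref{exp_bound}, and the limit passage justified by the domination/uniform-convergence mechanism recorded in Lemma~\ref{lem:kernel lim}. Your additional observations — that the conjugation factor and the Riemann weight cancel so that the discrete determinant equals $\mathcal{I}_1$ identically in $t$, and that the parameter $c$ is what makes the uniform tail bound available — are correct and simply make explicit what the paper's terse proof leaves implicit.
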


\begin{proof}
One can show \ref{RescaledFredholm} by rescaling $(x,y)=(\lambda_c t^{1/3} \xi,\lambda_c t^{1/3} \zeta)$ in the Fredholm kernel in \eqref{eq:FHDdef} and replacing the Riemann sums $\displaystyle \lim_{t \to \infty} {(\lambda_c t^{1/3})}^{-1} \sum_{x \in \mathbb{N}}$ with the integral $\int_{0}^{\infty} \dd \xi$.
Statements \ref{UnifConv} and \ref{UnifBound} are nothing but Propositions \ref{uniform convergence kernelc} and \ref{exp_bound}, respectively.
\end{proof}

A general statement on the convergence of a Fredholm determinant is given below.
\begin{lemman}
\label{lem:kernel lim}
Suppose a kernel $K_t$ satisfies
\begin{enumerate}[(i)]
\item
  For any fixed $L>0$
  \begin{equation*}
    \lim_{t\rightarrow \infty}K_t( \xi, \zeta)=H(\xi,\zeta),
  \end{equation*}
  uniformly on $(\xi,\zeta)\in[-L,L]^2$.
\item For any fixed $L>0$ and $t$ large enough,
\begin{equation*}
  |K_t( \xi, \zeta)| \leq C \e^{
  -\max\{0,\xi\}-\max\{0,\zeta\}},
\end{equation*}
for some constant $C>0$ and $\xi,\zeta\geq -L$.
\end{enumerate}
Then we have
\begin{equation*}
\lim_{t \rightarrow \infty} \det(1-K_t)=\det(1-H).
\end{equation*}
\end{lemman}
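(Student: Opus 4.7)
The plan is to give the standard Hadamard-bound-plus-dominated-convergence argument, working directly from the series definition
\begin{equation*}
\det(1-K_t) = \sum_{k=0}^{\infty} \frac{(-1)^k}{k!} \int_{I^k} \det\bigl[K_t(\xi_i,\xi_j)\bigr]_{1\le i,j\le k}\, d\xi_1 \cdots d\xi_k
\end{equation*}
and the analogous expansion for $H$, where $I$ denotes the underlying domain (in the application to $\mathcal{I}_1$ we have $I=[0,\infty)$; more generally, any $I=[s,\infty)$ with $s>-\infty$ will do). The goal is to exchange $\lim_{t\to\infty}$ with both the outer sum and the $k$-fold integrals.

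First I would set $E(\xi) := \e^{-\max\{0,\xi\}}$ so that hypothesis (ii) reads $|K_t(\xi,\zeta)| \le C\, E(\xi)\, E(\zeta)$ for all sufficiently large $t$ and all $(\xi,\zeta) \in I^2$. Applying Hadamard's inequality row by row to the matrix $[K_t(\xi_i,\xi_j)]_{i,j=1}^{k}$ then yields the pointwise bound
\begin{equation*}
\bigl|\det[K_t(\xi_i,\xi_j)]_{i,j=1}^{k}\bigr| \;\le\; k^{k/2}\, C^k \prod_{i=1}^k E(\xi_i).
\end{equation*}
Since $M := \int_I E(\xi)\, d\xi < \infty$ (this is the only place where boundedness of $I$ from below is used), integrating and invoking Stirling's formula gives
\begin{equation*}
\left| \frac{1}{k!} \int_{I^k} \det[K_t(\xi_i,\xi_j)]\, d\xi_1\cdots d\xi_k \right| \;\le\; \frac{(CM)^k\, k^{k/2}}{k!} \;\lesssim\; \frac{(eCM)^k}{k^{k/2}\sqrt{k}},
\end{equation*}
which is summable in $k$ and, crucially, uniform in $t$.

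Next, for pointwise convergence of the integrands I would observe that hypothesis (i), applied with $L$ large enough to contain any given $(\xi_1,\dots,\xi_k)$, gives $K_t(\xi_i,\xi_j) \to H(\xi_i,\xi_j)$ for every fixed entry; continuity of the determinant in its entries then yields $\det[K_t(\xi_i,\xi_j)] \to \det[H(\xi_i,\xi_j)]$ pointwise on $I^k$. The integrable majorant $k^{k/2} C^k \prod_i E(\xi_i)$ from the previous paragraph, which is independent of $t$, makes dominated convergence applicable inside each $k$-fold integral. The uniform summable $t$-independent bound on the $k$-th term of the Fredholm series then permits a second application of dominated convergence (on $\ell^1(\mathbb{N}_0)$) to exchange $\lim_{t\to\infty}$ with $\sum_{k=0}^{\infty}$, giving $\lim_{t\to\infty}\det(1-K_t) = \det(1-H)$.

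The hard part will be essentially nonexistent: this is the textbook argument for convergence of trace-class Fredholm determinants whose kernels converge uniformly on compacta with a uniform exponential bound (cf.\ \cite{GohbergKrein69,lax02}). The only point requiring a brief comment is the specification of the domain $I$---the exponential bound in (ii) is only integrable when $I$ is bounded below, which is implicit in the hypothesis and automatic in the intended application. Passing (ii) to the limit $t\to\infty$ also shows $|H(\xi,\zeta)| \le C\, E(\xi)\, E(\zeta)$, so that the limiting series $\det(1-H)$ is itself absolutely convergent by the same Hadamard estimate.
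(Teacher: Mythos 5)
Your argument is correct, and it is the standard trace-class Fredholm convergence argument (Hadamard bound to control each term of the series uniformly in $t$, dominated convergence inside each $k$-fold integral and then in $k$). The paper itself does not give a proof of this lemma but cites Lemma~C.2 of \cite{IS2019}, which is precisely this argument; your proposal reproduces it faithfully, including the correct normalisation $\sum_j E(\xi_j)^2 \le k$ in the Hadamard step that produces the $k^{k/2}$ factor.
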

See Lemma C.2 in \cite{IS2019} for a detailed proof. Consequently, we can conclude that, with the scaling \eqref{nmscale} in Section \ref{sec:Scaling limit}, the kernel $K$ defined in \eqref{eq:fredholm1} satisfies 

\begin{equation}
\label{I1 limit}
\lim_{t\rightarrow\infty}\mathcal{I}_1
= \lim_{t\rightarrow\infty}\det(1-\bar{K}^c_t)_{\ell^2(\mathbb{N}/(\lambda_c t^{1/3}))} = \lim_{t\rightarrow\infty}\det(1-\bar{K}^c_t)_{L^2(0,\infty)} = \det(1-A)_{L^2(s_2 , \infty)}  
=F_2(s_2),
\end{equation}
where $A$ is the Airy kernel defined in \eqref{def:airy_kernel} and $s_2$ is given in \eqref{s+-}.
Obviously, the equality \eqref{I1 limit} itself coincides with the first claim of Theorem~\ref{thm_I1I2 limit}.

\section{Limit of $\mathcal{I}_2$: proof of \eqref{claim_I2 limit} in Theorem~\ref{thm_I1I2 limit} }
\label{sec:Asymptotics_second}

In this section, which consists of three subsections \ref{subsec:Deform_I2}, \ref{subsec:Ev_I21} and \ref{subsec:Ev_I22}, we consider the second term in \eqref{eq:prob2termb}, i.e., $\mathcal{I}_2$, given by \eqref{eq:prob2term2}.
In Subsection~\ref{subsec:Deform_I2}, we will rewrite $\mathcal{I}_2$ of \eqref{eq:prob2term2} to a form suitable for studying asymptotics  with the scaling \eqref{nmscale}.
We will state and prove Proposition~\ref{rank1 perturbation det2} which allows to decouple the $z_j$ dependence of $K$ into a fairly simple rank one perturbation. This is the key to establish the asymptotic decoupling of the two modes in our paper. Using this we can divide $\mathcal{I}_2$ into a main contribution $\mathcal{I}_2^{(1)}$ and a remaining part
$\mathcal{I}_2^{(2)}$ as in (\ref{I2div}).
In Subsections~\ref{subsec:Ev_I21} and \ref{subsec:Ev_I22}, we will show that the main term converges to the GUE Tracy-Widom distribution whilst the remainder tends to zero.

\subsection{Rewriting of $\mathcal{I}_2$} 
\label{subsec:Deform_I2}

As we just saw in the analysis of $\mathcal{I}_1$ in Section~\ref{sec: I1 limit}, a Fredholm determinant formula which appears as a result of procedures in
Section \ref{se:tofredholm} is often suitable for studying asymptotics with the scaling like \eqref{nmscale}.
For $\mathcal{I}_2$, this approach can not be applied at least directly,
due to the presence of a factor involving both $w_j$ and $z_k$ variables. 
But as we will see below, $I_w(\vec{z})$ can be transformed into a Fredholm determinant following the procedures in Section~\ref{se:tofredholm}, 
and we can show that the long time limit of $I_w(\vec{z})$ is dominated by the 
same multiple integral with $\vec{z}=\vec{1}$.
It follows then that $\mathcal{I}_2$ can be asymptotically the same as the product of 
$I_z$ defined in \eqref{def:Iz} below and $I_w(\vec{1})$, and asymptotics
of each multiple integral can be valuated independently by following similar 
arguments as in the previous section. 

Let us start from the multi-fold integral formula of $I_w(\vec{z})$, namely \eqref{Iw}.
In order to rewrite $I_w(\vec{z})$ to a Fredholm determinant, we first show that the right hand side of \eqref{Iw} fits the standard form \eqref{Inu}.
Changing the variables $w_k\rightarrow1/w_k$ in $I_w(\vec{z})$, the $\vec{w}$-contours around the origin are deformed to contours around the infinity.
Hence, we can replace these contours to surround the poles other than the ones at the infinity, namely  $w_k=0,1,-\rho',-z_j^{-1}$ for all $j\in[1,n-1]$, resulting in

\begin{align}
I_w (\vec{z}) = & \frac{{(-1)}^m}{m!}
		\oint_{0}
		\dd^m w
		\frac{\e^{\Lambda_{0,m}t} \Delta_{m}(w) \Delta_m(-w) S_{n-1,m}(z,1) }{\prod_{k=1}^{m}{ (w_k-1)^m} S_{n-1,m}(z,w) }
		\prod_{k=1}^{m}\frac{1+1/\rho'}{w_k+1/\rho'} 
		\nonumber \\ 
		= & \frac{1}{m!}\oint_{0,1,-\rho', {\{ -z_j^{-1} \}}_{j=1}^{n-1}} \dd^m w
		\frac{\e^{\tilde{\Lambda}_{0,m}t } \prod_{1 \leq k \neq \ell \leq m}{(w_\ell - w_k)} \prod_{j=1}^{n-1}{ {(1 + z_j)}^m } }{ \prod_{ k=1 }^{m}{ w_k^{m-n-1} {(w_k - 1)}^m \prod_{j=1}^{n-1}{ \prod_{k=1}^{m}{ (1 + z_j w_k) } } }  } \prod_{k=1}^{m}{ \frac{1+1/\rho'}{1+w_k /\rho'} }, \nonumber
\end{align}
where $\tilde{\Lambda}_{n,m}=\frac12\sum_{j=1}^n(z_j-1)+\frac12\sum_{k=1}^m(w_k-1)$.
This expression can be written as the standard $m$-fold integral \eqref{Inu} in Proposition \ref{prop:fredholm},

\begin{equation*}
I_w(\vec{z}) = \frac{1}{m!}\oint_{0,1,-\rho', {\{ -z_j^{-1} \}}_{j=1}^{n-1}} \frac{\dd^m w}{ \prod_{k=1}^m w_k} \frac{\prod_{1\leq k\neq \ell\leq m}(1-w_k/w_\ell)}{\prod_{k,\ell=1}^m (1-a_\ell/w_k)} \prod_{k=1}^m \frac{g(w_k,0 )}{g(a_k,0)},
\end{equation*}
with
\[
\begin{array}{rll}
\nu &=m, \quad &\mu=n, \quad \gamma = t/2,\quad s=0, \\
u_i &=0, & 1\le i \le m, \\
a_i &=1, \quad & 1\leq i\leq m, \\
v_k &=z_j, & 1\le k\le n-1,\quad v_n =1/\rho',
\end{array}
\]
where $g(w_k,x) = g^c(w_k,x;0)$ and $g^c(w,x;y)$ is given in \eqref{gmndef}, namely

\begin{equation*}
g^c(w, x; y)
= \prod_{j=1}^{n-1}\frac{w}{1+z_jw} \frac{w}{1 + w/ \rho'} w^{-m-x} {(w + c)}^y \e^{wt/2}.
\end{equation*}

\noindent As is clear from \eqref{eq:prob2term2} and \eqref{I2 Iz antisym}, the contours of $\vec{z}$-integrals include only the poles at unity, and hence it is allowed to choose them as ones which lie in the vicinity of unity. 
When we choose such contours, $-1/z_j$'s are close to $-1$ and then, likewise the case of $\mathcal{I}_1$ treated in Section~\ref{sec: I1 limit}, there exists $c \in \mathbb{C}$ for which $S(-c,|1+c|)$ includes $0,-\rho',-z_j^{-1}$ for all $j \in [1,n-1]$ and excludes unity.
Therefore, the conditions stated in the claim of Proposition~\ref{prop:fredholm} are satisfied.
Thus applying Proposition \ref{prop:fredholm}, the integral $I_w(\vec{z})$ defined in \eqref{Iw} can be written as a Fredholm determinant:

\begin{equation}
I_w(\vec{z}) = \det(1-K^c(\vec{z}))_{\ell^2(\mathbb{N})} \label{eq:fredholmdet2z}
\end{equation}
with kernel

\begin{equation}
K^c(x,y,\vec{z})=\sum_{k=0}^{m-1} \phi^c_k(x,\vec{z}) \psi^c_k(y,\vec{z}),
\label{eq:fredholm2z}
\end{equation}
and
\begin{align}
\phi^c_k(x,\vec{z}) &= \oint_1 \frac{\dd z}{2\pi\ii}
\frac{1+z/\rho'}{ {(z + c)}^x z (z-1)}
\prod_{j=1}^{n-1}\frac{1+z_jz}{z}
\left(\frac{z}{1-z}\right)^k \e^{-zt/2} ,
\label{phiform2a}
\\
\psi^c_k(x,\vec{z}) &=
\oint_{0,-\rho',\{-z_j^{-1}\}_{j=1}^{n-1}}
\frac{\dd w}{2\pi\ii}
\frac{ {(w + c)}^x }{w(1+w/\rho')}
\prod_{j=1}^{n-1}\frac{w}{1+z_jw}
\left(\frac{1-w}{w}\right)^{k} \e^{ wt/2}.
\label{psiform2a}
\end{align}
In the right hand side of \eqref{eq:fredholm2z}, the summation over $k \in [0,m-1]$ can be performed easily as

\begin{equation*}
    \sum_{k=0}^{m-1}{ {\left[ \frac{z(1-w)}{w(1-z)} \right] }^k} = \frac{w(1-z)}{w - z} \left[ 1 - {\left( \frac{z(1-w)}{w(1-z)} \right)}^m \right] .
\end{equation*}
The first term does not contribute to the kernel because the pole at $z=1$ of the integral with respect to $z$ is removed.
Hence, by performing the sum $\sum_{k=0}^{m-1}$ in \eqref{eq:fredholm2z}, we obtain 

\begin{equation*}
\label{Kxyz}
K^c(x,y,\vec{z}) =   \oint_{1} \frac{\dd z}{2 \pi \ii} F^c(z,x) \prod_{j=1}^{n-1}{ \frac{1 + z_j z}{1 + z} } \oint_{0 , - \rho' ,  \{ - z_j^{-1} \}_{j=1}^{n-1} } \frac{\dd w}{2 \pi \ii} G^c(w,y) \prod_{j=1}^{n-1}{\frac{1 + w}{1 + z_j w}} \frac{1}{w-z},
\end{equation*}
where the functions $F^c(z,x)$ and $G^c(w,y)$ are defined as
\be
\label{def:FcGc}
F^c(z , x) = \frac{z + \rho'}{z + 1} \e^{f^c(z , x)} , \qquad  G^c(w , y) = \frac{w + 1}{(w + \rho')(w + c)} \e^{ - f^c(w , y)}
\ee
with $f^c(z,x)$ given by
\be
\label{def:fc}
f^c(z,x) = n \ln{ \left( \frac{1 + z}{z} \right) } + m \ln{\left( \frac{z}{1 - z} \right)} - x \ln{(z + c)} - \frac{zt}{2}.
\ee

Let us substitute the Fredholm determinant formula of $I_w(\vec{z})$ \eqref{eq:fredholmdet2z} into the anti-symmetric expression of $\mathcal{I}_2$ given by \eqref{I2 Iz antisym}. Recalling the definition of the Fredholm determinant \eqref{eq:FHDdef}, $\mathcal{I}_2$ now reads

\begin{equation}
\label{detdivide_1_pre1}
    \mathcal{I}_2 = \oint_{1} \dd^{n-1} z\, L(\vec{z}) \left[1 + \sum_{k=1}^{m}{ \frac{{(-1)}^k}{ k! } } \sum_{x_1 = 1}^{\infty}{ \sum_{x_2 = 1}^{\infty}{  \cdots \sum_{x_k = 1}^{\infty}{ \det\left[ K^c(x_i , x_j , \vec{z}) \right]_{1 \leq i , j \leq k}  } } } \right] .
\end{equation}
Here $L(\vec{z})$ is defined in \eqref{def:L(z)}. 
Next we show that the summations over $x_i\in\mathbb{N}$ for any $i\in[1,m]$ and
$z$-integrations commute, i.e., we prove the following Lemma.

\medskip

\begin{lemman}
\label{commutativity sum and int}
For any $\rho \in (0,1)$, $t>0$ and $n, m \in \mathbb{N}$, the following equality holds:

\begin{equation*}\label{eq:com_integrand_sum}
\begin{split}
&  \oint_{1} \dd^{n-1} z L(\vec{z}) \sum_{k=1}^{m}{ \frac{{(-1)}^k}{ k! } } \sum_{x_1 = 1}^{\infty}{ \sum_{x_2 = 1}^{\infty}{  \cdots \sum_{x_k = 1}^{\infty}{ \det\left[ K^c(x_i , x_j , \vec{z}) \right]_{1 \leq i , j \leq k}  } } } \\
= & \sum_{k=1}^{m}{ \frac{{(-1)}^k}{ k! } } \sum_{x_1 = 1}^{\infty}{ \sum_{x_2 = 1}^{\infty}{  \cdots \sum_{x_k = 1}^{\infty}{   \oint_{1} \dd^{n-1} z L(\vec{z}) \det\left[ K^c(x_i , x_j , \vec{z}) \right]_{1 \leq i , j \leq k}  } } }. 
\end{split}
\end{equation*}

\end{lemman}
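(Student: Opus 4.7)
The plan is to justify the interchange by Fubini's theorem, which requires absolute and uniform convergence of the iterated series with respect to $\vec{z}$ on its contour. Since the series in $k$ is finite (terminating at $k=m$ because $K^c$ is a rank-$m$ kernel), the only real issue is to show absolute convergence of the $k$-fold sum $\sum_{x_1,\dots,x_k\in\mathbb{N}}|\det[K^c(x_i,x_j,\vec z)]_{1\le i,j\le k}|$ uniformly for $\vec z$ on a suitable compact contour.

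First I would fix the contours carefully. Choose a small circle $C_\delta$ around $1$ for each $z_j$, with $\delta$ small enough that the poles $-z_j^{-1}$ stay in a small neighborhood of $-1$ for all $\vec z\in C_\delta^{n-1}$. With $c>0$ sufficiently large, the disc $S(-c,|1+c|)$ contains $0,-\rho'$ and all these poles $-z_j^{-1}$ while excluding $z=1$, so the hypotheses of Proposition~\ref{prop:fredholm} are met uniformly in $\vec z\in C_\delta^{n-1}$. Then fix a single $w$-contour (a circle around $-c$ of radius $\rho_1<|1+c|$) and a single $z$-contour (a small circle around $1$, on which $|z+c|=\rho_2>\rho_1$), both independent of $\vec z$, to use in the representations \eqref{phiform2a}--\eqref{psiform2a}.

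Next I would extract exponential decay/growth bounds from these contour integrals. The factor $(z+c)^{-x}$ in $\phi^c_k$ and $(w+c)^x$ in $\psi^c_k$ give pointwise bounds
\begin{equation*}
|\phi^c_k(x,\vec z)|\le C_1\,\rho_2^{-x},\qquad |\psi^c_k(x,\vec z)|\le C_2\,\rho_1^{x},
\end{equation*}
with constants $C_1,C_2$ depending on $n,m,t,\rho,c$ and $\delta$ but independent of $x,k\in[0,m-1]$ and of $\vec z\in C_\delta^{n-1}$ (all remaining factors in the integrands are continuous on the compact contours and the $\vec z$-domain). Expanding $\det[K^c(x_i,x_j,\vec z)]$ by the Cauchy--Binet formula, writing it as a sum over $k$-subsets $S\subseteq[0,m-1]$ of products of two $k\times k$ minors in the $\phi$'s and $\psi$'s, and applying Hadamard's inequality to each minor, I obtain
\begin{equation*}
\bigl|\det[K^c(x_i,x_j,\vec z)]_{1\le i,j\le k}\bigr|\le \binom{m}{k}\,k^{k}\,(C_1C_2)^{k}\prod_{i=1}^{k}(\rho_1/\rho_2)^{x_i},
\end{equation*}
uniformly in $\vec z$. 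Since $\rho_1/\rho_2<1$, summing the geometric series in each $x_i$ independently shows that $\sum_{x_1,\dots,x_k}|\det[K^c(x_i,x_j,\vec z)]|$ is finite and bounded uniformly in $\vec z\in C_\delta^{n-1}$.

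With this uniform absolute convergence in hand, and noting that $L(\vec z)$ is continuous and therefore bounded on the compact contour $C_\delta^{n-1}$, Fubini's theorem allows the interchange of the compact contour integral with the iterated sums, which is the claimed equality. The main (and essentially only) technical point is securing the \emph{uniform-in-$\vec z$} exponential bounds on $\phi^c_k$ and $\psi^c_k$; once the contours are chosen independently of $\vec z$ this is a routine estimate on the representations \eqref{phiform2a}--\eqref{psiform2a}. \qed
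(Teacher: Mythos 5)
Your proof is correct and rests on the same core idea as the paper's: establish a uniform-in-$\vec z$ exponential bound $|\det[K^c(x_i,x_j,\vec z)]|\le M\prod_i q^{x_i}$ with some $q<1$, then apply dominated convergence (Weierstrass $M$-test / Fubini). Where you and the paper diverge is in how you reach the bound. The paper bounds the \emph{whole kernel}: it shows $|(w_c+c)^{x-y}K^c(x,y,\vec z)|\le C\,e^{-a\delta(x+y)}$ by estimating the double contour integral on the deformed steepest-descent contours $\Gamma',\Sigma'$, uses the conjugation identity $\det[a_{ij}]=\det[b^{x_i-x_j}a_{ij}]$, and then applies Hadamard once to the conjugated $k\times k$ matrix. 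You instead exploit the rank-$m$ factorisation $K^c=\sum_{l=0}^{m-1}\phi^c_l\psi^c_l$, bound $\phi^c_l$ and $\psi^c_l$ separately, expand the determinant via Cauchy--Binet over $k$-subsets $S\subset[0,m-1]$, and apply Hadamard to each of the two $k\times k$ minors. Both routes deliver the same geometric decay $\prod_i(\rho_1/\rho_2)^{x_i}$; yours carries an extra $\binom{m}{k}$ factor and the Cauchy--Binet step, which is harmless since the outer sum over $k\le m$ is finite. The paper's route is marginally leaner (one Hadamard application, no Cauchy--Binet), yours is arguably more transparent given that the kernel is already written as a finite sum of rank-one pieces.

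One small imprecision: you write the $z$-contour is ``a small circle around $1$, on which $|z+c|=\rho_2$''. On a circle centred at $1$, $|z+c|$ is not constant; you should define $\rho_2:=\min_{z\text{ on contour}}|z+c|$, and the condition $\rho_2>\rho_1$ is then met by shrinking the circle radius since $\rho_1<|1+c|$. This does not affect the validity of the argument, only the wording.
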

\begin{proof}
See Appendix~\ref{ap:commutativity}.
\qed
\end{proof}

Owing to this Lemma, we can rewrite the right hand side of \eqref{detdivide_1_pre1} as

\begin{equation}
\label{detdivide_1_pre2}
\mathcal{I}_2 =  I_z + \sum_{k=1}^{m}{ \frac{{(-1)}^k}{k!} } \sum_{x_1 = 1}^{\infty}{ \sum_{x_2 = 1}^{\infty}{ \cdots \sum_{x_k = 1}^{\infty}{ \oint_1 \dd^{n-1} z\,  L(\vec{z}) \det\left[ K^c(x_i, x_j , \vec{z}) \right]_{1 \leq i , j \leq k} } } },
\end{equation}
where
\begin{equation}
\label{def:Iz}
I_z := \oint_1 \dd^{n-1} z\,  L(\vec{z}).
\end{equation}

In the following, we will prove that the asymptotic behaviour of $\mathcal{I}_2$ given by the right hand side of \eqref{detdivide_1_pre2} is same as that with $K^c(x_i,x_j,\vec{z})$ replaced by $K^c(x_i,x_j,\vec{1})$. To establish this, the following Proposition for each term in \eqref{detdivide_1_pre2} plays a central role. 

\medskip

\begin{prop}
\label{rank1 perturbation det2}
For any $(x_1 , x_2 , \dots , x_k) \in \mathbb{N}^k$, $\rho \in (0,1)$, $t> 0$ and $n, m \in \mathbb{N}$, the following equality holds:

\begin{equation*}\label{eq:rank1lem5}
\begin{split}
&  \oint_{1} \dd^{n-1} z\, L(\vec{z}) \det\left[ K^c(x_i ,x_j , \vec{z})\right]_{1 \leq i , j \leq k} \\
=& \oint_{1} \dd^{n-1} z\, L(\vec{z}) \det\left\{ K^c_{\mathrm{W}}(x_i , x_j) - \left[ \sum_{l=1}^{n-1}{ \prod_{k=1}^{l}{ {(z_k - 1)} A^c_l(x_i) }} \right] B^c(x_j)   \right\}_{ 1 \leq i , j \leq k } .
\end{split}
\end{equation*}
Here the functions, $K^c_{\mathrm{W}}(x,y)$, $A^c_j(x)$ and $B^c(x)$ are defined as 
\begin{align}
K^c_{\mathrm{W}}(x,y) := & K^c(x ,y, \vec{1})= \oint_{1} \frac{\dd z}{2 \pi \ii} F^c(z, x)  \oint_{0 , - \rho' , -1  } \frac{\dd w}{2 \pi \ii} G^c(w, y)  \frac{1}{w-z}, \nonumber \\ 
A_j^c(x) := & \oint_1 \frac{\dd w}{2 \pi \ii} F^c(w, x)  { \left( \frac{w}{1 + w} \right) }^{j-1} \frac{1}{1 + w}, \label{def:Acj}\\
B^c(y) := &\oint_{0 , - \rho' , -1} \frac{ \dd w }{2 \pi \ii}  G^c(w , y) \frac{1}{w+1},\label{def:Bc}
\end{align}
where the functions $F^c(z,x)$ and $G^c(w,y)$ are defined in \eqref{def:FcGc}.
\end{prop}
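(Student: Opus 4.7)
The plan is to reduce the identity to an algebraic decoupling of the $z$- and $w$-integrations in $K^c(x,y,\vec z)$ via a telescoping manipulation, and then to collapse the resulting rank-$(n{-}1)$ perturbation of $K_{\mathrm W}^c$ to the rank-one form on the right-hand side using the Vandermonde factor $\Delta_{n-1}(-z)$ and the graded poles $(z_j-1)^{-(j+1)}$ of $L(\vec z)$.

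\textbf{Telescoping decoupling.} The only $\vec z$-dependence of $K^c(x,y,\vec z)$ sits inside the factor $H(z,w;\vec z)=\prod_{j=1}^{n-1}\frac{(1+z_jz)(1+w)}{(1+z)(1+z_jw)}$. First I will use the elementary identity $(1+z_jz)(1+w)-(1+z)(1+z_jw)=(z_j-1)(z-w)$, which telescopes through the partial products $P_l=\prod_{j=1}^l\frac{(1+z_jz)(1+w)}{(1+z)(1+z_jw)}$ to give
\begin{equation*}
\frac{H(z,w;\vec z)-1}{w-z}=-\sum_{l=1}^{n-1}\frac{z_l-1}{(1+z)(1+z_lw)}\prod_{j=1}^{l-1}\frac{(1+z_jz)(1+w)}{(1+z)(1+z_jw)}.
\end{equation*}
The crucial feature is that the troublesome $(w-z)$ cancels, so the $z$- and $w$-integrals decouple completely, yielding the pointwise rank-$(n{-}1)$ decomposition
\begin{equation*}
K^c(x,y,\vec z)=K^c_{\mathrm W}(x,y)-\sum_{l=1}^{n-1}(z_l-1)\,\tilde A_l(x;\vec z)\,\tilde B_l(y;\vec z),
\end{equation*}
with $\tilde A_l,\tilde B_l$ single-contour factors extracted from the telescoping. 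Writing $\frac{1+z_jz}{1+z}=1+(z_j-1)\frac{z}{1+z}$ and expanding gives
\begin{equation*}
\tilde A_l(x;\vec z)=\sum_{r=0}^{l-1}e_r(z_1-1,\ldots,z_{l-1}-1)\,A^c_{r+1}(x),
\end{equation*}
where $e_r$ is the $r$-th elementary symmetric polynomial; in particular the top coefficient $e_{l-1}=\prod_{k=1}^{l-1}(z_k-1)$ combined with the prefactor $(z_l-1)$ already reproduces the expression $\prod_{k=1}^l(z_k-1)A^c_l(x)$ appearing in the right-hand side, and a parallel manipulation will show $\tilde B_l(y;\vec z)=B^c(y)+O(z_j-1)$.

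\textbf{Determinant expansion.} Next I will substitute this rank-$(n{-}1)$ decomposition into $\det[K^c(x_i,x_j,\vec z)]_{1\le i,j\le k}$ and expand by multilinearity in the columns. Two columns selecting the same index $l$ become proportional to the common vector $\tilde A_l(\cdot;\vec z)$, so the expansion restricts to configurations parametrized by a subset $T\subseteq[1,k]$ and an injection $\sigma\colon T\hookrightarrow[1,n-1]$. The analogous expansion of the right-hand side yields only $|T|\le1$ terms, since that kernel is rank-one. The $|T|=0$ pieces match immediately, so the whole proposition will reduce to showing that, after multiplying by $L(\vec z)$ and integrating, (i) the $|T|=1$ contribution on the left (after using the expansion of $\tilde A_l$ above and tracking $\tilde B_l-B^c$) matches the $|T|=1$ contribution on the right, and (ii) all $|T|\ge2$ contributions on the left vanish identically.

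\textbf{Main obstacle.} The cancellations (i) and (ii) are the technical heart of the argument and the main obstacle. Here the specific numerology of $L(\vec z)$ is essential: an extra contribution arising from a lower-degree term in $\tilde A_l$, from the correction $\tilde B_l-B^c$, or from a $|T|\ge2$ configuration carries a product $\prod_j(z_j-1)^{\alpha_j}$ which, after combining with the Vandermonde $\Delta_{n-1}(-z)$ and anti-symmetrizing over permutations of the $z_j$, has a total vanishing order at $\vec z=\vec 1$ strictly larger than the pole of order $j+1$ at $z_j=1$ in $L$; the resulting integrand is then analytic at each $z_j=1$ and the iterated contour integral around $\vec z=\vec 1$ vanishes. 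Verifying this systematically over all $T,\sigma$ and all indices in the symmetric-polynomial expansions will require a careful residue-and-symmetry bookkeeping which leans crucially on the graded exponents $j+1$ in the denominator of $L$ and on the anti-symmetric structure of $\Delta_{n-1}(-z)$, and is naturally carried out in a dedicated appendix.
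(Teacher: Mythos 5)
Your telescoping identity $\frac{H(z,w;\vec z)-1}{w-z}=-\sum_{l=1}^{n-1}\frac{z_l-1}{(1+z)(1+z_lw)}\prod_{j=1}^{l-1}\frac{(1+z_jz)(1+w)}{(1+z)(1+z_jw)}$ is correct and gives a clean \emph{pointwise} rank-$(n{-}1)$ decomposition $K^c(x,y,\vec z)=K^c_{\mathrm W}(x,y)-\sum_{l=1}^{n-1}(z_l-1)\tilde A_l(x;\vec z)\tilde B_l(y;\vec z)$. This is a genuinely different route from the paper, which never decomposes the kernel pointwise: Lemma~6.3 there works entirely under the integral sign, Taylor-expanding $\frac{1+w}{1+z_jw}$ in powers of $(z_j-1)$ and using a chain of antisymmetry arguments to pass from $I_1^{(k)}+I_2^{(k)}$ to $I_1^{(k+1)}+I_2^{(k+1)}$, and thereby lands \emph{directly} on a rank-\emph{one} perturbation $K^c_{\mathrm W}-(z_1-1)A^c(x,\vec z)B^c(y)$ in which the $y$-factor is already $\vec z$-independent. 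Your pointwise decomposition is more transparent as an algebraic starting point, but it starts from a weaker place (rank $n{-}1$, both factors $\vec z$-dependent), so the burden you carry in the subsequent determinant expansion is heavier than the paper's.

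The gap is in the ``main obstacle'' paragraph: the total-vanishing-order heuristic does not actually establish (i) or (ii). Consider already the simplest $|T|=1$ case, $n=3$, $k=1$, $l=2$: after multiplying by $L(\vec z)$, the term $(z_2-1)A^c_1(x)\tilde B_2(y;z_1,z_2)$ acquires the prefactor $\frac{(z_2-z_1)}{(z_1-1)^2(z_2-1)^2}$, which still has a double pole in each variable and no net vanishing at $\vec z=\vec 1$. The iterated residue vanishes nonetheless, but for a different reason: $\tilde B_2(y;z_1,z_2)=\oint G^c(w,y)\frac{1+w}{(1+z_1w)(1+z_2w)}\frac{\dd w}{2\pi\ii}$ is \emph{symmetric} in $(z_1,z_2)$, and the residue computation reduces exactly to $h(1)^2\big[\partial_{z_1}\tilde B_2-\partial_{z_2}\tilde B_2\big]\big|_{(1,1)}=0$. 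More generally $\tilde B_l(y;\vec z)=\oint G^c(w,y)\frac{(1+w)^{l-1}}{\prod_{j=1}^l(1+z_jw)}\frac{\dd w}{2\pi\ii}$ is symmetric in $(z_1,\dots,z_l)$, and it is this symmetry together with the antisymmetry of $\Delta_{n-1}(-z)$ and the graded pole orders that drives the cancellations; ``counting degrees'' is not enough, because which \emph{monomials} in $(z_j-1)$ survive the iterated residue depends delicately on the Vandermonde. The $|T|\ge2$ case carries a further sign-tracking issue from the permutation of rows in the determinant. The paper's residue-and-antisymmetry bookkeeping in Lemma~6.3, together with Lemma~6.5 (holomorphy of $K^c(x,y,\vec z)$ near $\vec z=\vec 1$, needed because the $w$-poles $-z_j^{-1}$ collide with $-1$ as $z_j\to1$), is precisely the machinery that does this work, and your proposal does not yet have a substitute for it. So: different and attractive entry point, but the decisive cancellation argument is still missing.
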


\medskip

In order to prove Proposition~\ref{rank1 perturbation det2}, we need to take some steps which consist of proving Lemmas~\ref{rank1 perturbation term1}, \ref{rank1 perturbation term2} and \ref{kernel holomorphic}.
In the proofs of these Lemmas, the asymmetric part of $L(\vec{z})$, $\Delta_{n-1}(-z)/\prod_{j=1}^{n-1}{(z_j - 1)}^{j+1}$, plays a crucial role in algebraic manipulations.
We utilise the properties that $\Delta_{n-1}(-z)$ is anti-symmetric under exchanges $z_i \leftrightarrow z_j$ for any pairs $(i , j) \in {[1,n-1]}^2$ and  $z$-integrand, namely $L(\vec{z})$, has the poles of the order $j+1$ at $z_j = 1$ for any $j \in [1,n-1]$.
On the other hand, the detailed form of the other part, which is symmetric function of $z_j$s, is not important as long as it does not produce any additional singularities. Therefore, for convenience, we define \begin{equation*}
\label{def:h}
h(z) := (1 - \rho' z) {\left( \frac{2}{1 + z} \right)}^m \mathrm{e}^{ (z^{-1} - 1) \frac{t}{2} }
\end{equation*}
and, in terms of $h(z)$, we rewrite $L(\vec{z})$ defined in \eqref{def:L(z)} as
\begin{equation}
\label{simple:L(z)}
	    L(\vec{z}) =  \frac{\e^{-\rho t/2}}{(\rho')^{n-1}  } \left(\frac{2(1-\rho)}{2-\rho}\right)^m  \frac{ {(-1)}^{n-1} \Delta_{n-1}(-z) }
		{\prod_{j=1}^{n-1} (z_j-1)^{j+1} } \prod_{j=1}^{n-1}{h(z_j)} ,
\end{equation}
where $\Delta_{n-1}(z) = \prod_{1 \leq i < j \leq n-1}{(z_i - z_j)}$.

As a first step, we separate each entry of the determinant into the main contribution coming from $z_j = 1$ and the other.

\medskip

\begin{lemman}
\label{rank1 perturbation term1}
For any $(x , y) \in \mathbb{N}^2$, $\rho \in (0,1)$, $t > 0$ and $n, m \in \mathbb{N}$, the following equality holds:

\begin{equation}\label{eq:rank1lem1}
\oint_{1} \dd^{n-1} z\, L(\vec{z}) K^c(x ,y , \vec{z}) = \oint_{1} \dd^{n-1} z\, L(\vec{z})  \left[ K_{\mathrm{W}}^c(x , y) -  (z_1 - 1) A^c(x, \vec{z}) B^c(y) \right],
\end{equation}
where $A^c(x , \vec{z})$ is defined as
\begin{align*}
A^c(x, \vec{z}) := & \oint_{1} \frac{ \dd z }{2 \pi \ii} F^c(z, x)  \frac{1}{ 1+z} \prod_{j=2}^{n-1}{ \frac{1 + z_j z}{1+z} }. 
\end{align*}
\end{lemman}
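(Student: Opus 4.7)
My plan is to begin from the algebraic identity
\[
\frac{(1+z_1 z)(1+w)}{(1+z)(1+z_1 w)} = 1 + \frac{(z_1-1)(z-w)}{(1+z)(1+z_1 w)},
\]
which follows from $(1+z_1 z)(1+w) - (1+z)(1+z_1 w) = (z_1-1)(z-w)$. Applying this to the $j=1$ factor of the double product inside $K^c(x,y,\vec{z})$ and using $(z-w)/(w-z)=-1$ yields an exact decomposition
\[
K^c(x,y,\vec{z}) = \widetilde{K}^c(x,y,\vec{z}) \;-\; (z_1-1)\,A^c(x,\vec{z})\,\widetilde{B}^c(y,\vec{z}),
\]
in which $\widetilde{K}^c$ has the same form as $K^c$ but with its $j=1$ factor of both the $z$- and $w$-products dropped, and
\[
\widetilde{B}^c(y,\vec{z}) := \oint_{C_w}\frac{\dd w}{2\pi\ii}\,\frac{G^c(w,y)}{1+z_1 w}\prod_{j=2}^{n-1}\frac{1+w}{1+z_j w}.
\]

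The next step is to argue that under integration against $L(\vec{z})$, one may replace $\widetilde{K}^c$ by $K^c_W(x,y)$ and $\widetilde{B}^c$ by $B^c(y)$. The key observation for $\widetilde{K}^c$ is that its $w$-integrand no longer contains $1/(1+z_1 w)$, so the $w$-contour $C_w$ can be freely deformed across $w=-z_1^{-1}$ (not a pole), rendering $\widetilde{K}^c$ independent of $z_1$. Similarly, $\widetilde{B}^c|_{z_1=1}$ reduces by direct substitution to a contour integral with integrand $G^c(w,y)(1+w)^{-1}\prod_{j=2}^{n-1}(1+w)/(1+z_j w)$; a further deformation of this contour, exploiting the cancellation between the $(1+w)^{n-1}$ numerator and the order-$(n-1)$ pole of $G^c$ at $w=-1$, shows this integral equals $B^c(y)$ up to terms of order $(z_k-1)$ for $k\ge 2$, which are produced by iterating the same algebraic identity for the remaining $j\ge 2$ factors.

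The $z_1$-integration against $L(\vec{z})$ is then performed via the residue theorem at $z_1=1$. Because $L(\vec{z})$ carries the double-pole factor $(z_1-1)^{-2}$, the $z_1$-integral extracts only the zeroth and first Taylor coefficients in $(z_1-1)$ of the accompanying integrand: the $z_1$-independent $\widetilde{K}^c$ contributes at zeroth order, producing $K^c_W(x,y)$, while the $-(z_1-1)A^c\widetilde{B}^c$ piece contributes at first order, giving $-A^c(x,\vec{z})B^c(y)$. The residual $(z_k-1)$-corrections for $k\ge 2$ combine with the corresponding residual $\vec{z}_{\setminus 1}$-dependence of $\widetilde{K}^c$ and cancel through the antisymmetric factor $\Delta_{n-1}(-z)$ in $L(\vec{z})$ together with the higher-order poles $1/(z_j-1)^{j+1}$; the full extraction of these corrections is the content of the subsequent Lemma~\ref{rank1 perturbation term2} and Proposition~\ref{rank1 perturbation det2}.

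The main technical obstacle I expect is the coalescence of the moving pole $-z_1^{-1}$ with the fixed point $-1$ as $z_1\to 1$: the $w$-contour $C_w$ must be chosen so as to enclose $-z_1^{-1}$ uniformly in $\vec{z}$ near $\vec{1}$ without introducing spurious $z_1$-dependence in $\widetilde{K}^c$, and the pole at $w=-1$ of $G^c$ must be handled carefully. I would address this by fixing a single ``large'' common $w$-contour enclosing both $-z_1^{-1}$ and $-1$ for all $\vec{z}$ in a neighbourhood of $\vec{1}$, and tracking the cancellation at $w=-1$ between the $(1+w)^{n-1}$ numerator of $\prod_{j=1}^{n-1}(1+w)/(1+z_j w)$ and the order-$(n-1)$ pole of $G^c(w,y)$, which ensures analyticity of all integrands on this enlarged contour.
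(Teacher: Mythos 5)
Your opening algebraic identity
\[
\frac{(1+z_1 z)(1+w)}{(1+z)(1+z_1 w)} = 1 + \frac{(z_1-1)(z-w)}{(1+z)(1+z_1 w)}
\]
is correct and gives the exact (no-integration-needed) decomposition $K^c(x,y,\vec{z}) = \widetilde{K}^c(x,y,\vec{z}) - (z_1-1)A^c(x,\vec{z})\widetilde{B}^c(y,\vec{z})$, which is a genuinely different starting point from the paper's geometric-series expansion of $\frac{1+w}{1+z_1 w}$ around $z_1=1$ (where the paper keeps the $(1+z_1 z)/(1+z)$ factor intact and relies on the order-two pole of $L$ to truncate the series). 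That part is a nice reorganization.

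The gap is in how you dispose of the residual $\vec{z}$-dependence of $\widetilde{K}^c$ (through $z_2,\dots,z_{n-1}$) and of $\widetilde{B}^c$ (through $z_1,\dots,z_{n-1}$). You need the two separate claims
\[
\oint L(\vec{z})\,\widetilde{K}^c \;\overset{?}{=}\; \oint L(\vec{z})\,K^c_{\mathrm W},
\qquad
\oint L(\vec{z})\,(z_1-1)A^c\,\widetilde{B}^c \;\overset{?}{=}\; \oint L(\vec{z})\,(z_1-1)A^c\,B^c,
\]
but these are \emph{false individually} as soon as $n\geq 3$: working out $n=3$ one finds that each side is off by the same nonzero quantity $h(1)^2 A^c_1(x)\oint_w G^c(w,y)\,w/(1+w)^2\,\dd w$, and only their \emph{difference} vanishes. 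So the ``deform the $w$-contour, it is $z_1$-independent, hence replace it by $K^c_W$'' step does not close the argument, and the corrections you defer do not cancel term-by-term; they cancel against each other in a way that must be established. This cancellation is precisely the content of the paper's inductive chain $I_1^{(k)}+I_2^{(k)}=I_1^{(k+1)}+I_2^{(k+1)}$ (with the antisymmetry-under-$z_1\leftrightarrow z_k$ killing the cross terms), and it has to be carried out \emph{inside} the proof of this lemma. Attributing it to Lemma~\ref{rank1 perturbation term2} and Proposition~\ref{rank1 perturbation det2} is a logical inversion: both of those take the present lemma as input (they further rewrite its right-hand side and then pass to determinants), so invoking them here would be circular. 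The contour-coalescence issue ($-z_1^{-1}\to -1$) you flag is real but is handled exactly as you propose, with a fixed enlarged contour on which $|1+1/w|>r$ uniformly; the substantive missing piece is the inductive antisymmetry cancellation.
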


\begin{proof}

We carefully consider the contributions of the poles at $z_j=1$. Considering $L(\vec{z})$ is expressed as \eqref{simple:L(z)}, the left hand side of \eqref{eq:rank1lem1} except for the constant factor is explicitly given by 

\begin{multline}
\label{eq:rank1lem1_1}
 \oint_{1} \prod_{j=1}^{n-1}{ \frac{\dd z_j}{2 \pi \ii} } \frac{ \Delta_{n-1}(-z)}{ \prod_{j=1}^{n-1}{{(z_j -1)}^{j+1}} } \prod_{j=1}^{n-1}{h(z_j)} \times \\
  \oint_{1} \frac{\dd z}{2 \pi \ii} F^c(z , x) \prod_{j=1}^{n-1}{ \frac{1 + z_j z}{1 + z} } \oint_{C_1} \frac{\dd w}{2 \pi \ii} G^c(w , y) \prod_{j=1}^{n-1}{\frac{1 + w}{1 + z_j w}} \frac{1}{w-z},
\end{multline}
where the contour $C_k$ lies around the poles in $w$ at 0, $ - \rho'$ , $ -1$ , and $ \{ - z_j^{-1} \}_{j=k}^{n-1}$.

In the following, we choose the contours of $\vec{z}$-integrals such that they are included in the open polydisc ${S(1,1)}^{n-1}$ where $S(a,r) := \{ z \in \mathbb{C} \mid |z - a| < r \}$.
Since the integral with respect to $w$ has an essential singularity at the infinity and its contour is chosen such that it excludes the infinity, the contour of the integral in terms of the variable $1/w$ includes $\infty, -1/\rho', -1$, $-z_j$'s and excludes the origin. 
For $\vec{z} \in S(1,r)^{n-1}$ with $r \in (0,1)$, to compute the residues at $z_j=1$, we choose the contour such that $r < |1+1/w|$ holds. On such a contour, the following expansion converges uniformly,
\be
\label{eq:Taylorexp}
\frac{1 + w}{1 + z_j w} = \sum_{i=0}^{\infty}{ {\left( \frac{w(1 - z_j)}{1 + w} \right)}^i },  \qquad \underset{1 \leq j \leq n-1}{ \max}  \left| \frac{w (1 -z_j)}{w +1} \right|  \le \varepsilon < 1 ,
\ee
and sum in \eqref{eq:Taylorexp} and integration with respect to $z_j$'s can be interchanged.
First, we perform the expansion in $z_1$.  Because the pole at $z_1=1$ is of order 2, only the first two terms give a nonzero contribution to \eqref{eq:rank1lem1_1}. Therefore \eqref{eq:rank1lem1_1} can be written as $I_1^{(2)} + I_2^{(2)}$ where we define $I_1^{(k)}$ and $I_2^{(k)}$ for $ k \in [2, n] $ as

\begin{multline*}
\label{def:I1k}
I_1^{(k)} =   \oint_{1} \prod_{j=1}^{n-1}{ \frac{\dd z_j}{2 \pi \ii} } \frac{ \Delta_{n-1}(-z)}{ \prod_{j=1}^{n-1}{{(z_j -1)}^{j+1}} } \prod_{j=1}^{n-1}{h(z_j)} \times \\
 \oint_{1} \frac{\dd z}{2 \pi \ii} F^c(z , x) \prod_{j=1}^{n-1}{ \frac{1 + z_j z}{1 + z} } \oint_{C_{k}} \frac{\dd w}{2 \pi \ii} G^c(w , y)  \prod_{j=k}^{n-1}{\frac{1 + w}{1 + z_j w}}  \frac{1}{w-z}, 
\end{multline*}
\begin{multline*}
I_2^{(k)} =   - \oint_{1} \prod_{j=1}^{n-1}{ \frac{\dd z_j}{2 \pi \ii} } \frac{ \Delta_{n-1}(-z) }{ \prod_{j=2}^{n-1}{{(z_j -1)}^{j+1}} (z_1-1) } \prod_{j=1}^{n-1}{h(z_j)} \times \\
 \oint_{1} \frac{\dd z}{2 \pi \ii} F^c(x, z) \prod_{j=1}^{n-1}{ \frac{1 + z_j z}{1 + z} } \oint_{C_{k}} \frac{\dd w}{2 \pi \ii} G^c(w, y)  \frac{w}{1+w}  \prod_{j=k}^{n-1}{\frac{1 + w}{1 + z_j w}}  \frac{1}{w-z} .
\end{multline*}
As a convention   we set the factor $\prod_{j=k}^{n-1} \frac{1+w}{1+z_j w}$ to unity for $k=n-1$.

Secondly, we will show that $I_1^{(2)} + I_2^{(2)}$ is the same as $I_1^{(n)}+I_2^{(n)}$, in which $w$-integrands do not depend on $z_j$'s.
To show this, we prove that the equality

\be
\label{eq:rank1lem1_3}
I_1^{(k)} + I_2^{(k)} = I_1^{(k+1)} + I_2^{(k+1)}
\ee
holds for any $k \in [2,n-1]$. We expand $\frac{1 + w}{1 + z_{k} w}$ in $I_1^{(k)}$ around $z_{k}=1$ as in \eqref{eq:Taylorexp}.  The terms of order $k+1$ or higher in this expansion make the integrand analytic in $z_{k}$ and hence give zero residue. The remaining terms become

\begin{multline*}
I_1^{(k)} =  \sum_{i=0}^{k}{} \oint_{1} \prod_{j=1}^{n-1}{ \frac{\dd z_j}{2 \pi \ii} } \frac{ \Delta_{n-1}(-z)}{ \prod_{j=1, j \neq k}^{n-1}{{(z_j -1)}^{j+1}} {(z_{k} -1 )}^{k+1 - i}  } \prod_{j=1}^{n-1}{h(z_j)} \times \\ \oint_{1} \frac{\dd z}{2 \pi \ii} F^c(z, x) \prod_{j=1}^{n-1}{ \frac{1 + z_j z}{1 + z} } 
 \oint_{C_{k+1} } \frac{\dd w}{2 \pi \ii} G^c(w, y)  {\left( \frac{-w}{1 + w} \right)}^i  \prod_{j=k+1}^{n-1}{\frac{1 + w}{1 + z_j w}}  \frac{1}{w-z}.
\end{multline*}
The terms with $i=1,\ldots, k-1$ vanish because for each of those terms the integrand is anti-symmetric under exchange $z_{k-i} \leftrightarrow z_{k}$. Hence only the terms with $i=0$ and $i=k$ survive,
\begin{multline}
\label{eq:rank1lem1_4}
I_1^{(k)}  =  \oint_{1} \prod_{j=1}^{n-1}{ \frac{\dd z_j}{2 \pi \ii} } \frac{ \Delta_{n-1}(-z)}{ \prod_{j=1, j \neq k}^{n-1}{{(z_j -1)}^{j+1}}  (z_{k}-1) } \prod_{j=1}^{n-1}{h(z_j)} \oint_{1} \frac{\dd z}{2 \pi \ii} F^c(z, x) \prod_{j=1}^{n-1}{ \frac{1 + z_j z}{1 + z} } \times \\
\oint_{C_{k+1} } \frac{\dd w}{2 \pi \ii} G^c(w, y)   \left( \frac{1}{(z_{k}-1)^{k}} +\left( \frac{-w}{1 + w} \right)^{k}  \right) \prod_{j=k+1}^{n-1}{\frac{1 + w}{1 + z_j w}}  \frac{1}{w-z}.
\end{multline}
Likewise, expanding $ \frac{1 + w}{1 + z_{k} w}$ around $z_{k}=1$  in $I_2^{(k)}$ and making use of analytic and symmetry properties of the integrand, we obtain
\begin{multline}
\label{eq:rank1lem1_5}
I_2^{(k)} =  - \oint_{1} \prod_{j=1}^{n-1}{ \frac{\dd z_j}{2 \pi \ii} } \frac{ \Delta_{n-1}(-z) }{ \prod_{j=2,  j \neq k}^{n-1}{{(z_j -1)}^{j+1}} (z_{k}-1)^2 (z_1-1) } \prod_{j=1}^{n-1}{h(z_j)} \oint_{1} \frac{\dd z}{2 \pi \ii} F^c(z, x) \prod_{j=1}^{n-1}{ \frac{1 + z_j z}{1 + z} } \times \\
\oint_{C_{k+1}} \frac{\dd w}{2 \pi \ii} G^c(w, y) \frac{w}{1+w} \left(  \frac{1}{(z_{k}-1)^{k-1}} + \left(\frac{-w}{1 + w} \right)^{k-1}  \right) \prod_{j=k+1}^{n-1}{\frac{1 + w}{1 + z_j w}}  \frac{1}{w-z} .
\end{multline}
The second terms of \eqref{eq:rank1lem1_4} and \eqref{eq:rank1lem1_5} differ only in terms of the orders of the poles at $z_1=1$ and $z_k=1$. It is easy to see that the integrand of the sum of these terms is anti-symmetric under the exchange $z_1 \leftrightarrow z_{k}$ and thus it is equal to zero.
On the other hand, the first terms of  \eqref{eq:rank1lem1_4} and \eqref{eq:rank1lem1_5} coincide with $I_1^{(k+1)}$ and $I_2^{(k+1)}$, respectively.
Thus, it turns out that the equality \eqref{eq:rank1lem1_3} holds for any $k \in [2,n-1]$.
Using the equality \eqref{eq:rank1lem1_3} from $k=2$ to $k=n-1$ consecutively, we obtain
\begin{equation*}
\label{eq:rank1lem1_7}
I_1^{(2)} + I_2^{(2)} = I_1^{(n)} + I_2^{(n)}.
\end{equation*}

Thirdly, we further manipulate $I_1^{(n)}$ and $I_2^{(n)}$.  Considering the equality $\frac{1 + z_1 z}{1 + z} =  1 + \frac{ z(z_1 - 1) }{1 + z} $, $I_1^{(n)}$ can be divided into two terms as
\begin{equation}
\label{eq:rank1lem1_8}
\begin{split}
I_1^{(n)} = & \oint_{1} \prod_{j=1}^{n-1}{ \frac{\dd z_j}{2 \pi \ii} } \frac{ \Delta_{n-1}(-z) }{ \prod_{j=1}^{n-1}{{(z_j -1)}^{j+1}} } \prod_{j=1}^{n-1}{h(z_j)}  \times \\
& \hspace{1in} \oint_{1} \frac{\dd z}{2 \pi \ii} F^c(z, x) \prod_{j=2}^{n-1}{ \frac{1 + z_j z}{1 + z} }  \oint_{0 , - \rho' , -1  } \frac{\dd w}{2 \pi \ii} G^c(w, y)   \frac{1}{w-z} \\
+ &  \oint_{1} \prod_{j=1}^{n-1}{ \frac{\dd z_j}{2 \pi \ii} } \frac{ \Delta_{n-1}(-z) }{ \prod_{j=2}^{n-1}{{(z_j -1)}^{j+1}} (z_1 - 1) } \prod_{j=1}^{n-1}{h(z_j)} \times \\
& \hspace{1in}  \oint_{1} \frac{\dd z}{2 \pi \ii} F^c(z, x) \left( \frac{z}{1 + z} \right)  \prod_{j=2}^{n-1}{ \frac{1 + z_j z}{1 + z} } 
\oint_{0 , - \rho' , -1  } \frac{\dd w}{2 \pi \ii} G^c(w, y)  \frac{1}{w-z} .
\end{split}
\end{equation}
Applying the expansion

\begin{equation}
\label{eq:rank1lem1_9}
\prod_{j=2}^{n-1}{ \frac{1 + z_j z}{1 + z} } =   1  +   \sum_{j=1}^{n-2}{ { \left( \frac{z}{1 + z} \right) }^{j} \sum_{2 \leq \ell_1 < \ell_2 < \dots < \ell_j \leq n-1 }{ \prod_{i=1}^{j}{ (z_{\ell_i} - 1) } } }
\end{equation}
to the first term of \eqref{eq:rank1lem1_8}, it turns out that unity remains and all terms in the sums vanish because for each term there exists a pair $(i , j) \in {[1,n-1]}^2$ such that $z$-integrand is anti-symmetric under exchange $z_i \leftrightarrow z_j$. Thus, the first term of \eqref{eq:rank1lem1_8} reduces to

\begin{equation*}
\label{eq:rank1lem1_10}
\oint_{1} \prod_{j=1}^{n-1}{ \frac{\dd z_j}{2 \pi \ii} } \frac{ \Delta_{n-1}(-z)}{ \prod_{j=1}^{n-1}{{(z_j -1)}^{j+1}} } \prod_{j=1}^{n-1}{h(z_j)} \oint_{1} \frac{\dd z}{2 \pi \ii} F^c(z, x)  \oint_{0 , - \rho' , -1  } \frac{\dd w}{2 \pi \ii} G^c(w, y)  \frac{1}{w-z},
\end{equation*}
and it coincides with the first term of the right hand side of \eqref{eq:rank1lem1}.

Likewise, considering the equality $\frac{1 + z_1 z}{1 + z} =  1 + \frac{ z(z_1 - 1) }{1 + z} $, $I_2^{(n)}$ can be divided into two terms and the second term vanishes because $z_1$-integrand is regular in the vicinity of $z_1 = 1$, i.e.

\begin{multline*}
I_2^{(n)} =  - \oint_{1} \prod_{j=1}^{n-1}{ \frac{\dd z_j}{2 \pi \ii} } \frac{ \Delta_{n-1}(-z)}{ \prod_{j=2}^{n-1}{{(z_j -1)}^{j+1}} (z_1-1) } \prod_{j=1}^{n-1}{h(z_j)} \times \\
 \oint_{1} \frac{\dd z}{2 \pi \ii} F^c(z, x)   \prod_{j=2}^{n-1}{ \frac{1 + z_j z}{1 + z} }  \oint_{0 , - \rho' ,  -1  } \frac{\dd w}{2 \pi \ii} G^c(w, y) \frac{w}{1+w}   \frac{1}{w-z} .
\end{multline*}
From the above, the sum of $I_2^{(n)}$ and the second term of \eqref{eq:rank1lem1_8} can be described as the product of a function of $x$ and that of $y$, i.e.

\begin{align}
& \oint_{1} \prod_{j=1}^{n-1}{ \frac{\dd z_j}{2 \pi \ii} } \frac{ \Delta_{n-1}(-z)}{ \prod_{j=2}^{n-1}{{(z_j -1)}^{j+1}} (z_1 - 1) } \prod_{j=1}^{n-1}{h(z_j)} \times \nonumber \\
& \hspace{1in} \oint_{1} \frac{\dd z}{2 \pi \ii} F^c(z, x)   \prod_{j=2}^{n-1}{ \frac{1 + z_j z}{1 + z} } \oint_{0 , - \rho' , -1  } \frac{\dd w}{2 \pi \ii} G^c(w, y) \left[ \frac{z}{1 +z} - \frac{w}{1 + w} \right]  \frac{1}{w-z} \nonumber \\
= & -  \oint_{1} \prod_{j=1}^{n-1}{ \frac{\dd z_j}{2 \pi \ii} } \frac{ \Delta_{n-1}(-z)}{ \prod_{j=2}^{n-1}{{(z_j -1)}^{j+1}} (z_1 - 1) } \prod_{j=1}^{n-1}{h(z_j)} \times \nonumber \\
& \hspace{1in}   \oint_{1} \frac{\dd z}{2 \pi \ii} F^c(z, x) \frac{1}{1 + z}   \prod_{j=2}^{n-1}{ \frac{1 + z_j z}{1 + z} }   \oint_{0 , - \rho' , -1  } \frac{\dd w}{2 \pi \ii} G^c(w, y)  \frac{1}{1 + w}, \nonumber 
\end{align}
and it coincides with the second term of the right hand side of \eqref{eq:rank1lem1}. Putting everything together, it follows that
\begin{align}
&\oint_{1} \prod_{j=1}^{n-1}{ \frac{\dd z_j}{2 \pi \ii} } \frac{ \Delta_{n-1}(-z) }{ \prod_{j=1}^{n-1}{{(z_j -1)}^{j+1}}} \prod_{j=1}^{n-1}{h(z_j)} K^c(x , y, \vec{z})  \nonumber \\
=& I_1^{(2)} + I_2^{(2)} = I_1^{(n)} + I_2^{(n)} \nonumber  \\
=& \oint_{1} \prod_{j=1}^{n-1}{ \frac{\dd z_j}{2 \pi \ii} } \frac{ \Delta_{n-1}(-z) }{ \prod_{j=1}^{n-1}{{(z_j -1)}^{j+1}}} \prod_{j=1}^{n-1}{h(z_j)} \left[ K^c_{\mathrm{W}}(x , y) - (z_1 - 1) A^c(x, \vec{z})  B^c(y) \right]. \nonumber
\end{align}
\qed
\end{proof}
From the claim of Lemma~\ref{rank1 perturbation term1}, we can easily obtain the following Lemma~\ref{rank1 perturbation term2}.
Although this looks similar to Lemma~\ref{rank1 perturbation term1}, they will be applied in different manners  in the proof of Proposition~\ref{rank1 perturbation det2}.

\medskip

\begin{lemman}
\label{rank1 perturbation term2}

For any $(x , y) \in \mathbb{N}^2$, $\rho \in (0,1)$, $t > 0$ and $n, m \in \mathbb{N}$, the following equality holds:

\begin{multline}\label{eq:rank1lem2}
\oint_{1} \dd^{n-1} z\, L(\vec{z}) \left[ K_{\mathrm{W}}^c(x, y) - (z_1 - 1) A^c(x , \vec{z} )  B^c(y) \right] \\
= \oint_{1} \dd^{n-1} z\, L(\vec{z}) \left\{ K_{\mathrm{W}}^c(x, y) - \left[ \sum_{j=1}^{n-1}{ \prod_{i=1}^{j}{ {(z_i - 1)} A_j^c(x) }} \right] B^c(y) \right\},
\end{multline}
where $A_j^c(x)$ is defined as \eqref{def:Acj}.

\end{lemman}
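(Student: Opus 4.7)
The plan is to reduce the claimed identity to an equality involving only the $A$-parts of the two sides. After cancelling the common $K^c_{\mathrm{W}}(x,y)$ contribution and factoring out the $\vec z$-independent $B^c(y)$, the lemma is equivalent to
\begin{equation*}
\oint_1 \dd^{n-1} z\, L(\vec z)(z_1-1) A^c(x, \vec z) = \oint_1 \dd^{n-1} z\, L(\vec z) \sum_{j=1}^{n-1} A_j^c(x) \prod_{i=1}^{j}(z_i-1).
\end{equation*}

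The algebraic core will be the same expansion already used in Lemma~\ref{rank1 perturbation term1}, namely $\tfrac{1+z_j z}{1+z} = 1 + \tfrac{z(z_j - 1)}{1+z}$. Expanding the product over $j \in [2,n-1]$ inside the definition of $A^c(x,\vec z)$ and recognising each resulting inner $z$-integral via \eqref{def:Acj} gives
\begin{equation*}
(z_1-1)A^c(x,\vec z) = \sum_{k=0}^{n-2} A^c_{k+1}(x) \sum_{\substack{S\subseteq [2,n-1]\\ |S|=k}} (z_1-1)\prod_{j\in S}(z_j-1).
\end{equation*}
It then suffices to show, for each fixed $k$, that only the subset $S = \{2,3,\dots,k+1\}$ gives a nonzero contribution when integrated against $L(\vec z)$, in which case the single surviving term equals $\prod_{i=1}^{k+1}(z_i-1)$ identically; summing over $k$ with the reindexing $j = k+1$ then yields the desired right-hand side.

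The main obstacle is the combinatorial-antisymmetry argument. For a fixed $S\subseteq [2,n-1]$ with $|S|=k$, the integrand $L(\vec z)(z_1-1)\prod_{j\in S}(z_j-1)$ decomposes as the antisymmetric Vandermonde factor $\Delta_{n-1}(-z)$ times a part symmetric in $z_2,\dots,z_{n-1}$ times a pole at $z_j = 1$ of order
\begin{equation*}
p(j) = \begin{cases} 1, & j=1, \\ j, & j\in S, \\ j+1, & j\in [2,n-1]\setminus S. \end{cases}
\end{equation*}
The plan is to verify that the values $p(1),\dots,p(n-1)$ are pairwise distinct only for $S = \{2,\dots,k+1\}$: a case check shows that a coincidence $p(i) = p(j)$ with $i<j$ forces $j = i+1$, $i\notin S$ and $j\in S$, and a short downward induction starting from $j_1 := \max S$ (which is $\geq k+2$ whenever $S\neq\{2,\dots,k+1\}$) produces such a pair $(i,i+1)$ whenever $S$ fails to be the initial segment. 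Once two positions $i\neq j$ share a pole order, the transposition $z_i\leftrightarrow z_j$ flips the sign of $\Delta_{n-1}(-z)$ while preserving every other factor of the integrand, so the corresponding contour integral vanishes. The surviving term $S=\{2,\dots,k+1\}$ contributes $\prod_{i=1}^{k+1}(z_i-1)$ directly, completing the proof in a manner entirely parallel to Lemma~\ref{rank1 perturbation term1}.
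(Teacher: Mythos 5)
Your proposal is correct and follows essentially the same route as the paper: expand the product inside $A^c(x,\vec z)$ via the identity \eqref{eq:rank1lem1_9}, recognize the inner $z$-integral as $A_{k+1}^c(x)$ from the definition \eqref{def:Acj}, and use the antisymmetry of $\Delta_{n-1}(-z)$ against coinciding pole orders to discard all subsets $S$ except the initial segment $\{2,\dots,k+1\}$. Your pole-order case analysis (the descent argument forcing $i\notin S$, $i+1\in S$) correctly and more explicitly justifies the paper's terse assertion that only the consecutive-index terms survive.
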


\begin{proof}

Applying the expansion \eqref{eq:rank1lem1_9} to $A^c(x, \vec{z})$ in the left hand side of \eqref{eq:rank1lem2}, it turns out that, for any $j \in [1, n-2]$, the only terms of which $(\ell_1, \ell_2, \dots, \ell_j)$ is the consecutive numbers $(2 , 3, \ldots , j+1)$ remain and the others vanish.
That is, the second term of the left hand side of \eqref{eq:rank1lem2} except for the constant factor can be described as

\begin{equation}
\label{deformation_second_term}
\begin{split}
& - \oint_{1} \prod_{j=1}^{n-1}{ \frac{\dd z_j}{2 \pi \ii} } \frac{ \Delta_{n-1}(-z) \prod_{j=1}^{n-1}{h(z_j)} }{ \prod_{j=2}^{n-1}{{(z_j -1)}^{j+1}} (z_1 - 1) }  \oint_{1} \frac{\dd w}{2 \pi \ii} \frac{F^c(w, x)}{1+w} \left\{ 1  +   \sum_{j=1}^{n-2}{ { \left( \frac{w}{1 + w} \right) }^{j} \prod_{i=2}^{j + 1}{ (z_{i} - 1) } }  \right\} B^c(y) \\
= & - \oint_{1} \prod_{j=1}^{n-1}{ \frac{\dd z_j}{2 \pi \ii} } \frac{ \Delta_{n-1}(-z)\prod_{j=1}^{n-1}{h(z_j)} }{ \prod_{j=1}^{n-1}{{(z_j -1)}^{j+1}}} \left[ \sum_{j=1}^{n-1}{ \prod_{i=1}^{j}{(z_i - 1)} \oint_{1} \frac{\dd w}{2 \pi \ii} \frac{F^c(w, x)}{1+w} { \left( \frac{w}{1 + w} \right) }^{j-1} } \right] B^c(y).
\end{split}
\end{equation}
From \eqref{def:Acj}, the definition of $A_j^c(x)$, it turns out that the right hand side of \eqref{deformation_second_term} coincides with the second term of the right hand side of \eqref{eq:rank1lem2}.
Therefore, the equality \eqref{eq:rank1lem2} holds.
\qed
\end{proof}
As a second step, we see that $K^c(x_i, x_j ,\vec{z})$ is a holomorphic function in vicinity of $\vec{z} = \vec{1}$.
It guarantees that Lemma~\ref{rank1 perturbation term2} is extended to the determinant as will be mentioned in the proof of Proposition~\ref{rank1 perturbation det2}. 

\medskip

\begin{lemman}
\label{kernel holomorphic}
For any $(x , y) \in \mathbb{N}^2$, $\rho \in (0,1)$, $t > 0$ and $n, m \in \mathbb{N}$, the kernel $K^c(x,y,\vec{z})$ defined as \eqref{Kxyz} is a holomorphic function of $\vec{z} = (z_1,\ldots, z_{n-1})$ in ${S(1,1)}^{n-1}$, where $S(a,r) := \{ z \in \mathbb{C} \mid |z - a| < r \}$ for $a \in \mathbb{C}$.

\end{lemman}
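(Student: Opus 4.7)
The plan is to prove holomorphicity locally around each fixed point $\vec{z}_0 \in S(1,1)^{n-1}$ and then conclude by the arbitrariness of $\vec{z}_0$. The key observation is that although the $w$-integration contour in the definition of $K^c(x,y,\vec{z})$ nominally depends on $\vec{z}$ through the poles at $-z_j^{-1}$, these poles vary continuously with $\vec{z}$, so on a sufficiently small neighborhood of $\vec{z}_0$ one can select a single compact contour $C_w$ that encloses all the required singularities simultaneously. Since $|z_{0,j}-1|<1$ forces $z_{0,j}\neq 0$, each $-z_{0,j}^{-1}$ is a well-defined finite point in the left half-plane.

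Concretely, I would fix a small contour $C_z$ around $z=1$ (kept disjoint from $0$, $-\rho'$, and all $-z_{0,j}^{-1}$) together with a compact contour $C_w$ that strictly encloses $0$, $-\rho'$, and each $-z_{0,j}^{-1}$ in its interior while remaining disjoint from $C_z$. By continuity of the map $z\mapsto-1/z$ at each $z_{0,j}$, there is an open neighborhood $U\subset S(1,1)^{n-1}$ of $\vec{z}_0$ on which $-z_j^{-1}$ stays in the interior of $C_w$ for every $\vec{z}\in U$. On $U$ the kernel is then represented by the same double integral with the fixed contours $C_z,C_w$, independently of $\vec{z}$. For fixed $(z,w)\in C_z\times C_w$, the integrand is holomorphic in $\vec{z}\in U$ except where $1+z_j w=0$, i.e.\ $z_j=-1/w$. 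But $-z_{0,j}^{-1}$ lies strictly inside $C_w$, so the image $\{-1/w:w\in C_w\}$ avoids $z_{0,j}$, and after shrinking $U$ we may assume this image avoids $U$ in every coordinate. The standard theorem on holomorphic dependence of integrals on a parameter then yields holomorphicity of $K^c(x,y,\cdot)$ on $U$.

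The main obstacle, as I see it, is bookkeeping: one must simultaneously keep $C_w$ enclosing the moving interior poles $-z_j^{-1}$ while ensuring that the ``forbidden set'' $\{-1/w:w\in C_w\}$ stays away from $U$ in each coordinate. Both requirements reduce to positioning $C_w$ so that every $-z_{0,j}^{-1}$ lies in its open interior, which is possible since these are finitely many fixed points, after which compactness of $C_w$ delivers a quantitative separation that persists on a sufficiently small neighborhood $U$. With local holomorphicity proved, the global claim on $S(1,1)^{n-1}$ is immediate.
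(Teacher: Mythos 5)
Your proof is correct, and it takes a genuinely different route from the paper's. The paper proves complex analyticity directly: it fixes, for each $\vec{b} \in S(1,r)^{n-1}$, a $w$-contour on which $|1+1/w|>r$, Taylor-expands the factor $(1+z_j z)/(1+z_j w)$ around $z_j = b_j$, swaps sum and integral via uniform convergence, and then bounds the resulting coefficients $c_{k_1,\dots,k_{n-1}}$ to exhibit a convergent power series for $K^c$ on a small polydisc. Your argument instead fixes a compact contour $C_w$ enclosing the poles $0$, $-\rho'$, and $-z_{0,j}^{-1}$, observes by continuity that these poles stay inside $C_w$ (and the forbidden set $\{-1/w : w\in C_w\}$ stays away from $\vec{z}$) on a small neighborhood $U$ of $\vec{z}_0$, and then invokes the standard theorem that a contour integral depending holomorphically on a parameter is itself holomorphic. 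The trade-off is clear: your route is cleaner and more conceptual, avoiding any explicit coefficient estimates; the paper's route is more computational but produces the power series expansion as a by-product, which can be convenient if one later wants quantitative information near $\vec{z}=\vec{1}$. Both handle the $\vec{z}$-dependence of the $w$-contour the same way in spirit (freeze the contour locally), and both are valid. One small remark: in your write-up it would be worth saying explicitly that the frozen-contour representation agrees with the definition \eqref{Kxyz} on $U$ by Cauchy's theorem, since the integrand is holomorphic between the original and the frozen contour and both enclose the same poles for $\vec{z}\in U$; you clearly have this in mind, and it is the one step that deserves a sentence of its own.
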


\begin{proof}
See Appendix~\ref{ap:kernel holomorphic}.
\qed
\end{proof}

\noindent {\it Proof of Proposition~\ref{rank1 perturbation det2}.} It now follows from Lemma~\ref{rank1 perturbation term1}, \ref{rank1 perturbation term2} and \ref{kernel holomorphic}. Details will be given in 
Appendix~\ref{ap:rank1_determinant}. \qed

\medskip

In the following, we will divide $\mathcal{I}_2$ given by \eqref{eq:prob2term2} into the dominant contribution and the subdominant term, which converge to the GUE Tracy-Widom distribution $F_2(s_2)$ and zero in the long time limit, respectively, with the scaling \eqref{nmscale}.
Applying Proposition~\ref{rank1 perturbation det2} to each term of the series in the right hand side of \eqref{detdivide_1_pre2}, we obtain

\begin{equation}
\begin{split}
\label{detdivide_1}
\mathcal{I}_2 
& =  I_z + \sum_{k=1}^{m}{ \frac{{(-1)}^k}{k!} } \sum_{x_1 = 1}^{\infty}{ \sum_{x_2 = 1}^{\infty}{ \cdots \sum_{x_k = 1}^{\infty}{  \oint_1 \dd^{n-1} z\,  L(\vec{z}) \det\left[ K^c(x_i, x_j , \vec{z}) \right]_{1 \leq i , j \leq k} } } } \\
& \begin{multlined} = I_z + \sum_{k=1}^{m}{ \frac{{(-1)}^k}{k!} } \sum_{x_1 = 1}^{\infty}{} \sum_{x_2 = 1}^{\infty}{} \cdots \sum_{x_k = 1}^{\infty}{ }  \oint_1 \dd^{n-1} z\,  L(\vec{z}) \times  \\ \qquad\qquad\qquad\qquad\qquad\qquad \det\left\{ K_{\mathrm{W}}^c(x_i, x_j ) - \left[ \sum_{p=1}^{n-1}{ \prod_{q=1}^{p}{ (z_q - 1) } A^c_p(x_i) }  \right] B^c(x_j) \right\}_{1 \leq i , j \leq k} .
\end{multlined}
\end{split}
\end{equation}
The matrices in the determinant parts for each term of \eqref{detdivide_1} can be expressed as a product of the kernel and the identity matrix minus a rank-one matrix, i.e.,

\begin{multline*}
K^c_{\mathrm{W}}(x_i , x_j) - \left[ \sum_{p=1}^{n-1}{ \prod_{q=1}^{p}{ {(z_q - 1)} A^c_{p}(x_i) }} \right] B^c(x_j)  \\
= \sum_{s = 1}^{k}{  K^c_{\mathrm{W}}(x_i , x_s) \left\{  \delta_{s,j} - \sum_{\ell =1}^{k}{ {\left( K^c_{\mathrm{W}} \right)}^{-1}(x_s, x_\ell) \left[ \sum_{p=1}^{n-1}{ \prod_{q=1}^{p}{ {(z_q - 1)} A^c_{p}(x_\ell) }} \right] B^c(x_j) }  \right\}  }.
\end{multline*}
Since a determinant of a product of two matrices equals a product of determinants of the two matrices and a determinant of the identity matrix minus a rank-one matrix equals $1$ minus the trace of the rank-one matrix, the determinant seen in \eqref{detdivide_1} is divided into two terms.

\begin{equation}
\label{detdivide_2}
\begin{split}
& \det\left\{ K^c_{\mathrm{W}}(x_i , x_j) - \left[ \sum_{p=1}^{n-1}{ \prod_{q=1}^{p}{ {(z_q - 1)} A^c_{p}(x_i) }} \right] B^c(x_j)   \right\}_{ 1 \leq i , j \leq k } \\
= & \det\left[  K^c_{\mathrm{W}}(x_i , x_j) \right]_{1 \leq i , j \leq k}   \times  \det\left\{ \delta_{i,j} - \sum_{\ell=1}^{k}{ {(K^c_{\mathrm{W}})}^{-1}(x_i , x_\ell) \left[ \sum_{p=1}^{n-1}{ \prod_{q=1}^{p}{ {(z_q - 1)} A^c_{p}(x_\ell) }} \right] } B^c(x_j)   \right\}_{ 1 \leq i , j \leq k } \\
=& \det\left[  K^c_{\mathrm{W}}(x_i , x_j) \right]_{1 \leq i , j \leq k}  \times \left\{ 1 -  \sum_{r, \ell =1}^{k}{   {(K^c_{\mathrm{W}})}^{-1}(x_r , x_\ell) \left[ \sum_{p=1}^{n-1}{ \prod_{q=1}^{p}{ {(z_q - 1)} A^c_{p}(x_\ell) }} \right]  B^c(x_r)  } \right\}  \\
=& \det\left[ K^c_{\mathrm{W}}(x_i , x_j) \right]_{1 \leq i , j \leq k} - \sum_{r, \ell =1}^{k}{  {(-1)}^{r+\ell} \det\left[  K^c_{\mathrm{W}}(x_i , x_j) \right]_{ \substack{ 1 \leq i , j \leq k \\  i \neq \ell , j \neq r } }  \left[ \sum_{p=1}^{n-1}{ \prod_{q=1}^{p}{ {(z_q - 1)} A^c_{p}(x_\ell) }} \right]  B^c(x_r)  }.
\end{split}
\end{equation}
It follows from \eqref{detdivide_1} and \eqref{detdivide_2} that $\mathcal{I}_2$ can be written as

\begin{multline*}
\label{detdivide_3}
\mathcal{I}_2 =  I_z + \sum_{k=1}^{m}{ \frac{{(-1)}^k}{k!} } \sum_{x_1 = 1}^{\infty}{  \cdots \sum_{x_k = 1}^{\infty}{ I_z \times \det\left[  K^c_{\mathrm{W}}(x_i , x_j) \right]_{1 \leq i , j \leq k}   }  } \\
-  \sum_{k=1}^{m}{ \frac{{(-1)}^k}{k!} } \sum_{x_1 = 1}^{\infty}{ } \cdots \sum_{x_k = 1}^{\infty}{ } \oint_1 \dd^{n-1} z\,  L(\vec{z}) \sum_{r, \ell =1}^{k}{   {(-1)}^{r+\ell} } \times \\
\det\left[  K^c_{\mathrm{W}}(x_i , x_j) \right]_{ \substack{ 1 \leq i , j \leq k \\  i \neq \ell , j \neq r } }  \left[ \sum_{p=1}^{n-1}{ \prod_{q=1}^{p}{ {(z_q - 1)} A^c_{p}(x_\ell) }} \right]  B^c(x_r) .
\end{multline*}
As for the second term, $I_z$ can be moved to outside of the sums from $x_i = 1$ to $\infty$ for any $i \in [1, n-1]$ because $\det\left[  \bar{K}^c_{\mathrm{W},t}(x_i , x_j) \right]_{1 \leq i , j \leq k}$ does not depend on $z_j$'s.
For the same reason, about the final term, $\oint_1 \dd^{n-1} z\,  L(\vec{z})$ can be moved to inside the sum from $p=1$ to $n-1$.
Then, we arrive at the following proposition.

\medskip

\begin{prop}
The integral $\mathcal{I}_2$ given in \eqref{eq:prob2term2} and \eqref{I2 Iz antisym} can be written into
\begin{equation}
\mathcal{I}_2=\mathcal{I}^{(1)}_2 - \mathcal{I}^{(2)}_2,
\label{I2div}
\end{equation}
where
\begin{equation}
\mathcal{I}^{(1)}_2 = 
I_z \times {\det\left( 1 - K^c_{\mathrm{W}} \right)}_{\ell^2(\mathbb{N})},
\label{def:I2(1)}
\end{equation}
\begin{multline}
\mathcal{I}^{(2)}_2 =  \sum_{k=1}^{m} \frac{(-1)^k}{ k! }
\sum_{x_1 = 1}^{\infty} { \!\! \dots \!\!\! \sum_{x_{k} = 1}^{\infty}
\sum_{r,\ell=1}^{k}{ {(-1)}^{r+\ell} }} \times \\
\det\left[  K^c_{\mathrm{W}}(x_i , x_j) \right]_{ \substack{ 1 \leq i , j \leq k \\  i \neq \ell , j \neq r } } \left[ \sum_{p=1}^{n-1}{  \oint_1 \dd^{n-1} z\,  L(\vec{z}) \prod_{q=1}^{p}{ {(z_q - 1)} A^c_{p}(x_\ell) }} \right]  B^c(x_r).\label{def:I2(2)} 
\end{multline}

\end{prop}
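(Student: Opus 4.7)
The starting point is the anti-symmetric formula \eqref{I2 Iz antisym} for $\mathcal{I}_2$, combined with the Fredholm determinant representation \eqref{eq:fredholmdet2z} for $I_w(\vec z)$ that follows from Proposition~\ref{prop:fredholm}. Expanding this Fredholm determinant as the series \eqref{eq:FHDdef} and applying Lemma~\ref{commutativity sum and int} to commute $\oint \dd^{n-1}z\, L(\vec z)$ with the summations $\sum_{x_i\ge 1}$ yields the identity \eqref{detdivide_1_pre2}. The constant term $I_z$ in that formula is already the ``trivial'' contribution; what remains is to process each determinant $\det[K^c(x_i,x_j,\vec z)]_{1\le i,j\le k}$ inside the Fredholm series.

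The key algebraic step is Proposition~\ref{rank1 perturbation det2}, which (after integrating against $L(\vec z)$) permits replacing
\[
K^c(x_i,x_j,\vec z)\;\longrightarrow\;K^c_{\mathrm W}(x_i,x_j)-u_i(\vec z)\,v_j,
\]
with the rank-one factors
\[
u_i(\vec z):=\sum_{p=1}^{n-1}\prod_{q=1}^{p}(z_q-1)\,A^c_p(x_i),\qquad v_j:=B^c(x_j).
\]
Since $u_i(\vec z)\,v_j$ is rank one in the indices $(i,j)$, the matrix determinant lemma (equivalently, cofactor expansion along a single column) gives
\[
\det\!\bigl[K^c_{\mathrm W}(x_i,x_j)-u_iv_j\bigr]_{i,j=1}^{k}
=\det\!\bigl[K^c_{\mathrm W}(x_i,x_j)\bigr]_{i,j=1}^{k}-\sum_{r,\ell=1}^{k}(-1)^{r+\ell}\det\!\bigl[K^c_{\mathrm W}(x_i,x_j)\bigr]_{\substack{i\neq\ell\\ j\neq r}}u_\ell\,v_r,
\]
which is precisely \eqref{detdivide_2}.

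Substituting this decomposition back into \eqref{detdivide_1_pre2} splits $\mathcal{I}_2$ into two pieces. In the first, $\det[K^c_{\mathrm W}(x_i,x_j)]$ is independent of $\vec z$, so $\oint L(\vec z)\,\dd^{n-1}z$ factors out as $I_z$; the remaining sum over $k$ and the $x_i$ is, by definition \eqref{eq:FHDdef}, the Fredholm series for $\det(1-K^c_{\mathrm W})_{\ell^2(\mathbb N)}$. Combined with the standalone $I_z$ term (the $k=0$ piece), this reassembles into $\mathcal{I}_2^{(1)}=I_z\cdot\det(1-K^c_{\mathrm W})_{\ell^2(\mathbb N)}$, matching \eqref{def:I2(1)}. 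In the second piece, only the factor $u_\ell(\vec z)$ depends on $\vec z$, so $\oint\dd^{n-1}z\,L(\vec z)$ passes through the (finite) sums over $r,\ell,p$ and lands on $\prod_q(z_q-1)A^c_p(x_\ell)$; this is exactly the expression \eqref{def:I2(2)} for $\mathcal{I}_2^{(2)}$.

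\textbf{Anticipated obstacle.} All the genuine analytic content is hidden earlier, in the justification of the sum-integral exchange of Lemma~\ref{commutativity sum and int} and in the rank-one reduction of Proposition~\ref{rank1 perturbation det2}. Once these are in hand, the present statement is a bookkeeping exercise. The only subtlety is to keep the cofactor signs $(-1)^{r+\ell}$ consistent with those in \eqref{def:I2(2)} and to verify that the isolated $I_z$ term (from $k=0$) combines correctly with the $k\ge 1$ sum so that the product $I_z\cdot\det(1-K^c_{\mathrm W})$ is recovered without a leftover term.
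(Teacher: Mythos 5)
Your proposal is correct and follows essentially the same route as the paper: expand the Fredholm series \eqref{detdivide_1_pre2}, invoke Proposition~\ref{rank1 perturbation det2} to replace $K^c(x_i,x_j,\vec z)$ by $K^c_{\mathrm W}(x_i,x_j)-u_\ell(\vec z)B^c(x_j)$ inside the $L(\vec z)$-integral, then use the rank-one determinant identity to split off the adjugate term and reassemble the $\vec z$-independent part into $I_z\det(1-K^c_{\mathrm W})$. The one small cosmetic difference is that the paper derives \eqref{detdivide_2} by first factoring $\det(K^c_{\mathrm W})\det\bigl(I-(K^c_{\mathrm W})^{-1}uv^{\mathrm T}\bigr)$ and then using $\det(I-\text{rank one})=1-\operatorname{tr}$, whereas you invoke the matrix determinant lemma directly, which cleanly avoids the intermediate appearance of $(K^c_{\mathrm W})^{-1}$ (and hence any implicit invertibility assumption), though of course the resulting polynomial identity is the same.
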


\subsection{Evaluation of $\mathcal{I}^{(1)}_2$}
\label{subsec:Ev_I21}
We can separate the estimate of $\mathcal{I}^{(1)}_2$ into $I_z$ and ${\det\left( 1 - K^c_{\mathrm{W}} \right)}_{\ell^2(\mathbb{N})}$ since $K^c_{\mathrm{W}}(x,y)$ does not contain $\vec{z}$. The analysis of ${\det\left( 1 - K^c_{\mathrm{W}} \right)}_{\ell^2(\mathbb{N})}$ follows exactly the same arguments as in Section \ref{sec: I1 limit}, while the limit of $I_z$ is obtained by the same idea in \cite{borodin2008transition,borodin2007fluctuationsintasep,BFPS2007}.

\subsubsection{Evaluation of ${\det\left( 1 - K^c_{\mathrm{W}} \right)}_{\ell^2(\mathbb{N})}$}

In order to perform asymptotic analysis, we define the rescaled functions which depend on the positive real numbers $\xi = x/ \lambda_c t^{1/3}$ and $\zeta = y/ \lambda_c t^{1/3}$ with $x, y \in \mathbb{N}$ as

\begin{equation}\label{def:Kcresc}
\bar{K}^c_{\mathrm{W},t}(\xi,\zeta) = {(w_c + c)}^{ \lambda_c t^{1/3} (\xi - \zeta) } \lambda_c t^{1/3} K^c_{\mathrm{W}}(\lambda_c t^{1/3} \xi , \lambda_c t^{1/3} \zeta),
\end{equation}
where $\lambda_c$ is defined in \eqref{DefLambdac}.
The rescaled kernel is explicitly described as

\begin{equation*}
\label{RescKernelc2}
\begin{split}
& \bar{K}_{\mathrm{W},t}^c(\xi, \zeta) \\
=& \lambda_c t^{1/3} \oint_1 \frac{\dd z}{2 \pi \ii} \left( \frac{z + \rho'}{z + 1} \right) \e^{ f(z, t, \xi) - f(w_c , t,\xi) } \oint_{0,-\rho',-1} \frac{\dd w}{2 \pi \ii} \left( \frac{w + 1}{w + \rho'} \right) \e^{-f(w,t,\zeta) + f(w_c , t,\zeta) + g(w)}  \frac{1}{w - z},
\end{split}
\end{equation*}
where $g(w) = -\ln{(w+c)}$ and $f(z,t,\xi) = g_1(z) t + g_2(z) t^{1/2} + g_3(z,\xi) t^{1/3}$ with the functions $g_1(z)$, $g_2(z)$ and $g_3(z,\xi)$ defined as \eqref{g def1}.
Note that $\bar{K}_{\mathrm{W},t}^c$ differs from $\bar{K}^c_t$ in the sense that it has $(z + \rho')(w + 1)/[(z + 1)(w + \rho')]$ in the integrands and its contour includes also $w=-\rho'$.
Since the contour of the integral with respect to $z$ and $w$ used in the proofs of Propositions \ref{uniform convergence kernelc} and \ref{exp_bound} include also the point $w = -\rho'$, and $(z + \rho')(w + 1)/[(z + 1)(w + \rho')]$ is a regular function on the contours and equals unity at the point $(z, w) = (w_c, w_c)$, we can prove the following Propositions that correspond to Propositions \ref{uniform convergence kernelc} and \ref{exp_bound}.

\medskip
\begin{prop}
\label{uniform convergence kernelc2}
{\rm \textbf{(Uniform convergence of $\boldsymbol{\bar{K}^c_{\mathrm{W},t}}$ on a bounded set)}}
Let $m,n$ be scaled as \eqref{nmscale}. Then for any fixed $L>0$, the rescaled kernel $\bar{K}_{\mathrm{W},t}^{c} (\xi , \zeta)$ defined in \eqref{def:Kcresc} converges uniformly on $\xi,\kappa\in [-L,L]$ to
\begin{equation*}
\label{KcW limit}
\lim_{t \rightarrow \infty}
\bar{K}_{\mathrm{W},t}^{c} (\xi , \zeta)
= A(\xi + s_2 , \zeta + s_2), 
\end{equation*}
where $A(x,y)$ is defined as \eqref{def:airy_kernel} and $s_2$ is given in \eqref{s+-}. 
\end{prop}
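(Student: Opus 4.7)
The plan is to mimic the proof of Proposition~\ref{uniform convergence kernelc} almost verbatim, because $\bar{K}^c_{\mathrm{W},t}$ differs from $\bar{K}^c_t$ only by the extra multiplicative factor
\[
R(z,w) := \frac{(z+\rho')(w+1)}{(z+1)(w+\rho')}
\]
in the integrand, plus one additional pole at $w=-\rho'$ inside the $w$-contour. I would take exactly the steepest-descent contours $\Gamma$ and $\Sigma$ of Lemma~\ref{descent contour 1}. A direct check shows that $-\rho' \in (-1,0)$ already lies inside the circle $\Sigma_4$, since its radius $(1+\rho)/2$ about $-1$ exceeds $|-\rho'-(-1)| = \rho$; hence no deformation is required. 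On $\Gamma \times \Sigma$, the function $R(z,w)$ is holomorphic and bounded, its singularities at $z=-1$ and $w=-\rho'$ sitting respectively off $\Gamma$ and inside $\Sigma_4$.

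The key observation is that $R(w_c,w_c) = 1$: numerator and denominator coincide at $z=w=w_c$. So $R(z,w) = 1 + \mathcal{O}(z-w_c) + \mathcal{O}(w-w_c)$ near the saddle. Under the standard rescaling $z-w_c = v/(\lambda t^{1/3})$, $w-w_c = u/(\lambda t^{1/3})$ with $\lambda = (6a_1)^{1/3}$ and $\lambda_c = (w_c+c)\lambda$ used in the proof of Proposition~\ref{uniform convergence kernelc}, one gets $R = 1 + \mathcal{O}(t^{-1/3})$ uniformly on bounded sets in $(v,u)$. Since the exponential part $\exp[f(z,t,\xi) - f(w,t,\zeta)]$ and the Cauchy factor $1/(w-z)$ are exactly as in that proof, the leading-order asymptotics near the saddle reproduces the same Airy-type double integral, yielding $A(\xi+s_2,\zeta+s_2)$.

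The contribution from the portion of $\Gamma \times \Sigma$ away from $(w_c,w_c)$ decays exponentially in $t$ by the strict maximum property of $\mathrm{Re}(g_1)$ on the steepest-descent contours established in Lemma~\ref{descent contour 1}, and since $R$ is bounded on those contours, multiplication by $R$ does not upset this decay. Uniformity in $(\xi,\zeta) \in [-L,L]^2$ is handled exactly as in Appendix~\ref{Unif_conv_Kc}: the $(\xi,\zeta)$-dependence enters only through the regular term $g_3(z,\xi)\,t^{1/3}$, and the $(\xi,\zeta)$-independent factor $R$ does not interfere. The main obstacle is really just bookkeeping: accounting for the additional error from $R-1$ alongside the remainder terms already present in the saddle-point expansion. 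All key estimates transfer unchanged from the proof of Proposition~\ref{uniform convergence kernelc}.
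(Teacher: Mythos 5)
Your proposal follows the paper's argument exactly: the paper likewise reduces Proposition~\ref{uniform convergence kernelc2} to Proposition~\ref{uniform convergence kernelc} by noting that the contours used there already enclose $w=-\rho'$ and that the additional factor $(z+\rho')(w+1)/[(z+1)(w+\rho')]$ is regular on those contours and equals unity at $(z,w)=(w_c,w_c)$. Your explicit check that $-\rho'$ lies strictly inside $\Sigma_4$ (distance $\rho<(1+\rho)/2$ from $-1$) and your observation that the factor contributes only an $\mathcal{O}(t^{-1/3})$ correction under the saddle-point rescaling make this precise, but they are the same idea the paper relies on.
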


\medskip
\begin{prop}
\label{exp_bound_W}
{\rm \textbf{(Estimate of kernel $\boldsymbol{\bar{K}^c_{\mathrm{W},t}}$)} }
For any $L > 0$ large enough, there exists $t_0(L) > 0$ such that, for any $t \geq t_0$, the estimate

\begin{equation*}\label{exp_bound_W_claim}
\left| \bar{K}_{\mathrm{W},t}^{c}(\xi, \zeta) \right| \leq \mathrm{e}^{-(\xi + \zeta)}
\end{equation*}
holds for $( \xi , \zeta ) \in {[0 , \infty)}^2 \backslash {[0 , L]}^2 $.
\end{prop}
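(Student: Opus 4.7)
The plan is to mirror the proof of Proposition~\ref{exp_bound} step by step, using the same steepest descent contours $\Gamma'$ and $\Sigma'$ defined in \eqref{Gamma_prime}--\eqref{Sigma_prime}, and to treat the extra rational factor that distinguishes $\bar{K}^c_{\mathrm{W},t}$ from $\bar{K}^c_t$ as a harmless bounded perturbation. Note first that the additional pole at $w=-\rho'$ already lies inside the outer loop $\Sigma_4$, since $|-1-(-\rho')|=\rho<(1+\rho)/2$ for all $\rho\in(0,1)$, so the contour $\Sigma$ (and hence its deformation $\Sigma'$) correctly encloses the new pole and no additional contour surgery is needed.

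The only genuinely new ingredient is the rational factor $R(z,w):=\frac{(z+\rho')(w+1)}{(z+1)(w+\rho')}$ that appears in the integrand of $\bar{K}^c_{\mathrm{W},t}$. I would first establish a uniform bound $|R(z,w)|\le M$ for some $M=M(\rho)>0$ on $\Gamma'\times\Sigma'$. On $\Gamma'$ the only singularity is $z=-1$, which lies strictly inside the $\Sigma$-region, hence at positive distance from $\Gamma'$; this gives a $\rho$-dependent bound on $|(z+\rho')/(z+1)|$. On $\Sigma'$ the singularity $w=-\rho'$ sits inside $\Sigma_4$ at distance exactly $(1+\rho)/2-\rho=(1-\rho)/2$ from the undeformed arc, and it is separated from the inner loop around the origin as well, so $|(w+1)/(w+\rho')|$ is likewise bounded. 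Moreover $R(w_c,w_c)=1$, so the factor is regular and does not perturb the saddle point analysis at $w_c=\rho'/2$.

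Given this bound, all remaining ingredients from Proposition~\ref{exp_bound} apply without change. I would then reuse verbatim the three-part estimate on the $\xi,\zeta$-independent part along the decompositions $\Gamma'=(\Gamma'\setminus\Gamma'^\Delta)\cup(\Gamma'^\Delta\setminus\Gamma_{\mathrm{vert}})\cup\Gamma_{\mathrm{vert}}$ and the analogous split of $\Sigma'$, the trivial estimate $|1/(w-z)|\le 1/(2w_c\delta)$, and the geometric bound on the powers $|(w_c+c)/(z+c)|^{\xi\lambda_c t^{1/3}}$ and $|(w+c)/(w_c+c)|^{\zeta\lambda_c t^{1/3}}$. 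Collecting these with the extra factor $M$ produces a bound of the form $M\,e^{r\delta^3 t}\,e^{-s(\xi+\zeta)\delta t^{1/3}}$ with the same constants $r,s>0$ as before. Taking $\delta=\sqrt{(\xi+\zeta)s/(2r)}\,t^{-1/3}$ in the moderate regime $\xi+\zeta\le(2r/s)t^{1/3}$ and $\delta=t^{-1/6}$ in the large regime, and enlarging $L$ so that $M$ is absorbed into the exponential decay rate (which only costs a further additive constant in the threshold for $L$), yields $|\bar{K}^c_{\mathrm{W},t}(\xi,\zeta)|\le e^{-(\xi+\zeta)}$ as claimed. The main obstacle, and in practice the only point requiring work beyond quoting Proposition~\ref{exp_bound}, is the uniform bound on $R$ together with the verification that $\mathrm{dist}(\Sigma',-\rho')$ and $\mathrm{dist}(\Gamma',-1)$ stay bounded below by positive constants independent of $t,\xi,\zeta$; both are elementary geometric facts about the fixed contours, but they must be checked before one can claim that every estimate of Proposition~\ref{exp_bound} is preserved.
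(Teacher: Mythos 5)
Your proposal is correct and matches the paper's (very tersely stated) argument: the paper likewise observes that the contours from Propositions~\ref{uniform convergence kernelc} and \ref{exp_bound} already enclose $w=-\rho'$, that the extra factor $(z+\rho')(w+1)/[(z+1)(w+\rho')]$ is regular on those contours and equals unity at $(w_c,w_c)$, and concludes that the estimates carry over with no change to the saddle-point structure. Your additional explicit checks — that $-\rho'$ sits inside $\Sigma_4$ at distance $(1-\rho)/2$, that $-1$ is bounded away from $\Gamma'$, and that the resulting uniform bound $M$ is absorbed by enlarging $L$ — are exactly the points the paper leaves implicit.
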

From Lemma \ref{lem:kernel lim}, it turns out that ${\det\left( 1 - K^c_{\mathrm{W}} \right)}_{\ell^2(\mathbb{N})}$ converges the GUE Tracy-Widom distribution in the long time limit with the scaling \eqref{nmscale}.

\begin{equation}
\label{detKcW limit}
\lim_{t\rightarrow\infty} \det(1 - \bar{K}^c_{\mathrm{W},t})_{\ell^2(\mathbb{N}/(\lambda_c t^{1/3}))} =  \lim_{t\rightarrow\infty} \det(1 - \bar{K}^c_{\mathrm{W},t})_{L^2(0,\infty)} = \det(1-A)_{L^2(s_2 , \infty)} = F_2(s_2)
\end{equation}

\subsubsection{Evaluation of $I_z$}
\label{se:Iz}

We are now left with the $\vec{z}$-integrals, namely $I_z$, defined in \eqref{def:Iz}.
From the definition of $L(\vec{z})$ given in \eqref{def:L(z)}, $I_z$ turns out to be expressed as

\begin{align*}
I_z
= 
\frac{\e^{-\rho t/2}}{(\rho')^{n-1}} \left(\frac{2( 1- \rho)}{2 - \rho}\right)^m  \frac{{(-1)}^{n-1}}{(n-1)!}
\oint_{1}\dd^{n-1}z
\frac{\e^{\Lambda_{n-1,0} t}
\Delta_{n-1}(z) \Delta_{n-1}(-z)
\prod_i^{n-1}(1-\rho'z_i)}
{\prod_{i=1}^{n-1}(z_i-1)^{n}
[\frac{1}{2}(z_i+1)]^m}.
\end{align*}

It is hard to apply the method introduced in Section \ref{se:tofredholm}, since the order of the pole at $z_j=1$ is one more than the number of $\vec{z}$-variables. 
But the difficulty can be overcome by 
considering the following $n$-fold integral instead of $I_z$,

\begin{equation}
\label{J}
J^{(n)}
=
\frac{\rho^n}{n!}
\oint_{1,1/\rho'}\dd^{n}z
\frac{\e^{\Lambda_{n,0} t}
\Delta_{n}(z)\Delta_n(-z) }
{\prod_{i=1}^{n}(1-\rho'z_i)(z_i-1)^{n}
[\frac{1}{2}(z_i+1)]^m} =: \oint_{1,1/\rho'}\dd^{n}z J^{(n)}(\vec{z}).
\end{equation}

\begin{lemman}
\label{lem:rewritingI_z}
The integral $I_z$ defined in \eqref{def:Iz} can be written into $-I_z=J^{(n)}-1$ with $J^{(n)}$ given by \eqref{J}.
\end{lemman}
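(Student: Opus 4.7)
The plan is to split $J^{(n)}$ according to how many of the $n$ integration variables circle $1/\rho'$ rather than $1$. Writing $\oint_{1,1/\rho'}=\oint_{1}+\oint_{1/\rho'}$ for each variable, and exploiting that $J^{(n)}(\vec z)$ is symmetric in $z_1,\dots,z_n$, gives
\[
J^{(n)}=\frac{\rho^n}{n!}\sum_{k=0}^{n}\binom{n}{k}\oint_{1^{\,n-k}}\oint_{(1/\rho')^k}J^{(n)}(\vec z)\,\dd^n z.
\]
I will argue that only the $k=0$ and $k=1$ pieces survive, with values $1$ and $-I_z$ respectively; the identity $-I_z=J^{(n)}-1$ then follows immediately.

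For $k\ge 2$, the simple pole of $(1-\rho' z_i)^{-1}$ at $z_i=1/\rho'$ is overwhelmed by the double zero coming from the Vandermonde pair $(z_i-z_j)(z_j-z_i)=-(z_i-z_j)^2$ in $\Delta_n(z)\Delta_n(-z)$ as soon as two variables sit on the inner loop. Indeed, after $\res_{z_1=1/\rho'}$ the combination $-(z_2-1/\rho')^2/(1-\rho' z_2)$ simplifies to $-(z_2-1/\rho')/\rho'$, so the remaining $z_2$-integrand is holomorphic at $1/\rho'$ and the contour integral vanishes. For $k=1$, I pick $z_n$ on the inner loop (absorbing the factor $n$), apply $\res_{z_n=1/\rho'}(1-\rho' z_n)^{-1}=-1/\rho'$, and use the factorisation $\Delta_n(z)\Delta_n(-z)|_{z_n=1/\rho'}=(-1)^{n-1}\Delta_{n-1}(z)\Delta_{n-1}(-z)\prod_{i<n}(z_i-1/\rho')^2$ together with the cancellation $(z_i-1/\rho')^2/(1-\rho' z_i)=(1-\rho' z_i)/(\rho')^2$. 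Routine prefactor bookkeeping---picking up $\e^{-\rho t/2}$ from $\e^{\Lambda_{n,0}t}|_{z_n=1/\rho'}$, $(\rho/\rho')^n$ from $(z_n-1)^n$, and $((2-\rho)/(2\rho'))^m$ from $[\tfrac12(z_n+1)]^m$---reproduces exactly the integrand defining $I_z$ in \eqref{def:L(z)}--\eqref{def:Iz}, with an overall minus sign.

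For the $k=0$ contribution I would shift $z_i=1+u_i$ to move the contour to small loops around $u_i=0$, yielding
\[
J^{(n)}_{k=0}=\frac{(-1)^{\binom{n}{2}}}{n!}\oint_{0^n}\dd^n u\,\frac{\Delta_n(u)^2\prod_i g(u_i)}{\prod_i u_i^n},\qquad g(u)=\frac{\e^{-tu/(2(1+u))}}{(1+u/2)^m(1-(\rho'/\rho)u)},
\]
with $g$ holomorphic at $0$ and $g(0)=1$. Writing $\Delta_n(u)^2=\det(u_j^{i-1})\det(u_j^{i-1})$ and invoking the Andreief (Cauchy--Binet) identity converts the integral into $(-1)^{\binom{n}{2}}\det M$, where the matrix entry $M_{ij}=\oint\frac{\dd u}{2\pi\ii}\,u^{i+j-n-2}g(u)$ vanishes whenever $i+j\ge n+2$ by analyticity of $g$ at $0$, and equals $g(0)=1$ on the anti-diagonal $i+j=n+1$. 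Such an anti-triangular matrix has determinant $(-1)^{\binom{n}{2}}$, so $J^{(n)}_{k=0}=1$. Combining the three steps gives $J^{(n)}=1-I_z$, which is the desired identity. The crux is the $k\ge 2$ cancellation: once one sees that the Vandermonde zeros kill the simple poles from $\prod(1-\rho' z_i)^{-1}$ whenever two variables meet at $1/\rho'$, the rest is algebra plus a clean application of Andreief's identity.
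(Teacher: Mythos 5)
Your proposal is correct and follows essentially the same route as the paper: split the contours by how many variables enclose $1/\rho'$, show the multi-$1/\rho'$ contributions vanish because the Vandermonde double zero kills the simple pole, identify the single-$1/\rho'$ contribution with $-I_z$, and show the all-$1$ contribution equals $1$. The paper phrases the vanishing by observing that $\res_{z_n=1/\rho'}J^{(n)}(\vec z)$ has no remaining poles at $1/\rho'$; your inclusion--exclusion sum over $k$ makes exactly this point explicit. The one genuine stylistic departure is in the last step: the paper evaluates the $k=0$ integral by anti-symmetrising $\Delta_n(z)/\prod(z_i-1)^n$ with the Vandermonde identity \eqref{Vandermonde_1} and then taking the poles at $z_j=1$ sequentially (the residue at $z_1=1$ lowers the order of the pole at $z_2=1$, and so on), whereas you shift to $u_i=z_i-1$, apply Andreief's identity to $\Delta_n(u)^2$, and observe that the resulting moment matrix is anti-triangular with unit anti-diagonal, so its determinant cancels the $(-1)^{\binom{n}{2}}$ prefactor. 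Both computations are standard ways of extracting the constant term and give the same answer; yours is a slightly cleaner packaging of the same calculation (and in fact the sequential-residue argument is morally the $LU$ reduction of that anti-triangular matrix).
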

\begin{proof}

Calculating the residue of the simple pole at $z_{n}=1/\rho'$ of the $z_n$-integral, we have

\begin{align*}
& \oint_{1,1/\rho'} \dd^{n-1}z  \res_{z_n=1/\rho'}J^{(n)}(\vec{z})\\
=&
-\frac{\e^{-\rho t/2}}{(\rho')^{n-1}} \left(\frac{2( 1- \rho)}{2 - \rho}\right)^m \frac{(-1)^{n-1}}{n!} \oint_{1} \dd^{n-1}z
\frac{\e^{\Lambda_{n-1,0} t}
\Delta_{n-1}(z) \Delta_{n-1}(-z)
\prod_{i=1}^{n-1}(1-\rho'z_i)}
{\prod_{i=1}^{n-1}(z_i-1)^{n}
[\frac{1}{2}(z_i+1)]^m} = - I_z/n.
\end{align*}
Since $\res_{z_n=1/\rho'}J^{(n)}(\vec{z})$ has no poles at $z_j=1/\rho'$, and all the other contributions of poles at $z_j=1/\rho'$ is just $n$ times $\res_{z_n=1/\rho'}J^{(n)}(\vec{z})$, we therefore obtain

\begin{align}
\nonumber
-I_z
=&
J^{(n)}-\frac{\rho^n}{n!}
\oint_{1}\dd^{n}z
\frac{\e^{\Lambda_{n,0} t}
\Delta_{n}(z)\Delta_n(-z) }
{\prod_{i=1}^{n}(1-\rho'z_i)(z_i-1)^{n}
[\frac{1}{2}(z_i+1)]^m}
\\
\nonumber
=&
J^{(n)}-\rho^n
\oint_{1}\dd^{n}z
\frac{\e^{\Lambda_{n,0} t}
\Delta_{n}(z)}
{\prod_{i=1}^{n}(1-\rho'z_i)(z_i-1)^{i}
[\frac{1}{2}(z_i+1)]^m}
\\
=&
J^{(n)}-1 \nonumber
\label{Iz=J-1}.
\end{align}

The second equality follows from anti-symmetrisation with the identity \eqref{Vandermonde_1}. The third line is obtained by evaluating poles at $z_j=1$ sequentially. Starting with the simple pole at $z_1=1$, its residue will decrease the order of other poles at $z_j=1$ by one, and hence all the poles can be evaluated sequentially. \qed 
\end{proof}

The asymptotic behaviour of $I_z$ is now given by the asymptotics of $J^{(n)}$, which now can be analysed via the previous method. Specifically, after rearrangements, we have 

\begin{equation}
\label{Jn_Standard}
J^{(n)}
=
\frac{1}{n!}
\oint_{1,\rho'}\frac{\dd^{n}z}{\prod_{i=1}^n z_i}
\frac{\prod_{1 \leq i\neq j \leq n}(1-z_i/z_j)}
{\prod_{i=1}^{n}{(1- 1/z_i)}^n }
\prod_{i=1}^{n}
\frac{g(z_i,0)}{g(1,0)},
\end{equation}
where
\begin{align*}
g(z,x)
=\frac{z^{-x-n}}{1-\rho'/z}
\left(\frac{z}{1+z}\right)^m \e^{zt/2}.
\end{align*}
The right hand side of \eqref{Jn_Standard} fits the standard form \eqref{Inu} with

\begin{equation*}
\begin{array}{rll}
\nu &=n, \quad &\mu=m, \quad \gamma = t/2,\quad s=0, \\
v_i &=1, & 1\le i \le m, \\
a_k &=1, \quad & 1\leq k\leq n, \\
u_k &= 0 , & 1\le k\le n-1,\quad u_n =\rho'.
\end{array}
\end{equation*}
In this case, Proposition~\ref{prop:fredholm} holds for $c=0$. According to Section \ref{se:tofredholm}, $J^{(n)}$ is thus written as a Fredholm determinant with the kernel,
\begin{equation}
\label{eq:fredholm2}
K(x,y)
=
\sum_{k=0}^{n-1}
\phi_k(x)\psi_k(y),
\end{equation}
where
\begin{align*}
\phi_k(x)
=&
\oint_1 \frac{\dd w}{2 \pi \ii}
\frac{w^k(1-\rho'/w)}{w^x(w-1)^{k+1}}
\left(\frac{w+1}{w}\right)^m\e^{-wt/2},
\\
\psi_k(x)
=&
\oint_{\rho'} \frac{\dd w}{2 \pi \ii}
\frac{w^x(w-1)^{k}}{(1-\rho'/w)w^{k+2}}
\left(\frac{w}{w+1}\right)^m\e^{wt/2}.
\end{align*}

Like before, with the scaling \eqref{nmscale} and setting $k=n-\kappa_1t^{1/2},x=\xi_1t^{1/2}$, we rewrite the integrand into exponents:
\begin{equation}
\label{phiandpsi2}
\begin{split}
\phi_k(x) =& \oint_1 \frac{\dd w}{2\pi\ii}
(w-\rho')
\e^{f(w,t,\xi_1, \kappa_1)+g_4(w)} ,
\\
\psi_k(x) =& \oint_{\rho'}\frac{\dd w}{2\pi\ii}
\frac{1}{w-\rho'}
\e^{ -f(w,t,\xi_1, \kappa_1)+g_5(w)}=\frac{1}{\rho'}\e^{-f(\rho',t,\xi_1, \kappa_1)},
\end{split}
\end{equation}
where $f(w,t,\xi_1, \kappa_1)=g_1(w)t+g_2(w ,\xi_1,\kappa_1 )t^{1/2}+g_3(w)t^{1/3}$ with $g_i(w)$ and $g_2(w,\xi_1,\kappa_1)$ given below, and the pole $w=\rho'$ in $\psi_k(x)$ is of order 1 and hence can be evaluated easily. Therefore we are left with the asymptotics of function $\phi_k(w)$ only, which is given in the following proposition.
\begin{subequations}
\label{g def2}
\begin{align}
g_1(w)=&
\frac{(1+\rho)^2(2-\rho)}{16}\ln\left(\frac{w+1}{w}\right)+
\frac{\rho(3-\rho)^2}{16}\ln\left(\frac{w}{1-w}\right)-\frac{w}{2},
\\
g_2(w,\xi_1,\kappa_1)=&
-\frac{(1+\rho)c_{\rm g} s_{\rm g}}{12(1-\rho)}\ln\left(\frac{w+1}{w}\right)-
\left(\frac{(3-\rho)c_{\rm g} s_{\rm g}}{12(1-\rho)}+\kappa_1\right)
\ln\left(\frac{w}{1-w}\right)-\xi_1\ln(w),
\\
g_3(w)=&
-\frac{(2-\rho)c_2 s_2}{6(1-\rho)}\ln\left(\frac{w+1}{w}\right)-
\frac{\rho c_2 s_2}{6(1-\rho)}\ln\left(\frac{w}{1-w}\right),\\
g_4(w)=&-\ln(w)-\ln(w-1),\\
g_5(w)=&-\ln(w).
\end{align}
\end{subequations}

\begin{prop}
\label{uniform convergence phi2}
{\rm \textbf{(Uniform convergence of $\boldsymbol{\phi}$ on a bounded set)}}
Let $m,n$ be scaled as \eqref{nmscale}. Then for some constant $c_0\in\mathbb{R}$, and $\lambda_1>0,\lambda_2>0$, the function $\phi_k(x)$ defined in \eqref{phiandpsi2} converges uniformly in a bounded set of $\xi,\kappa$ to
\begin{equation*}
\label{phi2 limit}
\lim_{t \rightarrow \infty}
tc(t)
\phi_{n-\lambda_1 t^{1/2}\kappa}(\lambda_2 t^{1/2}\xi)
=c_0
(s_{\rm g}+\kappa+\xi)\e^{- \frac{1}{2} (s_{\rm g}+\kappa+\xi)^2},
\end{equation*}
where $c(t)$ is some function of $t$ that may or may not be bounded as $t\rightarrow\infty$. Specifically,
\begin{equation}\label{ConstantsGaussian}
\begin{split}
c_0=&\frac{8(2-\rho)}{9\sqrt{2\pi}(1-\rho)^2},\\
\lambda_1=&c_{\rm g}/[2(2-\rho)],
\\
\lambda_2=&c_{\rm g}/[2 \rho(2-\rho)],
\\
c(t)=&\exp\big(-
g_1(\rho')t - g_2(\rho')t^{1/2} - g_3(\rho')t^{1/3}\big),
\end{split}
\end{equation}
with $g_i(w)$ given below in \eqref{g def2}, and $c_2,c_{\rm g}$ given in \eqref{c2cg}.
\end{prop}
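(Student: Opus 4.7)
The plan is a saddle-point analysis. Unlike in Section~\ref{sec: I1 limit}, where $g_1$ had a double saddle producing Airy asymptotics, the function $g_1$ of \eqref{g1 def1} has a \emph{simple} saddle at $w_* = \rho'$, which combined with the prefactor $(w-\rho')$ vanishing there produces the linear-times-Gaussian form in the claim. Clearing denominators in $g_1'(w)=0$ yields
\[
4w^{3} - 3\,{\rho'}^{2} w - {\rho'}^{3} = (w - \rho')(2w + \rho')^{2} = 0,
\]
so $g_1'$ has a simple zero at $w_* = \rho'$ and a double zero at $-\rho'/2$. The simple saddle is the one relevant here, since deforming the contour around $w=1$ to one through $w_*$ does not cross the pole of the integrand at $w=0$, whereas $-\rho'/2$ lies on the other side of that pole. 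A direct computation gives $g_1''(\rho') = 9(1-\rho)/[8\rho(2-\rho)]>0$, so the steepest-descent direction at $w_*$ is vertical.

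I would deform the standard counter-clockwise contour around $w=1$ to one passing vertically through $w_*$ in $\{\operatorname{Re} w > 0\}$, along which $\operatorname{Re}(g_1)$ attains its strict maximum at $w_*$; the usual estimates then confine the integration to an $O(t^{-1/2})$-neighbourhood of $w_*$ up to super-exponentially small errors. Substituting $w - \rho' = u\, t^{-1/2}$ and Taylor-expanding $g_1, g_2, g_3, g_4$, while using that $c(t)$ cancels $\exp f(\rho',t,\xi_1,\kappa_1)$, the claim reduces to the Gaussian-times-linear integral
\[
t\, c(t)\, \phi_k(x) \;\longrightarrow\; e^{g_4(\rho')}\int \frac{\dd u}{2\pi\ii}\, u\, \exp\!\Bigl(\tfrac12 g_1''(\rho')\, u^{2} + g_2'(\rho',\xi_1,\kappa_1)\, u\Bigr)
\]
along the downward image of the steepest-descent direction inherited from the original orientation. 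Completing the square evaluates this to
\[
\frac{g_2'(\rho',\xi_1,\kappa_1)\, e^{g_4(\rho')}}{g_1''(\rho')\sqrt{2\pi\, g_1''(\rho')}}\, \exp\!\Bigl(-\tfrac{g_2'(\rho',\xi_1,\kappa_1)^{2}}{2\, g_1''(\rho')}\Bigr).
\]

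Matching constants is then algebraic. Direct differentiation gives
\[
g_2'(\rho',\lambda_2\xi,\lambda_1\kappa) = -\frac{c_{\rm g}\, s_{\rm g}}{2\rho(1-\rho)(2-\rho)} - \frac{\lambda_1 \kappa}{\rho(1-\rho)} - \frac{\lambda_2 \xi}{1-\rho},
\]
and requiring a common factor $(s_{\rm g}+\kappa+\xi)$ forces $\lambda_1 = c_{\rm g}/[2(2-\rho)]$ and $\lambda_2 = c_{\rm g}/[2\rho(2-\rho)]$. Substituting $c_{\rm g}^{2} = \tfrac{9}{2}(1-\rho)^{3}\rho(2-\rho)$ then yields $g_2'(\rho',\xi_1,\kappa_1)^{2}/g_1''(\rho') = (s_{\rm g}+\kappa+\xi)^{2}$, producing the Gaussian factor, and combining with $e^{g_4(\rho')} = -1/[\rho(1-\rho)]$ (from $\e^{g_4(w)} = 1/[w(w-1)]$ evaluated at $w_* = 1-\rho<1$) delivers $c_0 = 8(2-\rho)/[9\sqrt{2\pi}(1-\rho)^{2}]$. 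The main obstacle will be making this uniform in $(\xi,\kappa)$ on a compact set: one must bound the off-saddle contribution by $\exp(-c\, t^{1/2})$ with $c>0$ independent of $(\xi,\kappa)$, and show that the $O(t^{-1/6})$ Taylor remainder in the exponent contributes a multiplicative $1+o(1)$ uniformly. A subsidiary point to watch is the branch of $\ln(w-1)$ near $w_* < 1$; since the original integrand is single-valued in the integer exponents $k, m, x$, any consistent branch gets absorbed into $c(t)$, but the orientation of the steepest-descent direction must be tracked carefully to obtain the correct sign of $c_0$.
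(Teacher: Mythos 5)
Your proof is correct and follows essentially the same saddle-point strategy as the paper's Appendix~E: deform the contour through the simple saddle $w_*=\rho'$ (and not $-\rho'/2$, for the same reason), Taylor expand, and complete the Gaussian square; the algebraic matching of $\lambda_1,\lambda_2,c_0$ and the cancellation $g_2'(\rho')^2/g_1''(\rho') = (s_{\rm g}+\kappa+\xi)^2$ all check out. One small imprecision: localising to an $O(t^{-1/2})$-neighbourhood does not give vanishing tail error; the cutoff must shrink more slowly (the paper uses $\delta = t^{-1/6}$) so that the Gaussian tail error goes to zero uniformly while the Taylor remainder stays controlled, and your stated off-saddle bound $\exp(-ct^{1/2})$ is inconsistent with a $t^{-1/2}$-window (it corresponds rather to $\delta\sim t^{-1/4}$). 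Also, the branch concern for $\ln(w-1)$ is moot, since $e^{g_4(w)}=1/[w(w-1)]$ is a rational function and the factor $-1/[\rho(1-\rho)]$ enters $c_0$ directly rather than being absorbed into $c(t)$, which does not involve $g_4$.
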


\begin{proof}
See Appendix~\ref{ap:uniform convergence phi2}.\qed
\end{proof}
Following the same arguments as in \cite{borodin2008transition,borodin2007fluctuationsintasep,BFPS2007}, which is also similar to the method used in the proof of Proposition~\ref{exp_bound}, we have a corresponding estimate of the function $\phi_k(w)$ defined in \eqref{phiandpsi2}.

\medskip
\begin{prop}
\label{bound phi2}
{\rm \textbf{(Estimate of function $\boldsymbol{\phi}$)} }
Consider the function $\phi_k(x)$ defined in \eqref{phiandpsi2}, with $n,m$ scaled as \eqref{nmscale}, and $f(w,t, \xi_1, \kappa_1)=g_1(w)t+g_2(w,\xi_1,\kappa_1)t^{1/2}+g_3(w)t^{1/3}$ defined in \eqref{g def2}. Then for large enough $t$ and $L$ large enough but independent of $t$,
\begin{equation*}
  \left|
\frac{t \e^{-f(\rho',t ,\xi_1, \kappa_1 )}}{\rho(1-\rho)}\phi_{n-\lambda_1 t^{1/2}\kappa}(\lambda_2 t^{1/2}\xi)
\right|
\leq \e^{-(\kappa+\xi)},
\end{equation*}
for $\kappa,\xi\geq -L$ and $\kappa+\xi \geq L$.
\end{prop}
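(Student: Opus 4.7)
The plan is to mimic the strategy of Proposition \ref{exp_bound} but adapted to the simple (non-degenerate) saddle point $w=\rho'$ that governs Gaussian rather than Airy-type asymptotics. A direct substitution in \eqref{g def2} shows that $g_1'(\rho')=0$ and $g_1''(\rho')=9(1-\rho)/[8\rho(2-\rho)]>0$, so $w=\rho'$ is a simple saddle with vertical steepest-descent direction, and the prefactor $e^{-f(\rho',t,\xi_1,\kappa_1)}$ in the statement is exactly what is needed to centre the analysis there.

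First I would deform the original contour around $w=1$ in \eqref{phiandpsi2} to a contour $\Gamma^{\rm G}$ that still encloses only $w=1$ (not $w=0$) but passes through $\rho'$ along the imaginary direction; for instance, a vertical segment through $\rho'$ closed on the right by a large arc on which $e^{-wt/2}$ supplies decay at infinity. One verifies by standard monotonicity arguments that $\mathrm{Re}(g_1(w))<g_1(\rho')$ off the saddle along $\Gamma^{\rm G}$. Split $\Gamma^{\rm G}=\Gamma^{\rm G}_{\rm near}\cup\Gamma^{\rm G}_{\rm far}$ with $\Gamma^{\rm G}_{\rm near}=\{w\in\Gamma^{\rm G}:|w-\rho'|\le\delta\}$, where $\delta$ will be chosen depending on $\kappa+\xi$ and $t$. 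On $\Gamma^{\rm G}_{\rm far}$ the bound $\mathrm{Re}(g_1(w)-g_1(\rho'))\le -C\delta^2$ produces a factor $e^{-C\delta^2 t}$ which dominates polynomial-in-$t$ perturbations from $g_2,g_3,g_4$ as soon as $\delta\gg t^{-1/2}$.

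On $\Gamma^{\rm G}_{\rm near}$ I would parameterize $w=\rho'+iu/t^{1/2}$ and Taylor expand
\begin{equation*}
f(w,t,\xi_1,\kappa_1)-f(\rho',t,\xi_1,\kappa_1)=-\tfrac12 g_1''(\rho')u^2+i\,g_2'(\rho',\xi_1,\kappa_1)\,u+O(t^{-1/6}).
\end{equation*}
Substituting the scalings $\xi_1=\lambda_2\xi$, $\kappa_1=\lambda_1\kappa$ from \eqref{ConstantsGaussian}, a short computation gives
\begin{equation*}
g_2'(\rho',\xi_1,\kappa_1)=-\frac{c_{\rm g}(s_{\rm g}+\kappa+\xi)}{2(1-\rho)\rho(2-\rho)},
\end{equation*}
so that $[g_2'(\rho',\xi_1,\kappa_1)]^2/[2g_1''(\rho')]=(s_{\rm g}+\kappa+\xi)^2/2$. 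The factor $(w-\rho')=iu/t^{1/2}$ in \eqref{phiandpsi2} together with $\dd w=i\,\dd u/t^{1/2}$ produces a $t^{-1}$ that cancels the prefactor $t$ in the statement, and a Gaussian estimate in $u$ (shifting the $u$-contour onto $u=ig_2'/g_1''$ when necessary) yields
\begin{equation*}
\left|\frac{t\,e^{-f(\rho',t,\xi_1,\kappa_1)}}{\rho(1-\rho)}\,\phi_k(x)\right|\le C\,|s_{\rm g}+\kappa+\xi|\exp\!\Bigl(-\tfrac{(s_{\rm g}+\kappa+\xi)^2}{2}\Bigr)(1+o(1)),
\end{equation*}
which is consistent with Proposition \ref{uniform convergence phi2}.

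The bound $e^{-(\kappa+\xi)}$ then follows once $L$ is chosen large relative to $|s_{\rm g}|$, because $(s_{\rm g}+\kappa+\xi)^2/2\ge(\kappa+\xi)+\log|s_{\rm g}+\kappa+\xi|+O(1)$ for all $\kappa+\xi\ge L$. The main obstacle is making this estimate uniform on the whole half-line $\kappa+\xi\in[L,\infty)$, including values that grow with $t$: in complete analogy with Proposition \ref{exp_bound}, one must choose $\delta$ jointly in $\kappa+\xi$ and $t$ (small when $\kappa+\xi$ is moderate, as large as $t^{-1/6}$ once $\kappa+\xi$ approaches $t^{1/2}$) so that the cubic remainder $O(u^3/t^{1/2})$ from the near expansion is dominated by the Gaussian while the $e^{-C\delta^2 t}$ bound on $\Gamma^{\rm G}_{\rm far}$ remains smaller than $e^{-(\kappa+\xi)}$. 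Managing the branches of $\ln w$, $\ln(w-1)$ and $\ln(w+1)$ along $\Gamma^{\rm G}_{\rm far}$ is the other technical point, but can be handled by a routine case analysis.
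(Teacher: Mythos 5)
Your proposal matches the paper's own treatment, which for this proposition simply states that the argument ``follows exactly the same method as in the proof of Proposition~\ref{exp_bound}'' using the deformed contour of Fig.~\ref{fig:contour gaussian 1 deform}. You correctly identify the non-degenerate Gaussian saddle at $w=\rho'$, verify the key coefficient identity $[g_2'(\rho',\xi_1,\kappa_1)]^2/(2g_1''(\rho'))=(s_{\rm g}+\kappa+\xi)^2/2$, and outline the near/far contour split with the $(\kappa+\xi,t)$-dependent choice of $\delta$ that the Prop.~\ref{exp_bound} template requires.
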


\begin{figure}[h]
\begin{center}
\begin{tikzpicture}[scale=2.5]
\draw[->,thick] (-0.3,0) -- (2,0);
\draw[->,thick] (0,-0.8) -- (0,0.8);

\draw[fill=black] (0,0) circle (0.03);
\draw[fill=black] (1,0) circle (0.03);
\draw[fill=black] (0.35,0) circle (0.03);

\node at (0.05,-0.12) {$0$};
\node at (1,-0.12) {$1$};

\draw[thick] (0.35,-0.375278) arc (-150:150:0.750555);
\draw[thick] (0.35,-0.375278) arc (-150:150:0.750555);
\draw[thick] (0.35,0.1) arc (90:-90:0.1);

\node at (0.27,-0.12) {$\rho'$};

\coordinate (A) at (0.35,-0.375278);
\coordinate (a) at (0.35,-0.1);
\coordinate (B) at (1,0);
\coordinate (C) at (0.35,0.375278);
\coordinate (c) at (0.35,0.1);
\draw[thick,dashed] (A) -- (B) -- (C);
\draw[thick] (A) -- (a);
\draw[thick] (C) -- (c);

\end{tikzpicture}
\end{center}
\caption{Deformed steepest descent contour of integration in $\phi_k(x)$.}
\label{fig:contour gaussian 1 deform}
\end{figure}
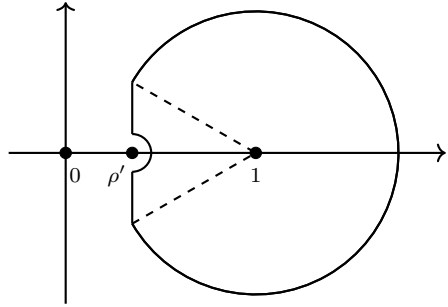

With a deformed contour given in Fig. \ref{fig:contour gaussian 1 deform}, the proof for this proposition follows exactly the same method as in the proof of Proposition~\ref{exp_bound}. Therefore we will not repeat it here.

Consequently, these two Proposition \ref{uniform convergence phi2} and \ref{bound phi2}, together with the fact that
\begin{equation}
\label{uniform convergence psi2}
  \e^{f(\rho',t,\xi_1, \kappa_1)}\psi_{n-\lambda_1 t^{1/2}\kappa}(\lambda_2 t^{1/2}\xi)=\frac{1}{1-\rho},
\end{equation}
give the following theorem.

\begin{theorem}
Consider the kernel defined in \eqref{eq:fredholm2}, and define the rescaled kernel
\begin{equation*}
 \bar{K}_t( \xi, \zeta) =\rho'^{ \lambda_2 (\xi-\zeta)t^{1/2} }
  \lambda_2 t^{1/2}K(\lambda_2 t^{1/2} \xi,\lambda_2 t^{1/2} \zeta),
\end{equation*}
with $\lambda_2$ given in \eqref{ConstantsGaussian}. Then we have

\begin{enumerate}[(i)]
\item \label{RescaledFredholm} $\displaystyle \lim_{t \to \infty} \det(1-  \bar{K}_t)_{L^2(0,\infty)}= \lim_{t \to \infty} \det(1-\bar{K}_t)_{\ell^2(\mathbb{N}/(\lambda_2 t^{1/2}))}$.

Here note that the operator $\bar{K}_t$ on the right hand side is regarded as acting on $\ell^2(\mathbb{N}/(\lambda_2 t^{1/2}))$ while the one on the left hand side on  $L^2(0,\infty)$.
\item
  For any fixed $L>0$
  \begin{equation*}
    \lim_{t\rightarrow \infty} \bar{K}_t( \xi, \zeta)=
    \frac{\e^{-(s_{\rm g}+\xi)^2 /2}}{\sqrt{2 \pi}},
  \end{equation*}
  uniformly on $(\xi,\zeta)\in[-L,L]^2$.
  \item For any fixed $L>0$ and $t$ large enough,
\begin{equation*}
  | \bar{K}_t ( \xi, \zeta)| \leq C \e^{
  -\max\{0,\xi\}},
\end{equation*}
for some constant $C>0$ and $\xi,\zeta\geq -L$.
\end{enumerate}

\end{theorem}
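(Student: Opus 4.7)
The plan is to derive the three parts of the theorem from Propositions~\ref{uniform convergence phi2} and \ref{bound phi2} together with the closed form of $\psi_k$ given in \eqref{uniform convergence psi2}, combined with a Riemann-sum argument in $k$. Part (i) is the standard rescaling identity for a Fredholm determinant. Expanding $\det(1-\bar{K}_t)_{\ell^2(\mathbb{N})}$ via \eqref{eq:FHDdef}, changing the summation variable $x_i\in\mathbb{N}$ to $\xi_i=x_i/(\lambda_2 t^{1/2})\in\mathbb{N}/(\lambda_2 t^{1/2})$, and noting that the conjugation factor $\rho'^{\lambda_2(\xi-\zeta)t^{1/2}}$ drops out of each principal minor (it enters only as a product over cycles that telescopes), each Riemann sum $(\lambda_2 t^{1/2})^{-1}\sum_{\xi\in\mathbb{N}/(\lambda_2 t^{1/2})}$ converges to $\int_0^\infty d\xi$. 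The passage to the limit under the Fredholm series is justified by the bound in part (iii) via Hadamard's inequality and dominated convergence, giving $\det(1-\bar{K}_t)_{L^2(0,\infty)}$ in the limit.

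For part (ii), I would use the sum representation $K(x,y)=\sum_{k=0}^{n-1}\phi_k(x)\psi_k(y)$ and the substitution $k=n-\lambda_1 t^{1/2}\kappa$, turning the sum into a Riemann sum with mesh $1/(\lambda_1 t^{1/2})$ ranging over $\kappa\in[0,n/(\lambda_1 t^{1/2})]$. The key arithmetic is that the $\zeta$-dependent part of $\psi_k(\lambda_2 t^{1/2}\zeta)=\rho'^{-1}\exp(-f(\rho',t,\zeta,\kappa))$ at order $t^{1/2}$ is exactly $\rho'^{-\lambda_2\zeta t^{1/2}}$, which is cancelled by the conjugation factor $\rho'^{\lambda_2(\xi-\zeta)t^{1/2}}$ in $\bar{K}_t$, up to the $\xi$-dependent $\rho'^{\lambda_2\xi t^{1/2}}$. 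Combining this with Proposition~\ref{uniform convergence phi2}, each summand at rescaled index $\kappa$ becomes
\begin{equation*}
\lambda_2 t^{1/2}\cdot\phi_k(\lambda_2 t^{1/2}\xi)\psi_k(\lambda_2 t^{1/2}\zeta)\rho'^{\lambda_2(\xi-\zeta)t^{1/2}}\sim\frac{c_0\lambda_2}{1-\rho}(s_{\rm g}+\kappa+\xi)\e^{-(s_{\rm g}+\kappa+\xi)^2/2}\cdot t^{-1/2},
\end{equation*}
uniformly on $(\xi,\zeta)\in[-L,L]^2$. Multiplying by the mesh $(\lambda_1 t^{1/2})^{-1}$ and passing to the Riemann integral, the substitution $u=s_{\rm g}+\kappa+\xi$ yields
\begin{equation*}
\frac{c_0\lambda_2}{\lambda_1(1-\rho)}\int_0^\infty (s_{\rm g}+\kappa+\xi)\e^{-(s_{\rm g}+\kappa+\xi)^2/2}\,\dd\kappa=\frac{c_0\lambda_2}{\lambda_1(1-\rho)}\e^{-(s_{\rm g}+\xi)^2/2}.
\end{equation*}
It remains to verify that the constant $c_0\lambda_2/(\lambda_1(1-\rho))$ equals $1/\sqrt{2\pi}$, which is a direct check using the explicit values in \eqref{ConstantsGaussian}: $\lambda_2/\lambda_1=1/\rho$ and $c_0=8(2-\rho)/[9\sqrt{2\pi}(1-\rho)^2]$, and the remaining factor $\lambda_2\lambda_1^{-1}(1-\rho)^{-1}$ combines with $c_0$ to give the stated Gaussian weight. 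Uniformity on bounded sets is inherited from the uniform convergence in Proposition~\ref{uniform convergence phi2}.

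For part (iii), the bound is a direct consequence of Proposition~\ref{bound phi2} and the explicit form \eqref{uniform convergence psi2}. The product $\phi_k\psi_k$ multiplied by the conjugation factor and by $\lambda_2 t^{1/2}$ is, after the cancellation described above, bounded termwise by $C\e^{-(\kappa+\xi)}/t^{1/2}$ for each $\kappa\geq -L$; summing over $\kappa$ with mesh $1/(\lambda_1 t^{1/2})$ yields a $\zeta$-independent geometric-series bound proportional to $\e^{-\xi}$, and the $\zeta\geq0$ statement then follows because the conjugation has already absorbed all $\zeta$ dependence of the dominant exponential growth (the residual bounded factor is absorbed into the constant $C$).

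The main obstacle is the bookkeeping in part (ii): one must verify that the $\zeta$-dependence in $\psi_k$ at the $t^{1/2}$ scale is \emph{exactly} cancelled by the chosen conjugation, and that the leftover constants from $c_0$, $\lambda_1$, $\lambda_2$, and $(1-\rho)$ combine to give precisely $(2\pi)^{-1/2}$. This cancellation is what produces the rank-one (in $\zeta$) limiting kernel, which is the mechanism by which the Fredholm determinant reduces to a scalar Gaussian distribution function rather than a determinantal one, matching the prediction of NLFHD that the $s_{\rm g}$ mode is purely Gaussian due to the initial Bernoulli randomness.
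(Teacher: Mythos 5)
Your overall strategy is the same as the paper's: express $\bar K_t$ via $\sum_{k=0}^{n-1}\phi_k\psi_k$, trade the sum over $k$ for an integral over the rescaled variable $\kappa$, invoke Propositions~\ref{uniform convergence phi2}, \ref{bound phi2} and \eqref{uniform convergence psi2}, and check that the residual constant equals $(2\pi)^{-1/2}$. The cancellation of the $\zeta$-dependence by the conjugation $\rho'^{\lambda_2(\xi-\zeta)t^{1/2}}$, and the resulting rank-one limiting kernel, is correctly identified.

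There is, however, a genuine error in the constant bookkeeping in part (ii). With the substitution $k=n-\lambda_1 t^{1/2}\kappa$ the mesh in $\kappa$ is $\Delta\kappa=(\lambda_1 t^{1/2})^{-1}$, so converting the sum over $k$ to an integral over $\kappa$ produces an \emph{extra} factor $\lambda_1 t^{1/2}$, i.e.\ $\sum_{k}(\cdots)\approx \lambda_1 t^{1/2}\int(\cdots)\,\dd\kappa$, not the reciprocal. You write ``multiplying by the mesh $(\lambda_1 t^{1/2})^{-1}$'' and end up with $c_0\lambda_2/[\lambda_1(1-\rho)]$; the correct factor is $c_0\lambda_1\lambda_2/(1-\rho)$. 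These differ by a factor $\lambda_1^{-2}$, which is not $1$. Your claim that $c_0\cdot\lambda_2\lambda_1^{-1}(1-\rho)^{-1}=(2\pi)^{-1/2}$ is false: since $\lambda_2/\lambda_1=1/\rho$, that quantity equals $8(2-\rho)/[9\sqrt{2\pi}\,\rho(1-\rho)^3]$, which is not $(2\pi)^{-1/2}$ (for instance at $\rho=\tfrac12$ it is $\tfrac{192}{9}(2\pi)^{-1/2}$). The correct computation uses $\lambda_1\lambda_2=c_{\rm g}^2/[4\rho(2-\rho)^2]$ together with $c_{\rm g}^2=\tfrac92\rho(1-\rho)^3(2-\rho)$, giving $\lambda_1\lambda_2=9(1-\rho)^3/[8(2-\rho)]$ and hence $c_0\lambda_1\lambda_2/(1-\rho)=(2\pi)^{-1/2}$, as required. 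Two smaller remarks: (a) the paper actually splits the $\kappa$-integral into $[t^{-1/2}/\lambda_1,L']$ and $[L',\,nt^{-1/2}/\lambda_1]$ and uses Proposition~\ref{bound phi2} to show the tail vanishes as $L'\to\infty$ uniformly in $t$, which gives the claimed uniform convergence more carefully than a bare ``pass to the Riemann integral''; (b) Proposition~\ref{bound phi2} only covers the region $\kappa+\xi\ge L$, so establishing part (iii) on the bounded complement $\{\kappa,\xi\ge -L,\ \kappa+\xi<L\}$ also requires the uniform convergence of part (ii), a point you do not mention.
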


\begin{proof}
Obviously, (i) can be easily proved by the change of variable $(x,y)\rightarrow ( \lambda_2 t^{1/2} \xi, \lambda_2 t^{1/2} \zeta)$ in the Fredholm kernel $K(x,y)$ and then conjugating the kernel with $\rho'^{ \lambda_2 (\xi-\zeta)t^{1/2} }$. For (ii), we have
\begin{align*}
\bar{K}_t (\xi,\zeta)
=&
\rho'^{ \lambda_2 (\xi-\zeta)t^{1/2} }
\lambda_2 t^{1/2}
\sum_{k=0}^{n-1}
\phi_k(\lambda_2 t^{1/2} \xi)
\psi_k(\lambda_2 t^{1/2} \zeta)
\\
=&\rho'^{ \lambda_2 (\xi-\zeta)t^{1/2} }
\lambda_1\lambda_2 t
\left(
\int^{n t^{-1/2} / \lambda_1}_{L'}
+\int^{L'}_{t^{-1/2} / \lambda_1}
\right)
\phi_{n-\lambda_1 t^{1/2} \kappa}(\lambda_2 t^{1/2} \xi)
\psi_{n-\lambda_1 t^{1/2} \kappa}(\lambda_2 t^{1/2} \zeta)
\dd \kappa.
\end{align*}
By Proposition \ref{bound phi2}, one can see that the integrand of the first term $\rho'^{ \lambda_2 (\xi-\zeta)t^{1/2} }
\lambda_1\lambda_2 t\phi_{n-\lambda_1 t^{1/2} \kappa}(\lambda_2 t^{1/2} \xi) \times $ $
\psi_{n-\lambda_1 t^{1/2} \kappa}(\lambda_2 t^{1/2} \zeta) \leq c \e^{-\kappa}$ for some constant $c$. Therefore the first term 
\[
\rho'^{ \lambda_2 (\xi-\zeta)t^{1/2} }
\lambda_1\lambda_2 t
\int_{L'}^{n t^{-1/2}/\lambda_1}
\phi_{n-\lambda_1 t^{1/2} \kappa}(\lambda_2 t^{1/2} \xi)
\psi_{n-\lambda_1 t^{1/2} \kappa}(\lambda_2 t^{1/2} \zeta) \leq c \e^{-2L'}/2
\]
vanishes as $L'\rightarrow \i$.

For the second sum, we have, as a result of Propositions~\ref{uniform convergence phi2} and \eqref{uniform convergence psi2},
\begin{align*}
&\lim_{t\rightarrow \infty}
\rho'^{ \lambda_2 (\xi-\zeta)t^{1/2} }
\lambda_1 \lambda_2 t\int_{t^{-1/2} / \lambda_1}^{L'}
\phi_{m-\lambda_1 t^{1/2}\kappa}(\lambda_2 t^{1/2}\xi)
\psi_{m-\lambda_1 t^{1/2}\kappa}(\lambda_2 t^{1/23}\zeta)
\dd \kappa \\
=&
\frac{8(2-\rho)}{9\sqrt{ 2 \pi}(1-\rho)^2}\frac{1}{1-\rho}
\lambda_1 \lambda_2
\int^{L'}_{0}
(s_{\rm g}+\kappa+\xi)
\e^{-(s_{\rm g}+\kappa+\xi)^2 /2}
\dd \kappa
\\
\underset{ L'\rightarrow\infty }{\longrightarrow}&
\frac{1}{\sqrt{2 \pi}}
\int^{\infty}_{0}
(s_{\rm g}+\kappa+\xi)
\e^{-(s_{\rm g}+\kappa+\xi)^2 /2}
\dd \kappa 
\\
=&
\frac{ \e^{-(s_{\rm g}+\xi)^2 / 2} }{\sqrt{2 \pi}}  
\end{align*}
uniformly on $(\xi,\zeta)\in[-L,L]^2$ for any fixed $L>0$.

The bound (iii) can be easily seen from Proposition \ref{bound phi2} and (ii). \qed
\end{proof}

The above results with the Lemma \ref{lem:kernel lim} allow us to take the limit of the Fredholm determinant into the kernel, i.e.,
\begin{multline*}
\lim_{t\rightarrow\infty}\det(1-\bar{K}_t)_{\ell^2(\mathbb{N}/(\lambda_2 t^{1/2}))}
=\det(1-\lim_{t\rightarrow\infty} \bar{K}_t)_{L^2(0,\infty)}
=1-\int_{0}^{\infty}\frac{\e^{-(s_{\rm g}+\xi)^2/2}}{\sqrt{2\pi}}\dd \xi\\=
\int^{0}_{-\infty}\frac{\e^{-(s_{\rm g}+\xi)^2/2}}{\sqrt{2\pi}}\dd \xi
=
\int^{s_{\rm g}}_{-\infty}\frac{\e^{-\xi^2/2}}{\sqrt{2\pi}}\dd \xi
=F_G(s_{\rm g}),
\end{multline*}
which implies that
\begin{equation}\label{j limit}
\lim_{t\rightarrow \infty} J^{(n)} =F_G(s_{\rm g}) .
\end{equation}

Putting all things in Subsection~\ref{subsec:Ev_I21} together, we have the following theorem.

\begin{theorem}
\label{thm_I2(1) limit}
With the scaling  \eqref{nmscale}, the long time limit of the term $\mathcal{I}^{(1)}_2$ defined in \eqref{def:I2(1)} is given by 
\begin{equation}
\label{I2(1) limit}
\lim_{t\rightarrow\infty}\mathcal{I}^{(1)}_2 = \left[ 1 - F_G(s_{\rm g}) \right] F_2(s_2),
\end{equation}
where $s_{\rm g}$ and $s_{2}$ are given in \eqref{s+-}.

\end{theorem}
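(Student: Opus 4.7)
The plan is to exploit the factorisation $\mathcal{I}^{(1)}_2 = I_z \cdot \det(1 - K^c_{\mathrm{W}})_{\ell^2(\mathbb{N})}$ in \eqref{def:I2(1)}: the two factors depend on disjoint sets of variables, so they can be analysed independently and only multiplied together at the end. For the Fredholm determinant factor, the strategy is essentially identical to the treatment of $\mathcal{I}_1$ in Section~\ref{sec: I1 limit}. Introducing the rescaled kernel $\bar{K}^c_{\mathrm{W},t}(\xi,\zeta)$ defined in \eqref{def:Kcresc}, one observes that $K^c_{\mathrm{W}}$ differs from $K^c$ only by the regular factor $(z+\rho')(w+1)/[(z+1)(w+\rho')]$ in the integrand, which evaluates to unity at the double saddle point $(w_c, w_c)$ and is bounded along the steepest descent contours $\Gamma, \Sigma$ of Lemma~\ref{descent contour 1}. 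Hence the arguments of Propositions~\ref{uniform convergence kernelc} and \ref{exp_bound} go through verbatim, yielding the stated Propositions~\ref{uniform convergence kernelc2} and \ref{exp_bound_W}; applying Lemma~\ref{lem:kernel lim} then delivers \eqref{detKcW limit}, namely $\det(1-K^c_{\mathrm{W}})_{\ell^2(\mathbb{N})} \to F_2(s_2)$.

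For the $I_z$ factor, the first step is Lemma~\ref{lem:rewritingI_z}, which trades the awkward integral $I_z$ (whose pole order at $z_j = 1$ exceeds the number of integration variables) for $I_z = 1 - J^{(n)}$, with $J^{(n)}$ an $n$-fold integral of the standard form \eqref{Inu} (now with $c=0$, $\nu=n$, $u_n = \rho'$, and all other $u_i = 0$). Proposition~\ref{prop:fredholm} converts this into a Fredholm determinant with kernel $K(x,y)=\sum_{k=0}^{n-1}\phi_k(x)\psi_k(y)$ as in \eqref{eq:fredholm2}. The relevant saddle is now the simple critical point of $g_1$ at $w = \rho'$ rather than the double critical point at $w_c$, so the natural scaling exponent becomes $t^{1/2}$ rather than $t^{1/3}$, reflecting the Gaussian rather than Tracy-Widom character of this mode.

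The asymptotic analysis of $J^{(n)}$ then proceeds in three essentially independent steps. Because $\psi_k$ has a single simple pole at $w=\rho'$, its residue can be computed exactly, giving the remarkably clean closed form $\psi_k(x) = \rho'^{-1} e^{-f(\rho',t,\xi_1,\kappa_1)}$ in \eqref{phiandpsi2}. Proposition~\ref{uniform convergence phi2} supplies uniform convergence of $\phi_k(x)$ on bounded sets in the rescaled variables $(\xi,\kappa)$ to a Gaussian-weighted derivative $c_0(s_{\mathrm{g}} + \kappa + \xi)e^{-(s_{\mathrm{g}}+\kappa+\xi)^2/2}$, and Proposition~\ref{bound phi2} gives matching exponential tail bounds; combining these via Lemma~\ref{lem:kernel lim} shows that the rescaled kernel converges to the rank-one operator with kernel $(2\pi)^{-1/2} e^{-(s_{\mathrm{g}}+\xi)^2/2}$ (independent of $\zeta$), whose Fredholm determinant is $1 - \int_0^\infty (2\pi)^{-1/2} e^{-(s_{\mathrm{g}}+\xi)^2/2} \dd\xi = F_G(s_{\mathrm{g}})$. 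Hence $J^{(n)} \to F_G(s_{\mathrm{g}})$ and $I_z \to 1 - F_G(s_{\mathrm{g}})$, which multiplied with the previous result yields \eqref{I2(1) limit}.

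The main obstacle I would anticipate is not conceptual but logistical: one must verify that the summation over $k$ in the Fredholm series for $J^{(n)}$ translates, in the scaling $k = n - \lambda_1 t^{1/2} \kappa$, into a Riemann sum that converges to the appropriate integral over $\kappa$, which requires compatible exponential control on $\phi_k \psi_k$ uniformly in $\kappa$. Since Propositions~\ref{uniform convergence phi2} and \ref{bound phi2} are explicitly stated in the excerpt and the bound in the latter provides the necessary decay in $\xi+\kappa$, this step reduces to invoking Lemma~\ref{lem:kernel lim} in both Fredholm determinant limits; the remaining work is therefore largely bookkeeping, assembling the two limit statements \eqref{detKcW limit} and \eqref{j limit} into the product appearing in \eqref{I2(1) limit}.
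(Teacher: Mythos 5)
Your proposal matches the paper's own proof in essentially every step: you factorise $\mathcal{I}^{(1)}_2 = I_z \cdot \det(1-K^c_{\mathrm{W}})$, handle the Fredholm determinant by the same adaptation of Propositions~\ref{uniform convergence kernelc}--\ref{exp_bound} (yielding Propositions~\ref{uniform convergence kernelc2} and \ref{exp_bound_W} and the limit \eqref{detKcW limit}), convert $I_z$ to $1 - J^{(n)}$ via Lemma~\ref{lem:rewritingI_z}, analyse $J^{(n)}$ as a Fredholm determinant with $\phi_k,\psi_k$ from \eqref{phiandpsi2} and Propositions~\ref{uniform convergence phi2}--\ref{bound phi2}, and multiply the two limits. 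The bookkeeping concern you flag about the Riemann sum in $\kappa$ is precisely the content of the unnumbered theorem preceding \eqref{j limit}, so the argument is complete.
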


\begin{proof}
Applying Lemma~\ref{lem:rewritingI_z} to the right hand side of \eqref{def:I2(1)}, we obtain

\begin{equation*}
    \mathcal{I}^{(1)}_2 = (1 - J^{(n)}) \times \det(1 - \bar{K}^c_{\mathrm{W},t})_{\ell^2(\mathbb{N}/\lambda_c t^{1/3})}.
\end{equation*}
Since the long time limit of $\det(1-\bar{K}^c_{\mathrm{W},t})_{\ell^2(\mathbb{N}/\lambda_c t^{1/3})}$ and $J^{(n)}$ with the scaling \eqref{nmscale} are given by \eqref{detKcW limit} and \eqref{j limit}, respectively, we arrive at the result \eqref{I2(1) limit}. \qed

\end{proof}

\subsection{Evaluation of $\mathcal{I}^{(2)}_2$}
\label{subsec:Ev_I22}

We will show that $\mathcal{I}^{(2)}_2$ tends to zero at long time limits with the scaling \eqref{nmscale}.
As a preparation for the asymptotic analyses, we define the rescaled functions as

\begin{align}
\bar{A}^c_{t,p}(\xi) & = \e^{ - f^c(w_c , \lambda_c t^{1/3} \xi) }  A^c_p(\lambda_c t^{1/3} \xi), \label{def:Acresc} \\
\bar{B}^c_{t}(\xi) & = \lambda_c t^{1/3} \e^{f^c(w_c, \lambda_c t^{1/3} \xi)} B^c(\lambda_c t^{1/3} \xi), \label{def:Bcresc}
\end{align}
where $A^c_p(x)$ and $B^c(y)$ defined in \eqref{def:Acj} and \eqref{def:Bc}, respectively, with $f^c(z,x)$ given in \eqref{def:fc}.
Using these functions, we rewrite $\mathcal{I}_2^{(2)}$ defined in  \eqref{def:I2(2)} as

\begin{multline}
\mathcal{I}^{(2)}_2 = \sum_{k=1}^{m} \frac{(-1)^k}{ k! }
\sum_{x_1 = 1}^{\infty}  \!\! \dots \!\!\! \sum_{x_{k} = 1}^{\infty}
\sum_{r,\ell =1}^{k}{ {(-1)}^{r+\ell} {\left( \frac{1}{\lambda_c t^{1/3}} \right)}^k } \times \\ \det\left[  \bar{K}^c_{\mathrm{W},t}(x'_i , x'_j) \right]_{ \substack{ 1 \leq i , j \leq k \\  i \neq \ell , j \neq r } }  \left[ \sum_{p=1}^{n-1}{  \oint_1 \dd^{n-1} z\,  L(\vec{z}) \prod_{q=1}^{p}{ {(z_q - 1)} \bar{A}^c_{t,p}(x'_\ell) }} \right]  \bar{B}^c_t(x'_r)    . \label{def:I2(2)_rescaled} 
\end{multline}

\begin{prop}
\label{I2(2)_up_bound_AB}
For any fixed $\rho \in (0,1)$ and large enough $t$, the absolute value of the summation with respect to $p$ in \eqref{def:I2(2)_rescaled} satisfies the following bound

\begin{equation}
\label{eq:up_bound_AB}
\left| \sum_{p=1}^{n-1}{  \oint_1 \dd^{n-1} z\,  L(\vec{z}) \prod_{q=1}^{p}{ {(z_q - 1)} \bar{A}^c_{t,p}(\xi) }}   \bar{B}^c_t(\zeta) \right| \leq  D_1 t^{-1/3} \e^{-(\xi+\zeta)},
\end{equation}
for $(\xi,\zeta) \in [0 , \infty)^2$. $D_1$ is some positive constant where $L(\vec{z})$,  $\bar{A}^c_{t,p}(\xi)$ and $\bar{B}^c_t(\zeta)$ are defined in \eqref{def:L(z)}, \eqref{def:Acresc} and \eqref{def:Bcresc}, respectively.
\end{prop}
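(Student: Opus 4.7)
The plan is to reformulate the quantity in \eqref{eq:up_bound_AB} as an integrated kernel difference, to which the steepest-descent machinery already developed for $\bar{K}^c_{\mathrm{W},t}$ can be applied. First, by running the algebraic manipulation of Lemma~\ref{rank1 perturbation term2} in reverse, the identity
\[
\sum_{p=1}^{n-1} \prod_{q=1}^{p} (z_q - 1) A^c_p(x) = (z_1 - 1) A^c(x, \vec{z})
\]
holds inside $\vec{z}$-integrals against $L(\vec{z})$, where $A^c(x, \vec{z})$ is the function defined in the proof of Lemma~\ref{rank1 perturbation term1}. Combining this with Lemma~\ref{rank1 perturbation term1} itself and the rescalings \eqref{def:Acresc}--\eqref{def:Bcresc}, one obtains
\[
\bar{B}^c_t(\zeta) \sum_{p=1}^{n-1} \bar{A}^c_{t,p}(\xi)\, I_p = \oint_1 \dd^{n-1} z\, L(\vec{z}) \left[ \bar{K}^c_{\mathrm{W}, t}(\xi, \zeta) - \bar{K}^c_t(\xi, \zeta, \vec{z}) \right],
\]
where $I_p := \oint_1 \dd^{n-1} z\, L(\vec{z}) \prod_{q=1}^p (z_q - 1)$ and $\bar{K}^c_t(\xi, \zeta, \vec{z})$ denotes the natural rescaling (analogous to \eqref{def:Kcresc}) of the kernel $K^c(x, y, \vec{z})$ introduced in Section~\ref{subsec:Deform_I2}. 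Note that at $\vec{z} = \vec{1}$ the bracketed quantity vanishes.

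The next step is to extract the factor $t^{-1/3}$ from the bracketed kernel difference. The $\vec{z}$-dependent part of $\bar{K}^c_t(\xi, \zeta, \vec{z})$ is the ratio $\prod_{j=1}^{n-1} (1 + z_j z)(1 + w)/[(1 + z)(1 + z_j w)]$, which is holomorphic in $\vec{z}$ near $\vec{1}$ (by Lemma~\ref{kernel holomorphic}) and equals unity at $\vec{z} = \vec{1}$. On the steepest-descent contours $\Gamma'$ and $\Sigma'$ near the saddle $w_c$, one has $z, w \approx w_c$ with $z - w$ of order $t^{-1/3}$, so a Taylor expansion of this ratio in $z_j - 1$ yields $1 + $ (order $t^{-1/3}$)$\, \sum_j (z_j - 1) + $ higher-order terms. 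Thus the integrand of the bracketed kernel difference, restricted to $\Gamma' \times \Sigma'$, is uniformly of order $t^{-1/3}$ times a linear combination of $(z_q - 1)$ factors.

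Integrating this against $L(\vec{z})$, the explicit $(z_q - 1)$ factors reduce by one the pole orders at $z_q = 1$, after which the $\vec{z}$-residues can be evaluated via the antisymmetry-of-$\Delta_{n-1}(-z)$ argument used in the proof of Lemma~\ref{rank1 perturbation term1}; this residue computation produces no additional powers of $t$. Combined with the exponential bounds $|F^c(z, \lambda_c t^{1/3} \xi)\, \e^{-f^c(w_c, \lambda_c t^{1/3}\xi)}| \leq C \e^{-\xi}$ on $\Gamma'$ and $|G^c(w, \lambda_c t^{1/3}\zeta)\, \e^{f^c(w_c, \lambda_c t^{1/3}\zeta)}| \leq C \e^{-\zeta}$ on $\Sigma'$, established exactly as in Proposition~\ref{exp_bound_W}, this yields the claimed bound $D_1 t^{-1/3} \e^{-(\xi+\zeta)}$. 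The main obstacle will be making the above Taylor expansion and residue computation uniform in $\vec{z}$ on contours sufficiently close to $\vec{1}$, and verifying that the higher-order terms in the expansion only contribute subleading corrections of the same form; this requires careful bookkeeping parallel to that in Section~\ref{Sec:KernelEstimate}, but with the additional complication that the deformed contour for $w$ must avoid the poles at $-z_j^{-1}$ introduced by the $\vec{z}$-dependent factor.
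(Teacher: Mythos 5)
Your proposal takes a genuinely different route from the paper's proof, and while the algebraic setup (reversing Lemma~\ref{rank1 perturbation term2} and Lemma~\ref{rank1 perturbation term1} to express the target as $\oint \dd^{n-1} z\, L(\vec{z})\bigl[\bar{K}^c_{\mathrm{W},t}(\xi,\zeta) - \bar{K}^c_t(\xi,\zeta,\vec{z})\bigr]$) is correct, the analytic core of your argument contains a genuine gap. The paper does \emph{not} extract the $t^{-1/3}$ from a Taylor expansion of the kernel difference. Instead, it first applies Lemma~\ref{lem:rewritingI_z} to identify $I_p = \oint L(\vec{z})\prod_{q\leq p}(z_q-1)\,\dd^{n-1}z = -(-\rho/\rho')^p\bigl(J^{(n-p)}-1\bigr)$, expresses $J^{(n-p)}-1$ as a single rank-one Fredholm trace which reduces to one contour integral in a variable $w$ whose saddle sits at $\rho'$ (not $w_c$), performs the geometric sum over $p$ in closed form (eq.~\eqref{p_sum}), and represents the resulting $1/(z+w)$ as $\int_0^\infty \dd k\, \e^{-k(z+w)}$. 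After rescaling $k=t^{1/3}\kappa$, the factor $\e^{-\frac{3}{2}\rho' t^{1/3}\kappa}$ — with $3\rho'/2 = w_c + \rho'$, the \emph{sum} of the Airy-mode saddle and the Gaussian-mode saddle — produces the $t^{-1/3}$ directly upon $\kappa$-integration, while $|G_{\rm g}|\leq C_1$ and $|G_2|\leq C_2\e^{-(\xi+\zeta)}$ are established by separate steepest-descent estimates.

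The specific gap in your argument is the sentence ``this residue computation produces no additional powers of $t$.'' This is precisely the hard part. The function $L(\vec{z})$ carries the factor $\e^{\Lambda_{n-1,0}t}$ together with poles of order $j+1$ at $z_j=1$, and the residue at the $j$\textsuperscript{th} pole involves $j$-fold differentiation, each producing a factor $O(t)$; the compensation comes from the prefactor $\e^{-\rho t/2}\bigl(\tfrac{2(1-\rho)}{2-\rho}\bigr)^m$ and the Vandermonde antisymmetry, and making this cancellation precise and uniform in $p$ is exactly what Lemma~\ref{lem:rewritingI_z} plus the Fredholm structure of $J^{(n-p)}$ accomplish. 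Without them you have no control over $|I_p|$ as a function of both $p$ and $t$. Your proposal also attributes the $t^{-1/3}$ entirely to the cancellation of the $1/(w-z)$ pole by the first-order Taylor term, but that cancellation is only $O(t^{-1/3})$ in a $t^{-1/3}$-sized neighbourhood of the saddle, and the higher-order Taylor terms (which dominate the residues at the higher-order poles of $L(\vec{z})$) need a matching analysis which you do not supply. Finally, your picture never encounters the Gaussian-mode saddle at $w=\rho'$ at all; since it emerges only after the $J^{(n-p)}$ reformulation, this is a strong signal that the $\vec{z}$-residue bookkeeping you defer to a ``careful parallel of Section~\ref{Sec:KernelEstimate}'' in fact requires the entirely separate machinery of Subsection~\ref{se:Iz} rather than a reprise of the Airy estimates.
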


\begin{proof}
The proof consists of the four steps. As the first step, we perform the sum over $p \in [1,n-1]$ and derive the formula which is useful for the asymptotic analysis.
In fact, it turns out that the integrals obtained as the result of the first step share similar integrands to the ones treated in Propositions~\ref{uniform convergence kernelc}, \ref{exp_bound}, \ref{uniform convergence phi2} and \ref{bound phi2}.
In the second step, we show the asymptotic analysis of the Gaussian part \eqref{Gaussian_part} for which we can apply a similar method to Proposition~\ref{uniform convergence phi2} and \ref{bound phi2}.
In the third step, we investigate the asymptotic behaviours of the Airy parts \eqref{Airy_parts} in the same fashion as Propositions~\ref{uniform convergence kernelc} and \ref{exp_bound}.
In the final step, combining the results of the previous steps, we arrive at the conclusion \eqref{eq:up_bound_AB}.

\paragraph{1, Rewrite of $\sum_{p=1}^{n-1} \oint_1 \dd^{n-1} z L(\vec{z}) \prod_{q = 1}^{p}{(z_q - 1)} \bar{A}^c_{t,p}(x)$}

We will rewrite the sum over $p \in [1,n-1]$ into a suitable formula for the asymptotic analysis.
Recall the function $L(\vec{z})$  defined in \eqref{def:L(z)}, $\vec{z}$-integrals are written as

\begin{multline*}
\oint_1 \dd^{n-1} z\,  L(\vec{z}) \prod_{q=1}^{p}(z_q-1) \\
= {(-1)}^{n-1}
\frac{\e^{-\rho t/2}}{\rho'^{n-1}} \left(\frac{2(1-\rho)}{2-\rho}\right)^m
\oint_{1}\dd^{n-1}z
\frac{\e^{\Lambda_{n-1,0} t}
\Delta_{n-1}(-z)
\prod_{i=1}^{n-1} (1-\rho'z_i) }
{\prod_{i=1}^{p}(z_i-1)^{i}
\prod_{i=p+1}^{n-1}(z_i-1)^{i+1}
[\frac{1}{2}(z_i+1)]^m}.
\end{multline*}
On the right hand side, all poles at $z_i = 1$ can be easily evaluated from $i=1$ to $i=p$ in the same way as the proof of Lemma~\ref{lem:prob2term}.
Changing variables $z_{j+p} \rightarrow z_j$ for any $j \in [1,n-p-1]$ and symmetrising the $\vec{z}$-integrands by making use of the identity \eqref{Vandermonde_1}, we obtain

\begin{equation*}
\begin{split}
& \oint_1 \dd^{n-1} z\,  L(\vec{z}) \prod_{q=1}^{p}(z_q-1) \\
= & {(-1)}^{n-1}
\frac{\e^{-\rho t/2}}{\rho'^{n-1}} \left(\frac{2(1-\rho)}{2-\rho}\right)^m \rho^p
\oint_{1}\dd^{n-p-1}z
\frac{\e^{\Lambda_{n-p-1,0} t}
\Delta_{n-p-1}(-z)
\prod_{i=1}^{n-p-1}   (1-\rho'z_i)  }
{\prod_{i=1}^{n-p-1}(z_i-1)^{i+1}
[\frac{1}{2}(z_i+1)]^m} \\
= & {\left( \frac{- \rho}{\rho'} \right)}^p {(-1)}^{n-p-1}
\frac{\e^{-\rho t/2}}{\rho'^{n-p-1}} \left(\frac{2(1-\rho)}{2-\rho}\right)^m \frac{1}{(n-p-1)! } \times \\
& \qquad \qquad \qquad \qquad \qquad \qquad \oint_{1}\dd^{n-p-1}z
\frac{\e^{\Lambda_{n-p-1,0} t}
\Delta_{n-p-1}(z) \Delta_{n-p-1}(-z)
\prod_{i=1}^{n-p-1}  (1-\rho'z_i)  }
{\prod_{i=1}^{n-p-1}(z_i-1)^{n-p}
[\frac{1}{2}(z_i+1)]^m} .
\end{split}
\end{equation*}
The formula in the right hand side except for the factor ${(-\rho / \rho')}^p$ coincides with $I_z$, defined in \eqref{def:Iz}, in which $n$ is replaced with $n-p$. 
Therefore, $\vec{z}$-integrals can be written as

\begin{equation}
\label{z-integral_part}
 \oint_1 \dd^{n-1} z\,  L(\vec{z}) \prod_{q=1}^{p}(z_q-1) = - {\left( \frac{- \rho}{\rho'} \right)}^p \left( J^{(n-p)} - 1 \right) ,
\end{equation}
where $J^{(n)}$ is defined in \eqref{J}.
Following the arguments in Section~\ref{se:tofredholm}, $J^{(n-p)}$ is thus written as a Fredholm determinant with the kernel

\begin{align*}
K(x,y)
=&\oint_1 \frac{\dd w}{2 \pi \ii} {\left( \frac{w-1}{w} \right)}^p(1-\rho'/w)\e^{h(w,x)}
\oint_{\rho'} \frac{\dd z}{2 \pi \ii}
\e^{-h(z,y)}\frac{z^p}{z(w-z)(1-\rho'/z)(z-1)^p}
\\
=&
\e^{-h(\rho',y)} {\left( \frac{-\rho'}{\rho} \right)}^p 
\oint_1 \frac{\dd w}{2 \pi \ii w} {\left( \frac{w-1}{w} \right)}^p  \e^{h(w,x)},
\end{align*}
where
$
\e^{h(w,x)}:=w^{-x}
\left(\frac{w}{w-1}\right)^{n}
\left(\frac{w+1}{w}\right)^{m}
\e^{-wt/2}
$.
Since the operator with the above kernel is a rank one operator, the third and subsequent terms of its Fredholm determinant vanish.
Therefore, we obtain

\begin{equation}
\label{J_Fredholm}
J^{(n-p)}=
1- \sum_{x=1}^{\infty} K(x,x)
=1-\left(\frac{\rho'}{-\rho} \right)^{p} \sum_{x=1}^{\infty}
\int_1\frac{\dd w}{2 \pi \ii w}{\left( \frac{w-1}{w} \right)}^p   \e^{h(w,x)-h(\rho' , x)}.
\end{equation}
Since the contour includes only the pole at $w = 1$, it is allowed to choose the contour on which $|w| > \rho'$ always holds. Therefore, we can perform the sum over $x \in \mathbb{N}$ in the right hand side of \eqref{J_Fredholm}, and we obtain

\begin{equation}
\label{J_Fredholm_xsum}
{\left( \frac{-\rho}{\rho'} \right)}^p (J^{(n-p)} - 1) = \oint_{1} \frac{\dd w}{2 \pi \ii} \frac{\rho'}{w(w - \rho')} {\left( \frac{w - 1}{w} \right)}^p \e^{h(w , 0) - h(\rho',0)}.
\end{equation}
Note that the contour does not include the pole at $w = \rho'$.
Recalling the definition of $A^c_p(x)$ given in \eqref{def:Acj} and combine \eqref{z-integral_part} and \eqref{J_Fredholm_xsum}, we have 

\begin{multline*}
\label{p_sum_part_original}
\sum_{p=1}^{n-1} \oint_1 \dd^{n-1} z\,  L(\vec{z}) \prod_{q=1}^{p}(z_q-1) A^c_p(x)
\\ = \sum_{p=1}^{n-1}  \oint_{1} \frac{\dd w}{2 \pi \ii} \frac{\rho'}{w - \rho'} {\left( \frac{w-1}{w} \right)}^p  \e^{h(w,0)- h(\rho', 0) + g_J(w)} \oint_1 \frac{\dd z}{2 \pi \ii} \e^{ f^c(z,x)- f^c(w_c,x) + g_{\phi}(z)} {\left( \frac{z}{1 + z} \right)}^p 
\end{multline*}
with $g_J(w) := - \ln(w)$, $g_\phi(z):=\ln{(z+\rho')/(z(z+1))}$ and $f^c(z,x)$ defined in \eqref{def:fc}. 
In the right hand side, we can perform the sum over $p \in [1,n-1]$ as

\begin{equation}
\label{p_sum}
\sum_{p=1}^{n-1}{ {\left[ \frac{z(w-1)}{w(z+1)} \right]}^p } = \frac{z(w-1)}{z+w} \left[ 1 - {\left(  \frac{z(w-1)}{w(z+1)} \right)}^{n-1} \right],
\end{equation}
and the factor ${(w - 1)}^n$ in the second term removes the pole of the $n$\textsuperscript{th} order at $w = 1$ of the integrand $\e^{h(w,0)}$, i.e., the second term of \eqref{p_sum} does not contribute.
If we choose the contours such that $\Re(z + w) >0$ holds for any $w$ and $z$ on the contours, the factor $1/(z + w)$ can be written as $1/(z + w) = \int_{0}^{ \infty } \dd k \e^{- k (z + w)}$. Therefore, we obtain

\begin{multline}
\label{p_sum_part_taken_unscaled}
\sum_{p=1}^{n-1} \oint_1 \dd^{n-1} z\,  L(\vec{z}) \prod_{q=1}^{p}(z_q-1) A^c_p(x) \\
= \oint_{1} \frac{\dd w}{2 \pi \ii} \frac{\rho'}{w - \rho'}  \e^{ h(w,0)- h(\rho',0) + g'_J(w)} \oint_1 \frac{\dd z}{2 \pi \ii}  \e^{f^c(z,x)- f^c(w_c,x) + g'_{\phi}(z)} \int_{0}^{\infty} \dd k \e^{-k(z + w)}
\end{multline}
with $g'_J(w) : = g_J(w) + \ln(w-1)$ and $g'_\phi(z) : = g_\phi(z) + \ln(z)$.

In the next step, we will investigate the asymptotic behaviour in the long time limit with the scaling \eqref{nmscale}.
To perform the asymptotic analyses, we scale $(x,y)= ( \lambda_c t^{1/3} \xi, \lambda_c t^{1/3} \zeta )$ and $k = t^{1/3} \kappa$, and choose the contour of the integral with respect to $z$ and $w$ in \eqref{p_sum_part_taken_unscaled} as $\Gamma'$ defined in \eqref{Gamma_prime} and $\Theta'$ given in Fig. \ref{fig:contour gaussian 1 deform}, respectively.
$\Theta'$ can be expressed as

\begin{equation*}
\label{Contour_Theta_prime}
\Theta' =  \left\{ w \in \Theta \mid |w - \rho'| > \epsilon \right\} \cup \left\{ w \in \mathbb{C} \mid w = \rho' + \epsilon \e^{- \ii s } , s \in [- \pi/2 , \pi/2] \right\}
\end{equation*}
with $\epsilon \ll t^{-1/6}$ where $\Theta$ is defined in \eqref{gamma2}. It is easy to see that $\Re(z + w) > 0$ holds for any $(z , w) \in \Gamma' \times \Theta'$.
Rewriting \eqref{p_sum_part_taken_unscaled} in terms of $\bar{A}^c_{t,p}(\xi)$ and putting the integral with respect to $\kappa$ outside ones with respect to $z$ and $w$, we get 

\begin{multline}
\label{p_sum_part_taken}
\sum_{p=1}^{n-1} \oint_1 \dd^{n-1} z\,  L(\vec{z}) \prod_{q=1}^{p}(z_q-1) \bar{A}^c_{t,p}(\xi) \\
=  t^{1/3} \int_{0}^{\infty} \dd \kappa \e^{-3 \rho' t^{1/3} \kappa/2 } \oint_{\Theta'} \frac{\dd w}{2 \pi \ii} \frac{\rho'}{w - \rho'}  \e^{\bar{h}(w,t)-\bar{h}(\rho',t) - \kappa (w - \rho') t^{1/3} + g'_J(w)} \times \\ \oint_{\Gamma'} \frac{\dd z}{2 \pi \ii}  \e^{\bar{f}(z,t)-\bar{f}(w_c,t) - \kappa (z - w_c) t^{1/3} + g'_{\phi}(z)}  {\left( \frac{w_c + c}{z + c} \right)}^{\xi \lambda_c t^{1/3}},
\end{multline}
where $\bar{f}(z , t)$ is defined above \eqref{resc K suitable for asymptotics} and $\bar{h}(w , t)$ is defined as
\begin{equation*}
\bar{h}(w, t) := g_1(w) t + g_2(w , 0 ,0) t^{1/2} + g_3(w) t^{1/3}
\end{equation*}
with $g_i(w)$ and $g_2(w, \xi , \kappa)$ given in \eqref{g def2}.
Similarly, the rescaled function $\bar{B}^c_t(\zeta)$ can be written as
\begin{equation}
\label{Bc_part}
\bar{B}^c_t(\zeta) = \lambda_c t^{1/3} \oint_{\Sigma'} \frac{\dd w}{2 \pi \ii} \e^{-\bar{f}(w,t) + \bar{f}(w_c,t) + g_{\psi}(w)} {\left( \frac{w+c}{w_c + c} \right)}^{\zeta \lambda_c t^{1/3}}
\end{equation}
with $g_{\psi}(w) = - \ln{(w + \rho')(w + c)}$ and $\Sigma'$ defined in \eqref{Sigma_prime}.
It follows from \eqref{p_sum_part_taken} and \eqref{Bc_part} that the function which is to be evaluated is written as

\begin{equation*}
\sum_{p=1}^{n-1} \oint_1 \dd^{n-1} z\,  L(\vec{z}) \prod_{q=1}^{p}(z_q-1) \bar{A}^c_{t,p}(\xi) \bar{B}^c_t(\zeta) = \int_{0}^{\infty} \dd \kappa \e^{-3 \rho' t^{1/3} \kappa / 2} G_{\rm g}(\kappa, s_{\rm g}) G_2(\kappa, s_2 , \xi , \zeta) ,
\end{equation*}
where the functions $G_{\rm g}$ and $G_2$ are defined as
 
\begin{equation}
\label{Gaussian_part}
G_{\rm g}(\kappa, s_{\rm g}) := \oint_{\Theta'} \frac{\dd w}{2 \pi \ii} \frac{\rho'}{w - \rho'}  \e^{\bar{h}(w,t)-\bar{h}(\rho',t) - \kappa (w - \rho') t^{1/3} + g'_J(w)},
\end{equation}
\begin{multline}
\label{Airy_parts}
G_2(\kappa, s_2, \xi , \zeta) := t^{1/3} \oint_{\Gamma'}  \frac{\dd z}{2 \pi \ii}  \e^{\bar{f}(z,t)-\bar{f}(w_c,t) - \kappa (z - w_c) t^{1/3} + g'_{\phi}(z)}  {\left( \frac{w_c + c}{z + c} \right)}^{\xi \lambda_c t^{1/3}} \times \\ \lambda_c t^{1/3} \oint_{\Sigma'} \frac{\dd w}{2 \pi \ii} \e^{-\bar{f}(w,t) + \bar{f}(w_c,t) + g_{\psi}(w)} {\left( \frac{w+c}{w_c + c} \right)}^{\zeta \lambda_c t^{1/3}}.
\end{multline}
Note that the two integrals in $G_2(\kappa, s_2 , \xi, \zeta)$ are independent but we denote it as a single function so as to compare it with $\bar{K}^c_{t}(\xi, \zeta)$ given by \eqref{RescKernelc}.
We observe that the integral in  $G_{\rm g}(\kappa, s_{\rm g})$ has a similar form to the product of the functions $\phi_k$ and $\psi_k$ defined in \eqref{phiandpsi2}, which converges to the Gaussian function.
It is easy to see that the integrals in $G_2(\kappa, s_2 , \xi, \zeta)$ share similar integrands to those with respect to $z$ and $w$ in \eqref{KernelI1withc} except for $1/(w - z)$.

\paragraph{2, Estimate of $G_{\rm g}(\kappa, s_{\rm g})$}

By using a steepest descent analysis, in a similar way as in Propositions~\ref{uniform convergence phi2} and \ref{bound phi2}, we can show that

\begin{equation}
\label{Gaussian_part_Bound}
\left| G_{\rm g}(\kappa , s_{\rm g}) \right| \leq C_1
\end{equation}
holds for $t$ large enough with some positive constant $C_1$ independent of $\kappa$.
Since $\bar{h}(w , t)$ has a saddle point at $w = \rho'$ as shown in Appendix~\ref{ap:uniform convergence phi2}, we divide the contour $\Theta'$ into $\Theta^\delta := \left\{ w \in \Theta' \mid |w - \rho'| \leq \delta \right\}$ and $\Theta' \backslash \Theta^\delta$ with $\delta = t^{-1/6}$.

First, we focus on the main contribution from $\Theta^\delta$.
Since the Taylor expansion of $\bar{h}(w,t)$ around $w = \rho'$ is given by \eqref{g taylor expand2}, we obtain

\begin{multline}
\label{Gaussian_part_divide_1}
\int_{\Theta^\delta} \frac{\dd w}{2 \pi \ii} \frac{\rho'}{w - \rho'}  \e^{\bar{h}(w,t)-\bar{h}(\rho',t) - \kappa (w - \rho') t^{1/3} + g'_J(w)} \\
= \e^{- \frac{1}{2} {(s_{\rm g} + \rho' \kappa t^{-1/6} / \lambda_2)}^2} \int_{ \theta^{\delta t^{1/2}} } \frac{\dd v}{2 \pi \ii} \left( \frac{-\rho}{v} \right) \e^{ - \frac{1}{4} {\left[ v + \sqrt{2} \ii (s_{\rm g} +\rho' \kappa t^{-1/6} / \lambda_2  )  \right]}^2 } + R_v ,
\end{multline}
where the remainder $R_v$ is given by

\begin{equation*}
R_v = \e^{- \frac{1}{2} {(s_{\rm g} + \rho' \kappa t^{-1/6} / \lambda_2)}^2} \int_{ \theta^{\delta t^{1/2}} } \frac{\dd v}{2 \pi \ii} \left( \frac{- \rho}{v} \right) \e^{ - \frac{1}{4} {\left[ v + \sqrt{2} \ii (s_{\rm g} +\rho' \kappa t^{-1/6} / \lambda_2  )  \right]}^2  } \left( \e^{ \mathcal{O}(v^3 t^{-1/2}, v^2 t^{-1/6}, v t^{-1/2}) } - 1 \right)
\end{equation*}
with $\theta^{\delta t^{1/2}} := \{ v \in \mathbb{C} \mid \rho' + \ii v \rho' / ( \sqrt{2 t} \lambda_2) \in \Theta^\delta \}$ and $\lambda_2$ defined in \eqref{ConstantsGaussian}. In the same fashion as Appendix~\ref{Unif_conv_Kc}, we can show

\begin{equation*}
\label{Upper_bound_vintegral}
|R_v | \leq c_1 t^{-1/6} ,
\end{equation*}
where $c_1$ is some positive constant independent of $\kappa$.
Moreover, the first term of the right hand side of \eqref{Gaussian_part_divide_1} is rewritten as

\begin{multline}
\label{Gaussian_part_divide_2}
\e^{- \frac{1}{2} {(s_{\rm g} + \rho' \kappa t^{-1/6} / \lambda_2)}^2} \int_{ \theta^{\delta t^{1/2}} } \frac{\dd v}{2 \pi \ii}\left( \frac{-\rho}{v} \right) \e^{ - \frac{1}{4} {\left[ v + \sqrt{2} \ii (s_{\rm g} +\rho' \kappa t^{-1/6} / \lambda_2  )  \right]}^2 } \\ =  \e^{- \frac{1}{2} {(s_{\rm g} + \rho' \kappa t^{-1/6} / \lambda_2)}^2}   \int_{- \sqrt{2} \lambda_2 \delta t^{1/2} /\rho' }^{ \sqrt{2} \lambda_2 \delta t^{1/2} / \rho' } \frac{\dd u}{2 \pi \ii} \frac{\rho}{u - \sqrt{2} \ii \beta } \e^{ - \frac{1}{4} {\left[ u + \sqrt{2} \ii (s_{\rm g} + \rho' \kappa t^{-1/6} / \lambda_2 - \beta) \right]}^2 } + R_z
\end{multline}
for some $\beta > 0$, and $R_z$ satisfies
\begin{equation*}
\label{Upper_bound_zintegral}
| R_z | \leq c_2 \e^{- a_1 \delta^2 t }
\end{equation*}
with some positive constant $c_2$ independent of $\kappa$ and $a_1 := {\lambda_2}^2 / (2 {\rho'}^2) $.
Incidentally, the equality \eqref{Gaussian_part_divide_2} can be derived by considering the contour integration with the contour shown in Fig.~\ref{fig:contour gaussian pv}, and $R_z$ is contribution from paths $C_{+}$ and $C_{-}$.

Second, we focus on the contribution from $\Theta' \backslash \Theta^\delta$.
Obviously, we have

\begin{multline}
\label{Gaussian_part_rest_1}
\left| \int_{\Theta' \backslash \Theta^\delta} \frac{\dd w}{2 \pi \ii} \frac{\rho'}{w - \rho'}  \e^{\bar{h}(w,t)-\bar{h}(\rho',t) - \kappa (w - \rho') t^{1/3} + g'_J(w)} \right| \\ \leq \int_{\Theta' \backslash \Theta^\delta} \left| \frac{\dd w}{2 \pi \ii} \right| \left|  \frac{\rho'}{w - \rho'}  \e^{\bar{h}(w,t)-\bar{h}(\rho',t) - \kappa (w - \rho') t^{1/3} + g'_J(w)} \right|.
\end{multline}
In the integrand, $|\e^{- \kappa (w - \rho') t^{1/3}}| \leq 1$ holds because $\Re(w - \rho') \geq 0$ holds for any $w \in \Theta'$, hence we can evaluate the exponent of the right hand side of \eqref{Gaussian_part_rest_1} in the same way as Proposition~\ref{uniform convergence phi2}. 
That is to say, $g_1(w)$ given in \eqref{g def2} takes the maximum value at $w = \rho' \pm \ii \delta$ along the contour $\Theta' \backslash \Theta^\delta$, thus 

\begin{equation*}
\Re(g_1(w) - g_1(\rho')) = \Re(- a_1 \delta^2 + \mathcal{O}( \delta^3)) \leq - a_1 \delta^2 / 2
\end{equation*}
holds for any $w \in \Theta' \backslash \Theta^\delta$.
The other terms in the exponent behaves as $\mathcal{O}(t^{1/2}, t^{1/3})$ because the functions $g_2(w,0,0)$ and $g_3(w)$ given in \eqref{g def2} are regular for any $w \in \Theta' \backslash \Theta^\delta$.
Owing to $\delta = t^{-1/6}$, $a_1 \delta^2 t$ is dominant in the exponent.
Therefore, we obtain

\begin{equation*}
\label{Gaussian_part_rest_2}
\begin{split}
\left| \int_{\Theta' \backslash \Theta^\delta} \frac{\dd w}{2 \pi \ii} \frac{\rho'}{w - \rho'}  \e^{\bar{h}(w,t)-\bar{h}(\rho',t) - \kappa (w - \rho') t^{1/3} + g'_J(w)} \right| \leq c_3 \e^{ - a_1 \delta^2 t / 2 } ,
\end{split}
\end{equation*}
where $c_3$ is some positive constant independent of $\kappa$.
From the above, it turns out that

\begin{equation*}
\begin{split}
& \oint_{\Theta'} \frac{\dd w}{2 \pi \ii} \frac{\rho'}{w - \rho'}  \e^{\bar{h}(w,t)-\bar{h}(\rho',t) - \kappa (w - \rho') t^{1/3} + g'_J(w)} \\
= & \e^{ - \frac{1}{2} {(s_{\rm g} + \rho' \kappa t^{-1/6} / \lambda_2)}^2 } \int_{- \sqrt{2} \lambda_2 \delta t^{1/2} / \rho'}^{ \sqrt{2} \lambda_2 \delta t^{1/2} / \rho' } \frac{\dd u}{2 \pi \ii} \frac{\rho}{u - \sqrt{2} \ii \beta } \e^{ - \frac{1}{4} {\left[ u + \sqrt{2} \ii (s_{\rm g} + \rho' \kappa t^{-1/6} / \lambda_2 - \beta) \right]}^2 } +\mathcal{O}(t^{-1/6}, \e^{- a_1 \delta^2 t / 2 }) \\
\underset{t \rightarrow \infty}{\longrightarrow} & \e^{ - \frac{1}{2} {(s_{\rm g} + \rho' \kappa t^{-1/6} / \lambda_2)}^2 } \int_{- \infty}^{ \infty } \frac{\dd u}{2 \pi \ii} \frac{\rho}{u - \sqrt{2} \ii \beta } \e^{ - \frac{1}{4} {\left[ u + \sqrt{2} \ii (s_{\rm g} + \rho' \kappa t^{-1/6} / \lambda_2 - \beta) \right]}^2 } . 
\end{split}
\end{equation*}
We can see that the function in the third line can be bounded by some constant independent of $\kappa$, and this leads to the result \eqref{Gaussian_part_Bound}.

\begin{figure}[h]
\begin{center}
\begin{tikzpicture}[scale=2.5]
jst
\draw[->,thick] (-2,0) -- (2,0);
\draw[->,thick] (0,-1) -- (0,0.6);

\node at (0.06,-0.10) {$0$};
\node at (0.35,0.12) {$D \epsilon t^{1/2}$};
\node at (-0.35,0.12) {$-D \epsilon t^{1/2}$};
\node at (1.55,0.12) {$D \delta t^{1/2}$};
\node at (-1.55,0.12) {$-D \delta t^{1/2}$};
\node at (0.2,-0.9) {$- \sqrt{2} \ii \beta $};

\node[scale=1.5] at (0.8,0.4) {$\theta^{\delta t^{1/2}}$};
\node[scale=1.5] at (-1.7,-0.45) {$C_{-}$};
\node[scale=1.5] at (1.7,-0.45) {$C_{+}$};

\draw[->,thick] (0.2,0) arc (0:-180:0.2);

\coordinate (A) at (1.5,0);
\coordinate (B) at (0.2,0);
\coordinate (C) at (-0.2,0);
\coordinate (D) at (-1.5,0);
\coordinate (E) at (-1.5,-0.8);
\coordinate (F) at (1.5,-0.8);

\draw[->,thick] (C) -- (D);
\draw[->,thick] (D) -- (E);
\draw[->,thick] (E) -- (F);
\draw[->,thick] (F) -- (A);
\draw[->,thick] (A) -- (B) ;

\end{tikzpicture}
\end{center}
\caption{The contour of the integral chosen to derive \eqref{Gaussian_part_divide_2} where $D := \sqrt{2} \lambda_2 / \rho'$. The paths consisting of the lines along the real axis and semi-circle centered at the origin denote the path $\theta^{\delta t^{1/2}}$.}
\label{fig:contour gaussian pv}
\end{figure}
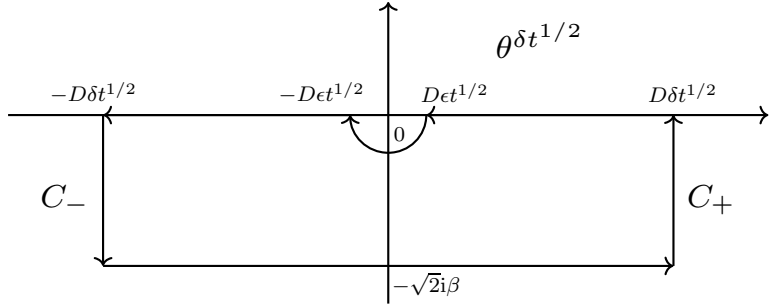

\paragraph{3, Estimate of $G_2(\kappa, s_2, \xi , \zeta)$}

In a similar way to Propositions~\ref{uniform convergence kernelc} and \ref{exp_bound},  we can show that

\begin{equation}
\label{Airy_parts_Bound}
\left| G_2(\kappa, s_2 , \xi , \zeta) \right| \leq C_2 \e^{-(\xi + \zeta)}
\end{equation}
holds for $t$ large enough and $(\xi, \zeta) \in [0, \infty )^2$ with some positive constant $C_2$ independent of $\kappa$.
Likewise the proofs of Propositions~\ref{uniform convergence kernelc} and \ref{exp_bound}, we separate the proof into two cases of $(\xi, \zeta) \in [0,L]^2$ and $ (\xi, \zeta) \in [0, \infty)^2 \backslash [0,L]^2$ for $L$ large enough but independent of $t$.

\paragraph{(i) Estimate for $(\xi , \zeta) \in [0,L]^2$}

Performing a steepest descent analysis, we can prove

\begin{equation}
\label{Airy_parts_limit}
 \lim_{t \to \infty}G_2(\kappa, s_2 , \xi , \zeta) = c_4 \mathrm{Ai}(\xi + s_2 + \kappa/\lambda) \mathrm{Ai}(\zeta + s_2) ,
\end{equation}
where $c_4 := 1/[\lambda (w_c + 1)]$ for any $\kappa \in (0, \infty)$ and $(\xi, \zeta) \in [0,L]^2$, and this yields \eqref{Airy_parts_Bound} because $|\mathrm{Ai}(x)|\leq C_a \e^{-a x}$ holds for $x \in \mathbb{R}$ with some $a > 0$ and $C_a >0$ depending on $a$.

Since the integral with respect to $w$ in \eqref{Airy_parts} does not depend on $\kappa$, by the same arguments as in Proposition~\ref{uniform convergence kernelc}, we can prove that it converges to the Airy function uniformly for $\zeta \in [0,L]$ and $\kappa \in (0, \infty)$, i.e.,

\begin{equation}
\label{Airy_parts_w_limit}
\lim_{t \to \infty} \lambda_c t^{1/3} \oint_{\Sigma'} \frac{\dd w}{2 \pi \ii} \e^{-\bar{f}(w,t) + \bar{f}(w_c,t) + g_{\psi}(w)} {\left( \frac{w+c}{w_c + c} \right)}^{\zeta \lambda_c t^{1/3}} = c_5 \mathrm{Ai}(\zeta + s_2),
\end{equation}
where $c_5 := 1/(w_c + \rho')$.

On the other hand, the integral with respect to $z$ in \eqref{Airy_parts} depends on $\kappa$, and hence we will evaluate it by slightly different calculations from Proposition~\ref{uniform convergence kernelc}.
First, we focus on the main contribution from ${\Gamma'}^\Delta$ defined in \eqref{gamma_prime_Delta} where $\Delta = t^{-1/9}$.
Since the Taylor expansion of $\bar{f}(z,t)$ around $z = w_c = \rho'/2$ is given by \eqref{gExpands}, we obtain

\begin{multline*}
\label{Airy_parts_z}
t^{1/3} \int_{{\Gamma'}^\Delta} \frac{\dd z}{2 \pi \ii} \e^{\bar{f}(z,t)-\bar{f}(w_c,t) - \kappa (z - w_c) t^{1/3} + g'_{\phi}(z)}  {\left( \frac{w_c + c}{z + c} \right)}^{\xi \lambda_c t^{1/3}}  =  c_6 \int_{\gamma^{\Delta t^{1/3}}} \frac{\dd u}{2 \pi \ii} \e^{ \frac{u^3}{3} - (s_2 + \xi + \kappa / \lambda) u } + R_u,
\end{multline*}
where $c_6 := (w_c + \rho')/[\lambda (w_c + 1)]$ and the remainder $R_u$ is given by
\begin{equation*}
R_u = c_6 \int_{\gamma^{\Delta t^{1/3}}} \frac{\dd u}{2 \pi \ii} \e^{ \frac{u^3}{3} - (s_2 + \xi + \kappa / \lambda) u } \left( \e^{\mathcal{O}(u^4 t^{-1/3}, u^2 t^{-1/6}, u t^{-1/3})} - 1 \right)
\end{equation*}
with  $\gamma^{\Delta t^{1/3}} := \{ u \in \mathbb{C} \mid w_c + u /( \lambda t^{1/3} ) \in {\Gamma'}^{\Delta} \}$ and $\lambda$ defined in \eqref{DefNormalLambda}.
In the integrand, $ \left| \e^{- (\xi + \kappa / \lambda) u } \right| \leq 1 $ holds because $\xi, \kappa, \lambda$ are positive and $\Re(u) \geq 0$ holds for any $u \in \gamma^{\Delta t^{1/3}}$.
Therefore, in the same fashion as Appendix~\ref{Unif_conv_Kc}, we can show

\begin{equation*}
|R_u| \leq c_7 t^{-1/6},
\end{equation*}
where $c_7$ is some positive constant independent of $\kappa$.
Second, we focus on the contribution from $\Gamma' \backslash {\Gamma'}^\Delta$.
Since $|\e^{- \kappa (z - w_c) t^{1/3}}| \leq 1$ holds for any $\kappa \in (0, \infty)$ and $z \in \Gamma'$, we can evaluate an upper bound in the same way as Proposition~\ref{uniform convergence kernelc}, i.e.,

\begin{equation*}
\label{Airy_parts_z_2}
\left| t^{1/3} \int_{\Gamma' \backslash {\Gamma'}^\Delta } \frac{\dd z}{2 \pi \ii} \e^{\bar{f}(z,t)-\bar{f}(w_c,t) - \kappa (z - w_c) t^{1/3} + g'_{\phi}(z)}  {\left( \frac{w_c + c}{z + c} \right)}^{\xi \lambda_c t^{1/3}} \right| \leq c_8 \e^{- a_1 \Delta^3 t/2 }
\end{equation*}
with some positive constant $c_8$ independent of $\kappa$ and $a_1$ given in \eqref{gCoefs}.

Consequently, the integral with respect to $z$ in \eqref{Airy_parts} converges to the Airy function uniformly for $\xi \in [0,L]$ and $\kappa \in (0, \infty)$, i.e.,

\begin{equation}
\label{Airy_parts_z_limit}
\lim_{t \to \infty} t^{1/3} \oint_{\Gamma'} \frac{\dd z}{2 \pi \ii} \e^{\bar{f}(z,t)-\bar{f}(w_c,t) - \kappa (z - w_c) t^{1/3} + g'_{\phi}(z)}  {\left( \frac{w_c + c}{z + c} \right)}^{\xi \lambda_c t^{1/3}} = c_6 \mathrm{Ai}(\xi + s_2 + \kappa / \lambda).
\end{equation}
Combining \eqref{Airy_parts_w_limit} and \eqref{Airy_parts_z_limit}, we arrive at
\eqref{Airy_parts_limit}.

\paragraph{(ii) Estimate for $(\xi , \zeta) \in [0, \infty )^2 \backslash [0,L]^2$}

In the same way as Proposition~\ref{exp_bound}, we can prove that \eqref{Airy_parts_Bound} holds for large enough $t$ and large enough $L$ but independent of $t$.
Obviously, we have

\begin{multline}
\label{Airy_parts_Bound_2}
\left| G_2(\kappa, s_2, \xi , \zeta) \right| \leq t^{1/3} \oint_{\Gamma'} \left| \frac{\dd z}{2 \pi \ii} \right| \left| \e^{\bar{f}(z,t)-\bar{f}(w_c,t) - \kappa (z - w_c) t^{1/3} + g'_{\phi}(z)} \right| {\left| \frac{w_c + c}{z + c} \right|}^{\xi \lambda_c t^{1/3}} \times  \\   
\lambda_c t^{1/3} \oint_{\Sigma'} \left| \frac{\dd w}{2 \pi \ii} \right| \left| \e^{-\bar{f}(w,t) + \bar{f}(w_c,t) + g_{\psi}(w)} \right| {\left| \frac{w+c}{w_c + c} \right|}^{\zeta \lambda_c t^{1/3}}.
\end{multline}
In the integrand, $|\e^{- \kappa (z - w_c) t^{1/3}}| \leq 1$ holds for any $\kappa \in (0, \infty)$ and $z \in \Gamma'$. Hence, we can apply the same arguments as in Proposition~\ref{exp_bound} to the right hand side of \eqref{Airy_parts_Bound_2}, which results in \eqref{Airy_parts_Bound}.

\paragraph{4, Conclusion} Combining the above two results \eqref{Gaussian_part_Bound} and \eqref{Airy_parts_Bound}, we arrive at

\begin{equation*}
\begin{split}
\left| \sum_{p=1}^{n-1} \oint_1 \dd^{n-1} z\,  L(\vec{z}) \prod_{q=1}^{p}(z_q-1) \bar{A}^c_{t,p}(\xi) \bar{B}^c_t(\zeta) \right|
\leq &  \int_{0}^{\infty} \dd \kappa \e^{ - \frac{3}{2} \rho' t^{1/3} \kappa } \left| G_g(\kappa, s_g) \right| \left| G_2(\kappa, s_2 , \xi , \zeta) \right| \\
\leq & C_1 C_2 \e^{-(\xi + \zeta)} \int_{0}^{\infty} \dd \kappa \e^{ - \frac{3}{2} \rho' \kappa t^{1/3} } \\
= & D_1 t^{-1/3} \e^{- (\xi + \zeta)}
\end{split}
\end{equation*}
with $D_1 := 2C_1 C_2 / (3 \rho')$.
\qed
\end{proof}

Utilising Propositions~\ref{exp_bound_W} and \ref{I2(2)_up_bound_AB}, we get the following theorem.

\begin{theorem}
\label{thm_I2(2) limit}

With the scaling \eqref{nmscale}, the long time limit of the term $\mathcal{I}_2^{(2)}$ defined in \eqref{def:I2(2)} is given by

\begin{equation}
\label{I2(2) limit}
\lim_{t\rightarrow\infty}\mathcal{I}^{(2)}_2 = 0.
\end{equation}
\end{theorem}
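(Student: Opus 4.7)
The plan is to bound $|\mathcal{I}_2^{(2)}|$ directly from the rescaled expression \eqref{def:I2(2)_rescaled} by a constant times $t^{-1/3}$, so that the limit vanishes as $t\to\infty$. The $t^{-1/3}$ factor is already supplied by Proposition \ref{I2(2)_up_bound_AB}; the remaining task is to show that everything else contributes only $O(1)$ uniformly in $t$. The main ingredients will be a uniform exponential bound on $\bar{K}^c_{\mathrm{W},t}$, Hadamard's inequality for the sub-determinants, and Riemann-sum estimates exploiting the $(1/\lambda_c t^{1/3})^k$ prefactors to convert the discrete sums over $x_i$ into convergent integrals.

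First I would combine Propositions \ref{uniform convergence kernelc2} and \ref{exp_bound_W} to derive a uniform exponential bound $|\bar{K}^c_{\mathrm{W},t}(\xi,\zeta)| \leq \tilde{C}\,e^{-(\xi+\zeta)}$ valid on all of $[0,\infty)^2$ for $t$ large enough: on the compact square $[0,L]^2$ it follows from uniform convergence to the bounded Airy kernel, and on the complement directly from Proposition \ref{exp_bound_W}. Then Hadamard's inequality applied to the $(k-1)\times(k-1)$ sub-matrix gives
\begin{equation*}
\left|\det\!\left[\bar{K}^c_{\mathrm{W},t}(x'_i,x'_j)\right]_{\substack{1\le i,j\le k\\ i\neq\ell,\,j\neq r}}\right|^{2} \leq \prod_{i\neq\ell}\,\sum_{j\neq r}\bigl|\bar{K}^c_{\mathrm{W},t}(x'_i,x'_j)\bigr|^{2}
\leq \tilde C^{2(k-1)}\,k^{k-1}\,e^{-2\sum_{i\neq\ell}x'_i},
\end{equation*}
using the crude estimate $\sum_{j=1}^{k} e^{-2x'_j}\le k$. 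Taking square roots yields
$\bigl|\det[\bar{K}^c_{\mathrm{W},t}]_{\hat\ell,\hat r}\bigr| \leq \tilde C^{k-1}k^{(k-1)/2} e^{-\sum_{i\neq\ell}x'_i}$.

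Next, I multiply by the bound from Proposition \ref{I2(2)_up_bound_AB}, namely $D_1 t^{-1/3} e^{-(x'_\ell+x'_r)}$, and perform the sums over $x_1,\ldots,x_k$. Each one-dimensional Riemann sum of the form $\tfrac{1}{\lambda_c t^{1/3}}\sum_{x_i=1}^{\infty} e^{-a_i x'_i}$ with $a_i\in\{1,2\}$ is bounded uniformly in $t$ (as $t\to\infty$ it tends to $\int_0^{\infty}e^{-a_i\xi}d\xi=1/a_i$), and the sum over $r,\ell\in[1,k]$ contributes a factor of at most $k^2$. Collecting these estimates gives
\begin{equation*}
|\mathcal{I}_2^{(2)}| \le C' D_1\,t^{-1/3}\sum_{k=1}^{\infty}\frac{k^{2}\,\tilde{C}^{k-1}\,k^{(k-1)/2}}{k!},
\end{equation*}
for some constant $C'$, and the series converges by Stirling's estimate $k!\gtrsim (k/e)^{k}$, which makes $k^{2}k^{(k-1)/2}/k!\lesssim e^{k}k^{(3-k)/2}\to 0$ super-geometrically. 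Therefore $|\mathcal{I}_2^{(2)}|=O(t^{-1/3})$ and \eqref{I2(2) limit} follows.

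The main obstacle will be ensuring that all estimates are genuinely uniform in $t$ and that the series in $k$ converges. In particular, one must treat carefully the case distinction $\ell=r$ versus $\ell\neq r$, since in the latter the variable $x_r$ appears both as a row index of the sub-determinant (contributing $e^{-x'_r}$ via Hadamard) and inside the $\bar B^c_t$-factor (contributing another $e^{-x'_r}$), and verify that each of the $k$ Riemann sums is paired correctly with a factor $1/(\lambda_c t^{1/3})$. An alternative and conceptually cleaner route would be to recognize, via the cofactor expansion, that $\mathcal{I}_2^{(2)}=\det(1-\bar{K}^c_{\mathrm{W},t})-\det(1-\bar{K}^c_{\mathrm{W},t}+\tilde A\otimes\tilde B)$ and then invoke the rank-one matrix-determinant lemma $\det(1-K+AB^{T})=\det(1-K)\bigl(1+\langle B,(1-K)^{-1}A\rangle\bigr)$ together with a resolvent estimate to isolate the $t^{-1/3}$ factor directly; however, this would require individual $L^{2}$-control of $\tilde A$ and $\tilde B$ beyond the product estimate delivered by Proposition \ref{I2(2)_up_bound_AB}, which makes the direct Hadamard approach above preferable.
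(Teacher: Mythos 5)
Your proposal is correct and follows essentially the same route as the paper: Hadamard's inequality applied to the minor of $\bar{K}^c_{\mathrm{W},t}$, the $t^{-1/3}$ factor supplied by Proposition \ref{I2(2)_up_bound_AB}, bounded Riemann sums via the $(1/\lambda_c t^{1/3})^k$ prefactors, and super-geometric convergence of the resulting series in $k$. Your extra care in patching the exponential kernel bound across $[0,L]^2$ (using Proposition \ref{uniform convergence kernelc2} there, and Proposition \ref{exp_bound_W} outside) is a welcome refinement of a point the paper glosses over, but does not change the substance of the argument.
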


\begin{proof}

It follows from Proposition \ref{exp_bound_W} and the Hadamard's inequality that the determinant part of $\mathcal{I}_2^{(2)}$, given in \eqref{def:I2(2)_rescaled}, is bounded above as

\begin{equation}
\label{Kc_up_bound_lrloss}
\left| \det\left[ \bar{K}^c_{\mathrm{W},t}(x'_i, x'_j) \right]_{\substack{ 1 \leq i , j \leq k \\  i \neq l , j \neq r } } \right| \leq {(k - 1)}^{(k-1)/2} \prod_{\substack{ 1 \leq i \leq k \\  i \neq l,  r } }{ \e^{-x'_i} }.
\end{equation}
Combining Proposition \ref{I2(2)_up_bound_AB} and \eqref{Kc_up_bound_lrloss}, the absolute value of $\mathcal{I}_2^{(2)}$ turns out to be bounded above as

\begin{equation}\label{I2(2)_up_bound}
\left| \mathcal{I}_2^{(2)} \right| \leq D_1 t^{-1/3} \sum_{k=1}^{m}{ k \frac{ {(k-1)}^{(k-1)/2} }{(k-1)!} } \prod_{i = 1}^{k}{ \left\{ \frac{1}{\lambda_c t^{1/3}} \sum_{x_i = 1}^{\infty}{ \e^{ - {x_i}/{(\lambda_c t^{1/3})} } } \right\} }.
\end{equation}
From $1 + x < \e^x$ for $x>0$ and $\sum_{k=1}^{\infty}{ k {{(k-1)}^{(k-1)/2}} /{(k-1)!} } < \infty$, we obtain

\begin{equation*}
\frac{1}{\lambda_c t^{1/3}} \sum_{x_i = 1}^{\infty}{ \e^{ - {x_i}/{(\lambda_c t^{1/3})} } } =  \frac{1}{\lambda_c t^{1/3}} \frac{1}{\e^{1/{(\lambda_c t^{1/3})}} - 1} < 1,
\end{equation*}
and this inequality guarantees that the right hand side of \eqref{I2(2)_up_bound} converges to zero in the long time limit with the scaling \eqref{nmscale}.
Therefore, we have \eqref{I2(2) limit}.
\qed

\end{proof}

Eventually, it follows from Theorems~\ref{thm_I2(1) limit} and \ref{thm_I2(2) limit} the following equalities hold.

\begin{equation}
\label{I2(1)-I2(2) limit}
\lim_{t \to \infty} \mathcal{I}_2 = \lim_{t \to \infty} \left( \mathcal{I}^{(1)}_2 - \mathcal{I}^{(2)}_2 \right) = \left[ 1 - F_G(s_{\rm g}) \right] F_2(s_{2}).
\end{equation}
Obviously, the second equality \eqref{I2(1)-I2(2) limit} itself coincides with the second claim of Theorem~\ref{thm_I1I2 limit}.
As mentioned below Theorem~\ref{thm_I1I2 limit} in Section~\ref{se:scalinglimit}, we obtain the final result as Corollary~\ref{cor_Pnm limit}.

\appendix

\section*{Acknowledgment}
We are grateful to P.L.~Ferrari, I.~Kostov, H.~Spohn and M.~Wheeler for discussions. This work was initiated at KITP Santa Barbara during the program \textit{New approaches to non-equilibrium and random systems: KPZ integrability, universality, applications and experiments}, which was supported in part by the National Science Foundation under Grant No. NSF PHY11-25915. Part of this work was performed during a stay of all authors at the MATRIX mathematical research institute in Australia during the program \textit{Non-equilibrium systems and special functions}. JdG and ZC gratefully acknowledge support from the Australian Research Council. The work of TS has been supported by JSPS KAKENHI Grants No. JP15K05203, No. JP16H06338, No. JP18H01141, No. JP18H03672, No. JP19L03665. 
The work of MU has been supported by Public Trust Iwai Hisao Memorial Tokyo Scholarship Fund.

\appendix

\section{Bethe wave function}
\label{appx:bethe wave fn}
We consider the AHR model with $n$ plus and $m$ minus particles. Define a set of coordinates by $\mathbb{W}^k:=
\{\vec{x}=(x_1,\dots,x_k)\in\mathbb{Z}^k:
x_1<x_2<\dots<x_k\}$, and let $\vec{x}=(x_1,\dots,x_n)\in \mathbb{W}^n$ and $\vec{y}=(y_1,\dots,y_m)\in \mathbb{W}^m$ be the positions of plus and minus particles, respectively. To each set of coordinates we associate a unit basis vector $\ket{\vec{x};\vec{y}}$, and construct the state space as $S={\rm Span}\{\ket{\vec{x};\vec{y}}\}$. Let
\[
\ket{P(t)} := \sum_{\vec{x},\vec{y}} P(\vec{x},\vec{y};t) \ket{\vec{x};\vec{y}}
\]
be the vector of probabilities to observe a configuration $\ket{\vec{x},\vec{y}}$ of the AHR model at time $t$ with plus particles at positions $\vec{x}$ and minus particles at positions $\vec{y}$. The vector $\ket{P(t)}$ satisfies the master equation
\begin{equation}\label{masterEqMatrix}
\frac{\dd}{\dd t} \ket{P(t)} = M \ket{P(t)},
\end{equation}
where $M$ is the transition matrix encoding the jumping rates of the AHR model. The $\big((\vec{x};\vec{y}),(\vec{x}';\vec{y}')\big)$ entry of $M$ is the transition rate from state $(\vec{x}';\vec{y}')$ to $(\vec{x};\vec{y})$, given by
\begin{align}
M\big((\vec{x};\vec{y}),(\vec{x}';\vec{y}')\big)
=
\left\{
\begin{array}{cl}
	\beta,
	&
	\textrm{  if }(\vec{x}';\vec{y}')
	= (\vec{x}_i^-;\vec{y}), \forall i \in [1,n]
	 \\
	\alpha,
	&
	\textrm{  if }(\vec{x}';\vec{y}')
	= (\vec{x};\vec{y}_j^+),
	\forall j \in [1,m]
	 \\
	-\beta n - \alpha m  
	&
	\textrm{  if }(\vec{x}';\vec{y}') = (\vec{x};\vec{y})
	 \\
	0,
	&
	\text{  otherwise},
	\end{array}
	\right.
\end{align}
where $\vec{x}_i^{\pm}:=(x_1,\dots,x_i {\pm} 1,\dots,x_n)$. \eqref{masterEqMatrix} is equivalent to \eqref{master eq AHR}.

To derive an expression for $P(\vec{x},\vec{y};t)$, or more precisely the transition probability \eqref{Green function} considered in Section \ref{se:greenfunction}, we first look at the eigenvalue problem of the master equation with eigenvector
\begin{equation*}
\ket{\psi}
=
\sum_{\vec{x},\vec{y}}
\psi(\vec{x};\vec{y})
\ket{\vec{x};\vec{y}},\qquad M\ket{\psi} = \Lambda_{n,m} \ket{\psi},
\end{equation*}
where $\Lambda_{n,m}$ is the eigenvalue.

We will follow Bethe's method in \cite{bethe} to obtain the Bethe wave function $\psi(\vec{x},\vec{y})$ for the AHR model. First we consider the case of single species particles only, before considering the two species case. Then for each case we start with two particles and then generalise to the many particles case.

Such method of Bethe ansatz is first introduced in \cite{bethe}, and applied to many particles systems in \cite{LL1963,Y2004,YY1966}. Especially, the Bethe ansatz was used to solve the master equation of ASEP and TASEP in \cite{GS1992,S1997}. A more complicated formula would follow from the standard nested Bethe ansatz \cite{C2008}.

\subsection{Single species}

First we consider the case without minus particles and two plus particles, i.e., $n=2,m=0$. The corresponding eigenvalue equation for $x_1+1<x_2$ is
\begin{equation}
\Lambda_{2,0} \psi(x_1,x_2)
=
\beta\psi(x_1-1,x_2) + \beta\psi(x_1,x_2-1) - 2\beta\psi(x_1,x_2),
\label{+eigvaleq1}
\end{equation}
while for $x_1=x=x_2-1$ it is
\begin{equation}
\Lambda_{2,0} \psi(x,x+1)
=
\beta\psi(x-1,x+1)-\beta\psi(x,x+1).
\label{+eigvaleq2}
\end{equation}
These two equations can be solved by the trial wave function $$\psi(x_1,x_2)=A_{12}z_1^{x_1}z_2^{x_2}+A_{21}z_1^{x_2}z_2^{x_1},$$ for which
\eqref{+eigvaleq1} results in the eigenvalue expression
$\Lambda_{2,0}=\beta(z_1^{-1}+z_2^{-1}-2),$ while \eqref{+eigvaleq2} gives the scattering relation $A_{12}/A_{21}=-(1-z_1)/(1-z_2)$.

The number of equations increases as the number $n$ of plus particles increases. To solve the resulting system of equations it is convenient to replace \eqref{+eigvaleq2} by
\begin{equation}
\psi(x,x)=\psi(x,x+1),
\label{+bdrycond}
\end{equation}
because by imposing \eqref{+bdrycond} on \eqref{+eigvaleq1} for $x_1=x=x_2-1$, equation \eqref{+eigvaleq2} is automatically satisfied. In other words, \eqref{+eigvaleq1} along with the boundary condition \eqref{+bdrycond} gives the same solution as that of the eigenvalue problem. This boundary condition \eqref{+bdrycond} is the same as in \cite{S1997}, since the AHR model reduced to TASEP when there is only one species particles.

It can thus be seen that the eigenvalue problem corresponding to $m=0$ and general $n$ is equivalent to
\begin{align*}
\Lambda_{n,0}\psi(\vec{x})
=&
\beta \sum_{i=1}^n\psi(\vec{x}_i^-) - n \beta \psi(\vec{x}),
\\
\psi(x_1,\dots,x_i,x_{i+1}=x_{i},\dots,x_n)
=&
\psi(x_1,\dots,x_i,x_{i+1}=x_{i}+1,\dots,x_n), \quad i\in[1,n-1].
\end{align*}

These equations are solved by the wave function $$\psi(\vec{x})=
\sum_{\pi \in S_n}A_{\pi}\prod_{i=1}^nz_{\pi_i}^{x_i},$$ with eigenvalue $\Lambda_{n,0}=\beta\sum_{i=1}^{n}(z_i^{-1}-1)$, and with scattering relation
\begin{equation*}
\frac{A_{\pi}}{A_{s_i \pi}}
=
-\frac{1-z_{\pi_i}}{1-z_{\pi_{i+1}}},
\end{equation*}
where $s_i\; (i=1,\ldots, n-1)$ are simple transposition, i.e., the generators of the symmetric group $S_n$. This is the same ratio of amplitudes obtained in \cite{S1997}. This ratio is satisfied if $A_{\pi}$ is of the form
\begin{equation*}
A_{\pi}
=
\sign(\pi) \prod_{i=1}^n \left(
\frac{1}{1-z_{\pi_i}}
\right)^i.
\end{equation*}

The derivation of $\psi(\vec{y})$, when $n=0$ and $m$ general, is very similar to the case when $m=0$ and $n$ general. The corresponding eigenvalue problem is 
\begin{align*}
\Lambda_{0,m}\psi(\vec{y})
=&\alpha \sum_{j=1}^m\psi(\vec{y}_j^+) - m \alpha \psi(\vec{y}),\\
\psi(y_1,\dots,y_{i-1}=y_{i},y_i,\dots,y_m)
=&
\psi(y_1,\dots,y_{i-1}=y_{i}-1,y_i,\dots,y_m), \quad i\in[2,m],
\end{align*}
and we will not repeat analogous details of derivation here. Naively this observation would suggest that a wave function for two species particles is of the form
\begin{equation}
\sum_{\pi\in S_n}\sign(\pi)
\prod_{i=1}^{n}
\left(\frac{1}{1-z_{\pi_i}}\right)^i
z_{\pi_i}^{x_i}
\sum_{\sigma\in S_m}\sign(\sigma)
\prod_{j=1}^{m}
\left(\frac{1}{1-w_{\sigma_j}}\right)^{-j}
w_{\sigma_j}^{-y_j},
\label{wavefn:form}
\end{equation}
with eigenvalue $\Lambda_{n,m}
=
\beta \sum_{j=1}^n (z_j^{-1}-1) +
\alpha \sum_{k=1}^m (w_k^{-1}-1).$ Such a trial wave function would only work when there is no interaction between the plus and the minus particles. In the following, we shall consider a modification of this wave function so that the interaction between different types of particles can be taken into account.

As we have seen, the Bethe wave function of the AHR model for the single species cases is identical to the one of the TASEP \cite{S1997}. The important detail of the AHR model's wave function lies in the interaction between different types of particles.

\subsection{Two species case}
Now consider the case that two types of particles are present. We start with the simplest case with one plus and one minus particles. When $x\neq y\pm 1$, the corresponding eigenvalue equation reads as
\begin{equation}
\Lambda_{1,1} \psi(x;y)
=
\alpha  \psi(x;y+1)  + \beta \psi(x-1;y) - (\beta+\alpha )\psi(x;y),
\label{+-eigvaleq1}
\end{equation}
while for $x=y+1$ and $y=x+1$,
\begin{align}
\Lambda_{1,1}\psi(y+1;y)
=&
\psi(y;y+1) - (\beta+\alpha )\psi(y+1;y),
\label{+-eigvaleq2}
\\
\Lambda_{1,1}\psi(x;x+1)
=&
\beta\psi(x-1;x+1) + \alpha \psi(x;x+2) - \psi(x;x+1).
\label{+-eigvaleq3}
\end{align}
Expression~\eqref{+-eigvaleq1} is satisfied by the wave function \eqref{wavefn:form} and the eigenvalue $\Lambda_{1,1}=\beta(z^{-1}-1) +\alpha (w^{-1}-1)$. In the same way, \eqref{+-eigvaleq2} and \eqref{+-eigvaleq3} are satisfied by imposing appropriate boundary conditions. Comparing \eqref{+-eigvaleq1} and \eqref{+-eigvaleq2} gives the first condition,
\begin{equation*}
\psi(y;y+1)
=
\alpha \psi(y+1;y+1) +
\beta\psi(y;y).
\end{equation*}
This indicates a pieces-wise wave function 
\begin{equation*}
\psi(x;y) = \left\{
\begin{array}{l}
\displaystyle C_{+-} z^{x} w^{-y}\quad x<y\\
\displaystyle C_{-+} z^{x} w^{-y}\quad x\geq y
\end{array}\right. ,
\end{equation*}
so that \eqref{+-eigvaleq2} is satisfied by setting the amplitude ratio to $C_{+-}/C_{-+}=\alpha z+\beta w$. Fortunately, when $\beta+\alpha =1$ equation \eqref{+-eigvaleq3} agrees with \eqref{+-eigvaleq1} without any extra condition. Incidentally, the condition $\beta+\alpha =1$ leads to a factorised stationary state \cite{RSS2000}. The reasoning above suggests to modify \eqref{wavefn:form} to
\begin{equation}
\psi(\vec{x};\vec{y})
=
\sum_{\pi\in S_n}\sign(\pi)
\prod_{i=1}^{n}
\left(\frac{1}{1-z_{\pi_i}}\right)^i
z_{\pi_i}^{x_i}
\sum_{\sigma\in S_m}\sign(\sigma)
\prod_{j=1}^{m}
\left(\frac{1}{1-w_{\sigma_j}}\right)^{-j}
w_{\sigma_j}^{-y_j}C_{\vec{x},\vec{y},\pi,\sigma},
\label{wavefn}
\end{equation}
where $C_{\vec{x},\vec{y},\pi,\sigma}$ depends on the relative position of $\vec{x},\vec{y}$. For general $n,m$, we only need to impose the following boundary conditions on \eqref{wavefn},
\begin{multline}
\psi(\vec{x};
y_1,\dots,y_j=x_i+1,\dots,y_m)
=
\beta\psi(\vec{x};
y_1,\dots,y_j=x_i,\dots,y_m) +
\\
\alpha \psi(x_1,\dots,x_{i-1},x_i+1,x_{i+1},\dots,x_n;
y_1,\dots,y_j=x_i+1,\dots,y_m).
\label{+-bdrycond}
\end{multline}
As in the single species case, there are no extra boundary condition for sectors with more than two particles. Substituting \eqref{wavefn} into \eqref{+-bdrycond} gives the ratio
\begin{equation*}
\frac{C_{y_j>x_i}}{C_{y_j\leq x_i}}
=
\alpha z_{\pi_i}+\beta w_{\sigma_j}.
\end{equation*}
This suggests a form of the coefficient $C_{\vec{x},\vec{y},\pi,\sigma}$
\begin{equation*}
C_{\vec{x},\vec{y},\pi,\sigma}
=
\prod_{j=1}^m
\prod_{k=1}^{r_j}
\frac{1}{\alpha z_{\pi_{n-k+1}}+\beta w_{\sigma_j}},
\end{equation*}
where $r_j$ is the number of plus particles to the right of the $j$\textsuperscript{th} minus particle, i.e., $r_j=\#\{x_i\in\vec{x}\mid x_i \geq y_j\}$.

In conclusion, the general Bethe wave function is given by
\begin{multline}
\label{eq:BAfull}
\psi(\vec{x};\vec{y})=
\sum_{\pi\in S_n}\sign(\pi)
\prod_{i=1}^{n}
\left(\frac{1}{1-z_{\pi_i}}\right)^i
z_{\pi_i}^{x_i} \times\\
\sum_{\sigma\in S_m}\sign(\sigma)
\prod_{j=1}^{m}
\left(\frac{1}{1-w_{\sigma_j}}\right)^{-j}
w_{\sigma_j}^{-y_j}
\prod_{k=1}^{r_j}
\frac{1}{\alpha z_{\pi_{n-k+1}}+\beta w_{\sigma_j}},
\end{multline}
with eigenvalue
\begin{equation}
\Lambda_{n,m}
=
\beta \sum_{j=1}^n (z_j^{-1}-1) +
\alpha \sum_{k=1}^m (w_k^{-1}-1).
\label{eigvalue}
\end{equation}
This result leads to an integral formula for the transition probability, as stated in Section \ref{se:greenfunction}.

\section{Boundary conditions for the transition probability}
\label{se:bcs}

In this appendix we provide details of the proof that \eqref{Green function} satisfies the boundary conditions (\ref{AHR bdrycond1})-(\ref{AHR bdrycond3}). Throughout the entire proof, we shall call the factor $\prod_{k=1}^{m}\prod_{j=1}^{r_k}(\alpha z_{\pi_{n-j+1}}+\beta w_{\sigma_k})^{-1}$ the scattering factor.

\subsection{Proof of boundary condition (\ref{AHR bdrycond1})}

On both sides of \eqref{AHR bdrycond1}, i.e., when $x_{i+1}=x_i$ on the left hand side and $x_{i+1}=x_i+1$ on the right hand side, the scattering factor $\prod_{k=1}^{m}\prod_{j=1}^{r_k}(\alpha z_{\pi_{n-j+1}}+\beta w_{\sigma_k})^{-1}$ remains unchanged within the physical regions $\Omega^{n+m}$. Moreover, the scattering factor is symmetric in $z_{\pi_i},z_{\pi_{i+1}}$ since there is no minus particle between the $i$\textsuperscript{th} and the $i+1$\textsuperscript{st} plus particles.

From \eqref{Green function} we observe that $x_i$ and $x_{i+1}$ only appear as exponents of $z_{\pi_i}$ and $z_{\pi_{i+1}}$, and as there is no change in the scattering factor we only need to compare the $z_{\pi_i}$ and $z_{\pi_{i+1}}$ factors
\begin{align}
\label{AHR z component bdrycond}
\left(\frac{1}{1-z_{\pi_i}}\right)^i
\left(\frac{1}{1-z_{\pi_{i+1}}}\right)^{i+1}
z_{\pi_i}^{x_i}z_{\pi_{i+1}}^{x_{i+1}}
\end{align}
in the integrand for both sides of \eqref{AHR bdrycond1}. For the left hand side of \eqref{AHR bdrycond1}, i.e., when $x_{i+1}=x_{i}$, the factor \eqref{AHR z component bdrycond} reads as
\begin{align*}
LHS
=
\left(\frac{1}{1-z_{\pi_i}}\right)^i
\left(\frac{1}{1-z_{\pi_{i+1}}}\right)^{i+1}
(z_{\pi_i}z_{\pi_{i+1}})^{x_i}.
\end{align*}
When $x_{i+1}=x_i+1$, \eqref{AHR z component bdrycond} becomes
\begin{align*}
RHS
=
\left(\frac{1}{1-z_{\pi_i}}\right)^i
\left(\frac{1}{1-z_{\pi_{i+1}}}\right)^{i+1}
(z_{\pi_i}z_{\pi_{i+1}})^{x_i}z_{\pi_{i+1}}.
\end{align*}
It follows that
\begin{align*}
LHS-RHS
=&
\left[\frac{1}{(1-z_{\pi_i})(1-z_{\pi_{i+1}})}\right]^i
(z_{\pi_i}z_{\pi_{i+1}})^{x_i}
\left(
\frac{1}{1-z_{\pi_{i+1}}}-
\frac{z_{\pi_{i+1}}}{1-z_{\pi_{i+1}}}\right)
\\
=&
\left[\frac{1}{(1-z_{\pi_i})(1-z_{\pi_{i+1}})}\right]^i
(z_{\pi_i}z_{\pi_{i+1}})^{x_i}.
\end{align*}
This factor is symmetric in $z_{\pi_i},z_{\pi_{i+1}}$. Moreover, the scattering factor is also symmetric in $z_{\pi_i},z_{\pi_{i+1}}$, and hence summing over $\pi\in S_n$ in \eqref{Green function} gives a zero integrand due to the factor $\sign(\pi)$.

\subsection{Proof of boundary condition (\ref{AHR bdrycond2})}

The proof of this boundary condition is very similar to the one above. First we notice that when $y_{i-1}=y_i$ and $y_{i-1}=y_i-1$, the scattering factor is unchanged and symmetric in $w_{\sigma_{i-1}},w_{\sigma_i}$. Hence as before, we only need to consider
\begin{align}
\label{AHR w component bdrycond}
\left(\frac{1}{1-w_{\sigma_{i-1}}}\right)^{-i+1}
\left(\frac{1}{1-w_{\sigma_i}}\right)^{-i}
w_{\sigma_{i-1}}^{-y_{i-1}}w_{\sigma_i}^{-y_i}.
\end{align}
Then difference of this factor between the left hand side and right hand side of \eqref{AHR bdrycond2} is given by
\begin{align*}
LHS-RHS
=&
\left[
\frac{1}{(1-w_{\sigma_{i-1}})
(1-w_{\sigma_i})}\right]^{-i}
(w_{\sigma_{i-1}}w_{\sigma_i})^{-y_i}
\left(
\frac{1}{1-w_{\sigma_{i-1}}}-
\frac{w_{\sigma_{i-1}}}{1-w_{\sigma_{i-1}}}
\right)\\
=&
\left[
\frac{1}{(1-w_{\sigma_{i-1}})
(1-w_{\sigma_i})}\right]^{-i}
(w_{\sigma_{i-1}}w_{\sigma_i})^{-y_i},
\end{align*}
which is symmetric in $w_{\sigma_{i-1}},w_{\sigma_i}$. Thus summing over $\sigma\in S_m$ in \eqref{Green function} gives a zero integrand, as required.

\subsection{Proof of boundary condition (\ref{AHR bdrycond3})}

We consider two cases when $y_j=x_i$ and $y_j=x_i+1$. From the integrand of \eqref{Green function}, in these two cases, one only needs to check the factor
\begin{align*}
\left(\frac{1}{1-z_{\pi_i}}\right)^i
\left(\frac{1}{1-w_{\sigma_j}}\right)^{-j}
z_{\pi_i}^{x_i}w_{\sigma_j}^{-y_j}
\prod_{k=1}^{r_j}
\frac{1}{\alpha z_{\pi_{n-k+1}} + \beta w_{\sigma_j}},
\end{align*}
for both sides of \eqref{AHR bdrycond3}. For the left hand side of \eqref{AHR bdrycond3}, $y_j=x_i+1$, which indicates that the $i+1$\textsuperscript{st} plus particle is sitting at the right hand side of $y_j$. Therefore $r_j=n-i$.
\begin{align*}
LHS
=
\left(\frac{1}{1-z_{\pi_i}}\right)^i
\left(\frac{1}{1-w_{\sigma_j}}\right)^{-j}
\left(\frac{z_{\pi_i}}{w_{\sigma_j}}\right)^{x_i}
w_{\sigma_j}^{-1}
\prod_{k=1}^{n-i}
\frac{1}{\alpha z_{\pi_{i+k}}+ \beta w_{\sigma_j}}.
\end{align*}
Now let us consider the right hand side of \eqref{AHR bdrycond3} where $y_j=x_i$. In this case $r_j=n-i+1$, since the $i$\textsuperscript{th} plus particle is sitting at $y_j$. It follows that
\begin{align*}
RHS
=&
\beta
\left(\frac{1}{1-z_{\pi_i}}\right)^i
\left(\frac{1}{1-w_{\sigma_j}}\right)^{-j}
\left(\frac{z_{\pi_i}}{w_{\sigma_j}}\right)^{x_i}
\prod_{k=0}^{n-i}
\frac{1}{\alpha z_{\pi_{i+k}}+ \beta w_{\sigma_j}}
\\
&\qquad\qquad\qquad\qquad\qquad\qquad\qquad
+ \alpha \left(\frac{1}{1-z_{\pi_i}}\right)^i
\left(\frac{1}{1-w_{\sigma_j}}\right)^{-j}
\left(\frac{z_{\pi_i}}{w_{\sigma_j}}\right)^{x_i+1}
\prod_{k=0}^{n-i}
\frac{1}{\alpha z_{\pi_{i+k}}+\beta w_{\sigma_j}}
\\
=&
\left(\frac{1}{1-z_{\pi_i}}\right)^i
\left(\frac{1}{1-w_{\sigma_j}}\right)^{-j}
\left(\frac{z_{\pi_i}}{w_{\sigma_j}}\right)^{x_i}
\prod_{k=0}^{n-i}
\frac{1}{\alpha z_{\pi_{i+k}}+ \beta w_{\sigma_j}}
\left(\beta + \alpha \frac{z_{\pi_i}}{w_{\sigma_j}}\right)
\\
=&
\left(\frac{1}{1-z_{\pi_i}}\right)^i
\left(\frac{1}{1-w_{\sigma_j}}\right)^{-j}
\left(\frac{z_{\pi_i}}{w_{\sigma_j}}\right)^{x_i}
w_{\sigma_j}^{-1}
\prod_{k=1}^{n-i}
\frac{1}{\alpha z_{\pi_{i+k}}+\beta w_{\sigma_j}},
\end{align*}
which is exactly the same as LHS.

\section{Proof of symmetrisation identity Lemma~\ref{lem:symm identity}}
\label{se:symm identities}

We first recall the statement of Lemma~\ref{lem:symm identity},
\begin{equation}
\label{identity3}
\sum_{\pi\in S_n}\sign(\pi)\prod_{i=1}^n
\left(\frac{z_{\pi_i}-1}{z_{\pi_i}}\right)^i
\frac{1}{(1-(1-\rho)\prod_{j=1}^i z_{\pi_j})}
=
\frac{\prod_{1 \leq i<j \leq n}(z_j-z_i)
\prod_{i=1}^{n}(z_i-1)}
{\prod_{i=1}^{n}z_i^n(1-(1-\rho)z_i)}.
\end{equation}

We prove this identity by mathematical induction in $n$. One can easily see that \eqref{identity3} holds for $n=1$. We assume that it holds for $n-1$, and prove it for $n$. Let us denote the left hand side of the identity by $f_n(z_1,\dots,z_n)$. To make use of the induction assumption, we need to find the relation between $f_n$ and $f_{n-1}$. The sum over $S_n$ can be split into a sum over $k \in [1, n]$ such that $\pi_n=k$ and then sum over $S_{n-1}$. We observe that $\sign(\pi)=(-1)^{n-k}\sign(\sigma)$ where $\sigma$ is the permutation $\pi$ restricted on $\{1,2,\dots,k-1,k+1,\dots,n\}$, and $(-1)^{n-k}$ is the signature of the permutation $(1,\dots,k-1,k+1,\dots,n,k)$. Therefore,
\begin{align*}
f_n(z_1,\dots,z_n)
=
\sum_{k=1}^n(-1)^{n-k}
\left(\frac{z_k-1}{z_k}\right)^n
\frac{1}{1-(1-\rho)\prod_{i=1}^nz_{i}}
f_{n-1}(z_1,\dots,z_{k-1},z_{k+1},\dots,z_n).
\end{align*}
Substituting our induction assumption for $f_{n-1}(z_1,\dots,z_{k-1},z_{k+1},\dots,z_n)$ and after some rearrangements, we obtain

\begin{equation*}
f_n(z_1,\dots,z_n)
=
\frac{\prod_{1 \leq i<j \leq n}(z_j-z_i)}
{1-(1-\rho)\prod_{i=1}^nz_i}
\prod_{i=1}^n
\frac{z_i-1}{z_i^{n-1}}
\sum_{k=1}^n
\frac{(z_k-1)^{n-1}}{z_k}
\prod_{i = 1, i\neq k}^{n}
\frac{1}{(z_k-z_i)(1-(1-\rho)z_i)}.
\end{equation*}
In order to show the sum over $k$ gives expected result, we consider the function defined by
\begin{align*}
F(z)
=
\frac{(z-1)^{n-1}(1-(1-\rho)z)}
{z\prod_{i=1}^n(z-z_i)(1-(1-\rho)z_i)}.
\end{align*}
By the residue theorem,

\begin{align*}
\sum_{k=1}^{n}\res_{z=z_k}F(z)
=&
\sum_{k=1}^n
\frac{(z_k-1)^{n-1}}{z_k}
\prod_{i=1,i\neq k}^{n}
\frac{1}{(z_k-z_i)(1-(1-\rho)z_i)}
\\
=&
-\res_{z=0}F(z) - \res_{z=\infty}F(z)
\\
=&
\frac{1}
{\prod_{i=1}^nz_i(1-(1-\rho)z_i)}
-
\frac{(1-\rho)}
{\prod_{i=1}^n(1-(1-\rho)z_i)},
\end{align*}
which gives the required result to complete the proof. 

\section{Asymptotic of the rescaled kernel}

\subsection{Proof of Lemma~\ref{descent contour 1}} \label{appxs:SteepestContour1}
The facts that $w_c$ is a double root and $w_{2}$ is a single root  can be checked easily. To prove $\Gamma$ gives the steepest descent contour, we first show that $\R(g_1)$ is decreasing along $\Gamma_2\cup \Gamma_3$. On $\Gamma_2$, we find
\begin{multline*}
\frac{\dd \R(g_1)(s)}{\dd s}=
2 s^2 \Big[ - 3 (3 - \rho) (1 - \rho)^2 (1 + \rho) -
2 s (1 - p)(3 + 10 \rho - 5 \rho^2) - 2 s^2 (11 - 6 \rho + 3 \rho^2) \\
- 12 s^3 (1 - \rho) - 8 s^4\Big]/
\Big[\big(3s^2 + (1 + \rho - s)^2\big)
\big(3s^2 + (1 + s - \rho)^2\big)
\big(3s^2 + (3 + s - \rho)^2\big)\Big].
\end{multline*}
One can verify that $\dd \R(g_1)(s)/\dd s<0$ for any $s\in [0,2]$ and any $\rho\in[0,1]$, implying that $\R(g_1)$ is decreasing along $\Gamma_2$. It follows by symmetry that, $\R(g_1)$ is increasing along $\Gamma_1$. In fact, $\R(\ln(w))=\ln(|w|)$, so $\R(g_1)$ is symmetric with respect to the horizontal axis.

It remains to check the monotonicity of $\R(g_1)$ along $\Gamma_3$. Again taking the derivative along $\Gamma_3$, we have
\begin{multline*}
\frac{\dd \R(g_1)(\theta)}{\dd \theta}
= \Big[ 2 (1 + \rho)^2 \sin(w) \cos(w/2)^2 [17 - 5 \rho + 3 \rho^3 + \rho^3 + 8 (1 + \rho) \cos(w)]\Big] /\Big[[5 + 2 \rho +\rho^2
\\ + 4 (1 + \rho) \cos(w)] [17 + 2\rho +\rho^2 + 8 (1 + \rho) \cos(w)]\Big],
\end{multline*}
which equals zero only at $\theta=0,\pi$, i.e. $\R(g_1)$ decreases when $s\in[-2\pi/3,0]$, and increases when $s\in[0,2\pi/3]$. In fact $17 - 5 \rho + 3 \rho^3 + \rho^3 + 8 (1 + \rho) \cos(w)\geq 9 - 13 \rho + 3 \rho^3 + \rho^3$. The derivative of $9 - 13 \rho + 3 \rho^3 + \rho^3$ with respect to $\rho$ is $-13+6\rho+3\rho^2 \leq -13+6+3<0$, indicating that $9 - 13 \rho + 3 \rho^3 + \rho^3\geq9-13+3+1=0$. Similarly, one can see that $5 + 2 \rho +\rho^2 + 4 (1 + \rho) \cos(w) $ and $17 + 2\rho +\rho^2 + 8 (1 + \rho) \cos(w)$ are non-negative for any $\theta$ and $0\leq\rho\leq 1$.
	
Therefore, we conclude that $\R(g_1)$ is strictly monotone along $\Gamma$ except at its minimum point $w=2-\rho'/2$ and maximum point $w_c=\rho'/2$.
	
Similarly, the fact that the contour $\Sigma$ is a steepest descent path for $-g_1(w)$ can be proved by calculating $\tfrac{\dd \R(g_1)(\theta)}{\dd \theta}$ along $\Sigma$.

\subsection{Proof of uniform convergence of the rescaled kernel, Proposition~\ref{uniform convergence kernelc}}
\label{Unif_conv_Kc}

Consider the steepest descent contour $\Gamma\times \Sigma =(\bigcup_{i=1}^3 \Gamma_i) \times (\bigcup_{i=1}^4 \Sigma_i)$ defined in (\ref{DefGamma}) and (\ref{DefSigma}). We observe that the integrand of the kernel contains the factor $(w-z)^{-1}$, which requires that the $z,w$-contour does not intersect with each other. As a consequence, we need to deform the contour $\Gamma$ away from the saddle point $w_c$, and replaced by a vertical line through $w_c(1+\delta)$ (see Fig. \ref{fig:contour deform 1} and \eqref{DefGammavert}). We choose $\delta=t^{-1/3}$. Under such deformed contour, we now can bound $|w-z|^{-1}$ by $(w_c \delta)^{-1}$ along $\Gamma' \times \Sigma$, where

\begin{equation}
	\Gamma' = \left\{ z \in \Gamma \middle| | z - w_c | > 2 w_c \delta  \right\} + \left\{ z \in \mathbb{C} \middle| z = w_c + w_c \delta (1 - s \ii) , s \in \left[  -\sqrt{3} , \sqrt{3} \right]  \right\} . \label{DefGammavert}
\end{equation}

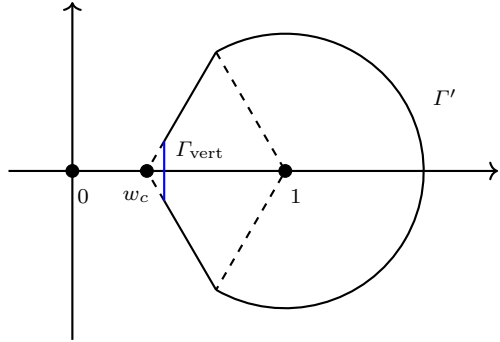
\begin{figure}[h]
	\begin{center}
		\begin{tikzpicture}[scale=2.8]
			
			\draw[->,thick] (-0.3,0) -- (2,0);
			\draw[->,thick] (0,-0.8) -- (0,0.8);

			\draw[fill=black] (0,0) circle (0.03);
			\draw[fill=black] (1,0) circle (0.03);
			\draw[fill=black] (0.35,0) circle (0.03);
			
			\node at (0.05,-0.12) {$0$};
			\node at (1.05,-0.12) {$1$};
			\node at (0.3,-0.12) {$w_c$};
			\node at (1.75,0.35) {$\Gamma'$};
			\node at (0.6,0.1) {$\Gamma_{\rm vert}$};

			\draw[thick] (0.675,-0.562917) arc (-120:120:0.65);

			\coordinate (A) at (0.675,0.562917);
			\coordinate (B) at (1,0);
			\coordinate (C) at (0.675,-0.562917);
			\coordinate (D) at (0.35,0);
			\coordinate (E) at (0.43125,0.140729);
			\coordinate (F) at (0.43125,-0.140729);
			\draw[thick,dashed] (A) -- (B) -- (C);
			\draw[thick,dashed] (E) -- (D);
			\draw[thick,dashed] (F) -- (D);
			\draw[thick] (A) -- (E);
			\draw[thick] (C) -- (F);
			\draw[thick,blue] (E) -- (F) ;
		\end{tikzpicture}
	\end{center}
	\caption{The deformed contour $\Gamma'$ with the blue part $\Gamma_{\rm vert}$.}
	\label{fig:contour deform 1}
\end{figure}

To estimate the integrand of $\bar{K}_t^{c } (\xi , \zeta)$ in $\Gamma' \times \Sigma$, we separate the contour into two parts: far away from the point $w_c$ and the neighbourhood of $w_c$, denoted by 
\begin{align}
	\Gamma^{\Delta} &= \left\{ z \in \Gamma \middle| | z - w_c | \leq \Delta  \right\},\,\,\,\,
	\Gamma'^{\Delta} = \left\{ z \in \Gamma' \middle| | z - w_c | \leq \Delta  \right\}, \label{gamma_prime_Delta} \\
	\Sigma^{\Delta} & = \left\{ w \in \Sigma \middle| | w - w_c| \leq \Delta  \right\},
\end{align} 
where we choose\footnote{We couldn't simply choose $\Delta = \delta = t^{1/3}$, since we need to choose $\delta$ and $\Delta$ such that, as $t$ goes to infinity, $\delta t^{1/3} < \infty$ and $\Delta t^{1/3} \rightarrow \infty$ are satisfied, i.e. the paths $\Gamma'^\Delta$ and $\Sigma^\Delta$ become the Airy contours defined in \eqref{def:airy_function}.} $\Delta=t^{-1/9}$ so that the integrand along the distant parts can be bounded by $\e^{-a_1 \Delta^3 t} \rightarrow 0$ as $t$ tends to infinity. We will see this later in the proof. The deformed contour is now separated into 4 parts: $\Gamma'^{\Delta} \times \Sigma^{\Delta}$, $(\Gamma' \backslash \Gamma'^{\Delta}) \times \Sigma^{\Delta}$, $\Gamma'^{\Delta} \times (\Sigma \backslash \Sigma^{\Delta})$, $(\Gamma' \backslash \Gamma'^{\Delta}) \times  (\Sigma \backslash \Sigma^{\Delta})$. As will be shown later, the main contribution in the long time limit is from the first part $\Gamma'^{\Delta} \times \Sigma^{\Delta}$, i.e., paths of integration passing near a saddle point $w_c$. It will be also shown that other parts and the error terms of the first parts converge uniformly to zero on $ {[-L , L]}^2$.

The integrand in \eqref{RescKernelc} is estimated in three parts: $z$-integrand $\mathrm{e}^{ f(z, t,\xi) - f(w_c , t, \xi) + g_{\phi}(z) }$; $w$-integrand $\mathrm{e}^{ - f(w, t,\zeta) + f(w_c , t, \zeta) + g_{\psi}(w) }$; and $(w-z)^{-1}$. We have the following bounds of the factor $|w-z|$ along the deformed contour,
\begin{subequations}
	\begin{align}\label{Minimum|w-z|}
		\underset{ (z , w) \in \Gamma'^{\Delta} \times \Sigma^{\Delta} }{\min}{ | w - z | } =& w_c \delta,\\
		\underset{ (z , w) \in (\Gamma' \times \Sigma) \backslash (\Gamma'^{\Delta} \times \Sigma^{\Delta}) }{\min}{ | w - z | } \geq & \frac{\sqrt{3}}{2} \Delta. \label{Minimum|w-z|_2} 
	\end{align}
\end{subequations}

\paragraph{(i) Main contribution on $\Gamma'^{\Delta} \times \Sigma^{\Delta}$}Let us now give the main contribution on $\Gamma'^{\Delta} \times \Sigma^{\Delta}$. Consider the $z$-integrand $\mathrm{e}^{ f(z, t,\xi) - f(w_c , t, \xi) }$ and introduce a new variable: $z  = w_c +  \lambda^{-1} Z$, where $\lambda$ is some constant which will be determined later. Using the Taylor expansions given in \eqref{gExpands}, the $z$-integrand $\mathrm{e}^{ f(z, t,\xi) - f(w_c , t, \xi) + g_{\phi}(z) }$ is rewritten as 
\begin{equation}\label{Zintegrand}
	\begin{split}
		& \mathrm{e}^{ f(z, t,\xi) - f(w_c , t, \xi) + g_{\phi}(z) } \\
		= &  \mathrm{e}^{ \{ g_1(z) - g_1(w_c) \} t + \{ g_2(z) - g_2(w_c) \} t^{1/2} + \{ g_3(z , \xi) - g_3(w_c , \xi) \} t^{1/3} + g_{\phi}(z) } \\
		= & \mathrm{e}^{ 2 a_1 {(\lambda^{-1} Z)}^3 t + b_{3,\xi} \lambda^{-1} Z t^{1/3} + g_{\phi}(w_c) + \mathcal{O}( Z^4 t , Z^2 t^{1/2} , Z^2 t^{1/3} , LZ^2t^{1/3} , Z ) } \\
		= & \mathrm{e}^{ \frac{1}{3} Z^3 t - (s_2 + \xi) Z t^{1/3}  + \mathcal{O}( Z^4 t , Z^2 t^{1/2} , Z^2 t^{1/3} , Z ) },
	\end{split}
\end{equation}
where in the last line, $g_{\phi}(w_c)=0$, and we choose $\lambda$ such that $6a_1=\lambda^3$, namely $\lambda=(6a_1)^{1/3}$. Note that $L$ is fixed and hence $\mathcal{O}(LZ^2t^{1/3})$ can be absorbed into $\mathcal{O}(Z^2t^{1/3})$. Recalling the coeffcients given in \eqref{gCoefs}, we obtain 
\begin{equation}
	\lambda = {\left( \frac{6}{(1 + \rho)(3 - \rho)} \right)}^{1/3}.
\end{equation}
Therefore by simple calculation, we have $b_{3,\xi} \lambda^{-1} = - s_2 - \xi \lambda_c (w_c + c)^{-1} \lambda^{-1}$. We now choose $\lambda_c$ such that $\lambda_c (w_c + c)^{-1} \lambda^{-1}=1$, i.e., 
\begin{equation*}
	\lambda_c = (w_c + c) \lambda = (1 - \rho + 2c ) {\left( \frac{3}{4 (1 + \rho) (3 - \rho)} \right)}^{1/3} ,
\end{equation*}
which agrees with \eqref{DefLambdac}.

With respect to the new variable $Z$, let $\gamma^{\Delta} $ be the corresponding path of the line integral: $
\gamma^{\Delta} = \left\{ Z \in \mathbb{C} \middle| w_c + \lambda^{-1} Z \in \Gamma'^{\Delta}  \right\}$.
It is easy to see that any $Z \in \gamma^{\Delta}$  satisfy $ |Z|   \leq \lambda \Delta $.

We then factorise the integrand into two terms: main contribution and the error term $R_z$:
\begin{equation*}
	\mathrm{e}^{ \frac{1}{3} Z^3 t - (s_2 + \xi) Z t^{1/3}  + \mathcal{O}( Z^4 t , Z^2 t^{1/2} , Z^2 t^{1/3} , L Z^2 t^{1/3} , Z ) }
	=
	\mathrm{e}^{ \frac{1}{3} Z^3 t - (s_2 + \xi) Z t^{1/3} } + R_z,
\end{equation*}
where 
\begin{equation*}
	R_z =  \mathrm{e}^{ \frac{1}{3} Z^3 t - (s_2 + \xi) Z t^{1/3} }
	\left(
	\e^{\mathcal{O}( Z^4 t , Z^2 t^{1/2} , Z^2 t^{1/3} ,  Z ) } - 1
	\right).
\end{equation*}
We claim that the error term $R_z$ gives a zero integral in long time limit, which will be shown in the second part (ii) of the proof. To get rid of the variable $t$ in the integrand, we change the variable $Z $  to $v = t^{1/3} Z$. The deformed contour now becomes $\gamma^{\Delta t^{1/3}} = \left\{ v \in \mathbb{C} \mid w_c + v /(\lambda t^{1/3}) \in {\Gamma'}^{\Delta}  \right\}$. The $z$-integrand is then rewritten into
\begin{equation}\label{zRescKernel}
	\mathrm{e}^{ \frac{1}{3} v^3 - (s_2 + \xi) v } + R_v,
\end{equation}
where 
\begin{equation}
	R_v =  \mathrm{e}^{ \frac{1}{3} v^3 - (s_2 + \xi) v }
	\left(
	\mathrm{e}^{ \mathcal{O}( v^4 t^{-1/3} , v^2 t^{-1/6} , v^2 t^{-1/3} , v t^{-1/3} ) } - 1
	\right).
\end{equation}

Let us now consider the $w$-integrand $\mathrm{e}^{ - f(w, t,\zeta) + f(w_c , t, \zeta) + g_{\psi}(w) }$. Similarly, we change the variable to $w=w_c+\lambda^{-1}W$ and $u=t^{1/3}W$, then the contour now becomes $\sigma^{\Delta t^{1/3}}  = \left\{ u \in \mathbb{C} \middle| w_c +  u /(\lambda t^{1/3}) \in \Sigma^{\Delta}  \right\}$. Following the same steps, we can see the $w$-integrand becomes 
\begin{equation}\label{wRescKernel}
	\mathrm{e}^{ - f(w, t,\zeta) + f(w_c , t, \zeta) + g_{\psi}(w)}
	=
	\frac{1}{w_c + c}\mathrm{e}^{- \frac{1}{3} u^3 + (s_2 + \zeta) u } + R_u,
\end{equation}
where 
\begin{equation}
	R_u =  \mathrm{e}^{ -\frac{1}{3} u^3 + (s_2 + \zeta) u }
	\left(
	\e^{ \mathcal{O}( u^4 t^{-1/3} , u^2 t^{-1/6} , u^2 t^{-1/3} , u t^{-1/3} ) } - 1
	\right).
\end{equation}
Note that the coefficient $(w_c + c)^{-1}$ comes from the factor $\e^{g_{\psi}(w_c)}=(w_c + c)^{-1}$, while $\e^{g_{\phi}(w_c)}=1$.

By the change of variables $v = t^{1/3} \lambda (z - w_c)$, $u = t^{1/3} \lambda (w - w_c)$, and combining \eqref{zRescKernel}, \eqref{wRescKernel} and $(z-w)^{-1}$, the integral that does not involving error terms $R_z$ and $R_w$ now becomes
\begin{multline}\label{RescKernelFirstuvMain1}
	\lambda_c t^{1/3} \int_{{\Gamma'}^{\Delta } \times {\Sigma}^{\Delta }} \frac{\mathrm{d} z}{2 \pi \ii} \frac{\mathrm{d} w}{2 \pi \ii} \frac{1}{ w-z}  \frac{ \mathrm{e}^{ f(z, t,\xi) - f(w_c , t, \xi) + g_{\phi}(z)} }{ \mathrm{e}^{ f(w, t,\zeta) - f(w_c , t, \zeta) - g_{\psi}(w)}  } \sim \\  \int_{\gamma^{\Delta t^{1/3}} \times \sigma^{\Delta t^{1/3}}} \frac{\mathrm{d} v}{2 \pi \ii} \frac{\mathrm{d} u}{2 \pi \ii} \frac{1}{u - v}\mathrm{e}^{  \frac{1}{3} v^3 - (s_2 + \xi ) v -  \frac{1}{3} {u}^3 + (s_2 + \zeta ) u},
\end{multline}
where the coefficient $\lambda_c t^{1/3} (w_c + c)^{-1} (t^{1/3} \lambda)^{-2} t^{1/3} \lambda = \lambda_c \lambda^{-1} (w_c + c)^{-1} = 1$. Since $\R(u) < \R(v)$ holds on $\gamma^{\Delta t^{1/3}} \times \sigma^{\Delta t^{1/3}}$, $1/(u - v) = - \int_{0}^{\infty} \mathrm{d} k \mathrm{e}^{ k (u - v) }$  is valid in the integrand, indicating the right hand side of (\ref{RescKernelFirstuvMain1}) can be expressed as follows
\begin{equation}\label{RescKernelFirstuvMain2}
	-\int_{0}^{\infty} \mathrm{d} k \int_{\gamma^{\Delta t^{1/3}}} \frac{\mathrm{d} v}{2 \pi \ii} \mathrm{e}^{  \frac{1}{3} v^3 - (s_2 + \xi + k) v}  \int_{ \sigma^{\Delta t^{1/3}}} \frac{\mathrm{d} u}{2 \pi \ii} \mathrm{e}^{  -\frac{1}{3} {u}^3 + (s_2 + \zeta + k) u}.
\end{equation}

Recall the Airy kernel $A(x,y)$ defined in \eqref{def:airy_kernel}. It turns out that, in the long time limit, (\ref{RescKernelFirstuvMain2}) behaves as the kernel of  GUE Tracy-Widom distribution, i.e.,
\begin{align}
	\left|- \int_{0}^{\infty} \mathrm{d} k \int_{\gamma^{\Delta t^{1/3}}} \frac{\mathrm{d} v}{2 \pi \ii} \mathrm{e}^{  \frac{1}{3} v^3 - (s_2 + \xi + k) v}  \int_{ \sigma^{\Delta t^{1/3}}} \frac{\mathrm{d} u}{2 \pi \ii} \mathrm{e}^{ - \frac{1}{3} {u}^3 + (s_2 + \zeta + k) u}   -A(s_2+\xi,s_2+\zeta) \right|
	\leq  \mathcal{O}( \mathrm{e}^{ - c_1 \Delta t^{1/3} } )
	\label{eq:final bound}
\end{align}
holds for $t$ large enough where $c_1$ is some positive constant. To see this, we first notice, by a change of variable $u' =-u$ and from the definitions of the Airy kernel \eqref{def:airy_kernel} and the Airy function \eqref{def:airy_function}, that the left hand side of \eqref{eq:final bound} is bounded above by

\begin{multline}
\left| \int_{0}^{\infty} \mathrm{d} k \int_{C - \gamma^{\Delta t^{1/3}}} \frac{\mathrm{d} v}{2 \pi \ii} \mathrm{e}^{  \frac{1}{3} v^3 - (s_2 + \xi + k) v}  \int_{ - \sigma^{\Delta t^{1/3}}} \frac{\mathrm{d} u'}{2 \pi \ii} \mathrm{e}^{ \frac{1}{3} {u'}^3 - (s_2 + \zeta + k) u'} \right| \\ 
+ \left| \int_{0}^{\infty} \mathrm{d} k \int_{C} \frac{\mathrm{d} v}{2 \pi \ii} \mathrm{e}^{  \frac{1}{3} v^3 - (s_2 + \xi + k) v}  \int_{ C - ( - \sigma^{\Delta t^{1/3}})} \frac{\mathrm{d} u'}{2 \pi \ii} \mathrm{e}^{ \frac{1}{3} {u'}^3 - (s_2 + \zeta + k) u'} \right|,
\label{eq:bound mid0}
\end{multline}
where $C$ is the Airy contour in \eqref{def:airy_function}, which starts at $\infty \e^{- \pi \ii / 3}$ and goes to $\infty \e^{\pi \ii /3}$, $-\gamma$ denotes the contour $\gamma$ with its orientation reversed and $\gamma_1 - \gamma_2$ denotes the concatenation of the contours $\gamma_1$ and $-\gamma_2$.
The contour $-\sigma^{\Delta t^{1/3}}$ differs from the Airy contour $C$ by 
\begin{equation}
\label{eq:contour difference}
C - (- \sigma^{\Delta t^{1/3}})=  \left\{ 
u\in \mathbb{C} \mid u=s\e^{ \ii \pi /3} , s\in [\Delta t^{1/3} ,\i] \right\} \cup \left\{ u\in \mathbb{C} \mid u=-s\e^{- \ii \pi /3} , s\in [-\i, - \Delta t^{1/3}] \right\},
\end{equation}
On the other hand by the Cauchy's integral theorem, we can deform the Airy contour $C$ to the contour $C_d$ which contains the same vertical line segment as $\gamma^{\Delta t^{1/3}}$ in the vicinity of the origin: $C_{d}=\{ v \in \mathbb{C} \mid v = w_c\delta\lambda t^{1/3}(1-s \ii),s\in [-\sqrt{3},\sqrt{3}]\}\cup\{v \in C \mid |v| > 2w_c\delta\lambda t^{1/3} \}$, so that the deformed Airy contour also differs from $\gamma^{\Delta t^{1/3}}$ by  \eqref{eq:contour difference} ($C_d - \gamma^{\Delta t^{1/3}}=$  \eqref{eq:contour difference}). 
After using the bound of Airy function: $|\Ai(x)| \leq C_a e^{-ax}$, which holds for any $x \in \mathbb{R}$ with some $a>0$ and $C_a>0$ depending on $a$, one then can perform the integral over $k$ on \eqref{eq:bound mid0} and show that it is bounded by 

\begin{align}
	\label{eq:bound mid}
c' \left(\int_{\Delta t^{1/3}}^{\i} \dd r \e^{-r^3/3-(s_2+\xi)r} + \int_{\Delta t^{1/3}}^{\i} \dd r \e^{-r^3/3-(s_2+\zeta)r}\right),
\end{align}
where $c'$ is some positive constant. For $t$ large enough, we can see the integrand of \eqref{eq:bound mid} is bounded by $\e^{-c_1 r}$ for some positive constant.  As a result, \eqref{eq:bound mid} is bounded by $\mathcal{O}( \mathrm{e}^{ - c_1 \Delta t^{1/3} } )$, and hence we arrive at \eqref{eq:final bound}. Because $\Delta=t^{-1/9}$, $\mathcal{O}( \mathrm{e}^{ - c_1 \Delta t^{1/3} } )$ decays exponentially with respect to $t$. Hence we have shown that without error terms, the integral in $\Gamma'^{\Delta} \times \Sigma^{\Delta}$ tends to an Airy kernel as $t$ goes to $\i$. Next we will prove the integral involving error terms vanishes.

\paragraph{(ii) Estimate of the error term $\boldsymbol{R}$ in $\boldsymbol{\Gamma'^{\Delta} \times \Sigma^{\Delta}}$}
Rewrite the target integral into the main contribution and the error terms:
\begin{multline}\label{RescKernelFirstuvError1}
	\lambda_c t^{1/3} \int_{{\Gamma'}^{\Delta } \times {\Sigma}^{\Delta }} \frac{\mathrm{d} z}{2 \pi \ii} \frac{\mathrm{d} w}{2 \pi \ii} \frac{1}{ w-z}  \frac{ \mathrm{e}^{ f(z, t,\xi) - f(w_c , t, \xi) + g_{\phi}(z)} }{ \mathrm{e}^{ f(w, t,\zeta) - f(w_c , t, \zeta) - g_{\psi}(w)}  } - \eqref{RescKernelFirstuvMain2} \\
	= \int_{\gamma^{\Delta t^{1/3}}} \frac{\mathrm{d} v}{2 \pi \ii}  \int_{ -\sigma^{\Delta t^{1/3}}} \frac{\mathrm{d} u}{2 \pi \ii} \frac{1}{u-v} \left(
	R_vR_u + R_u \e^{\frac{1}{3} v^3 - (s_2 + \xi ) v} + R_v \e^{\frac{1}{3} u^3 - (s_2 + \xi ) u} 
	\right) \\
	= \int_{\gamma^{\Delta t^{1/3}}} \frac{\mathrm{d} v}{2 \pi \ii}  \int_{ -\sigma^{\Delta t^{1/3}}} \frac{\mathrm{d} u}{2 \pi \ii} \frac{1}{u-v}  \e^{h(v,\xi) + h(u,\zeta)} 
	\left[
	\e^{ O(v) + O(u) } - 1
	\right],
\end{multline}
where  $h(v,\xi)=\frac{1}{3} v^3 - (s_2 + \xi ) v$, and $O(v)=\mathcal{O}( v^4 t^{-1/3} , v^2 t^{-1/6} , v^2 t^{-1/3} , v t^{-1/3} )$. Let us now show that the above vanishes in long time limit. First recall \eqref{Minimum|w-z|}, we 
have the estimate $|u-z|^{-1} \leq w_c^{-1} \lambda \delta^{-1} t^{-1/3} = \lambda / w_c$, as $\delta = t^{-1/3}$. Then by the inequality $|\e^x - 1| \leq |x| \e^{|x|}$, we can bound \eqref{RescKernelFirstuvError1} by the product of two line integrals as follows.
\begin{equation}\label{RescKernelErrorR}
	\eqref{RescKernelFirstuvError1} \leq \frac{\lambda}{w_c} \int_{\gamma^{\Delta t^{1/3}}}  \int_{ -\sigma^{\Delta t^{1/3}}} \left|\frac{\mathrm{d} v}{2 \pi \ii} \right| \left|\frac{\mathrm{d} u}{2 \pi \ii} \right|
	\e^{h(v,\xi) + h(u,\zeta)} 
	\e^{ O(v) + O(u) }
	\left|O(u) + O(v) \right| := R.
\end{equation}

We now only need to show that $R$ goes to zero as $t$ tends to infinity. Let us now consider $O(v)$. Clearly, $\mathcal{O}(v^4 t^{-1/3})$ is dominated by $\mathcal{O}(v^4 t^{-1/6})$ for large enough $t$. Likewise $\mathcal{O}(v^2 t^{-1/6})$ dominates $\mathcal{O}(v^2 t^{-1/3})$, and $\mathcal{O}( v t^{-1/6})$ dominates $\mathcal{O}(v t^{-1/3} )$ when $t$ is large enough. As a result, $\mathcal{O}(v^4 t^{-1/3}, v^2 t^{-1/6}, v^2 t^{-1/3} , v t^{-1/3})$ $\ll t^{-1/6} \mathcal{O}(v^4 , v^2, v)$. 
Then let us consider $\e^{O(v)}$. We observe that along $\gamma^{\Delta t^{1/3}}$, ${\rm max} |v| \leq \Delta t^{1/3} \leq t^{1/4}$ as we choose $\Delta = t^{-1/9}$. When $t$ is large enough, $|v^4 t^{-1/3}| \leq |v|^4 t^{-1/3} \leq |v|^{3} t^{-1/12} < |v|^3/6$, i.e. $\mathcal{O}(v^4 t^{-1/3}) < |v|^3/6$. Similarly, $\mathcal{O}(v^2 t^{-1/6}, v^2 t^{-1/3}) <  |v|^2/6 $ and $\mathcal{O}( v t^{-1/3}) < 1$ when $t$ is large enough. Therefore, $\mathcal{O}(v^4 t^{-1/3} , v^2 t^{-1/6} , v^2 t^{-1/3} , v t^{-1/3})$ is bounded by $(|v|^3 + |v|^2)/6+1$. Consequently, $R$ is now bounded as
\begin{equation}\label{RescKernelErrorR1}    
	R < t^{- 1/6} \frac{\lambda }{w_c } \int_{\gamma^{\Delta t^{1/3}}}  \int_{ -\sigma^{\Delta t^{1/3}}} \left|\frac{\mathrm{d} v}{2 \pi \ii} \right| \left|\frac{\mathrm{d} u}{2 \pi \ii} \right|
	\e^{\tilde{h}(v,\xi) + \tilde{h}(u,\zeta)} 
	\left|\tilde{O}(v) + \tilde{O}(u) \right|,
\end{equation}
where $\tilde{O}(v) = \mathcal{O}(v^4  , v^2 , v)$, and $\tilde{h}(v,\xi) = \frac{1}{3} v^3 - (s_2 +\xi) v + \frac{1}{6} |v|^3 + \frac{1}{6} |v|^2 + 1$. One sees that the dependence of $t$ only appear in the integration boundary $u = v = \lambda \Delta t^{1/3} e^{\pm \ii \pi /3}$. Since $ \Delta = t^{-1/9}$, i.e., $\Delta t^{1/3} \gg 1$, the integrand is dominated by the term $\e^{\frac{1}{3} v^3 + \frac{1}{6} |v|^3}\e^{\frac{1}{3} u^3 + \frac{1}{6} |u|^3} = \e^{ - \frac{1}{3} \lambda^3 \Delta^3 t }$ at the boundary. This implies that the integral \eqref{RescKernelErrorR1}, without the prefactor $t^{-1/6}$, is bounded in the limit $t\rightarrow \infty$. Namely, for large enough $t$, there exists a constant $c_3$ such that
\[R < c_3 t^{-1/6}.\]

\paragraph{(iii) Estimate of $\boldsymbol{(\Gamma' \times \Sigma) \backslash ( \Gamma'^{\Delta} \times \Sigma^{\Delta})  }$}
Next, we evaluate the rest of 3 terms in $(\Gamma' \times \Sigma) \backslash ( \Gamma'^{\Delta} \times \Sigma^{\Delta}) $: $(\Gamma' \backslash \Gamma'^{\Delta}) \times \Sigma^{\Delta}$, $\Gamma'^{\Delta} \times (\Sigma \backslash \Sigma^{\Delta})$, $(\Gamma' \backslash \Gamma'^{\Delta}) \times  (\Sigma \backslash \Sigma^{\Delta})$. Here we only show the proof sketch of $(\Gamma' \backslash \Gamma'^{\Delta}) \times \Sigma^{\Delta}$, while the proofs of the other two follow exactly in the same way.

Let us consider $(\Gamma' \backslash \Gamma'^{\Delta}) \times \Sigma^{\Delta}$. By Proposition \ref{descent contour 1}, we know $\Gamma$ is the steepest descent contour along which the real part of $g_1(z)$ takes the maximum value at $z = w_c$ on $\Gamma$. For $t$ large enough,  the real part of $g_1(z)$ takes the maximum value at the points $z = w_c + \Delta \mathrm{e}^{ \pm \ii \pi/3 }$ on $\Gamma' \backslash \Gamma'^{\Delta}$. By the Taylor expansion \eqref{g def1}, one can see that along $\Gamma' \backslash \Gamma'^{\Delta}$, we have
\begin{align*}
	\Re(g_1(z)-g_1(w_c)) \leq -2 a_1 \Delta^3 + \Delta^4 \Re(\e^{\pm 4 \ii \pi / 3} h_1(w_c + \Delta \mathrm{e}^{ \pm i \pi/3 })) 
	\leq - a_1 \Delta^3, 
\end{align*}
where $a_1>0$ is given in \eqref{gCoefs}, and the second inequality holds when $t$ is large enough. Since $\Delta = t^{-1/9}$ and the function $h_1(z)$ is bounded along $\Gamma'$, then $\Delta| h_1(z)| \leq a_1$ for $t$ large enough. Therefore, the $z$-integrand except $(w-z)^{-1}$ is bounded by
\begin{equation*}
	|\e^{f(z , t , \xi) - f(w_c , t , \xi)}|
	\leq 
	\e^{ -a_1 \Delta^3  t +\mathcal{O}(t^{1/2})}.
\end{equation*}
Note that the bound is uniform for any fixed $L$. Then by \eqref{Minimum|w-z|_2}, we have the bound of $|w-z|^{-1} \leq 2 / \sqrt{3} \Delta^{-1}  \ll 2 t^{1/3}$. Since $\Delta = t^{-1/9}$, then $\e^{-\Delta^3 t }= \e^{-t^{2/3}}$. Obviously $\e^{-t^{2/3}} t^{1/3} \e^{\mathcal{O}(t^{1/2})} \rightarrow 0$ as $t \rightarrow \i$. In conclusion, for large enough $t$, 
\begin{equation}\label{ErrorFarAway1}
	|\e^{f(z , t , \xi) - f(w_c , t , \xi)}(w-z)^{-1}|
	\leq 
	\e^{ -a_1 \Delta^3 t/2 },
\end{equation}
where $a_1>0$. 

Now we are left with the $w$-integrand (except for $|w-z|^{-1}$). Using the change of variable again $u = t^{1/3} \lambda (w - w_c)$, we have 
\begin{equation*}
	t^{1/3} \mathrm{e}^{ - f(w, t,\zeta) + f(w_c , t, \zeta) + g_{\psi}(w)}
	=\frac{\lambda^{-1}}{w_c + c}\mathrm{e}^{ h(u, \zeta)} \e^{O(u)  },
\end{equation*}
where $h(u, \zeta) =\frac{1}{3} u^3 - (s_2 + \zeta) u  $ and $O(u) = \mathcal{O}( u^4 t^{-1/3} , u^2 t^{-1/6} , u^2 t^{-1/3} , u t^{-1/3} )$. By the same analysis as in part (ii), one can see that 
\begin{equation}\label{ErrorFarAway2}
	\int_{\sigma^{\Delta t^{1/3} }} \left| \frac{\dd u}{2 \pi \ii } \right|  \left|
	\frac{\lambda^{-1}}{w_c + c}\mathrm{e}^{ h(u, \zeta)} \e^{O(u)  } \right| \leq c_4,
\end{equation}
where $c_4$ is some positive constant independent of $t$. 

Combining \eqref{ErrorFarAway1} and \eqref{ErrorFarAway2}, we obtain 
\begin{align}\label{ErrorFarCase1}
	\left| \lambda_c t^{1/3} \int_{( \Gamma' \backslash {\Gamma'}^{\Delta} ) \times  \Sigma^{\Delta}  } \frac{\mathrm{d} z}{2 \pi \ii} \frac{\mathrm{d} w}{2 \pi \ii} \mathrm{e}^{ f(z, t, \xi) - f(w_c , t, \xi) + g_{\phi}(z) }  \mathrm{e}^{ - f(w , t, \zeta) + f(w_c , t, \zeta) + g_{\psi}(w) }    \frac{1}{w-z} \right| \leq c_5 \mathrm{e}^{ - \frac{1}{2} a_1 \Delta^3 t },
\end{align}
where $c_5$ is some positive constant, and hence \eqref{ErrorFarCase1} goes to zero as $t$ goes to infinity.

\subsection{Proof of \eqref{fbar bound}}
\label{Upper_Bound_Kc}

Since $g_1(z)$ is analytic along $\Gamma'$ and $\Sigma'$, there exists some positive constant $H_1$ such that $|h_1(z)| < H_1$ for $z$ along $\Gamma'$ and $\Sigma'$. Similarly, we also have $|h_2(z)| < H_2$, $|\bar{h}_3(z)| < H_3$, $|h_{\phi}(z)| < H_{\phi}$ and $|h_{\psi}(z)| < H_{\psi}$. Consider $g_1(z)$ near $z = w_c$, we have
\begin{equation}\label{g1 estm1}
	\left|g_1(z)-g_1(w_c)\right| = \left| 2 a_1 (z-w_c)^3 + (z-w_c)^4 h_1(z) \right|  \leq 2 a_1 |z-w_c|^3 + H_1 |z-w_c|^4 \leq 3a_1 |z-w_c|^3,
\end{equation}
where the last inequality holds under the restriction that $|z-w_c| \leq H_1 / a_1$. Consequently, if $|z-w_c|<A$ where $ A:=\max \{H_1 / a_1, H_2 / |b_2|, H_3 / |\bar{b}_3|, H_{\phi} / |b_{\phi}| , H_{\psi} / |b_{\psi}| \}$, the following inequalities hold:

\begin{subequations}\label{bound Kerc eq1}
	\begin{align}
		\left|g_1(z)-g_1(w_c)\right| \leq & 3a_1 |z-w_c|^3, \\
		\left|g_2(z)-g_2(w_c)\right| \leq & 2|b_2| |z-w_c|^2, \\
		\left|\bar{g}_3(z)-\bar{g}_3(w_c)\right| \leq & 2|\bar{b}_3| |z-w_c|, \\
		\left|g_{\phi}(z)-g_{\phi}(w_c)\right| \leq & 2|b_{\phi}| |z-w_c|, \\
		\left|g_{\psi}(z)-g_{\psi}(w_c)\right| \leq & 2|b_{\psi}| |z-w_c|.
	\end{align}
\end{subequations}
Furthermore, if $|z-w_c| > 12 |b_2| t^{-1/2} / a_1 $, then $2|b_2| |z-w_c|^2 t^{1/2} \leq \frac{1}{6} a_1 |z-w_c|^3 t$. Therefore, if $ |z-w_c| > \max \{  \frac{12 |b_2|}{a_1} t^{-1/2} ,\allowbreak \sqrt{\frac{12 |\bar{b}_3|}{a_1}} t^{-1/3}, \sqrt{\frac{12 |b_{\phi}|}{a_1}} t^{-1/2}, \sqrt{\frac{12 |b_{\psi}|}{a_1}} t^{-1/2}\}$,
\begin{subequations} \label{bound Kerc eq2}
	\begin{align}
		\left|g_2(z)-g_2(w_c)\right|t^{1/2} \leq & 2|b_2| |z-w_c|^2 t^{1/2} \leq \frac{1}{6} a_1 |z-w_c|^3 t, \\
		\left|\bar{g}_3(z)-\bar{g}_3(w_c)\right| t^{1/3} \leq & 2|\bar{b}_3| |z-w_c| t^{1/3}  \leq \frac{1}{6} a_1 |z-w_c|^3 t, \\
		\left|g_{\phi}(z)-g_{\phi}(w_c)\right| \leq & 2|b_{\phi}| |z-w_c| \leq \frac{1}{6} a_1 |z-w_c|^3 t, \\
		\left|g_{\psi}(z)-g_{\psi}(w_c)\right| \leq & 2|b_{\psi}| |z-w_c| \leq \frac{1}{6} a_1 |z-w_c|^3 t.
	\end{align}
\end{subequations}
When $t$ is large enough, $t^{-1/3} \gg t^{-1/2}$, i.e., we only require $|z-w_c| > \sqrt{\frac{12 |\bar{b}_3|}{a_1}} t^{-1/3} := c_1 t^{-1/3}$, then the above inequalities hold. Using the above inequalities, we have the following estimations.

$\bullet$ Contribution from $\Gamma_{\rm vert}$: \eqref{fbar bound1} \newline For $z \in \Gamma_{\rm vert}$, $|z-w_c| \leq 2 w_c \delta \leq \Delta = t^{-1/9}$. Since $A$ is a fixed constant, one can choose $t$ large enough such that $|z-w_c| \leq \Delta < A$. Moreover, we restrict $\delta > c_1 t^{-1/3} / w_c$ so that $|z - w_c| \geq w_c \delta > c_1 t^{-1/3}$. Thus, from \eqref{bound Kerc eq1} and \eqref{bound Kerc eq2} we have,
\begin{equation*}
	\left| \e^{\bar{f}(z,t) - \bar{f}(w_c, t) + g_{\phi}(z) } \right| \leq \e^{(3a_1 +a_1/2)|z-w_c|^3 t} \e^{g_{\phi}(w_c)} \leq \e^{28 a_1 (w_c \delta)^3t} \e^{g_{\phi}(w_c)}.
\end{equation*}

$\bullet$ Contribution from $\Gamma'^{\Delta} \backslash \Gamma_{\rm vert}$: \eqref{fbar bound2} \newline Note that we can also bound this part by $\e^{28 a_1 (w_c \delta)^3}$ using the same method as above. However, since the contour now is along the direction $\e^{\pm \ii \pi /3}$, we could refine the bound of $g_1(z)$. We parameterise $z$ along $\Gamma'^{\Delta} \backslash \Gamma_{\rm vert}$ by $z=w_c + v \e^{\pm \ii \pi /3}$ where $v \in (2 w_c \delta , \Delta]$. Therefore, 
\begin{equation*}
	\left|  \e^{g_1(z) - g_1(w_c)} \right|= \left|\e^{-2a_1 v^3 + h_1(z) v^4 \e^{\pm 4 \ii \pi /3}}\right| \leq \e^{-a_1 v^3},
\end{equation*}
when $|z-w_c| < A$. In fact, we have $|z-w_c| \leq \Delta = t^{-1/9} < A$ when $t$ is large enough. Then we repeat the above estimate \eqref{bound Kerc eq2} for $g_2(z) t^{1/2} + \bar{g}_3(z) t^{1/3} + g_{\phi}(z) $ and we obtain for $\Gamma'^{\Delta} \backslash \Gamma_{\rm vert}$, 
\begin{equation}\label{fbar bound2}
	\left|	\e^{\bar{f}(z,t) - \bar{f}(w_c, t) + g_{\phi}(z) }  \right| \leq \e^{(-a_1 +a_1/2)|z-w_c|^3 t} \e^{g_{\phi}(w_c)} \leq \e^{-a_1 v^3 t /2} \e^{g_{\phi}(w_c)}.
\end{equation}
In fact, when $z \in \Gamma'^{\Delta} \backslash \Gamma_{\rm vert}$, we have $|z-w_c| \geq 2 w_c \delta > w_c \delta > c_1 t^{-1/3}$, which satisfies the restriction of \eqref{bound Kerc eq2}.

$\bullet$ Contribution from $\Gamma' \backslash \Gamma'^{\Delta}$: \eqref{fbar bound3} \newline This estimate is exactly the same as part (iii) in the Appendix~\ref{Unif_conv_Kc}. Recall that we have proved in Appendix~\ref{Unif_conv_Kc}, along  $\Gamma' \backslash \Gamma'^{\Delta}$, 
\begin{equation}\label{fbar bound3}
	\left| \e^{\bar{f}(z,t) - \bar{f}(w_c,t) + g_{\phi}(z)}  \right| \leq \e^{- a_1 \Delta^3 t + \mathcal{O}(t^{1/2})} \leq \e^{- a_1 \Delta^3 t /2},
\end{equation} 
where $\e^{\mathcal{O}(t^{1/2})} < \e^{ a_1 \Delta^3 t /2}$ for $t$ large enough.

\subsection{Proof of \eqref{zw+cBound}}\label{appx:zw+cBound}
Along $\Gamma'$, the minimum value of $z+c$ is taken at the point $z = w_c ( 1+ \delta)$ (See Fig. \ref{Fig.CotourVicinityw_c1_text}). Therefore,  
\begin{align*}
	\left| \frac{w_c + c}{z + c}\right| \leq \left| \frac{w_c + c}{w_c(1 + \delta)  + c}\right| = 
	\left| \frac{1}{ 1 + \frac{w_c\delta}{w_c + c}}\right| \leq \e^{-\frac{1}{2}\frac{w_c}{w_c + c} \delta},
\end{align*}
where the last inequality follows by $1/(1+x) < \e^{-x/2}$ when $0<x<2$. This restriction is satisfied if we suppose $\delta < 1$. Similarly, we consider the term involving $\zeta$. Along $\Sigma'$, the maximum value of $|w +c|$ given by 
\begin{align*}
	\left|w + c\right| \leq 
	\left|\sqrt{(w_c+c)^2 + 4 \delta^2 w_c^2 - 2 \delta w_c(w_c+c)}\right|
	= 
	\sqrt{w_c^2(4\delta^2 - 2\delta +1) +2w_cc(1-\delta) +c^2},
\end{align*}
This can be easily seen from Fig. \ref{Fig.CotourVicinityw_c1_text}.
Then we bound $4\delta^2-2\delta+1$ by $1-\delta$, which is valid when $0 < \delta < 1/4$. It follows that 
\begin{align*}
	|w+c|
	\leq &
	\sqrt{(w_c^2+2w_cc+c^2)(1-\delta)-\delta c^2}
	\\
	= &
	\sqrt{(w_c + c)^2 -\delta [(w_c + c)^2-c^2]} \\
	\leq &
	\sqrt{(w_c + c)^2 -\delta w_c(w_c + c)}.
\end{align*}
Therefore, 
\begin{align*}
	\left|\frac{w + c}{w_c + c}\right| \leq \sqrt{1- \frac{\delta w_c}{w_c +c}} \leq \e^{-\frac{1}{2}\frac{w_c}{w_c + c}\delta},
\end{align*}
where the last line follows by $\sqrt{1-x} \leq \e^{-x/2}$ when $x>0$.

\section{The proofs of Lemmas and Proposition in Section~\ref{sec:Asymptotics_second}}

In this appendix we prove Lemmas~\ref{commutativity sum and int}, \ref{kernel holomorphic} and Proposition~\ref{rank1 perturbation det2} in Section~\ref{sec:Asymptotics_second}.

\subsection{Proof of Lemma~\ref{commutativity sum and int}, on commutativity between the integrals and the sums}
\label{ap:commutativity}

In the following, we will prove that the determinant of the matrix $K^c(x_i , x_j , \vec{z})$ has the dominant series as
\begin{equation}
\label{eq:comsi_prop_1}
\left| \det\left[ K^c(x_i , x_j, \vec{z}) \right]_{1 \leq i , j \leq k} \right| \leq M \prod_{i=1}^{k}{ \mathrm{e}^{ - \frac{w_c}{4(w_c + c)} \delta x_i } },
\end{equation}
where $M$ is some constant depending on $n, m \in \mathbb{N}$, $t \in (0,\infty)$ and $k \in [1,m]$.
Since the series in the right hand side of \eqref{eq:comsi_prop_1} is a geometric series, we can perform the infinite sums from $x_i = 1$ to $\infty$ for all $i \in [1,k]$.

\begin{equation*}
\label{eq:comsi_prop_2}
\sum_{x_1 = 1}^{\infty}{ \sum_{x_2 = 1}^{ \infty }{ \cdots \sum_{x_k = 1}^{\infty}{ M \prod_{i=1}^{k}{ \e^{ - \frac{w_c}{4(w_c + c)} \delta x_i } } } } } = M {\left[ \frac{1}{ \e^{ (w_c \delta)/[4(w_c + c)] } - 1 } \right]}^k < \infty.
\end{equation*}
From the Weierstrass $M$-test, it turns out that we can commute the sums over $x_i \in \mathbb{N}$ for all $i \in [1,k]$ and the integrals with respect to $z_j$ for all $j \in [1,n-1]$.

In order to show \eqref{eq:comsi_prop_1}, we will prove the inequality

\begin{equation}
\label{eq:comsi_prop_3}
\left| {(w_c + c)}^{x - y} K^c(x , y , \vec{z}) \right| \leq C \e^{ - \frac{w_c}{4(w_c + c)} \delta (x + y) },
\end{equation}
where $C$ is some constant depending on $n,m \in \mathbb{N}$ and $t \in (0,\infty)$.
From the formula of $K^c(x,y,\vec{z})$, which is given by \eqref{Kxyz}, we can find the following bound:

\begin{equation*}
\begin{split}
 \label{eq:comsi_prop_4}
& \left| {(w_c + c)}^{x - y} K^c(x,y,\vec{z}) \right| \\
\leq & \oint_{1} \frac{ \left| \dd z \right| }{2 \pi } \left| \frac{z + \rho'}{z + 1} \right| \e^{ - \frac{t}{2} \Re{z} } {\left| \frac{z}{z - 1} \right|}^m {\left| \frac{1 + z}{z} \right|}^n {\left| \frac{w_c + c}{z + c} \right|}^x \prod_{j=1}^{n-1}{ \left| \frac{1 + z_j z}{1 + z} \right| } \times \\
& \oint_{0 , - \rho' ,  \{ - z_j^{-1} \}_{j=1}^{n-1} } \frac{ \left| \dd w \right| }{2 \pi} \left| \frac{w + 1}{(w + \rho')(w + c )} \right| \e^{ \frac{t}{2} \Re{w} } {\left| \frac{w - 1}{w} \right|}^m {\left| \frac{w}{1 + w} \right|}^n {\left| \frac{w_c + c}{w + c} \right|}^{-y}   \prod_{j=1}^{n-1}{ \left| \frac{1 + w}{1 + z_j w} \right| } \frac{1}{ \left| w-z \right| }
\end{split}
\end{equation*}
Let us choose the contours of $z$-integral and $w$-integral as $\Gamma'$ and $\Sigma'$ defined in the proof of Proposition~\ref{exp_bound}, respectively. 
This implies that, for any $z$ and $w$ on the contours, the parts depending on $x \in \mathbb{N}$ and $y \in \mathbb{N}$ are bounded above as follows.

\begin{align*}
{\left| \frac{w_c + c}{z + c} \right|}^x & \leq \e^{  - \frac{w_c}{4(w_c + c)} \delta x }  \label{eq:comsi_prop_6} \\
{\left| \frac{w_c + c}{w + c} \right|}^{-y} & \leq \e^{  - \frac{w_c}{4(w_c + c)} \delta y } 
\end{align*}
Since the contour of $z$-integral does not pass the points at $z=-c$ and  $|z|, |w| < \infty$ and $|w - z| > 0$ holds for any $(z,w) \in \Gamma' \times \Sigma'$, the other parts of integrands are bounded by some constant depending on $n, m \in \mathbb{N}$ and $t \in (0,\infty)$. 
In addition, considering that the lengths of the contours are finite, it turns out that \eqref{eq:comsi_prop_3} holds.

Since a determinant of a matrix whose $(i,j)$ entry is given by $a_{i,j}$ and a matrix whose $(i,j)$ entry is given by $b^{x_i - x_j} a_{i,j}$ are equivalent, we obtain the equality

\begin{equation}
\label{eq:comsi_prop_8}
\left| \det\left[ K^c(x_i , x_j , \vec{z}) \right]_{1 \leq i , j \leq k }   \right| =  \left| \det\left[ {(w_c + c)}^{x_i - x_j} K^c(x_i , x_j , \vec{z}) \right]_{1 \leq i , j \leq k }   \right|.
\end{equation}
Application of the Hadamard's inequality to the right hand side of \eqref{eq:comsi_prop_8} yields

\begin{equation}
\label{eq:comsi_prop_9}
\left| \det\left[ {(w_c + c)}^{x_i - x_j} K^c(x_i , x_j , \vec{z}) \right]_{1 \leq i , j \leq k }   \right| \leq \prod_{i=1}^{k}{ \sqrt{ \sum_{j=1}^{k}{ {\left|  {(w_c + c)}^{x_i - x_j} K^c(x_i , x_j , \vec{z})  \right|}^2 } } }.
\end{equation}
Finally, using the inequality \eqref{eq:comsi_prop_3}, we get

\begin{equation}
\label{eq:comsi_prop_10}
 \prod_{i=1}^{k}{ \sqrt{ \sum_{j=1}^{k}{ {\left|  {(w_c + c)}^{x_i - x_j} K^c(x_i , x_j , \vec{z})  \right|}^2 } } } \leq  {C }^k \prod_{i=1}^{k}{ \sqrt{ \sum_{j=1}^{k}{  \mathrm{e}^{ - \frac{w_c}{2(w_c + c)} \delta (x_i + x_j) }  } } } \leq  { k }^{\frac{k}{2}} C^k  \prod_{i=1}^{k}{ \mathrm{e}^{ - \frac{w_c}{4(w_c + c)} \delta x_i } }.
\end{equation}
It follows from \eqref{eq:comsi_prop_8}, \eqref{eq:comsi_prop_9} and \eqref{eq:comsi_prop_10} that the inequality \eqref{eq:comsi_prop_1} holds with a constant $M$ set as $M = { k }^{\frac{k}{2}} C^k$.

\subsection{Proof of the regularity of the kernel, Lemma~\ref{kernel holomorphic}}
\label{ap:kernel holomorphic}

A complex function $f : \mathbb{C}^{n-1} \rightarrow \mathbb{C}$ is complex analytic if and only if it is holomorphic.
In the following, we will show the kernel is holomorphic by proving it is complex analytic in ${S(1,r)}^{n-1}$ for $r \in (0, 1)$.
In other words, we will show that, for any $\vec{b} = (b_1,\ldots,b_{n-1}) \in {S(1,r)}^{n-1}$, there exists an open polydisc $S(b_1, r_1) \times \cdots \times S(b_{n-1}, r_{n-1}) \subset {S(1,r)}^{n-1}$ such that the kernel $K^c(x,y,\vec{z})$ has a power series expansion

\begin{equation*}
K^c(x,y,\vec{z}) =  \sum_{k_1,\ldots,k_{n-1} = 0}^{\infty}{ c_{k_1,\ldots,k_{n-1}} \prod_{j=1}^{n-1}{{(z_j - b_j)}^{k_j}}  } ,
\end{equation*}
which converges for any $\vec{z} \in S(b_1, r_1) \times \cdots \times S(b_{n-1}, r_{n-1})$, where coefficients $c_{k_1,\ldots,k_{n-1}}$ are independent of $\vec{z}$.
In this proof, we start from the form
\begin{equation}\label{Holomorphic_start_kernel}
K^c(x,y,\vec{z}) =   \oint_{1} \frac{\dd z}{2 \pi \ii} F^c(z,x)  \prod_{j=1}^{n-1}{ \frac{1 + z_j z}{1 + z} } 
 \oint_{0 , - \rho' ,  \{ - z_j^{-1} \}_{j=1}^{n-1} } \frac{\dd w}{2 \pi \ii} G^c(w,y) \prod_{j=1}^{n-1}{\frac{1 + w}{1 + z_j w}} \frac{1}{w-z},
\end{equation}
where the functions $F^c(z,x)$ and $G^c(w,y)$ are defined in \eqref{def:FcGc}.

As stated in the proof of Lemma~\ref{rank1 perturbation term1}, we can choose the contour such that

\begin{equation}
\label{eq:holo_lem_1}
\left| \frac{1 + w}{w} \right| > r > \left| b_j - 1 \right|
\end{equation}
hold for any $\vec{b} = (b_1, \ldots, b_{n-1}) \in {S(1,r)}^{n-1}$ and $w$ on the contour. Using the reverse triangle inequality and \eqref{eq:holo_lem_1}, we obtain

\begin{equation}
\label{eq:holo_lem_2}
\left| \frac{1+w}{w} + (b_j - 1) \right| \geq \left| \frac{1 + w}{w} \right| - |b_j - 1| >  r - |b_j - 1|  > 0.
\end{equation}
Suppose that the open polydisc $S(b_1, r_1) \times \cdots \times S(b_{n-1}, r_{n-1})$ is included in ${S(1,r)}^{n-1}$, $r_j$ and $b_j$ satisfy

\begin{equation}
\label{eq:holo_lem_3}
r_j < r - |b_j - 1|
\end{equation} 
for any $j \in [1,n-1]$.
It follows from \eqref{eq:holo_lem_2} and \eqref{eq:holo_lem_3} that the equality and inequalities

\begin{equation}
\label{eq:holo_lem_4}
\left| \frac{ w(b_j - z_j)}{1 + b_j w} \right| = \left| \frac{b_j - z_j}{ (1 + w)/w + (b_j - 1) } \right| \leq \frac{|z_j - b_j|}{ \left| (	1 + w)/w \right| - \left| b_j - 1 \right| } < \frac{r_j}{r - | b_j - 1 |} < 1
\end{equation}
hold for any $\vec{z} \in S(b_1, r_1) \times \cdots \times S(b_{n-1}, r_{n-1}) \subset {S(1,r)}^{n-1}$ and any $w$ on the contour.

The right hand side of \eqref{Holomorphic_start_kernel} depends on $z_j$ via the factor $(1+z_j z)/(1 + z_j w)$, and
\eqref{eq:holo_lem_4} guarantees that a Taylor series expansion of it around $z_j = b_j$

\begin{equation}
\label{Taylor_series_factor}
    \frac{1+z_j z}{1 + z_j w} = \sum_{k=0}^{\infty}{ d_k(z,w,b_j) {(z_j - b_j)}^k }
\end{equation}
converges, where
\begin{equation*}
    d_k(z,w,b_j) = 
    \begin{cases} 
    \frac{1+b_j z}{1+b_j w} & \mathrm{for}\, k=0 \\
    \frac{w-z}{w(1 + b_j w)} {\left( \frac{-w}{1+b_j w} \right)}^k & \mathrm{for}\, k \geq 1 .
    \end{cases}
\end{equation*}
Since a Taylor series \eqref{Taylor_series_factor} converges uniformly for any $z$ and $w$ on the contours, we can move the sum over $k \in \mathbb{N} \cup \{ 0 \}$ outside $z, w$-integrals.
Therefore, the kernel can be expanded as

\begin{equation}
\label{eq:holo_lem_5}
K^c(x,y,\vec{z}) = \sum_{k_1,\ldots, k_{n-1} = 0}^{\infty} c_{k_1,\ldots,k_{n-1}} \prod_{j=1}^{n-1} {(z_j - b_j)}^{k_j},
\end{equation}
where 

\begin{equation*}\label{Coef_first}
    c_{k_1,\ldots,k_{n-1}} = \oint_1 \frac{\dd z}{2 \pi \ii} \oint_{0, - \rho', {\{ - b^{-1}_j \}}_{j=1}^{n-1} } \frac{ \dd w }{2 \pi \ii} \frac{F^c(z,x) G^c(w,y) }{w - z} {\left(\frac{1+w}{1+z} \right)}^{n-1} \prod_{j=1}^{n-1}{d_{k_j}(z,w,b_j)}.
\end{equation*}

To confirm convergence of a power series in the right hand side of \eqref{eq:holo_lem_5}, we will evaluate an upper bound of $|c_{k_1,\ldots,k_{n-1}}|$. 
Using the Cauchy's integral theorem, we can choose the contours of $z,w$-integrals such that the absolute value of integrand except for $\prod_{j=1}^{n-1}{d_{k_j}(z,w,b_j)}$ is finite for any $n,m \in \mathbb{N}$ and $t \in (0,\infty)$, i.e., the contour of $z$-integral does not pass the points at $z=-c$ and the lengths of both contours are finite. 
From \eqref{eq:holo_lem_2} and \eqref{eq:holo_lem_3}, we also have
\begin{equation*}
\left| \frac{w}{1 + b_j w} \right|  < \frac{1}{r - |b_j - 1|} \leq \frac{1}{r_j}
\end{equation*}
hence we can find an upper bound of $|d_k(z,w,b_j)|$ as

\begin{subequations}
\label{Coef_second}
\begin{align}
    \left| d_0(z,w,b_j) \right| & = \left|  \frac{1+b_j z}{1 + b_j w} \right| < C_1 , \\ 
    \left| d_k(z,w,b_j) \right| & = \left|  \frac{w-z}{w(1 + b_j w)} {\left( \frac{-w}{1+b_j w} \right)}^k  \right| < \frac{C_2}{r_j^{k}} ,
\end{align}
\end{subequations}
where $C_1$ and $C_2$ are some positive constants.
Utilising \eqref{Coef_second}, it turns out that $| c_{k_1,\ldots,k_{n-1}} |$ is bounded as

\begin{equation*}
\begin{split}
\left| c_{k_1,\ldots,k_{n-1}} \right| & \leq \oint_1 \frac{\left| \dd z \right|}{2 \pi } \oint_{0, - \rho', {\{ - b^{-1}_j \}}_{j=1}^{n-1}} \frac{ \left| \dd w \right| }{2 \pi} \left| \frac{F^c(z,x) G^c(w,y) }{w-z} \right|  {\left| \frac{1 + w}{1 + z} \right|}^{n-1} \prod_{j=1}^{n-1}{ \left| d_{k_j}(z,w,b_j) \right| } \\
& < \frac{C}{r_1^{k_1} \cdots r_{n-1}^{k_{n-1}}} ,
\end{split}
\end{equation*}
where $C$ is some constant depending on $n, m \in \mathbb{N}$ and $t \in (0, \infty)$.
This implies that a power series in the right hand side of \eqref{eq:holo_lem_5} converges.

\subsection{Proof of extension of decoupling to determinant, Proposition~\ref{rank1 perturbation det2}}
\label{ap:rank1_determinant}

In order to simplify the notations, we introduce the abbreviations

\begin{align*}
K_{ij}(\vec{z}) & = K^c(x_i , x_j, \vec{z}), & K_{{\mathrm{W}},ij} & = K^c_{\mathrm{W}}(x_i , x_j), & A_i(\vec{z}) & = A^c(x_i, \vec{z}), \\
{(A_p)}_i & = A^c_p(x_i), & B_i & = B^c(x_i),
\end{align*}
and the vectors

\begin{align*}
\vec{K}^{(k)}_{i}(\vec{z}) &= \left(
\begin{array}{cccc}
K_{i1}(\vec{z}), & K_{i2}(\vec{z}), & \cdots , & K_{ik}(\vec{z})
\end{array}
\right) ,\\
\vec{K}^{(k)}_{{\mathrm{W}},i} & = \left(
\begin{array}{cccc}
K_{\mathrm{W},i1}, & K_{\mathrm{W},i2}, & \cdots , & K_{\mathrm{W},ik}
\end{array}
\right), \\
\vec{B}^{(k)} &= \left(
\begin{array}{cccc}
B_{1}, & B_{2}, & \cdots , & B_{k}
\end{array}
\right). 
\end{align*}
In the following, we represent the determinant of a $k \times k$ matrix, $\det[v_{ij}]_{1 \leq i , j \leq k}$ as follows, using $k$-dimensional row vectors $\vec{v}_i = (v_{i1}, \cdots, v_{ik})$. 

\begin{equation*}
\left|
\left[
\begin{array}{c}
\vec{v}_1  \\
\vec{v}_2  \\
\vdots \\
\vec{v}_k 
\end{array}
\right]
\right| = 
\left|
\begin{array}{cccc}
v_{11} & v_{12} & \cdots & v_{1k}  \\
v_{21} & v_{22} & \cdots & v_{2k}  \\
\vdots & \vdots & \ddots & \vdots \\
v_{k1} & v_{k2} & \cdots & v_{kk}
\end{array}
\right|
\end{equation*}
Using these notations, the claim of Proposition~\ref{rank1 perturbation det2} can be represented as

\begin{equation}
\label{eq:rank1lem5'}
\begin{split}
& \oint_{1} \prod_{j=1}^{n-1}{ \frac{\dd z_j}{2 \pi \ii} } \frac{ \prod_{1 \leq i < j \leq n-1}{(z_j - z_i)} }{ \prod_{j=1}^{n-1}{{(z_j -1)}^{j+1}} } \prod_{j=1}^{n-1}{h(z_j)} 
\left|
\left[
\begin{array}{c}
\vec{K}^{(k)}_{1}(\vec{z}) \\
\vdots \\
\vec{K}^{(k)}_{k}(\vec{z}) 
\end{array}
\right]
\right| \\
= & \oint_{1} \prod_{j=1}^{n-1}{ \frac{\dd z_j}{2 \pi \ii} } \frac{ \prod_{1 \leq i < j \leq n-1}{(z_j - z_i)} }{ \prod_{j=1}^{n-1}{{(z_j -1)}^{j+1}} } \prod_{j=1}^{n-1}{h(z_j)} 
\left|
\left[ 
\begin{array}{c}
\vec{K}^{(k)}_{\mathrm{W}, 1} - \sum_{p=1}^{n-1}{ \prod_{q=1}^{p}{ (z_q - 1) {(A_p)}_1 } } \vec{B}^{(k)}  \\
\vdots \\
\vec{K}^{(k)}_{\mathrm{W}, k} - \sum_{p=1}^{n-1}{ \prod_{q=1}^{p}{ (z_q - 1) {(A_p)}_k } } \vec{B}^{(k)}  
\end{array}
\right]
\right| .
\end{split}
\end{equation}

Since $K_{ij}(\vec{z})$ is invariant under exchanges of any two $\vec{z}$-variables and is a holomorphic function in the vicinity of $z_j = 1 $ for any $j \in [1,n-1]$ as shown in Lemma~\ref{kernel holomorphic}, we can apply Lemma~\ref{rank1 perturbation term2} to one row of $\det[ K_{ij}(\vec{z}) ]_{1 \leq i , j \leq k}$.
Applying Lemma~\ref{rank1 perturbation term1} to the first row and using the multilinearity of determinant, the left hand side of \eqref{eq:rank1lem5'} is divided into two parts. 

\begin{align}
& \oint_{1} \prod_{j=1}^{n-1}{ \frac{\dd z_j}{2 \pi \ii} } \frac{ \prod_{1 \leq i < j \leq n-1}{(z_j - z_i)} }{ \prod_{j=1}^{n-1}{{(z_j -1)}^{j+1}} } \prod_{j=1}^{n-1}{h(z_j)} 
\left|
\left[
\begin{array}{c}
\vec{K}^{(k)}_{1}(\vec{z})  \\
\vdots \\
\vec{K}^{(k)}_{k}(\vec{z}) 
\end{array}
\right]
\right| \nonumber \\
= &  \oint_{1} \prod_{j=1}^{n-1}{ \frac{\dd z_j}{2 \pi \ii} } \frac{ \prod_{1 \leq i < j \leq n-1}{(z_j - z_i)} }{ \prod_{j=1}^{n-1}{{(z_j -1)}^{j+1}} } \prod_{j=1}^{n-1}{h(z_j)} 
\left|
\left[
\begin{array}{c}
\vec{K}^{(k)}_{\mathrm{W} ,1} - (z_1 - 1) A_1(\vec{z}) \vec{B}^{(k)}\\
\vec{K}^{(k)}_{2}(\vec{z})  \\
\vdots  \\
\vec{K}^{(k)}_{k}(\vec{z}) 
\end{array}
\right]
\right| \nonumber \\
= & \oint_{1} \prod_{j=1}^{n-1}{ \frac{\dd z_j}{2 \pi \ii} } \frac{ \prod_{1 \leq i < j \leq n-1}{(z_j - z_i)}  }{ \prod_{j=2}^{n-1}{{(z_j -1)}^{j+1}} } \prod_{j=1}^{n-1}{h(z_j)} \left\{
\frac{1}{{(z_1 - 1)}^2}
\left|
\left[
\begin{array}{c}
\vec{K}^{(k)}_{\mathrm{W} ,1}  \\
\vec{K}^{(k)}_{2}(\vec{z}) \\
\vdots  \\
\vec{K}^{(k)}_{k}(\vec{z}) 
\end{array}
\right]
\right|  -   \frac{1}{(z_1 - 1)}
\left|
\left[
\begin{array}{c}
 A_1(\vec{z}) \vec{B}^{(k)} \\
\vec{K}^{(k)}_{2}(\vec{z})  \\
\vdots  \\
\vec{K}^{(k)}_{k}(\vec{z})
\end{array}
\right]
\right|  \right\} \label{eq:rank1lem2_1}
\end{align}

For the same reason, we can apply Lemma \ref{rank1 perturbation term1} to the second row of the first term of the right hand side of \eqref{eq:rank1lem2_1} and divide the term into further two terms.
Carrying out the same calculation from the second row to the $k$\textsuperscript{th} row in order, the determinant is divided into $k+1$ terms.

\begin{equation}
\label{eq:rank1lem2_2}
\begin{split}
& \oint_{1} \prod_{j=1}^{n-1}{ \frac{\dd z_j}{2 \pi \ii} } \frac{ \prod_{1 \leq i < j \leq n-1}{(z_j - z_i)} }{ \prod_{j=1}^{n-1}{{(z_j -1)}^{j+1}} } \prod_{j=1}^{n-1}{h(z_j)} 
\left|
\left[
\begin{array}{c}
\vec{K}^{(k)}_{1}(\vec{z})  \\
\vdots \\
\vec{K}^{(k)}_{k}(\vec{z}) 
\end{array}
\right]
\right| \\
= & \oint_{1} \prod_{j=1}^{n-1}{ \frac{\dd z_j}{2 \pi \ii} } \frac{ \prod_{1 \leq i < j \leq n-1}{(z_j - z_i)}  }{ \prod_{j=2}^{n-1}{{(z_j -1)}^{j+1}} } \prod_{j=1}^{n-1}{h(z_j)} \left\{
\frac{1}{{(z_1 - 1)}^2}
\left|
\left[
\begin{array}{c}
\vec{K}^{(k)}_{\mathrm{W} ,1}  \\
\vdots \\
\vdots \\
\vdots \\
\vdots \\
\vec{K}^{(k)}_{\mathrm{W}, k} 
\end{array}
\right]
\right|  -  \frac{1}{(z_1 - 1)} \sum_{\ell =1}^{k}{
\left|
\left[
\begin{array}{c}
\vec{K}^{(k)}_{\mathrm{W} ,1} \\
\vdots \\
\vec{K}^{(k)}_{\mathrm{W} ,\ell -1} \\
 A_\ell (\vec{z}) \vec{B}^{(k)} \\
\vec{K}^{(k)}_{\ell +1}(\vec{z})  \\
\vdots  \\
\vec{K}^{(k)}_{k}(\vec{z})
\end{array}
\right]
\right| } \right\}
\end{split}
\end{equation}

For convenience, we define $K'^c(x,y, \vec{z})$ and $A'^c(x, \vec{z})$ as the function which are obtained by substituting $1$ into $z_1$ of $K^c(x,y, \vec{z})$ and $z_2$ of $A(x, \vec{z})$, respectively.

\begin{equation*}
\begin{split}
K'^c(x,y, \vec{z}) = & \left. K^c(x,y,\vec{z}) \right|_{z_1 = 1} \\
= & \oint_{1} \frac{\dd z}{2 \pi \ii} F^c(z, x) \prod_{j=2}^{n-1}{ \frac{1 + z_j z}{1 + z} } \oint_{0 , - \rho' , -1 ,  \{ - z_j^{-1} \}_{j=2}^{n-1} } \frac{\dd w}{2 \pi \ii} G^c(w, y) \prod_{j=2}^{n-1}{\frac{1 + w}{1 + z_j w}} \frac{1}{w-z} \\
A'^c(x, \vec{z}) = & \left. A^c(x,\vec{z}) \right|_{z_2 = 1} \\
= & \oint_{1} \frac{ \dd z }{2 \pi \ii} F^c(z, x) \frac{1}{1 + z} \prod_{j=3}^{n-1}{ \frac{1 + z_j z}{1 + z} }
\end{split}
\end{equation*}
In addition, we introduce the abbreviations

\begin{align*}
K'_{ij}(\vec{z}) & = K'^c(x_i , x_j, \vec{z}),  &   A'_i(\vec{z}) & = A'^c(x_i, \vec{z}) ,
\end{align*}
and the vector

\begin{align*}
\vec{K}'^{(k)}_{i}(\vec{z}) &= \left(
\begin{array}{cccc}
K'_{i1}(\vec{z}), & K'_{i2}(\vec{z}), & \cdots , & K'_{ik}(\vec{z})
\end{array}
\right).
\end{align*}

We focus on each summand in the sum over $\ell \in [1,k]$ on the right hand side of \eqref{eq:rank1lem2_2}.
Since $K_{ij}(\vec{z})$ and $A_i(\vec{z})$ are holomorphic functions in the vicinity of $z_1 = 1$, $z_1$-integrand is a function with a single pole of order $1$ at $z_1 = 1$ and $z_1$-integration is carried out by substituting $1$ into $z_1$.
Hence, the $r$\textsuperscript{th} summand ($r \in [1,k]$) in the sum over $\ell \in [1,k]$ on the right hand side of \eqref{eq:rank1lem2_2} is written as

\begin{equation}
\label{eq:rank1lem2_3}
\begin{split}
h(1) \oint_{1} \prod_{j=2}^{n-1}{ \frac{\dd z_j}{2 \pi \ii} } \frac{ \prod_{2 \leq i < j \leq n-1}{(z_j - z_i)} }{ \prod_{j=2}^{n-1}{{(z_j -1)}^{j}} } \prod_{j=2}^{n-1}{h(z_j)} 
\left|
\left[
\begin{array}{c}
\vec{K}^{(k)}_{\mathrm{W}, 1}  \\
\vdots \\
\vec{K}^{(k)}_{\mathrm{W}, r -1} \\
 A_r (\vec{z}) \vec{B}^{(k)} \\
\vec{K}'^{(k)}_{r +1}(\vec{z})  \\
\vdots \\
\vec{K}'^{(k)}_{k}(\vec{z})
\end{array}
\right]
\right| .
\end{split}
\end{equation}
Removing the variable $z_{n-1}$ and shifting the indices of $\Vec{z}$-variables by $1$ as $\{ z_1,\ldots,z_{n-2} \} \to \{ z_2,\ldots,z_{n-1} \}$ in the claim of Lemma \ref{rank1 perturbation term1}, we can obtain the following equality about the kernel $K'_{ij}(\vec{z})$.

\begin{equation}\label{eq:rank1lem2_4}
\begin{split}
& \oint_{1} \prod_{j=2}^{n-1}{ \frac{ \dd z_j}{2 \pi \ii} } \frac{ \prod_{2 \leq i < j \leq n-1}{(z_j - z_i)} }{ \prod_{j=2}^{n-1}{{(z_j -1)}^{j}} } \prod_{j=2}^{n-1}{h(z_j)} K'^c(x ,y , \vec{z}) \\
= & \oint_{1} \prod_{j=2}^{n-1}{ \frac{\dd z_j}{2 \pi \ii} } \frac{ \prod_{2 \leq i < j \leq n-1}{(z_j - z_i)} }{ \prod_{j=2}^{n-1}{{(z_j -1)}^{j}} } \prod_{j=2}^{n-1}{h(z_j)} \left[ K_{\mathrm{W}}^c(x , y) -  (z_2 - 1) A'^c(x, \vec{z}) B^c(y) \right]
 \end{split}
\end{equation}
Since $K'_{ij}(\vec{z}) $ and $A_i(\vec{z})$ are symmetric under the exchange of any two variables in $\{ z_2,\ldots,z_{n-1} \}$, we can apply the equality \eqref{eq:rank1lem2_4} to one of the rows from $r +1$\textsuperscript{st}  to $k$\textsuperscript{th} row of the determinant \eqref{eq:rank1lem2_3}.
Applying the equality \eqref{eq:rank1lem2_4} to $r +1$\textsuperscript{st} row of the determinant in \eqref{eq:rank1lem2_3}, it is replaced with $\vec{K}^{(k)}_{\mathrm{W},r +1} - (z_2 - 1) A'_{r +1}(\vec{z}) \vec{B}^{(k)}$ and \eqref{eq:rank1lem2_3} is divided into two terms.

\begin{equation*}
\begin{split}
& h(1) \oint_{1} \prod_{j=2}^{n-1}{ \frac{\dd z_j}{2 \pi \ii} } \frac{ \prod_{2 \leq i < j \leq n-1}{(z_j - z_i)} }{ \prod_{j=2}^{n-1}{{(z_j -1)}^{j}} } \prod_{j=2}^{n-1}{h(z_j)} 
\left|
\left[
\begin{array}{c}
\vec{K}^{(k)}_{\mathrm{W},1} \\
\vdots \\
\vec{K}^{(k)}_{\mathrm{W},r -1} \\
 A_r (\vec{z}) \vec{B}^{(k)}   \\
\vec{K}^{(k)}_{\mathrm{W}, r +1} - (z_2 - 1) A'_{r +1}(\vec{z}) \vec{B}^{(k)} \\
\vec{K}'^{(k)}_{r +2}(\vec{z}) \\
\vdots  \\
\vec{K}'^{(k)}_{k}(\vec{z})
\end{array}
\right]
\right| \\
= & h(1) \oint_{1} \prod_{j=2}^{n-1}{ \frac{\dd z_j}{2 \pi \ii} } \frac{ \prod_{2 \leq i < j \leq n-1}{(z_j - z_i)} }{ \prod_{j=3}^{n-1}{{(z_j -1)}^{j}} } \prod_{j=2}^{n-1}{h(z_j)}
\left\{ \frac{1}{{(z_2 - 1)}^2}
 \left|
 \left[
\begin{array}{c}
\vec{K}^{(k)}_{\mathrm{W},1} \\
\vdots \\
\vec{K}^{(k)}_{\mathrm{W},r -1} \\
 A_r (\vec{z}) \vec{B}^{(k)} \\
\vec{K}^{(k)}_{\mathrm{W}, r +1}   \\
\vec{K}'^{(k)}_{r +2}(\vec{z}) \\
\vdots \\
\vec{K}'^{(k)}_{k}(\vec{z})
\end{array}
\right]
\right| 
-  \frac{A_r (\vec{z}) A'_{r +1}(\vec{z})}{(z_2 - 1)}
 \left|
 \left[
\begin{array}{c}
\vec{K}^{(k)}_{\mathrm{W},1} \\
\vdots \\
\vec{K}^{(k)}_{\mathrm{W},r -1} \\
 \vec{B}^{(k)} \\
 \vec{B}^{(k)}  \\
 \vec{K}'^{(k)}_{r +2}(\vec{z}) \\
 \vdots  \\
\vec{K}'^{(k)}_{k}(\vec{z})
\end{array}
\right]
\right| \right\}
\end{split}
\end{equation*}
Obviously, the second term vanishes because the $r$\textsuperscript{th} row and the $r+1$\textsuperscript{st} row are equivalent.
Carrying out the same calculations to the $r+2$\textsuperscript{nd} and subsequent rows of the first term, we obtain

\begin{equation*}
h(1) \oint_{1} \prod_{j=2}^{n-1}{ \frac{\dd z_j}{2 \pi \ii} } \frac{ \prod_{2 \leq i < j \leq n-1}{(z_j - z_i)} }{ \prod_{j=2}^{n-1}{{(z_j -1)}^{j}}  } \prod_{j=2}^{n-1}{h(z_j)} 
\left|
\left[
\begin{array}{c}
\vec{K}^{(k)}_{\mathrm{W},1} \\
\vdots \\
\vec{K}^{(k)}_{\mathrm{W},r -1} \\
A_r(\vec{z}) \vec{B}^{(k)}  \\
\vec{K}^{(k)}_{\mathrm{W},r +1} \\
\vdots  \\
\vec{K}^{(k)}_{\mathrm{W}, k}
\end{array}
\right]
\right| .
\end{equation*} 
Since $K_{\mathrm{W},ij}$ is independent of $\vec{z}$-variables and $A_i(\vec{z})$ is independent of $z_1$, it is allowed to revive $z_1$-integral as follows.

\begin{equation*}
\oint_{1} \prod_{j=1}^{n-1}{ \frac{\dd z_j}{2 \pi \ii} } \frac{ \prod_{1 \leq i < j \leq n-1}{(z_j - z_i)} }{ \prod_{j=1}^{n-1}{{(z_j -1)}^{j+1}} } \prod_{j=1}^{n-1}{h(z_j)} 
 \left| 
 \left[
\begin{array}{c}
\vec{K}^{(k)}_{\mathrm{W}, 1} \\
\vdots  \\
\vec{K}^{(k)}_{\mathrm{W}, r-1} \\ 
 (z_1 - 1) A_r(\vec{z}) \vec{B}^{(k)}  \\
\vec{K}^{(k)}_{\mathrm{W}, r +1} \\  
\vdots  \\
\vec{K}^{(k)}_{\mathrm{W}, k}  
\end{array}
\right]
\right|
\end{equation*}
In addition, since $K_{\mathrm{W},ij}$ is independent of $\vec{z}$-variables, we can apply Lemma~\ref{rank1 perturbation term2} to $r$\textsuperscript{th} row and obtain 

\begin{equation}
\label{eq:rank1lem2_5}
\begin{split}
 \oint_{1} \prod_{j=1}^{n-1}{ \frac{\dd z_j}{2 \pi \ii} } \frac{ \prod_{1 \leq i < j \leq n-1}{(z_j - z_i)} }{ \prod_{j=1}^{n-1}{{(z_j -1)}^{j+1}} } \prod_{j=1}^{n-1}{h(z_j)}
 \left| 
 \left[
\begin{array}{c}
\vec{K}^{(k)}_{\mathrm{W}, 1} \\
\vdots  \\
\vec{K}^{(k)}_{\mathrm{W}, r -1} \\
\sum_{p=1}^{n-1}{ \prod_{q=1}^{p}{ (z_q - 1) {(A_p)}_r } } \vec{B}^{(k)} \\
\vec{K}^{(k)}_{\mathrm{W}, r +1} \\
\vdots \\
\vec{K}^{(k)}_{\mathrm{W}, k}
\end{array}
\right]
\right| .
\end{split}
\end{equation}
This implies that we can rewrite $r$\textsuperscript{th} summand in the sum over $\ell \in [1,k]$ on the right hand side of \eqref{eq:rank1lem2_2} as \eqref{eq:rank1lem2_5} for all $r \in [1,k]$.
Therefore, we have

\begin{equation}
\label{eq:rank1lem2_6}
\begin{split}
& \oint_{1} \prod_{j=1}^{n-1}{ \frac{\dd z_j}{2 \pi \ii} } \frac{ \prod_{1 \leq i < j \leq n-1}{(z_j - z_i)} }{ \prod_{j=1}^{n-1}{{(z_j -1)}^{j+1}} } \prod_{j=1}^{n-1}{h(z_j)} 
\left|
\left[
\begin{array}{c}
\vec{K}^{(k)}_{1}(\vec{z}) \\
\vdots \\
\vec{K}^{(k)}_{k}(\vec{z}) 
\end{array}
\right]
\right| \\
= & \oint_{1} \prod_{j=1}^{n-1}{ \frac{\dd z_j}{2 \pi \ii} } \frac{ \prod_{1 \leq i < j \leq n-1}{(z_j - z_i)} }{ \prod_{j=1}^{n-1}{{(z_j -1)}^{j+1}} } \prod_{j=1}^{n-1}{h(z_j)} 
\left\{
\left|
\left[
\begin{array}{c}
\vec{K}^{(k)}_{\mathrm{W}, 1} \\
\vdots \\
\vdots \\
\vdots \\
\vdots \\
\vec{K}^{(k)}_{\mathrm{W}, k} 
\end{array}
\right]
\right| - \sum_{\ell=1}^{k}{ \left| \left[
\begin{array}{c}
\vec{K}^{(k)}_{\mathrm{W}, 1}  \\
\vdots \\
\vec{K}^{(k)}_{\mathrm{W}, \ell-1}  \\
\sum_{p=1}^{n-1}{ \prod_{q=1}^{p}{ (z_q - 1) {(A_p)}_\ell } } \vec{B}^{(k)} \\
 \vec{K}^{(k)}_{\mathrm{W}, \ell+1}  \\
 \vdots  \\
\vec{K}^{(k)}_{\mathrm{W}, k} 
\end{array}
\right]
\right| }
 \right\} .
\end{split}
\end{equation}

Moreover, we can find the following equality because the determinant is multilinear function and the determinant which has two identical rows equals zero.

\begin{equation}
\label{eq:rank1lem2_7}
\begin{split}
& \left|
\left[
\begin{array}{c}
\vec{K}^{(k)}_{\mathrm{W}, 1} - \sum_{p=1}^{n-1}{ \prod_{q=1}^{p}{ (z_q - 1) {(A_p)}_1 } } \vec{B}^{(k)}  \\
\vdots \\
\vdots \\
\vdots \\
\vdots \\
\vec{K}^{(k)}_{\mathrm{W}, k} - \sum_{p=1}^{n-1}{ \prod_{q=1}^{p}{ (z_q - 1) {(A_p)}_k } } \vec{B}^{(k)}  
\end{array}
\right]
\right| 
=  \left|
\left[
\begin{array}{c}
\vec{K}^{(k)}_{\mathrm{W}, 1} \\
\vdots \\
\vdots \\
\vdots \\
\vdots \\
\vec{K}^{(k)}_{\mathrm{W}, k} 
\end{array}
\right]
\right| - \sum_{\ell=1}^{k}{ \left| \left[
\begin{array}{c}
\vec{K}^{(k)}_{\mathrm{W}, 1}  \\
\vdots \\
\vec{K}^{(k)}_{\mathrm{W}, \ell-1}  \\
\sum_{p=1}^{n-1}{ \prod_{q=1}^{p}{ (z_q - 1) {(A_p)}_\ell } } \vec{B}^{(k)} \\
 \vec{K}^{(k)}_{\mathrm{W}, \ell+1}  \\
 \vdots  \\
\vec{K}^{(k)}_{\mathrm{W}, k} 
\end{array}
\right]
\right| }
\end{split}
\end{equation}
Finally, it follows from \eqref{eq:rank1lem2_6} and \eqref{eq:rank1lem2_7} that the equality \eqref{eq:rank1lem5'} holds.

\section{Proof of the uniform convergence of $\boldsymbol{\phi}$ on a bounded set, Proposition \ref{uniform convergence phi2}}
\label{ap:uniform convergence phi2}

A rigorous proof falls into the same pattern as in Proposition \ref{uniform convergence kernelc}.
To avoid reiterating ourselves, here we only give a basic idea of the proof. Recall that $\phi_{n-t^{1/2}\kappa_1}( t^{1/2}\xi_1)$ is given by
\[
\phi_k(x) = \oint_1 \frac{\dd w}{2\pi\ii}
(w-\rho')
\e^{f(w,t,\xi_1, \kappa_1)+g_4(w)},
\]
where $f(w,t,\xi, \kappa)=g_1(w)t+g_2(w,\xi, \kappa)t^{1/2}+g_3(w)t^{1/3}$ with $g_i(w)$ and $g_2(w,\xi,\kappa)$ given in \eqref{g def2}. Solving $g_1'(w)=0$ gives us $w_1=\rho'$ and $w_2=-\rho'/2$. One can obtain a steepest descent through $w_1=\rho'$, since the one passing $w_2$ would include extra poles at origin and hence vary the estimate of the integral.

One can see that the following contour $\Theta=\Theta_1\cup\Theta_2$ (see Fig. \ref{fig:contour gaussian 1}) is a steepest descent path of $g_1(w)$ passing through $\rho'$. Namely, $w=\rho'$ is the strict global maximum point of ${\rm Re}(g_1)$ along $\Theta$. This can be proved by calculating $\frac{\dd {\rm Re}(g_1)(s)}{\dd s}$ along $\Theta$.

\begin{subequations}
\label{gamma2}
\begin{align}
\Theta_1=&\left\{w=\rho'-\frac{s}{\sqrt{3}} \ii , \,\,\,s\in[-\rho,\rho]\right\},
\\
\Theta_2=&\left\{w=1+ \frac{2\rho}{\sqrt{3}}\e^{\ii s} , \,\,\,s\in[-5\pi/6,5\pi/6]\right\}.
\end{align}
\end{subequations}

\begin{figure}[h]
\begin{center}
\begin{tikzpicture}[scale=2.5]

\draw[->,thick] (-0.3,0) -- (2,0);
\draw[->,thick] (0,-0.8) -- (0,0.8);

\draw[fill=black] (0,0) circle (0.03);
\draw[fill=black] (1,0) circle (0.03);
\draw[fill=black] (0.35,0) circle (0.03);

\node at (0.05,-0.12) {$0$};
\node at (1,-0.12) {$1$};

\draw[thick] (0.35,-0.375278) arc (-150:150:0.750555);
\draw[thick] (0.35,-0.375278)  -- (0.35,0.375278);

\node at (0.45,-0.1) {$\rho'$};

\coordinate (A) at (0.35,-0.375278);
\coordinate (B) at (1,0);
\coordinate (C) at (0.35,0.375278);
\draw[thick,dashed] (A) -- (B) -- (C);

\end{tikzpicture}
\end{center}
\caption{steepest descent contour of integration in $\phi_k(x)$ passing the saddle point at $w_1=\rho'$.}
\label{fig:contour gaussian 1}
\end{figure}
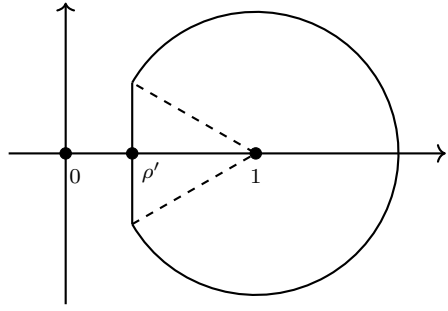

As shown in the proof of Proposition \ref{uniform convergence kernelc}, we can prove that for large enough $t$, only the part $\Theta^{\delta}:=\{w\in \Theta \mid |w-w_1|\leq\delta \}$, where $\delta=t^{-1/6}$, contributes to the integral. Near $w_1=\rho'$, the Taylor expansion of $g_i(w)$ are given by
\begin{subequations}
\label{g taylor expand2}
\begin{align}
g_1(w)-g_1(w_1)
=&
\frac{9(1-\rho)(w-\rho')^2}{16(2-\rho)\rho}+
\mathcal{O}[(w-\rho')^3],
\\
g_2(w,\xi_1,\kappa_1)-g_2(w_1,\xi_1,\kappa_1)
=&
-\frac{c_{\rm g} s_{\rm g} + 2(2-\rho)(\kappa_1+\xi_1\rho)}{2\rho(1-\rho)(2-\rho)}(w-\rho')+
\mathcal{O}[(w-\rho')^2],
\\
g_3(w)-g_3(w_1)
=&
\mathcal{O}[(w-\rho')^2],
\\
g_4(w)-g_4(w_1)=&\mathcal{O}[(w-\rho')].
\end{align}
\end{subequations}
Denote the function $g_i (w)$ without the error terms by $\bar{g}_i(w)$. As in Proposition~\ref{uniform convergence kernelc}, we can show that for large $t$, only the term $\e^{\bar{g}_1(w)t+\bar{g}_2(w,\xi_1,\kappa_1)t^{1/2}+\bar{g}_3(w)t^{1/3}+\bar{g}_4(w)}$ contributes to the integral. We re-parameterise $\Gamma_{\delta}$ by

\begin{equation*}
  w-\rho'=\ii v\frac{2 }{3}\sqrt{\frac{\rho(2-\rho)}{t(1-\rho)}},
\end{equation*}
where $-c \delta t^{1/2} \leq v \leq c \delta t^{1/2}$, and $c=\tfrac{3}{2}\sqrt{\tfrac{(1-\rho)}{\rho(2-\rho)}}$. Thus we are left with
\begin{align*}
&\lim_{t\rightarrow\infty}t\int_{\Gamma_{\delta}}\frac{\dd w}{2\pi\ii} (w-\rho') \e^{\bar{g}_1(w)t+\bar{g}_2(w,\xi_1, \kappa_1)t^{1/2}+\bar{g}_3(w)t^{1/3}+\bar{g}_4(w)-f(\rho',t,\xi_1,\kappa_1)}
\\
=&\lim_{t\rightarrow\infty}
\frac{-t}{\rho(1-\rho)}\int_{\Gamma_{\delta}}\frac{\dd w}{2\pi\ii} (w-\rho') \exp\left(
\frac{9(1-\rho)(w-\rho')^2}{16(2-\rho)\rho}t-\frac{c_{\rm g} s_{\rm g} + 2(2-\rho)(\kappa_1+\xi_1\rho)}{2 \rho(1-\rho)(2-\rho)}(w-\rho')t^{1/2}
\right)\\
=&\lim_{t\rightarrow\infty}
\frac{4(2-\rho)}{9(1-\rho)^2}\int_{c\delta t^{1/2}}^{-c\delta t^{1/2}}\frac{\dd v}{2\pi\ii}
v\e^{-v^2/4-v\ii (s_{\rm g}+\kappa+\xi)/\sqrt{2}}\\
=&\frac{4(2-\rho)}{9(1-\rho)^2}\int_{\infty}^{-\infty}\frac{\dd v}{2\pi\ii}
v\e^{-v^2/4-v\ii (s_{\rm g}+\kappa+\xi)/\sqrt{2}}
\\
=&
\frac{8(2-\rho)(s_{\rm g}+\kappa+\xi)}{9\sqrt{2\pi}(1-\rho)^2}
\e^{-(s_{\rm g}+\kappa+\xi)^2/2}.
\end{align*}
where
\begin{equation*}
  \kappa= 2(2-\rho)\kappa_1/c_{\rm g}:=\kappa_1/\lambda_1,
  \quad\quad\quad
  \xi=2(2-\rho)\rho \xi_1/c_{\rm g}:=\xi_1/\lambda_2,
\end{equation*}
as required.

\bibliographystyle{amsplain}

\bibliography{bibfile}

\end{document}